\title{\texorpdfstring{$\ohsy$}{O}-notation in Algorithm Analysis \\ (updated)}
\author{Kalle Rutanen}
\begin{document}

\frontmatter

\ifthesis
\includepdf[pages=-]{thesis-cover.pdf}
\else
\maketitle
\fi

\newpage
\thispagestyle{empty}
\vspace*{.35\textheight}
\begin{center}To my dear cats Kati (1996--2015) and Minari (2001--2016), \newline with whom I shared my life.
\end{center}

\chapter{Abstract}
\setcounter{page}{1}

In computer science, algorithm analysis is concerned with the correctness and complexity of algorithms. \emph{Correctness analysis} is about verifying that an algorithm solves the problem it is claimed to solve, and \emph{complexity analysis} is about counting the amount of resources an algorithm takes to run in an abstract machine. This \manuscript{} is concerned with complexity analysis. 

The amount of resources consumed by an algorithm is captured by a cost function which maps each input of the algorithm to a non-negative real number. The cost functions are ordered by an order-relation $\preceq$, which makes it possible to \emph{simplify}, and to \emph{compare} cost functions. A cost function can be simplified by replacing it with an equivalent, simpler cost function. The smaller the cost function is according to $\preceq$, the better the algorithm --- at least for the measured resource.  

The $\ohsy$-notation $\oh{}{f}$ is the set of functions $g$ which satisfy $g \preceq f$. It contains the same simplification and comparison tools in a slightly different, but equivalent form. 

How should the $\ohsy$-notation be defined? By the above, we may ask an equivalent question: how should the order relation $\preceq$ be defined? \We{} provide \nprim{} intuitive properties for $\preceq$, and then show that there is exactly one definition of $\preceq$ which satisfies these properties: linear dominance. 

\We{} show that Master theorems hold under linear dominance, define the $\ohsy$\-/mappings as a general tool for manipulating the $\ohsy$\-/notation, and abstract the existing definitions of the $\ohsy$-notation under local linear dominance.

\chapter{Preface}

The research presented in this \manuscript{} was carried out at the Department of Mathematics at Tampere University of Technology during years 2013--2016. The \manuscript{} was financially supported by the Finnish National Doctoral Programme in Mathematics and its Applications (2013--2014), and by the science fund of the city of Tampere (Summer 2015).

I am deeply thankful to my supervisors Sirkka-Liisa Eriksson, Karen Egiazarian, and Germán-Gómez Herrero for the support during these years, and the years before them. 

Special thanks go to office assistant Riitta Lahti, financial administrative assistant Tiina Sävilahti, and laboratory engineer Kari Suomela, at the Department of Mathematics. When it came to practical matters, they made my life easy, all the while uplifting the atmosphere.

I want to thank Raphael Reitzig for providing me with extensive feedback on an Arxiv-version of the \manuscript{}, related to the content and typographical concerns. Concerning typography, the idea to draw inside the $\ohsy$-symbols is his; this resulted in $\lohsy$, $\fohsy$, $\cohsy$, $\pohsy$, $\trohsy$, $\rohsy$, and $\aohsy$, which are both suggestive and can be drawn inline without extending line-height.\footnote{I will take the blame for the shapes.} I was led to prove the limit theorems after he asked me whether ratio-limits could still be used to reason about the $\ohsy$-notation. He was also the first external reader to provide me with positive feedback.

Jarkko Kari suggested to study the independence of the primitive properties, and pointed out how some of those results already followed from the study of candidate definitions.

David Wilding created the original \Cref{OrderPreservingDiagram}, which I have adapted here with his permission. In the introduction, I have used \emph{svg-cards} playing cards designed by David Bellot.

I want to thank Heikki Huttunen and Pekka Ruusuvuori for introducing me to academic work as an undergraduate at the Department of Signal Processing. I am grateful to my friends and coworkers at the Department of Signal Processing, and at the Department of Mathematics for all those excellent detours from research. Namely, Tapio Manninen, Francesco Cricri, Pasi Hämäläinen, Henri Riihimäki, Petteri Laakkonen, Marko Järvenpää, Juho Lauri, Vesa Vuojamo, Elina Viro, and Markku Åkerblom.

Lastly, I want to thank my family, friends, and Alina for their invaluable support.

\bigskip
\begin{flushright}
Tampere, October 2016 \\
\vspace{1.5cm}
Kalle Rutanen
\end{flushright}

\newpage 
\tableofcontents
\newpage 

\mainmatter

\chapter{Introduction to the \manuscript{}}
\label{Introduction}

Suppose we want to sort playing cards into increasing order. The kind of cards does not matter, only that for any two cards $A$ and $B$, either $A$ comes before $B$, denoted by $A < B$, or $B$ comes before $A$, denoted by $B < A.$ One way to sort the deck is by the following \define{insertion sort}, which is visualized in \fref{InsertionSortExample}.

\begin{enumbox}
\item\label{InsertionSort-Step2} If there are no cards left in the deck, we are done.
\item\label{InsertionSort-Step3} Otherwise, we pick the top-most card $A$ from the deck.
\item\label{InsertionSort-Step4} Starting from the right end of the cards on the table, we compare $A$ to each card $B$ on the table, until $A < B$ for the first time.
\item\label{InsertionSort-Step5} If there is no such card $B$, we place $A$ on the table as the right-most card, and start again from step \ref{InsertionSort-Step2}.
\item\label{InsertionSort-Step6} Otherwise, we place $A$ on the table immediately left of $B$, and start again from step \ref{InsertionSort-Step2}.
\end{enumbox}

\begin{figure}
\centering
\subfloat[]{\includegraphics[scale=1.4]{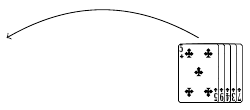}} \hfill
\subfloat[]{\includegraphics[scale=1.4]{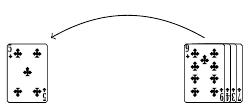}}

\subfloat[]{\includegraphics[scale=1.4]{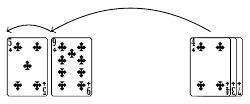}} \hfill
\subfloat[]{\includegraphics[scale=1.4]{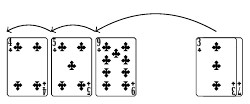}}

\subfloat[]{\includegraphics[scale=1.4]{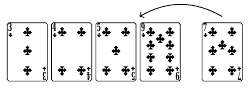}} \hfill
\subfloat[]{\includegraphics[scale=1.4]{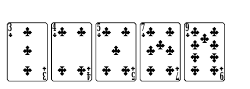}}

\caption{Sorting 5 playing cards with insertion sort. The cards on the table are visualized on the left, while the cards in the deck are visualized on the right. During sorting, 7 comparisons are made between cards.}
\label{InsertionSortExample}
\end{figure}

How many comparisons between cards do we have to perform to sort a given deck with $n$ cards? In the best case, the deck is already sorted in increasing order, requiring us to perform only $n - 1$ comparisons. In the worst case, the deck is already sorted in decreasing order, requiring us to perform $1 + 2 + 3 + ... + (n - 1) = (n - 1) n / 2$ comparisons. Therefore, sorting a deck with $n$ cards using insertion sort always requires something between $n - 1$ and $(n - 1)n / 2$ comparisons.

Another way to sort the deck is by the following \define{merge sort}.

\begin{enumbox}
\item We place all the cards in a row on the table. Every card $A$ is considered to form a sequence $(A)$ consisting of a single card.
\item \label{MergeSortStartOver} If there is at most one card-sequence on the table, we are done.
\item For each two successive card-sequences, $A$ and $B$, we combine $A$ and $B$ into a single sequence $C$ by repeatedly picking the smallest of the cards at the left end of $A$ and the left end of $B$ (e.g., $(3, 7)$ and $(5, 9)$ becomes $(3, 5, 7, 9)$). If one of the sequences runs out before the other, we place the remaining sequence at the right end of $C$ without performing additional comparisons.
\item We start again from step \ref{MergeSortStartOver}.
\end{enumbox}

\begin{figure}
\centering
\hfill\subfloat[]{\includegraphics[scale=1.4]{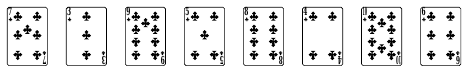}}\hfill\null

\hfill\subfloat[]{\includegraphics[scale=1.4]{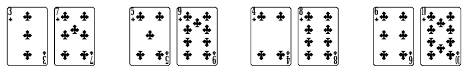}}\hfill\null

\hfill\subfloat[]{\includegraphics[scale=1.4]{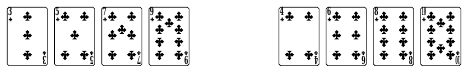}}\hfill\null

\hfill\subfloat[]{\includegraphics[scale=1.4]{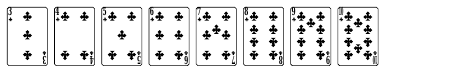}}\hfill\null
\caption{Sorting 8 playing cards with merge sort. During sorting, 17 comparisons are made between cards; the maximum possible for 8 cards. The merging procedure is repeated $\lg{8} = 3$ times.}
\label{MergeSortExample}
\end{figure}

How many comparisons between cards do we now have to perform to sort a given deck with $n$ cards? Suppose $n$ is a power of two, so that $n = 2^d$ for some integer $d$. In the best case, the cards on the table are already sorted, and we need to perform $(n / 2) \lg{n}$ comparisons to combine the sequences into a single deck. In the worst case, we need to perform $n (\lg{n} - 1) + 1$ comparisons. A deck which attains this bound for $n = 8$ is given in \fref{MergeSortExample}. \tref{PlayingCardComparisons} tabulates the number of comparisons required, in the worst case and in the best case, as a function of the number $n$ of cards in the deck for both ways of sorting. It can be seen that merge sort scales better than insertion sort in the worst case. Note however, that there are decks --- especially those which are sorted or almost sorted --- for which insertion sort performs less comparisons than merge sort.

\begin{table}
\begin{tabular}{|l|r|r|r|r|}
\hline 
Sorting procedure / $n$ & $64$ & $256$ & $1024$ & $4096$\\
\hline 
Insertion sort --- worst case & 2016 & 32640 & 523776 & 8386560 \\
Merge sort --- worst case & 321 & 1793 & 9217 & 45057 \\
Insertion sort --- best case & 63 & 255 & 1023 & 4095 \\
Merge sort --- best case & 192 & 1024 & 5120 & 24576 \\
\hline 
\end{tabular}
\centering
\caption{Number of comparisons needed to sort $n$ cards by using either insertion sort, or merge sort. Merge sort scales better than insertion sort in the worst case.}
\label{PlayingCardComparisons}
\end{table}

\section{Algorithms and their analysis}

\begin{figure}
\centering
\includegraphics{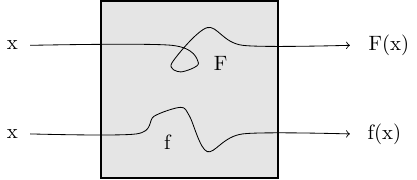}
\caption{A black-box view of an algorithm. The algorithm, visualized here as a gray box, transforms an input $x$ from an input-set $X$ to an output $F(x)$ from an output-set $Y$. The transform is captured by a mapping $\function{F}{X}{Y}$. The abstract machine executing the algorithm consumes resources at each step of the algorithm. The cost function, for given input and resource, is captured by a mapping $\function{f}{X}{\nonn{\TR}}$.}
\label{AlgorithmBlackbox}
\end{figure}

Such step-wise procedures are called algorithms. A \emph{run} of an algorithm is a finite sequence of \emph{atomic operations} --- in the above, picking a card from the deck, comparing two cards, and placing a card on the table. The atomic operations are performed one by one by an \emph{abstract machine} --- in the above the person sorting the cards. The abstract machine, running the algorithm, transforms a given \emph{input} --- such as a deck of cards --- into \emph{output} --- such as the sorted input deck. This correspondence is captured by a function $F$, as visualized in \fref{AlgorithmBlackbox}. The kind of algorithms that can be written depends on which atomic operations the abstract machine supports.

To each atomic operation, we associate a \emph{cost} in resources --- in the above one unit of resource to the task of comparing two cards. We then \emph{analyze} the algorithm for the amount of resources that it uses. The result of the analysis is a \emph{cost function} $f$ which associates each possible input to the resource-cost. However, as for insertion sort and merge sort, the function $f$ can be too hard to study directly --- or to get an intuition to. 

To make the analysis tractable, the input is grouped by some simpler property --- in the above by the number of cards in the deck. The value of the property shared by a group is called the \emph{label} of the group. Since a group may contain more than one input, we must come up with a way to summarize the costs inside each group. 

Such summaries include \emph{best case analysis} and \emph{worst case analysis} --- in the above providing us with the lower and upper bounds, respectively, for the number of comparisons required to sort a deck with a given number of cards. To analyze worst-case behavior under a given grouping, we pick from each group a \emph{representative} input which triggers the worst resource\-/cost. Conceptually, we then analyze a \emph{surrogate algorithm}, which takes a group-label as an input, maps it to the corresponding group-representative, and feeds the representative as an input to the actual algorithm.

Another common summary is \emph{average case analysis}, where we assume a random distribution for the input, and then analyse the mean of the resulting random cost function.

Grouping is often required to make analysis tractable. However, too rough a grouping loses detail without aiding intuition. Sometimes grouping is not necessary at all --- as when the input-set is $\TZ$.

\section{Simplified analysis by \texorpdfstring{$\ohsy$}{O}-notation}

Above we analyzed two algorithms for sorting playing cards, and noticed that merge sort is superior to insertion sort, at least when the measured resource is the number of performed comparisons in the worst case, and the deck contains at least 4 cards. This difference is fundamental, in the sense that the ratio by which merge sort beats insertion sort --- in the worst case --- keeps growing. For example, while for $n = 64$ the ratio is $6.3$, for $n = 4096$ the ratio is already $186$. Such \emph{algorithmic efficiency} is often much more important than the speed at which a given abstract machine executes the atomic operations.

Our analyses above were based on careful counting of the comparisons, with an aim of capturing the worst case and best case bounds exactly. Such analysis is only possible with an algorithm that is sufficiently simple. Numerical analysis is a field whose algorithms (e.g., matrix QR-decomposition) have traditionally been analyzed in this way.

With a complex algorithm, a closed form expression for the cost function may not exist, or otherwise be too complex for a human to make sense of --- not to mention the ingenuity needed to solve the combinatorial problems that occur when deriving those expressions. 

The way out of these problems is to declare that we are not interested in small differences, and to simplify them out as they occur. For example, the relative error of approximating the worst-case resource consumption $(n - 1)n / 2$ of insertion sort by $n^2 / 2$ is only $1/10$ for $n = 11$, and decreases to zero as $n$ grows. 

In addition, we declare that we are only interested in algorithmic efficiency, and so are ready to ignore the speed of the abstract machine up to a constant. It does not matter whether some other machine is 2 times faster to perform the algorithm. What matters is how the algorithm scales --- does it scale like $n^2$ or like $n \lg{n}$? 

We then say that insertion sort takes about $n^2$ comparisons, because this is simple, and does not fundamentally differ from the exact count. However, here we must be careful. Since we aim to simplify expressions whenever possible, we need a guarantee that computing with the simplified expressions provides the same answer as first computing the exact cost function, and then simplifying it. We need \emph{simplification rules} for algorithmic cost functions.

The simplification rules are formalized by a tool called \emph{$\ohsy$-notation}. Intuitively, whenever we want to simplify, we surround the expression with an $\ohsy$, as in $\oh{}{(n - 1)n / 2}$, and then use whatever simplification rules the $\ohsy$-notation provides. With proper assumptions on the domain of $n$, we can then show that
\begin{eqs}
\oh{}{(n - 1)n / 2} & = \oh{}{(n - 1)n} \\
{} & = \oh{}{n^2 - n} \\
{} & = \oh{}{n^2}.
\end{eqs}
The notation $\oh{}{f}$, where $f$ is a real-valued function, stands for the set of real-valued functions which scale at least as well as $f$. Having the same $\ohsy$-set divides the functions into equivalence classes --- in the above, we then say that the functions $(n - 1)n / 2$ and $n^2$ are \emph{equivalent}.

However, simplification is not the whole story. In addition, the $\ohsy$-notation must induce an order, so that the cost functions can be \emph{compared}, and the order must respect the structure that is present in algorithms: looping, branching, and composition.

How do we define such an $\ohsy$-notation, and which rules do we need? 

\section{Definitions of \texorpdfstring{$\ohsy$}{O}-notation}

Here are some of the proposed definitions\footnote{$\posi{\TR}$ is the set of positive real numbers, $\power{X}$ is the set of subsets of $X$, $\restrb{f}{A}$ is the restriction of a function $f$ to a set $A$, $f \leq g$ for functions $f$ and $g$ means $\forall x \in X: f(x) \leq g(x)$, and $\rc{X}$ is the set of functions from $X$ to $\nonn{\TR}$. See \sref{Notation} for more notation.} for an $\ohsy$-notation in algorithm analysis.

\newcommand{\defineasymptotic}{
\begin{definition}[Asymptotic linear dominance]
\defineexp{Asymptotic linear dominance}{linear dominance!asymptotic} $\pohsy$ is defined by $g \in \poh{X}{f}$ if and only if
\begin{equation}
\exists c \in \posi{\TR}, \exists y \in \TR^d: \restry{g}{y} \leq c \restry{f}{y},
\end{equation}
for all $f, g \in \rc{X}$, and all $X \in U$, where $U = \bigcup_{d \in \posi{\TN}} \power{\TR^d}$.
\end{definition}
}

\defineasymptotic

\begin{note}[]
This definition is given in \cite{IntroAlgo} in the univariate form on $\TN$, and generalized to $\TN^2$ in a later exercise \cite[page 50]{IntroAlgo}. 
\end{note}

\newcommand{\definecoasymptotic}{
\begin{definition}[Coasymptotic linear dominance]
\defineexp{Coasymptotic linear dominance}{linear dominance!coasymptotic} $\cohsy$ is defined by $g \in \coh{X}{f}$ if and only if
\begin{equation}
\exists c \in \posi{\TR}, \exists y \in \TR^d: \restryd{g}{y} \leq c \restryd{f}{y}.
\end{equation}
for all $f, g \in \rc{X}$, and all $X \in U$, where $U = \bigcup_{d \in \posi{\TN}} \power{\TR^d}$. 
\end{definition}
}

\definecoasymptotic

\begin{note}[]
This definition is given in \cite{IntroAlgo2009} in the univariate form on $\TN$, and generalized to $\TN^2$ in a later exercise \cite[page 53]{IntroAlgo2009}.
\end{note}

\newcommand{\definecofinite}{
\begin{definition}[Cofinite linear dominance]
\defineexp{Cofinite linear dominance}{linear dominance!cofinite} $\fohsy$ is defined by $g \in \foh{X}{f}$ if and only if\footnote{$\cpower{X} = \setb{A \in \power{X} : \card{X \setminus A} < \infty}$.}
\begin{equation}
\exists c \in \posi{\TR}, \exists A \in \cpower{X}: \restrb{g}{A} \lt c \restrb{f}{A},
\end{equation}
for all $f, g \in \rc{X}$, and all sets $X$.
\end{definition}
}

\definecofinite

\begin{note}[Symbol shapes]
The drawing inside a given version of $\ohsy$ above mimics the shape of its restriction sets in $\TR^2$.
\end{note}

\newcommand{\definelinear}{
\begin{definition}[Full linear dominance]
\defineexp{Full linear dominance}{linear dominance!full} $\lohsy$ is defined by $g \in \loh{X}{f}$ if and only if
\begin{equation}
\exists c \in \posi{\TR}: g \lt c f,
\end{equation}
for all $f, g \in \rc{X}$, and all $X \in U$, where $U$ is the class of all sets.
\end{definition}
}

\definelinear

\begin{note}[Symbol shape]
The drawing inside $\lohsy$ mimics a line.
\end{note}

\begin{note}[Linear dominance]
We will often shorten full linear dominance to linear dominance.
\end{note}

\newcommand{\defineaffine}{
\begin{definition}[Affine dominance]
\emph{Affine dominance} $\aohsy$ is defined by $g \in \aoh{X}{f}$ if and only if
\begin{equation}
\exists c \in \posi{\TR}: g \lt c f + c,
\end{equation}
for all $f, g \in \rc{X}$, and all $X \in U$, where $U$ is the class of all sets.
\end{definition}
}

\defineaffine

\begin{note}[Symbol shape]
The drawing inside $\aohsy$ mimics the plus operator.
\end{note}

\begin{definition}[Related notations]
Given an $\ohsy$-notation, we define the related notations $\omegahs{X}, \smallomegas{X}, \smallohs{X}, \thetahs{X}$ as follows:
\begin{eqs}
g \in \omegah{X}{f} & \iff f \in \ohx{g}, \\
g \in \smallomegax{f} & \iff g \not\in \ohx{f} \land f \in \ohx{g}, \\
g \in \smallohx{f} & \iff g \in \ohx{f} \land f \not\in \ohx{g}, \\
g \in \thetahx{f} & \iff g \in \ohx{f} \land f \in \ohx{g}.
\end{eqs}
\end{definition}

\begin{note}[Equivalence from equality of $\ohsy$-sets]
The expression $\ohx{f} = \ohx{g}$ is equivalent to $f \in \thetahx{g}$; equality between $\ohsy$-sets can be used to establish the equivalence of functions. This is also true for $\omegahx{f} = \omegahx{g}$ and $\thetahx{f} = \thetahx{g}$.
\end{note}

\begin{note}[Study of the $\ohsy$-notation suffices]
It suffices to study the $\ohsy$\-/notation, since the related notations are completely determined by the $\ohsy$\-/notation. 
\end{note}

We will show linear dominance $\lohsy$ to be the correct definition of $\ohsy$-notation for algorithm analysis.

\section{Example analyses}

In the following, we provide some example analyses of algorithms, and demonstrate why most of the definitions of $\ohsy$-notation in the previous section --- except linear dominance --- are not suitable for algorithm analysis.

\begin{note}[Function from an expression]
We will often denote a function in the parameter of $\ohs{X}$ with an expression, as in $\oh{\TN}{n}$, where we actually mean $\oh{\TN}{f}$, with $f \in \rc{\TN}$ such that $f(n) = n$. 

When the expression contains multiple symbols, as in $\oh{\TN^2}{n^2 m}$, we interpret the symbols to be assigned to the input-tuple in alphabetical order, as in $\oh{\TN^2}{(m, n) \mapsto n^2 m}$. This is to acknowledge that $(m, n) \mapsto n^2 m$ and $(n, m) \mapsto n^2 m$ are different functions.
\end{note}

\begin{note}[A cost model for examples]
In the following examples, each addition operation $x + y$ costs one unit, while other operations cost nothing.
\end{note}

\begin{algorithm}
\caption{An algorithm which takes as input $(m, n) \in \TN^2$, and outputs $n$, if $m = 0$, and $0$ otherwise.}
\label{alg:ConstantComplexity}
\begin{algorithmic}[1]
\Procedure {computeOnPlane}{$m, n$}
\State $j \coloneqq 0$
\If {$m = 0$}
  \For {$i \in \icon{0}{n}$}
    \State $j \coloneqq j + 1$
  \EndFor
\EndIf
\State \Return $j$
\EndProcedure
\end{algorithmic}
\end{algorithm}

\begin{algorithm}
\caption{An algorithm which takes as input $n \in \TN$, and returns $n$.}
\label{alg:BasicAnalysis}
\begin{algorithmic}[1]
\Procedure {mapNaturalsToPlane}{$n$}
  \State \Return \Call{computeOnPlane}{$0, n$}
\EndProcedure
\end{algorithmic}
\end{algorithm}

\begin{example}[Algorithm on $\TN^2$]
\label{MultivariateCounterExampleInNSubAlgorithm}
Consider \aref{alg:ConstantComplexity}, which takes as input $(m, n) \in \TN^2$. The cost function for this algorithm is $f \in \rc{\TN^2}$ such that
\begin{eqs}
f(m, n) = 
\begin{cases}
n, & m = 0, \\
0, & m > 0, 
\end{cases}
\end{eqs}
We can show that $f \in \pthetah{\TN^2}{0}$.
\end{example}

\begin{example}[Calling algorithm on $\TN^2$]
\label{MultivariateCounterExampleInN}
Consider \aref{alg:BasicAnalysis}, which takes as input $n \in \TN$, and outputs the result of \aref{alg:ConstantComplexity} at $(0, n)$. The cost function for this algorithm is $g \in \rc{\TN}$ such that
\begin{eqs}
g(n) = n.
\end{eqs}
We can show that $g \in \pthetah{\TN}{n}$. Therefore, by calling an algorithm which is $\pthetah{\TN^2}{0}$, we get an algorithm which is $\pthetah{\TN}{n}$. This shows that $\pohsy$ is not suitable for a definition of $\ohsy$-notation in algorithm analysis.
\end{example}

\begin{algorithm}
\caption{An algorithm which takes as input $z \in \TZ$, and returns $\max(-z, 0)$.}
\label{alg:ConstantComplexityZ}
\begin{algorithmic}[1]
\Procedure {computeOnIntegers}{$z$}
\State $j \coloneqq 0$
\If {$z < 0$}
  \For {$i \in \icon{0}{-z}$}
    \State $j \coloneqq j + 1$
  \EndFor
\EndIf
\State \Return $j$
\EndProcedure
\end{algorithmic}
\end{algorithm}

\begin{algorithm}
\caption{An algorithm which takes as input $n \in \TN$, and returns $\max(n, 0)$.}
\label{alg:BasicAnalysisZ}
\begin{algorithmic}[1]
\Procedure {mapNaturalsToIntegers}{$n$}
  \State \Return \Call{computeOnIntegers}{$-n$}
\EndProcedure
\end{algorithmic}
\end{algorithm}

\begin{example}[Algorithm on $\TZ$]
\label{UnivariateCounterExampleInZSubAlgorithm}
Consider \aref{alg:ConstantComplexityZ}, which takes as input $z \in \TZ$ and returns $\max(-z, 0)$. The cost function of this algorithm is $f \in \rc{\TZ}$ such that
\begin{eqs}
f(z) = \max(-z, 0). 
\end{eqs}
We can show that $f \in \cthetah{\TZ}{0}$.
\end{example}

\begin{example}[Calling algorithm on $\TZ$]
\label{UnivariateCounterExampleInZ}
Consider \aref{alg:BasicAnalysisZ}, which takes as input $n \in \TN$, and outputs the result of \aref{alg:ConstantComplexityZ} at $-n$. The cost function of this algorithm is $g \in \rc{\TN}$ such that
\begin{eqs}
g(n) = n.
\end{eqs}
We can show that $g \in \cthetah{\TN}{n}$. Therefore, by calling an algorithm which is $\cthetah{\TZ}{0}$, we get an algorithm which is $\cthetah{\TN}{n}$. This shows that $\cohsy$ is not suitable for a definition of $\ohsy$-notation in algorithm analysis.
\end{example}


\begin{algorithm}
\caption{An algorithm which takes as input $n \in \TN$ and evaluates a sub-algorithm $n$ times at $0$.}
\label{alg:ClassRoom}
\begin{algorithmic}[1]
\Procedure {G}{$n$}
\For{$i \in \icon{0}{n}$}
  \State \Call{F}{$0$}
\EndFor
\EndProcedure
\end{algorithmic}
\end{algorithm}

\begin{example}[Course-exercise]
\label{ZeroCounterExample}
Consider \aref{alg:ClassRoom}, which takes as input $n \in \TN$ and calls a sub-algorithm $F$ at $0$ repeatedly $n$ times. Denote the cost function of $F$ by $f \in \rc{\TN}$, and the cost function of $G$ by $g \in \rc{\TN}$. Suppose $f \in \fthetah{\TN}{1}$. What is $\fthetah{\TN}{g}$? 

The given information is not sufficient to solve this problem. In particular, let $f_1, f_2 \in \rc{\TN}$ be such that $f_1(n) = 1$, and
\begin{eqs}
f_2(n) =
\begin{cases}
0, & n = 0, \\
1, & n > 0.	
\end{cases}
\end{eqs}
Then $f_1, f_2 \in \fthetah{\TN}{1}$, but
\begin{eqs}
g_1 & \in \fthetah{\TN}{n}, \\
g_2 & \in \fthetah{\TN}{0}, \\
\fthetah{\TN}{n} & \cap \fthetah{\TN}{0} = \emptyset.
\end{eqs}
This shows that $\fohsy$ is not suitable for a definition of $\ohsy$-notation in algorithm analysis.

There is a fundamental difference between consuming resources ($f(0) > 0$) and not consuming resources ($f(0) = 0$), which $\fohsy$ ignores here. 

In contrast, $g \in \lthetah{\TN}{n}$ provided $f \in \lthetah{\TN}{1}$, as expected. 
\end{example}

\begin{algorithm}
\caption{An algorithm which takes as input $x \in \posi{\TR}$ and returns $\max(\ceilb{-\lg{x}}, 0) \in \TN$.}
\label{alg:Real}
\begin{algorithmic}[1]
\Procedure {doublesToOne}{$x$}
\State $k \gets 0$
\State $y \gets x$
\While {$y < 1$}
  \State $y \gets 2y$
  \State $k \gets k + 1$
\EndWhile
\State \Return $k$
\EndProcedure
\end{algorithmic}
\end{algorithm}

\begin{algorithm}
\caption{An algorithm which takes as input $n \in \TN$, and returns $n$.}
\label{alg:UnivariateBroken}
\begin{algorithmic}[1]
\Procedure {identity}{$n$}
\State \Return \Call{doublesToOne}{$2^{-n}$}
\EndProcedure
\end{algorithmic}
\end{algorithm}

\begin{example}[Algorithm on $\posi{\TR}$]
\label{UnivariateCounterExampleInRSubAlgorithm}
Consider \aref{alg:Real}, which takes as input $x \in \posi{\TR}$ and outputs $\max(\ceilb{-\lg{x}}, 0) \in \TN$; this is the number of times that $x$ must be doubled to grow $\geq 1$. The cost function of this algorithm is $f \in \rc{\posi{\TR}}$ such that
\begin{eqs}
f(x) = \max(\ceilb{-\lg{x}}, 0).
\end{eqs}
We can show that $f \in \pthetah{\posi{\TR}}{0}$.
\end{example}

\begin{example}[Calling an algorithm on $\posi{\TR}$]
\label{UnivariateCounterExampleInR}
Consider \aref{alg:UnivariateBroken}, which takes in $n \in \TN$, and outputs the result of \aref{alg:Real} at $2^{-n}$. The cost function of this algorithm is $g \in \rc{\TN}$ such that
\begin{eqs}
g(n) = n.
\end{eqs}
We can show that $g \in \pthetah{\TN}{n}$. Therefore, by calling an algorithm which is $\pthetah{\posi{\TR}}{0}$, we get an algorithm which is $\pthetah{\TN}{n}$. This shows that $\pohsy$ is not suitable for a definition of $\ohsy$-notation in algorithm analysis.
\end{example}

\begin{algorithm}
\caption{An algorithm which takes as input $n \in \TN$, and outputs $n$.}
\label{alg:AlmostIdentity}
\begin{algorithmic}[1]
\Procedure {almostIdentity}{$n$}
\State $j \coloneqq 0$
\If {$n = 4$}
\State \Return $4$
\EndIf
\For {$i \in \icon{0}{n}$}
\State $j \coloneqq j + 1$
\EndFor
\State \Return $j$
\EndProcedure
\end{algorithmic}
\end{algorithm}

\begin{example}[Almost identity on $\TN$]
Consider \aref{alg:AlmostIdentity}, which takes as input $n \in \TN$, and outputs $n$. The cost function of this algorithm is $f \in \rc{\TN}$ such that
\begin{eqs}
f(n) = 
\begin{cases}
n, & n \neq 4, \\
0, & n = 4.	
\end{cases}
\end{eqs}
We can show that $f \in \fthetah{\TN}{n}$ and $f \in \lsmalloh{\TN}{n}$.
\end{example}

\begin{algorithm}
\caption{An algorithm which takes as input $(m, n) \in \TN^2$ and outputs $n$.}
\label{alg:SecondComponent}
\begin{algorithmic}[1]
\Procedure {secondComponent}{$m, n$}
\State \Return \Call{almostIdentity}{$n$}
\EndProcedure
\end{algorithmic}
\end{algorithm}

\begin{example}[Calling an almost identity]
\label{ExtensibilityCounterExample}
Consider \aref{alg:SecondComponent}, which takes as input $(m, n) \in \TN^2$, and outputs $n$. The cost function of this algorithm is $g \in \rc{\TN^2}$ such that
\begin{eqs}
g(m, n) = 
\begin{cases}
n, & n \neq 4, \\
0, & n = 4.	
\end{cases}
\end{eqs}
We can show that $g \in \fsmalloh{\TN^2}{n}$. Therefore, by calling an algorithm which is $\fthetah{\TN}{n}$, we get an algorithm which is $\fsmalloh{\TN^2}{n}$. This shows that $\fohsy$ is not suitable for a definition of $\ohsy$-notation in algorithm analysis. 
\end{example}

\begin{note}[Blow-up and zeros]
There are two kinds of problems with those definitions of $\ohsy$-notations --- such as $\pohsy$, $\cohsy$, $\fohsy$ --- which ignore points from the input-set.

If the definition ignores an infinite set --- as in $\pohs{\TN^2}$, $\cohs{\TZ}$, and $\pohs{\posi{\TR}}$ --- then it is possible to construct a cost function to blow up in that ignored set, while not being reflected in the $\ohsy$-notation. This was demonstrated in \eref{MultivariateCounterExampleInN} for $\pohs{\TN^2}$, in \eref{UnivariateCounterExampleInZ} for $\cohs{\TZ}$, and in \eref{UnivariateCounterExampleInR} for $\pohs{\posi{\TR}}$. The last example demonstrates a blow-up in a bounded set.

If the definition ignores even a single point --- as in $\fohsy$ --- then it is possible to construct a cost function with a zero at that point, while not being reflected in the $\ohsy$-notation. This was demonstrated in \eref{ZeroCounterExample} for $\fohsy$.

Such definitions make it impossible to treat the cost functions as black boxes through their $\ohsy$-sets --- something which is essential for cost analysis based on $\ohsy$-notation.
\end{note}



\section{Some history of \texorpdfstring{$\ohsy$}{O}-notation}
\label{History}

The first reference to the $\ohsy$-notation seems to be that of Bachmann \cite[page 401]{BachmannOh}, in 1894, who gave a rather brief and informal definition of the $\ohsy$-notation:
\begin{quote}
... wenn wir durch das Zeichen $\oh{}{n}$ eine Gr\"osse ausdr\"ucken, deren Ordnung in Bezug auf $n$ die Ordnung von $n$ nicht \"uberschreitet; ...
\end{quote}
which we translate as: when we use the symbol $\oh{}{n}$ to represent some quantity, its order w.r.t. $n$ does not exceed the order of $n$. 

Landau \cite[page 31]{SmallOh}, in 1909, put this definition on a formal grounding by
\begin{equation}
\label{LandauAsymptotic}
f \in \oh{}{g} \coloniff \exists c \in \posi{\TR}, \exists N \in \TN, \forall n \geq N: f(n) \leq cg(n),
\end{equation}
for all $\function{f, g}{\TN}{\TR}$. This is asymptotic linear dominance on $\TN$. On page 883, Landau credits this definition to \cite{BachmannOh}. 

Both Bachmann and Landau were writing about analytic number theory, which is a branch of mathematics studying the properties of integers using tools from analysis. Correspondingly, the $\ohsy$-notation was constructed to address the needs in this field, such as to bound the error of truncating a series. Landau's definition was adopted by computer science, and remains the most wide-spread definition of $\ohsy$-notation on $\TN$ to this day.

In \cite{ArtOfProgrammingVol1Ed1}, in 1968, Knuth defined the $\ohsy$-notation as
\begin{equation}
f \in \oh{}{g} \coloniff \exists c \in \posi{\TR}: \abs{f} \leq c \abs{g},
\end{equation}
for all $\function{f, g}{S}{\TR}$, where $S = \TN$, or $S$ is an interval of $\TR$. Knuth credited the definition to Bachmann \cite{BachmannOh}. However, Knuth's definition probably contained an omission, since it does not correspond to Bachmann's definition, at least by Landau's interpretation of asymptotic linear dominance. This view is supported by that in \cite{ArtOfProgrammingVol1Ed2}, in 1973, the definition was replaced with Equation \ref{LandauAsymptotic}. To the best of my knowledge, in both editions of the book Knuth uses the $\ohsy$-notation solely to bound the error of truncating a series, and not for the analysis of algorithms. For analysis of algorithms, he deals with explicit bounds instead.


It is unclear to me when exactly the $\ohsy$-notation was first used in the analysis of algorithms. The earliest reference \we{} can find is \cite{EarlyONotation}, from 1972. The book \cite{DesignAndAnalysisOfComputerAlgorithms}, from 1974, uses the $\ohsy$-notation in a modern way in the analysis of algorithms, with the following definition on $\TN$:
\begin{quotation}
A function $g(n)$ is said to be $\oh{}{f(n)}$ if there exists a constant $c$ such that $g(n) < c f(n)$ for all but some finite (possibly empty) set of nonnegative values for $n$.
\end{quotation}
This is cofinite linear dominance on $\TN$. 

On \cite[page 39]{DesignAndAnalysisOfComputerAlgorithms}, an exercise encourages to study the properties of the following definition of $\ohsy$-notation on $\TN$:
\begin{equation}
f \in \oh{}{g} \coloniff \exists c \in \posi{\TR}: f \leq c g.
\end{equation}
This is linear dominance on $\TN$. 

On \cite[page 39]{DesignAndAnalysisOfComputerAlgorithms}, another exercise asks for the equivalence of cofinite affine dominance on $\TN$ and cofinite linear dominance on $\TN$ under certain assumptions, with \emph{cofinite affine dominance} on $\TN$ defined by
\begin{eqs}
f \in \oh{}{g} \coloniff \exists c, d \in \posi{\TR}: f \leq c g + d,
\end{eqs}
for all but a finite number of input-arguments. 

Prior to this \manuscript{}, there were no attempts at studying the $\ohsy$-notation systematically, in order to check whether the definition from analytic number theory was suitable for the analysis of algorithms. Indeed, it seemed as if the definition were simply a matter of taste; to quote \cite{BigOmega},

\begin{quotation}
On the basis of the issues discussed here, I propose that members of SIGACT, and editors of
computer science and mathematics journals, adopt the $\ohsy$, $\omegahsy$, and $\thetahsy$ notations as defined above, unless a better alternative can be found reasonably soon. 
\end{quotation}

The two exercises in \cite{DesignAndAnalysisOfComputerAlgorithms} show that the possibility of using a different definition of the $\ohsy$-notation was certainly noted, but that an argument to favor one over another was missing. 

\subsection{Related notations}

There are several other notations which resemble the way the $\ohsy$-notation works, which perhaps force strictness, reverse the ordering, or otherwise vary the ordering. In \cite[page 61]{SmallOh}, Landau defined the $\smallohsy$-notation as
\begin{eqs}
f \in \smalloh{}{g} \coloniff \lim_{x \to \infty} \frac{f(x)}{g(x)} = 0,
\end{eqs}
for all $\function{f, g}{\TR}{\TR}$. On page 883, Landau states that the $\smallohsy$-notation is his own.

Knuth \cite{BigOmega}, in 1976, defined the $\Omega$-notation as
$f \in \omegah{}{g} \coloniff g \in \oh{}{f}$,
the $\Theta$-notation as 
$f \in \thetah{}{g} \coloniff f \in \oh{}{g} \textrm{ and } g \in \oh{}{f}$,
and the $\omega$-notation as
$f \in \smallomega{}{g} \coloniff g \in \smalloh{}{f}$,
for all $\function{f, g}{\TN}{\TR}$. This was to remedy the occasional misuse where $\ohsy$ was used in place of the now-defined $\thetahsy$ or $\omegahsy$.

Vit\'anyi \cite{BigOmegaVsWild} argued that $\Omega$ should be defined asymmetrically by
\begin{equation}
f \in \omegah{}{g} \coloniff \exists c \in \posi{\TR}, \forall y \in \TN, \exists x \geq y: f(x) \geq cg(x).
\end{equation}

\section{Objective and contributions}

The objective of this \manuscript{} is to provide a rigorous mathematical foundation for the $\ohsy$-notation in algorithm analysis. In the following is a list of problems this \manuscript{} solves. 

\subsection{Problem 1: What is a cost function?}

A cost function of an algorithm has invariably been explained in books --- such as \cite{DesignAndAnalysisOfComputerAlgorithms} and \cite{IntroAlgo2009} --- as a function which maps the \emph{size of the input} to $\nonn{\TR}$. Unfortunately, the term \emph{size} has never been defined formally. Instead, the concept has been demonstrated by examples such as the length of a sequence, the number of bits in a number, or the pair $(n, m)$, where $n$ is the number of vertices in a graph, and $m$ is the number of edges in a graph. 

On Turing machines, input\-/size refers to the number of consecutive bits on the input\-/tape that need to be written in order to set the initial state. If the input is not in binary form already, the input has to be encoded as such, and the encoding specified. This definition of input\-/size \emph{does not} formalize the intuitive concept of input\-/size used in books. For example, an encoding of a graph could have an input\-/size proportional to $m + n$. However, there are graph algorithms whose cost\-/functions are only dependent on $n$, and graph algorithms whose cost\-/functions\footnote{e.g. $m^2 \lg{n + 1}$.} depend non\-/trivially on both $m$ and $n$. Such cost\-/functions cannot be captured as a function of $m + n$.

Here are some examples of the confusion this concept creates. The prototypical examples of \emph{size} are cardinality and volume, with an implied linear order. But how does one define a linear order \emph{sensibly} on, say, the pairs $(n, m)$? If that does not make sense, should the concept of size be extended to a partial order, or even a preorder? When the cost function is a function of input size, what does it mean to remap the input\-/sizes by function composition from the right (i.e. $f \circ s$)? When this \manuscript{} allows the domain of the cost function to be arbitrary, such as $\TZ$, $\TR$ or $\TC$, what exactly does it mean for an input to have size $-5$, $4.5$ or $2 + 4i$?

The present \manuscript{} resolves these questions as follows. The term \emph{input\-/size} does not make sense for an arbitrary model of computation. Instead, this \manuscript{} defines the cost function as a function which maps the input-set to $\nonn{\TR}$ --- the most detailed characterization of the cost function of an algorithm. Function composition from the right is used to access the input in a different order --- as when the algorithm is called as a sub-algorithm of another algorithm. Compatibility with this operation means that merely a different access-order cannot be made to show inconsistencies in the $\ohsy$-notation. 

The grouping of the input-set need only be done if the cost function by itself is too complex to make sense of. In such a case, we choose a grouping property --- such as the pair $(n, m)$ for graphs --- and partition the input-set based on that property. No additional properties --- such as an ordering --- are required from a grouping property.

\subsection{Problem 2: Why should the \texorpdfstring{$\ohsy$}{O}-notation be defined as it is?}

Introductory texts on algorithms, such as \cite{IntroAlgo}, traditionally explain the definition of the $\ohsy$-notation by the merits it has, such as abstracting out differences in the speeds of otherwise identical machines. The asymptotic part -- ignoring parts of the input --- is explained by being essential for the notation to concentrate on the scaling behavior on large inputs, and not on some possible artifacts on small inputs. 

Unfortunately, these explanations fail to pinpoint why we should not pick any other definition with seemingly the same properties, or whether it is necessary at all to ignore part of the input to gain sensitivity to the scaling behavior. Indeed, we show that, in general, no part of the input can be ignored without the notation failing in some way. The first hint towards this fact is seen by making an exhaustive list\footnote{An exhaustive list can be found in \tref{TableOfDesirableProperties}.} of all the rules we need the $\ohsy$-notation to possess; to express each one, none requires any restrictions --- such as an ordering or a sense of convergence --- on the input domain. 

The present \manuscript{} resolves this question as follows. First, the $\ohsy$-notation should provide us with the simplification we wanted. Second, it should impose an order on the functions. Third, it should be compatible with all those operations --- such as addition, multiplication, and scalar multiplication --- that are needed to combine the actual cost functions during explicit analysis. Specifically, the following procedures must yield the same answer:
\begin{itemize}
\item Apply a sequence of operations to cost functions, and then simplify the result with the $\ohsy$-notation.
\item Simplify every cost function with the $\ohsy$-notation, and then apply the same operations as above to the $\ohsy$-sets.
\end{itemize}

Perhaps the most important operation for algorithm analysis --- identified in this \manuscript{} --- is that of \emph{function composition} from the right, which corresponds to reordering or remapping the input. Compatibility with function composition corresponds to the requirement that the cost function of an algorithm must stay consistent when it is called with distorted input as a sub-algorithm of another algorithm. This is a strong requirement: it forces the definitions of $\ohsy$-notations on \emph{different domains} to be consistent with each other. When combined with the other requirements, this turns the definition from a matter of taste to a unique one.

\subsection{Problem 3: How should the multivariate $\ohsy$-notation be defined?}

When analyzing a graph algorithm, the input\-/set is often grouped by a two-dimensional property $(n, m)$, where $n$ is the number of vertices in the graph, and $m$ is the number of edges in the graph. To provide the complexity in the $\ohsy$-notation under such a grouping, the $\ohsy$-notation must also be defined in $\TN^2$.

Textbooks on algorithms approach the definition on $\TN^2$ in two ways. The first way --- exemplified in \cite[page 312]{DesignAndAnalysisOfComputerAlgorithms} --- is to define the $\ohsy$-notation only in $\TN$, and then silently continue using the notation over multiple variables too. That is, with no definition. The second way --- exemplified in \cite{IntroAlgo} and \cite{IntroAlgo2009} --- is to define the $\ohsy$-notation in $\TN$, and later provide some definition for a generalization to $\TN^2$, with the implication that the generalization does not bring anything new. Among all the books \we{} have looked at, \cite{IntroAlgo} and \cite{IntroAlgo2009} were the only ones to provide a definition for $\TN^2$ --- although different ones. The former provided asymptotic linear dominance, while the latter provided coasymptotic linear dominance.

The first documented symptom of a non\-/trivial generalization to $\TN^2$ was given by Howell \cite{OhImpossible}, who demonstrated a flaw\footnote{We will take a closer look at this in \sref{HowellCounter}} in asymptotic linear dominance on $\TN^2$. Perhaps it was for this issue, or a similar issue, that \cite{IntroAlgo2009} changed the definition to coasymptotic linear dominance. 

How can one be sure that \emph{this} definition as coasymptotic linear dominance has the desirable properties? In \cite{OhImpossible}, Howell thought he had shown that a definition with desirable properties is impossible in $\TN^d$. However, we will show in \sref{HowellCounter} that --- due to excessively strong assumptions --- Howell only showed that asymptotic linear dominance does not have the desirable properties. 

The present \manuscript{} resolves this question by showing that the only definition which has the desirable properties is linear dominance; the multivariate definition is merely an instance of the definition on $\TN^d$. 

\subsection{Problem 4: How are the desirable properties connected to each other?}

To make the theory in this \manuscript{} robust to variation, and to self-test it, this \manuscript{} studies the connections between the desirable properties of the $\ohsy$-notation. This study reveals that certain combinations of properties imply other properties, and that certain combinations of properties are equivalent to each other. As a side effect, this \manuscript{} does not merely study the $\ohsy$-notation in algorithm analysis, but also other possible or historical definitions of the $\ohsy$-notation. 

The study on the connections culminates in the discovery of the \nprim{} primitive properties, which imply all the desirable properties. Therefore, the question of whether the desirable properties are reasonable reduces to asking whether the primitive properties are reasonable.

\subsection{Problem 5: Do Master theorems hold for linear dominance?}

Master theorems are useful tools for analyzing the cost function of a recursive algorithm up to an $\ohsy$-equivalence. Therefore, it is important to make sure that the conclusions of Master theorems still hold after adopting linear dominance as the definition. 

The present \manuscript{} provides the necessary proofs in \sref{MasterTheorems}. In summary, Master theorems hold similarly as with asymptotic linear dominance, with the simplification that there is no need for so-called regularity conditions \cite{IntroAlgo2009}.

\subsection{Problem 6: How do existing definitions compare to each other?}

The present \manuscript{} provides in \sref{CandidateDefinitions} a comparison between various definitions of the $\ohsy$-notation, based on which primitive properties each fulfills. The study of how exactly the existing definitions fail the primitive properties is important, because a priori there is a risk that they may have produced incorrect analyses. 

Fortunately, coasymptotic linear dominance on $\TN^d$ --- on positive functions --- is equivalent to linear dominance on $\TN^d$. This covers most of the analyses that have been done in algorithm analysis. In practice, the saving grace is that the $\ohsy$-notation has been manipulated by assuming that the rules on $\TN$ work in general, and by implicitly assuming that \property{SubComp} holds. 

\subsection{Problem 7: How should the related notations be defined?}

There has been some confusion on how to define the related notations $\omegahsy$, and $\smallohsy$, as discussed in \sref{History}. 

The present \manuscript{} argues that these notations are merely different viewpoints of the same underlying order-relation between cost functions; they are fixed by the definition of the $\ohsy$-notation.

\subsection{Problem 8: Can existing techniques of analysis still be used?}

The present \manuscript{} shows in \sref{LimitTheorems} that local linear dominance --- a definition which covers most of the existing definitions  --- has an equivalent definition by limits. In addition, under suitable conditions, the other existing definitions are sometimes equivalent to linear dominance. Together, this allows to continue using existing analysis tools --- such as taking limits --- as before, provided one makes sure that there is a way to transfer the result to linear dominance.

\subsection{Problem 9: Point out misuses}

The present \manuscript{} points out in \sref{Misuses} some misuses which occur in actual publications in computer science. In particular, the $\ohsy$-notation is sometimes used as a general something-like operator to generalize statements, where it actually does not make sense. It is my hope that making such misuses explicit eventually improves communication between computer scientists.

\section{Relation to publications}

The present \manuscript{} extends a peer-reviewed publication in the Bulletin of EATCS \cite{ONotationBeatcs}, a doctoral thesis with the same name, as well as various non-peer-reviewed versions in Arxiv \cite{ONotationArxiv}. 

Most of the proofs have been deferred to appendices. This is to make the \manuscript{} readable to the largest possible audience, emphasizing the intuitive ideas. This does not reflect an ordering in importance --- to \me{}, the proofs are the most important part of this \manuscript{}.

The research and writing related to this \manuscript{}, and the above publications, were done solely by \kr; the co-authors checked.

\section{Automated checking}

This \manuscript{} was written in such a way that the dependencies between the theorems can be --- and have been --- checked by a machine. Each theorem in the \LaTeX{} source is annotated --- in a lightweight, but machine-readable manner --- with its assumed and implied properties. When the proof of Theorem A references Theorem B, it inherits the properties implied by B, from that point on, provided that the assumptions of B are satisfied at that point. For each theorem, the software checks that
\begin{itemize}
\item the assumptions of each referred theorem are satisfied, and that
\item there are no extraneous assumptions.
\end{itemize}
When a property is not proved by referring to a theorem, it is explicitly marked proved by the writer, based on the preceding non-machine-checkable proof. 

Such automatic checking was useful in the research phases to guarantee that theorems were not broken due to changing assumptions, or otherwise to point them out. The software for checking the dependencies, written in Python, can be obtained from 
\ifwe
KR's
\else
my
\fi
homepage.\footnote{http://kaba.hilvi.org}

\section{Outline of the \manuscript{}}

This \manuscript{} is organized as follows. \sref{Introduction} introduces the \manuscript{}, reviews the history of the topic, and lists the contributions of the \manuscript{}. \sref{Preliminaries} provides a more formal introduction to the concepts related to algorithms and their analysis. \sref{DesirableProperties} provides a list of desirable properties for an $\ohsy$-notation. \sref{Characterization} proves that the primitive properties are equivalent to the definition of the $\ohsy$-notation as linear dominance. \sref{WorkingWith} provides tools for working with the $\lohsy$-notation, including Master theorems and $\lohsy$-mapping rules. \sref{LocalLinearDominance} studies local linear dominance, a generalization which covers most of the existing definitions. \sref{Conclusion} concludes the \manuscript{}.

\sref{Notation} provides the notation. \sref{HowellCounter} reviews Howell's counterexample in more detail. \sref{ImpliedProperties} shows that the desirable properties reduce to a set of \nprim{} primitive properties which imply the other properties. \sref{ProofsForLocalLinearDominance} shows properties of local linear dominance. \sref{ProofsOfLimitsTheorems} shows that local linear dominance can be characterized by ratio\-/limits. \sref{MasterTheorems} shows that Master theorems work as before under linear dominance. \sref{CandidateDefinitions} compares several definitions for an $\ohsy$-notation, and shows how each of them fail the desirable properties. \sref{ProofsOfMinimality} provides additional definitions for an $\ohsy$-notation, with an aim to show the minimality of pre-primitive properties. \sref{PartitionedSets} provides some theory of partitioned sets. \sref{PreorderedSets} provides some theory of preordered sets.

\chapter{Preliminaries}
\label{Preliminaries}

In this chapter we provide a brief introduction to algorithms, the computational model, the cost-model, the primitive properties, and the $\ohsy$-notation.

\section{Algorithms}
\label{Algorithms}

What is an algorithm? We adopt an extremely liberal, but completely formalized view: an \emph{algorithm} is an abstract state machine \cite{SequentialAsm, ParallelAsm, ASMBook}. 

A variable in an abstract state machine $M$ is identified with a string, called a \emph{(function) symbol}. Each symbol has an \emph{arity} $n \in \TN$, which gives the number of arguments the symbol accepts as input. A \emph{(ground) term} is defined recursively as follows:
\begin{itemize}
\item a $0$-ary symbol is a term, and
\item if $f$ is an $n$-ary symbol, and $t_1, \dots, t_n$ are terms, then the string $f(t_1, \dots, t_n)$ is a term, and
\item there are no other terms.
\end{itemize}
The set of user-defined symbols, together with a small set of predefined symbols --- such as $\textrm{true}$, $\textrm{false}$, $\textrm{or}$, $\textrm{and}$, $\textrm{not}$, $=$, $\textrm{undef}$ --- is called the \emph{vocabulary} of the abstract state machine $M$.

Each $n$-ary symbol $f$ is associated with a function $\function{\inter{f}}{S^n}{S}$.\footnote{$S^0 = \setb{\emptyset}$.} The function $\inter{f}$ is an \emph{interpretation} of $f$. The set $S$ is the \emph{base-set}, which is common to all interpretations. The \emph{value} of a term $t$, denoted by $[t]$, is defined recursively as follows:
\begin{itemize}
\item if $x$ is a $0$-ary symbol, then $[x] = \inter{x}(\emptyset)$,
\item if $f$ is an $n$-ary symbol, for $n \in \posi{\TN}$, and $t_1, \dots, t_n$ are terms, then
\begin{equation*}
[f(t_1, \dots, t_n)] = \inter{f}([t_1], \dots, [t_n]).
\end{equation*}
\end{itemize}
In the following, by $\function{f}{X}{Y}$ --- where $f$ is a symbol, $X \subset S^n$, and $Y \subset S$ --- we mean that $\inter{f}(X) \subset Y$. In addition, if $x$ is a $0$-ary symbol, then by $x \in X$ we mean that $\function{x}{\setb{\emptyset}}{X}$ in the previous sense.

Compared to an ordinary programming language, a $0$-ary symbol corresponds to a variable, while an $n$-ary symbol, for $n \in \posi{\TN}$, corresponds to an $n$-dimensional array --- however, here the index can be an arbitrary set element. 

The abstract machine specifies how the interpretations of symbols are to be modified at each step. The program driving the abstract machine is a finite sequence of conditional assignments of the form 
\begin{algorithmic}
\If{$\textrm{condition}$}
\State $t_1 \coloneqq s_1$
\State $\quad \vdots$
\State $t_n \coloneqq s_n$
\EndIf
\end{algorithmic}
where $\textrm{condition}$, $t_1$, \dots, $t_n$, and $s_1$, \dots, $s_n$ are terms. The formula $t_i \coloneqq s_i$ can be thought of as copying an element from an array to another, $[t_i] \coloneqq [s_i]$, provided $[\textrm{condition}] = [\textrm{true}]$. This sequence of assignments is repeated until (possible) termination. 

All of the assignments in a single step are carried out in parallel --- not in sequence. For example, $t_1 \coloneqq s_1$ followed by $s_1 \coloneqq t_1$ causes $t_1$ and $s_1$ to swap values in the next step.

The basic definition of abstract state machines is both simple, and extremely general. The generality derives from the virtue of making the whole of set-theory available for modeling variables. Further abstraction-tools --- such as sequential composition, sub-machine calls, local variables, and return values --- are constructed over this basic definition. For example, Turbo-ASMs \cite{ASMBook} provide such features. From now on, we will assume that such abstraction tools have already been defined.

Consider \aref{alg:FindZero}, which is Newton's method for finding local zeros of differentiable functions.\footnote{This example generalizes an example from \cite{RealRam} where Newton's method is formalized as an algorithm for real rational functions under the real-RAM model.} The input-symbols to this algorithm are a $1$-ary continuously differentiable function $\function{f}{\TR}{\TR}$, a $0$-ary initial guess $x^* \in \TR$, and a $0$-ary error threshold $\epsilon \in \nonn{\TR}$; the input-set is $C_1(\TR \to \TR) \times \TR \times \nonn{\TR}$. The output --- provided the algorithm terminates --- is a $0$-ary point $x \in \TR$ such that $\abs{f(x)} \leq \epsilon$; the output-set is $\TR$. Other symbols are a $1$-ary symbol $\function{'}{C_1(\TR \to \TR)}{C_1(\TR \to \TR)}$ --- differentiation  --- , a $1$-ary symbol $\function{\abs{\; \cdot \;}}{\TR}{\nonn{\TR}}$ --- absolute value --- a $2$-ary symbol $> \; : \TR \times \TR \to \setb{0, 1}$ --- greater-than --- and $2$-ary symbols $\function{-}{\TR \times \TR}{\TR}$ and $\function{/}{\TR \times \nonn{\TR}}{\TR}$ --- subtraction and division. We have used infix notation for subtraction, division, and greater-than; postfix notation for differentiation; and midfix notation for the absolute value.

\begin{algorithm}
\caption{Newton's method for finding an element $x \in X$, such that $\abs{f(x)} \leq \epsilon$, for a continuously differentiable function $f \in C_1(\TR \to \TR)$.}
\label{alg:FindZero}
\begin{algorithmic}[1]
\Procedure {findZeroOrHang}{f, $x^*$, $\epsilon$}
\State $x \coloneqq x^*$
\While {$\abs{f(x)} > \epsilon$}
  \State $x \coloneqq x - f(x) / f'(x)$
\EndWhile
\State \Return $x$
\EndProcedure
\end{algorithmic}
\end{algorithm}

\aref{alg:FindZero} reads like pseudo-code, but is a completely formalized abstract state machine. With abstract state machines, the programmer is free to use the most fitting abstraction for the problem at hand. It should be clear how using such an abstract programming language enhances communication between software engineers and domain experts (e.g., physicists). 

Termination is not required for an algorithm; consider for example an operating system. Depending on the input, \aref{alg:FindZero} may terminate, or not. Suppose we require \aref{alg:FindZero} to always terminate. To satisfy this requirement, the programmer can either restrict the input-set to terminating inputs, or to modify the algorithm --- perhaps by limiting the number of iterations. From now on, we assume every analyzed algorithm to terminate.

A software development project utilizing abstract state machines starts by creating the most abstract description of the software as an abstract state machine, called the \emph{ground model} \cite{ASMBook}. This model captures the requirements, but does not provide any additional details on how the goals are to be attained. The project then proceeds to refine the model until it can be implemented in a concrete programming language. The model --- in all stages --- is used for verification and validation, and may even be used to generate code automatically.

This brief introduction to abstract state machines is to encourage the reader to look beyond the Church-Turing \manuscript{}, and to realize the usefulness of even the most abstract algorithms --- not just those computable algorithms which work with natural numbers. These are algorithms which take arbitrary sets as input, and produce arbitrary sets as output. To analyze such abstract algorithms, we need correspondingly abstract tools.

Here are some examples of input-sets that algorithms can have, or by which they can be modeled.

\begin{example}[Algorithms on integers]
An algorithm which takes integers as input could be modeled by $\TZ^d$.
\end{example}

\begin{example}[Algorithms on floating\-/point numbers]
An algorithm which takes floating\-/point numbers as input could be modeled by $\TR^d$.
\end{example}

\begin{example}[Algorithms on complex numbers]
An algorithm which takes complex numbers as input could be modeled by $\TC^d$.
\end{example}

\begin{example}[Algorithms on matrices]
An algorithm which takes a matrix as input could be modeled by $\TC^{m \times n}$.
\end{example}

\begin{example}[Algorithms on sequences]
An algorithm which takes a sequence in $X$ as input could be modeled by
\begin{eqs}
X^* = \bigcup_{d \in \TN} X^d.
\end{eqs}
\end{example}

\begin{example}[Algorithms on combinations of the above]
An algorithm which takes a sequence in $X$ and an integer as input could be modeled by $X^* \times \TZ$.
\end{example}

\begin{example}[Algorithms on graphs]
A graph algorithm is often analyzed under the domain $\TN \times \TN$. This domain comes from the worst\-/case analysis (say), where the input is grouped according to the number of vertices and edges.
\end{example}

\begin{example}[Algorithms in computational geometry]
An algorithm in computational geometry \cite{CGeometry} works directly with points in $\TR^d$, and may not mention floating\-/point numbers at all. The worst\-/case (or direct) analysis may require the domain $\TR^d$.
\end{example}

\begin{example}[Algorithms on different dimensions]
We may be interested in how seeing how a geometric algorithm in $\TR^d$ scales with respect to the dimension $d \in \TN$. In this case the input-set may consist of a union of data structures (e.g. range tree) for different dimensionalities. Dimensionality can then be used as a grouping property.
\end{example}

\begin{example}[Approximation schemes]
When considering an NP-hard problem, we may consider an approximation scheme \cite{IntroAlgo2009}, where $\epsilon \in \nonnb{\TR}{1}$ specifies the quality of the approximation; $\epsilon$ is specified as part of the input.
\end{example}

\section{Computational model and cost-model}
\label{Cost models}

Before an algorithm can be written, the writer must decide on the model of computation. A \define{model of computation} is a mathematical structure, and a set of atomic operations which manipulate that structure. Some models of computation are the Turing machine, the random-access machine (RAM) \cite{RamModel}, the real-RAM \cite{RealRam}, and the abstract state machine.

The result of complexity analysis --- for a given model of computation, a given algorithm, and a given resource --- is a function $\function{f}{X}{\nonn{\TR}}$, which provides for each \emph{input} of the algorithm a non-negative real number. This number tells how much of that resource the algorithm consumes with the given input. As discussed in \sref{Algorithms}, the input-set $X$ can be arbitrary.

Before a complexity analysis can be carried out, the analyst must decide on the cost\-/model. A \define{cost\-/model} specifies the amount of resources that running an atomic operation takes on a given input. A given computational model can assume different cost\-/models. When the cost\-/model is unspecified --- as it often is --- the cost of each atomic operation is assumed to be one unit irrespective of input. 

\begin{example}[Constant cost\-/models]
The most common cost\-/model is the unit\-/cost model, which counts the number of performed atomic operations. Zero costs can be used to concentrate the interest to specific resources, such as order\-/comparisons. 
\end{example}

\begin{example}[Non\-/constant cost\-/models]
An example of a non\-/constant cost\-/model is to assign the addition of natural numbers\footnote{Assuming the computational model supports such an operation.} a cost which is proportional to the logarithms of the arguments, so as to be proportional to the number of bits in their binary representations. 
\end{example}

\begin{example}[Cost\-/models for abstract state machines]
An abstract state machine specifies costs for reading or writing a memory location through a given symbol. Reading an addition symbol $+$ at $(x, y) \in \TN^2$ --- $x + y$ --- could be assigned a logarithmic cost as described above.
\end{example}

\section{Primitive properties}

Complexity analysis aims to classify and compare algorithms based on their resource consumptions; the less an algorithm uses resources to solve a given problem, the better it is compared to other algorithms which solve the same problem. The cost functions of the algorithms to solve a problem $\function{P}{X}{Y}$ are elements of $\rc{X}$.\footnote{$\rc{X} = X \to \nonn{\TR}$; see \sref{Notation}.} The most general way to compare them is to define a preorder\footnote{A preorder on a set $X$ is a reflexive and transitive relation $\relationin{\preleq}{X}$.} $\relationin{\domi}{\rc{X}}$. This relation should capture the intuitive concept of a cost function $f \in \rc{X}$ being either better than or equivalent to the cost function $g \in \rc{X}$ --- in some sense. For brevity, we use the term \emph{dominated by}.

\begin{note}[Not worse]
It is tempting to use the phrase \emph{not worse than} instead of \emph{better or equivalent}. However, the former means better, equivalent, \emph{or incomparable}, which is not what we want.
\end{note}

\begin{note}[Cost function of an algorithm]
Given an algorithm $A \colon \algof{X}{Y}$, we shall denote its cost function by $f_A \in \rc{X}$.
\end{note}

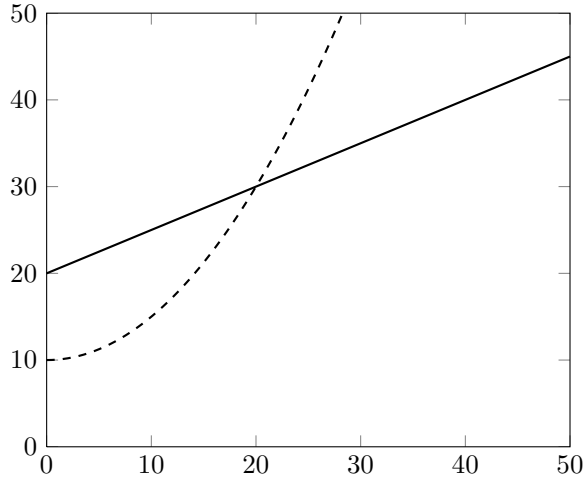
\begin{figure}
\center
\begin{tikzpicture}
\begin{axis}[xmin = 0, ymin = 0, xmax=50, ymax=50, samples=100]
  \addplot[dashed, thick, domain=0:50] (x, x * x / 20 + 10);
  \addplot[thick, domain=0:50] (x, x / 2 + 20);
\end{axis}
\end{tikzpicture}
\caption{Here $f_A, f_B \in \rc{\TN}$ are such that $f_A(n) = n / 2 + 20$ (solid line), and $f_B(n) = n^2 / 20 + 10$ (dashed line). Should $f_A \domi f_B$, $f_B \domi f_A$, $f_A \approx_X f_B$, or should $f_A$ and $f_B$ be incomparable?}
\label{AmbiguousOrder}
\end{figure}

\begin{example}[Essentially better?]
Consider Figure \ref{AmbiguousOrder}, where $f_A, f_B \in \rc{\TN}$ are such that $f_A(n) > f_B(n)$ for $n \in [0, 20)_{\TN}$, and $f_A(n) \leq f_B(n)$, for $n \in [20, \infty)_{\TN}$. We would then be inclined to say that $A$ has a better cost function, since $f_A(n)$ is small anyway on the finite interval $n \in [0, 20)_{\TN}$; $f_A$ is ``essentially'' better than $f_B$. How can this intuition be formalized? 
\end{example}
To decide between various definitions, we reflect on the fundamental properties that the analyst needs to complete his/her complexity analysis. An obvious \define{dominance property} is 
\begin{description}
\item[\property{Order}] \hfill \\ 
if an algorithm $A$ never uses more resources than algorithm $B$, then $f_A$ is better than or equivalent to $f_B$:
\begin{eqs}
f_A \leq f_B \implies f_A \domi f_B.
\end{eqs}
\end{description}
\begin{note}[Reflexivity]
In particular, \property{Order} implies that $\domi$ is reflexive: $f \domi f$. 
\end{note}
Since we want $\domi$ to be a preorder, we also need
\begin{description}
\item[\property{Trans}] \hfill \\ 
if $f_A$ is better than or equivalent to $f_B$, and $f_B$ is better than or equivalent to $f_C$, then $f_A$ is better than or equivalent to $f_C$:
\begin{eqs}
\bra{f_A \domi f_B \land f_B \domi f_C} \implies f_A \domi f_C.
\end{eqs}
\end{description}
In addition to being a preorder, the dominance relation must preserve the \emph{structure} present in algorithms: conditional branching, calls with transformed input, and looping.

\begin{algorithm}
\caption{An algorithm to demonstrate \property{Local}. Improving sub-algorithm $F$ for even integers improves the whole algorithm. We assume resources are only spent in $F$.}
\label{alg:locality}
\begin{algorithmic}[1]
\Procedure {branching}{$n$}
\If{$\text{mod}(n, 2) = 0$}
\State \Return{$F(n)$}
\EndIf
\State \Return{$0$}
\EndProcedure
\end{algorithmic}
\end{algorithm}

\begin{description}
\item[\property{Local}] \hfill \\ 
substituting a sub-algorithm \emph{for a subset of the input-set} with a better or equivalent sub-algorithm results in a better or equivalent algorithm overall:
\begin{eqs}
\bra{\forall i \in [1, n]_{\TN}: \restrb{f}{A_i} \preceq_{A_i} \restrb{g}{A_i}} \implies f \domi g,
\end{eqs}
for every finite cover $A_1, \dots, A_n \subset X$. \aref{alg:locality} demonstrates \property{Local}.
\end{description}

\begin{algorithm}
\caption{An algorithm to demonstrate \property{SubComp}. Improving sub-algorithm $F$ improves the whole algorithm. We assume resources are only spent in $F$.}
\label{alg:subcomp}
\begin{algorithmic}[1]
\Procedure {calling}{$n$}
\State \Return{\Call{F}{$s(n)$}}
\EndProcedure
\end{algorithmic}
\end{algorithm}

\begin{description}
\item[\property{SubComp}] \hfill \\ 
substituting a sub-algorithm \emph{called with transformed input} with a better or equivalent sub-algorithm results in a better or equivalent algorithm overall:
\begin{eqs}
f_A \preleq f_B \implies f_A \circ s \preleqb f_B \circ s,
\end{eqs}
for all $\function{s}{Y}{X}$. \aref{alg:subcomp} demonstrates \property{SubComp}.
\end{description}

\begin{algorithm}
\caption{An algorithm to demonstrate \property{NSubHom}. Improving sub-algorithm $F$ improves the whole algorithm. We assume resources are only spent in $F$.}
\label{alg:nsubhom}
\begin{algorithmic}[1]
\Procedure {looping}{$n$}
\For {$i \in \icon{0}{u(n)}$}
  \State \Call{F}{$n$}
\EndFor
\EndProcedure
\end{algorithmic}
\end{algorithm}

\begin{description}
\item[\property{NSubHom}] \hfill \\ 
substituting a sub-algorithm \emph{in a loop} with a better or equivalent sub-algorithm results in a better or equivalent algorithm overall:
\begin{eqs}
f_A \preleq f_B \implies uf_A \preleq uf_B,
\end{eqs}
for all $u \in \rc{X}$ such that $\image{u}{X} \subset \TN$. \aref{alg:nsubhom} demonstrates \property{NSubHom}.
\end{description}

\begin{description}
\item[\property{NSubDiv}] \hfill \\ 
if an algorithm that loops sub-algorithm A is better than or equivalent to an algorithm that loops sub-algorithm B, then it is because $f_A$ is better than or equivalent to $f_B$:
\begin{eqs}
f_A \preleq f_B \implies uf_A \preleq uf_B,
\end{eqs}
for all $u \in \rc{X}$ such that $\image{u}{X} \subset 1/\posi{\TN}$.
\end{description}

\begin{note}[Many preorders]
These properties reveal that the preorders in different sets cannot be defined independently of each other; they are tightly connected by \property{Local} and \property{SubComp}. Rather, the problem is to find a consistent class of preorders $\setb{\domi : X \in U}$, where $U$ is a given universe.
\end{note}

\begin{note}[Trivial dominance]
A problem with the listed properties thus far is that the trivial preorder $\domis = \rc{X} \times \rc{X}$, for all $X \in U$, fulfills them all; then the functions in $\rc{X}$ are all equivalent. Therefore, for the comparison to be useful, we need to require $\domi$ to distinguish at least some functions in $\rc{X}$, at least for some $X \in U$. 
\end{note}
The primitive \define{non\-/triviality property} is:
\begin{description}
\item[\property{One}] \hfill \\ $n \not\preceq_{\posi{\TN}} 1$.
\end{description}
Note that $1 \preceq_{\posi{\TN}} n$ already holds by \property{Order}; \property{One} prevents the order from collapsing due to equivalence.

Finally, there is the question of robustness. Writing an algorithm is an iterative process, which causes the cost function of an algorithm to change constantly. If every change is reflected in the ordering, then each change invalidates an existing complexity analysis of the algorithm. Worse, the change invalidates all the analyses of the algorithms which use the changed algorithm as a sub-algorithm. Since an algorithm may face hundreds or thousands of changes before stabilising into something usable in practice, such an approach is infeasible. Therefore, the ordering needs to introduce identification, to make it robust against small changes in the algorithms. But what is a small change?

The key realization is that for a given algorithm $F \in \algo{P}$ it is often easy, with small changes, to produce an algorithm $G \in \algo{P}$, for which the improvement\-/ratio\footnote{Assuming $f_F > 0$.} $f_F/f_G$ stays bounded.\footnote{For example, change to use Single Instruction Multiple Data (SIMD) instructions.} In contrast, obtaining an unbounded improvement\-/ratio often requires considerable insight and fundamental changes to the way an algorithm works --- a new way of structuring data and making use of it. This definition of interesting provides the desired robustness against small changes in algorithms. 

The primitive \define{abstraction property} is:
\begin{description}
\item[\property{Scale}] \hfill \\ if $f_A \domi f_B$, then $f_A \domi \alpha f_B$, for all $\alpha \in \posi{\TR}$.
\end{description}

\begin{note}[Dominance is linear dominance]
We will show in \sref{Characterization} that the \nprim{} primitive properties are equivalent to the definition of dominance as linear dominance:
\begin{eqs}
f \preleq g \iff \exists c \in \posi{\TR}: f \leq cg.
\end{eqs}
\end{note}

\begin{note}[Generators and propagators]
When the primitive properties are used as axioms, every proof of $f \preceq_X g$ must appeal to \property{Order}, and every proof of  $f \not\preceq_X g$ must appeal to \property{One}. All the other primitive properties propagate these results.
\end{note}

\begin{definition}[Related relations]
A preorder $\preleq$ induces the following related relations in $\rc{X}$:
\begin{eqs}
f \approx_X g & \iff f \preleq g \land g \preleq f, \\
f \prec_X g & \iff f \preleq g \land f \not\approx_X g, \\
f \succ_X g & \iff g \prec_X f, \\
f \pregeq g & \iff g \preleq f.
\end{eqs}
\end{definition}

\begin{note}[Strict comparison is weaker]
\label{StrictComparisonIsWeaker}
In a preorder which is not a partial order, the $\prec_X$ alone cannot be used to deduce whether $f \approx_X g$; it is a weaker concept than $\preleq$. However, providing $\prec_X$ and $\approx_X$ suffices. This is in contrast to a partial order, where $\approx_X$ is known to be the set\-/equality in $X$.
\end{note}

\section{Problem complexity}

Apart from analyzing the resource\-/cost of a specific algorithm, complexity analysts also have a greater underlying goal: that of analyzing the resource\-/cost of the underlying problem $\function{P}{X}{Y}$ itself --- for a given computational model, a given cost\-/model, and a given resource. This activity divides into two sub-activities; finding a lower\-/bound and an upper\-/bound for the cost function of $P$.

\begin{definition}[Lower bound]
A cost function $f \in \rc{X}$ is a \define{lower\-/bound} for a problem $\function{P}{X}{Y}$, if $f \domi f_A$, for all $A \in \algo{P}$. 
\end{definition}

\begin{definition}[Optimal cost function]
A cost function $f \in \rc{X} $ is \defineexp{optimal}{optimal!cost function} for a problem $\function{P}{X}{Y}$, if $f$ is a lower\-/bound for $P$, and $g \domi f$ for each $g \in \rc{X}$ a lower\-/bound for $P$.\footnote{That is, $f$ is a greatest lower\-/bound.} 
\end{definition}

\begin{note}[Optimal cost function may not exist]
An optimal cost function may not exist for a problem, as shown in \thref{DominanceIsIncomplete}. When it exists, the optimal cost function for $P$,
\begin{eqs}
\inf_{\domi} \setb{f_A : A \in \algo{P}},
\end{eqs}
is unique up to equivalence. 
\end{note}

\begin{definition}[Optimal algorithm]
An algorithm $A \in \algo{P}$ is \defineexp{optimal}{optimal!algorithm}, if $f_A$ is optimal for $P$.
\end{definition}

\begin{note}[Optimal algorithm may not exist]
An optimal algorithm may not exist for a given problem.
\end{note}

\begin{definition}[Upper bound]
A cost function $f \in \rc{X}$ is an \define{upper\-/bound} for a problem $\function{P}{X}{Y}$, if $f_A \domi f$, for some $A \in \algo{P}$. 
\end{definition}

An upper\-/bound is found by finding an actual algorithm for solving $P$. While there are computational problems which cannot be solved at all,\footnote{e.g., the halting problem under Turing machines.} establishing at least one upper\-/bound for a solvable problem is often easy. These are the \define{brute\-/force} algorithms, which compute or check everything without making any use of the underlying structure in the problem. 

\section{\texorpdfstring{$\ohsy$}{O}-notation}

\begin{definition}[$\ohsy$-notation]
An \define{$\ohsy$-notation} over the universe $U$ is a class of functions
\begin{equation}
\ohsy \coloneqq \setb{\function{\ohs{X}}{\rc{X}}{\power{\rc{X}}} : X \in U},
\end{equation}
where
\begin{eqs}
\ohx{f} = \setb{g \in \rc{X} : g \domi f}.
\end{eqs}
\end{definition}

\begin{note}[Different symbols]
The $\ohsy$ is used as a generic symbol for studying how the different desirable properties of $\ohsy$ interact with each other. We will use drawings inside the $\ohsy$ symbol for specific versions of the $\ohsy$-notation, such as $\pohsy$ for asymptotic linear dominance. 
\end{note}

\begin{note}[Different viewpoints]
It is equivalent to define either the functions $\setb{\ohs{X} : X \in U}$, or the preorders $\setb{\domi : X \in U}$; one can be recovered from the other. We shall give the theorems in terms of $\setb{\ohs{X} : X \in U}$, since this is more familiar to computer scientists. However, \we{} find that intuition works better when working with $\setb{\domi : X \in U}$. 
\end{note}

\begin{note}[Related notations]
The related notations are given in terms of the dominance relation as:
\begin{eqs}
\smallohx{f} & = \setb{g \in \rc{X} : g \prele f}, \\
\omegahx{f} & = \setb{g \in \rc{X} : g \pregeq f}, \\
\smallomegax{f} & = \setb{g \in \rc{X} : g \prege f}, \\
\thetahx{f} & = \setb{g \in \rc{X} : g \approx_X f}.
\end{eqs}
\end{note}

The $\ohsy$-notation extends naturally to sets of functions; such generality is sometimes needed.

\begin{definition}[$\setohsy$-notation]
The \define{$\setohsy$-notation} on a set $X$ is a function $\function{\setohs{X}}{\power{\rc{X}}}{\power{\rc{X}}}$ such that
\begin{eqs}
\setohx{A} = \bigcup_{f \in A} \ohx{f}.
\end{eqs}
\end{definition}

\section{Implicit conventions}
\label{ImplicitConventions}

An \define{implicit convention} is an overload of notation adopted by people working in a given field. Since it is an overload, the reader is required to deduce the correct meaning of such notation from the context. Here are some implicit conventions related to the $\ohsy$-notation --- as commonly used in computer science.

\subsection{Placeholder convention}

\begin{definition}[Placeholder convention]
The \define{placeholder convention} is to use $A \subset \power{\rc{X}}$ as a placeholder for an anonymous function $g \in \setoh{X}{A}$. It then must be guessed from the context whether the author means by $A$ the actual set, or the anonymous function $g$. 
\end{definition}

\begin{example}[Cost of an algorithm]
An algorithm costs $\oh{\TN}{n^2}$ if its cost function is an element of $\setoh{\TN}{\oh{\TN}{n^2}} = \oh{\TN}{n^2}$.
\end{example}

\begin{example}[Exponential of an $\ohsy$-set]
Consider the statement that an algorithm costs $2^{\oh{\TN}{n}}$. As a set, $0.5 \not\in 2^{\oh{\TN}{n}}$. However,
\begin{eqs}
0.5 \in \setoh{\TN}{2^{\oh{\TN}{n}}},
\end{eqs}
since $0 \in \oh{\TN}{n}$ and $0.5 \in \oh{\TN}{2^0}$. Similarly,
\begin{eqs}
2^{2.5n} \in \setoh{\TN}{2^{\oh{\TN}{n}}},
\end{eqs}
since $2.5n \in \oh{\TN}{n}$. 
\end{example}






\subsection{Domain convention}

\begin{definition}[Domain convention]
The \define{domain convention} is to leave off the domain of the $\ohsy$-notation, say $\oh{}{n^2}$, and then let the reader guess, for each use, the domain from the context. Sometimes the domain convention leads to a difficult interpretation. 
\end{definition}

\begin{example}[Difficult interpretation]
Consider an algorithm \cite{LocalSearch} under the unit\-/cost $w$-bit RAM model, where $w \in \TN$, which for $I \subset [0, 2^w)_{\TN} = U$ finds a nearest neighbor of $i \in U$ in $I$ in time
\begin{eqs}
\oh{}{\lg{\lg{\Delta + 4}}}, 
\end{eqs}
where $\Delta \in \TN$ is the distance between $i$ and its nearest neighbor in $I$. Intuitively, this sounds reasonable, but what is the domain? \Our{} thinking process went as follows.

Since the expression contains only a single symbol $\Delta$, \we{} assumed it to be a univariate $\ohsy$\-/notation. \Our{} first guess was $\oh{U}{\lg{\lg{\Delta + 4}}}$ --- with $w$ fixed. However, since $U$ is bounded, this is equal to $\oh{U}{1}$. The guess had to be wrong; otherwise the authors would have reported the complexity as $\oh{}{1}$. 

\Our{} second guess was $\oh{\TN}{\lg{\lg{\Delta + 4}}}$ --- again with $w$ fixed. However, this arbitrarily extends the complexity analysis to elements outside the input-set, since $\Delta < 2^w$. In addition, it is not always possible to do such an extension, such as when the function is $\lg{\lg{2^w - \Delta + 4}}$ instead. 

Finally, \we{} observed that the complexity depends both on $w$ and $\Delta$ --- although $\oh{}{\lg{\lg{\Delta + 4}}}$ never mentions $w$. The correct formalization is given by $\oh{D}{\lg{\lg{\Delta + 4}}}$, where $D = \setb{(w, \Delta) \in \TN^2 : \Delta \in [0, 2^w)}$. The corresponding algorithm would then take as input
\begin{eqs}
w & \in \posi{\TN}, \\
I & \in \cup_{k \in \posi{\TN}} \power{[0, 2^k)_{\TN}}, \\
i & \in \cup_{k \in \posi{\TN}} [0, 2^k),
\end{eqs}
subject to $I \subset [0, 2^w)_{\TN}$ and $i \in [0, 2^w)$.
\end{example}

\section{Worst case, best case, average case}
\label{WorstCaseAndOthers}

In this section we will formalize the concepts of worst\-/case, best\-/case, and average\-/case analyses.

\begin{definition}[Grouping]
A \define{grouping} of $X$ is a function $\function{g}{X}{Z}$. 
\end{definition}

\begin{definition}[Case]
A \define{case} over a grouping $\function{g}{X}{Z}$ is a function $\function{s}{\image{g}{X}}{X}$ such that $g \circ s = \iden{Z}$; a right inverse of $g$.
\end{definition}

\begin{definition}[Worst case]
A case $\function{s}{\image{g}{X}}{X}$ over a grouping $\function{g}{X}{Z}$ is called \emph{worst} of $f \in \rc{X}$, if
\begin{equation}
(f \circ s)(z) = \sup \image{f}{\preimage{g}{\setb{z}}},
\end{equation}
for all $z \in \image{g}{X}$.
\end{definition}

\begin{note}
A worst case may not exist.
\end{note}

\begin{definition}[Worst-case analysis]
A \define{worst-case analysis} of $f \in \rc{X}$ over a grouping $\function{g}{X}{Z}$ is the process of finding out
\begin{eqs}
\sup \image{f}{\preimage{g}{\setb{z}}}
\end{eqs}
or some $\ohsy$-set which contains it. 
\end{definition}

\begin{definition}[Best case]
A case $\function{s}{\image{g}{X}}{X}$ over a grouping $\function{g}{X}{Z}$ is called \emph{best} of $f \in \rc{X}$ if
\begin{equation}
(f \circ s)(z) = \inf \image{f}{\preimage{g}{\setb{z}}},
\end{equation}
for all $z \in \image{g}{X}$. 
\end{definition}

\begin{note}
A best case may not exist.
\end{note}

\begin{definition}[Best-case analysis]
A \define{best-case analysis} of $f \in \rc{X}$ over a grouping $\function{g}{X}{Z}$ is the process of finding out
\begin{eqs}
\inf \image{f}{\preimage{g}{\setb{z}}}
\end{eqs}
or some $\omegahsy$-set which contains it.
\end{definition}

\begin{example}[Analysis of insertion sort]
Assume the RAM model, with unit cost for comparison of integers and zero cost for other atomic operations. Let $\TN^* = \bigcup_{d \in \TN} \TN^d$ be the set of all finite sequences over $\TN$. Let $F \colon \algof{\TN^*}{\TN^*}$ be the insertion sort algorithm \cite{IntroAlgo2009}, which sorts a given input sequence $x$ into increasing order. Let $f \in \rc{\TN^*}$ be the number of comparisons made by $F$. Let $\function{g}{\TN^*}{\TN}$ be such that $g(x) = |x|$, the length of the sequence $x$. Let $\function{s}{\TN}{\TN^*}$ be the worst case of $f$ over $g$; each such sequence is decreasing. Then the worst-case complexity of $f$ over $g$ is $f \circ s$, and the worst-case analysis of $f$ provides $\oh{\TN}{f \circ s} = \oh{\TN}{n^2 + 1}$. Let $\function{r}{\TN}{\TN^*}$ be the best case of $f$ over $g$; each such sequence is increasing. Then the best-case complexity of $f$ over $g$ is $f \circ r$, and the best-case analysis of $f$ provides $\omegah{\TN}{f \circ r} = \omegah{\TN}{n + 1}$. For an arbitrary case $\function{p}{\TN}{\TN^*}$ of $f$ over $g$, it holds that $(f \circ p) \in \omegah{\TN}{n + 1} \cap \oh{\TN}{n^2 + 1}$.
\end{example}

\begin{definition}[Average-case analysis]
Let $(X, \Sigma_X, \TP)$ be a probability space, and $(Z, \Sigma_Z)$ be a measurable space. Let $\function{g}{X}{Z}$ be a random element, and $f \in \rc{X}$ be a random variable. An \define{average-case analysis of $f$ over $g$} is the process of finding out $\oh{Z}{\mathbb{E}\brac{f \; | \; g} \circ s}$, where $\function{s}{\image{g}{X}}{X}$ is any case over $g$, and $\mathbb{E}$ stands for (conditional) expectation.
\end{definition}

Since worst-case, best-case, and average-case analyses are the most common forms of complexity analysis in computer science --- with the grouping set almost always $Z \subset \TN^d$, for some $d \in \posi{\TN}$ --- this has led to the often repeated claim that the result of complexity analysis is a function which maps an `input size' to the amount of used resources. For example, \cite[page 25]{IntroAlgo2009} writes as follows (emphasis theirs):
\begin{quotation}
\noindent
The best notion for \emph{input size} depends on the problem being studied. For many problems, such as sorting or computing discrete Fourier transforms, the most natural measure is the \emph{number of items in the input} - for example, the array size $n$ for sorting. For many other problems, such as multiplying two integers, the best measure of input size is the \emph{total number of bits} needed to represent the input in ordinary binary notation. Sometimes, it is more appropriate to describe the size of the input with two numbers rather than one. For instance, if the input to an algorithm is a graph, the input size can be described by the numbers of vertices and edges in the graph. We shall indicate which input size measure is being used with each problem we study.
\end{quotation}

Complexity theorists sometimes study the cost functions of Turing machines with respect to input-tape-size. However, this is not a formalization of the input\-/size as described in the above quotation. 

The term input\-/size --- as it has been used --- is a synonym for a group label. A group label need not have any properties, such as an order. A grouping need only be done if the analysis or its interpretation otherwise seems difficult. It would seem clearer to use the term input\-/size only in those cases where the group labels form a set which is linearly ordered and contains a least element (e.g. $\nonn{\TN}$, $\nonn{\TR}$, or cardinal numbers).

We have shown above how input\-/size-thinking is subsumed by the more general input-set-thinking. In the input-set thinking, a set is used to provide a mathematical model for a data structure, and a cost function is a function of this data. 

\section{Misuses}
\label{Misuses}

The $\ohsy$-notation (and related definitions) is sometimes misused even by experienced researchers in computer science. By a \emph{misuse} we mean to use the notation in a context which does not have a formal meaning --- even after applying the implicit conventions specific to computer science. 

In the following we review some misuses made by experienced researchers in computer science. Let us note that, despite the misuses, the sources we refer to here are both great reading: \cite{IntroAlgo2009} is a classic book about algorithms, data-structures, and their analysis, while \cite{ExponentialTrees} provides an ingenious scheme for converting a static dictionary to a linear-space dynamic dictionary.

\begin{example}[Recurrence equations]
Reference \cite[page 102]{IntroAlgo2009} studies the solutions to the recurrence equation
\begin{equation}
T(n) = 
\begin{cases}
a T(n / b) + F(n), & n > 1, \\
\thetah{}{1}, & n = 1, 
\end{cases}
\end{equation}
where $a \in \nonnb{\TR}{1}$, $b \in \posib{\TR}{1}$, $B = \setb{b^i : i \in \TN}$, and $T, F \in \rc{B}$. The intent here is to study the Master theorem over powers --- as we do in \sref{MasterTheoremOverPowersSection}. Unfortunately, since $\thetah{}{1} \coloneqq \thetah{X}{1} \subset \rc{X}$ is a set of functions, and not a non-negative real number, this equation does not have a formal meaning. In addition, the set $X$ is left undefined. 

Applying the placeholder convention leads to
\begin{equation}
T(n) = 
\begin{cases}
a T(n / b) + F(n), & n > 1, \\
\hat{1}, & n = 1,
\end{cases}
\end{equation}
where $\hat{1} \in \thetah{X}{1}$. This still does not make sense; $\hat{1}$ is a function, not a non-negative real number.

The motivation for this recurrence equation is the analysis of divide\-/and\-/conquer algorithms, where the algorithm recursively solves $b$ -- perhaps overlapping --- sub-problems (when $b$ is an integer), and then combines their results to solve the original problem. The call-graph of such an algorithm is a tree, and in each leaf of this tree we would perhaps like to assign a \emph{different} constant for the amount of resources it takes. The problem is that the function $T$ can have only one value for $T(1)$.

One way to fix the recurrence equation is to simplify it to
\begin{equation}
T(n) = 
\begin{cases}
a T(n / b) + F(n), & n > 1, \\
d, & n = 1,
\end{cases}
\end{equation}
where $d \in \posi{\TR}$, which is the form we study in \sref{MasterTheorems}. We will show that $\loh{B}{T}$ is independent of the choice of $d$. Therefore, the simplified recurrence equation provides a solution for the divide\-/and\-/conquer analysis even when the costs in the leaf nodes vary in a fixed closed interval.\footnote{Assuming the interval does not contain zero.} \We{} believe this is the idea that \cite{IntroAlgo2009} was aiming for; it just is not captured by replacing $d$ with $\thetah{}{1}$.
\end{example}

\begin{example}[Conditional statements]
Let $k \in \posi{\TN}$, and $S, T \in \rc{\posi{\TN}}$. In \cite[page 9]{ExponentialTrees}, there is the following statement:
\begin{quote}
Then for $n = \smallomega{}{1}$, we have
\begin{equation*}
n \geq \oh{}{n^{1/k}} \geq \oh{}{n^{(1 - 1/k)/k}},
\end{equation*}
and
\begin{equation*}
n^{1-1/k} \geq \oh{}{n^{(1-1/k)^2}}.
\end{equation*}
For $n = \oh{}{1}$, we trivially have
\begin{equation*}
T(n) = \oh{}{1} = \oh{}{S(n)}.	
\end{equation*}
\end{quote}
We decode this as follows:
\begin{quote}
Then for $n \in \smallomega{\posi{\TN}}{1}$, we have
\begin{equation*}
\oh{\posi{\TN}}{n} \supset \oh{\posi{\TN}}{n^{1/k}} \supset \oh{\posi{\TN}}{n^{(1 - 1/k)/k}},
\end{equation*}
and
\begin{equation*}
\oh{\posi{\TN}}{n^{1-1/k}} \supset \oh{\posi{\TN}}{n^{(1-1/k)^2}}.
\end{equation*}
For $n \in \oh{\posi{\TN}}{1}$, we trivially have
\begin{equation*}
T \in \oh{\posi{\TN}}{1} = \oh{\posi{\TN}}{S}.	
\end{equation*}
\end{quote}
The expression $n \in \smallomega{\posi{\TN}}{1}$ is always true for $\lohsy$, $\pohsy$, $\cohsy$ and $\fohsy$. Similarly, the expression $n \in \oh{\posi{\TN}}{1}$ is always false. The authors have an intuitive concept which they want to transmit. However, the formalization of the intuition is incorrect, and so the communication fails. 
\end{example}

\section{Completeness}

In this section we study the completeness of the dominance relation $\domi$.

\begin{definition}[Completeness over a family]
A preorder $\relationin{\preleq}{\rc{X}}$ is \define{complete} over $A \subset \power{\rc{X}}$, if every $F \in A$ which has a lower\-/bound (an upper\-/bound) in $\rc{X}$ has a greatest lower\-/bound (a least upper\-/bound) in $\rc{X}$. 
\end{definition}

\begin{table}
\centering
\begin{tabular}{|l|l|}
\hline
Concept & Complete over... \\
\hline
Complete & $\power{\rc{X}}$ \\
Directed-complete & directed subsets of $\rc{X}$ \\
Chain-complete & linearly\-/ordered subsets of $\rc{X}$ \\
Lattice & $\fpower{\rc{X}}$ \\
Algorithm-complete & $\setb{\setb{f_F : F \in \algo{P}} : P \in \bra{X \to P(X)}}$ \\
\hline
\end{tabular}
\caption{Different types of completeness.}
\label{DifferentTypesOfCompleteness}
\end{table}

\begin{note}
Different types of completeness are listed in \tref{DifferentTypesOfCompleteness}.
\end{note}

\begin{proposition}[\uproperty{Lattice} is implied]
\label{DominanceRelationIsALattice}
$\domi$ has \require{Order}, \require{Local}, and \require{ISubComp}. $\implies$ $\domi$ has \prove{Lattice}. 
\end{proposition}

\begin{proof}
Let $f_1, \dots, f_n \in \rc{X}$. Then $f_1, \dots, f_n \leq \max(f_1, \dots, f_n)$, and by \require{Order} $f_1, \dots, f_n \domi \max(f_1, \dots, f_n)$. Suppose $h \in \rc{X}$ is such that $f_1, \dots, f_n \domi h \domi \max(f_1, \dots, f_n)$. Let
\begin{eqs}
F_i = \setb{x \in X: f_i(x) = \max(f_1, \dots, f_n)}. 
\end{eqs}
By \require{ISubComp}, $\restrb{f_i}{F_i} \preceq_{F_i} \restrb{h}{F_i} \preceq_{F_i} \restrb{f_i}{F_i}$. Therefore $\restrb{f_i}{F_i} \approx_{F_i} \restrb{h}{F_i}$. By \require{Local}, $\max(f_1, \dots, f_n) \approx_X h$. Therefore
\begin{eqs}
\sup \setb{f_1, \dots, f_n} \approx_X \max(f_1, \dots, f_n). 
\end{eqs}
Similarly, $\inf \setb{f_1, \dots, f_n} \approx_X \min(f_1, \dots, f_n)$. \sprove{Lattice}
\end{proof}

\begin{note}[Equivalence between types of completeness]
When $\domi$ is a lattice, every subset of $\rc{X}$ is directed, and therefore directed\-/complete is equivalent to complete. Further, chain\-/complete is equivalent to directed\-/complete by Zorn's lemma.\footnote{Zorn's lemma is equivalent to the axiom of choice.}
\end{note}

\begin{theorem}[Incompleteness]
\label{DominanceIsIncomplete}
$\preceq_{\posi{\TN}}$ has \require{Order}, \require{Trans}, \require{NSubHom}, \require{NSubDiv}, \require{Scale}, and \require{One}. $\implies$ $\preceq_{\posi{\TN}}$ is not complete. \sprove{Incomplete}
\end{theorem}

\begin{proof}
$\preceq_{\posi{\TN}}$ has \property{SubHom} by \proveby{RSubhomogenuityIsImplied}.

Let $f_{\alpha} \in \rc{\posi{\TN}}$ be such that $f_{\alpha}(n) = \alpha^n$, for all $\alpha \in \nonn{\TR}$. By \require{Order}, $\alpha \leq \beta \implies f_{\alpha} \preceq_{\posi{\TN}} f_{\beta}$, for all $\alpha, \beta \in \nonn{\TR}$. 

Suppose there exists $\alpha, \beta \in \nonn{\TR}$ such that $\alpha < \beta$ and $f_{\alpha} \succeq_{\posi{\TN}} f_{\beta}$. By \require{SubHom}, $\bra{f_{\beta} / f_{\alpha}} \preceq_{\posi{\TN}} 1$. It can be shown that $n \leq \frac{1}{e \lgbe{\beta / \alpha}} \bra{f_{\beta} / f_{\alpha}}$. By \require{Order} and \require{Scale}, $n \preceq_{\posi{\TN}} \bra{f_{\beta} / f_{\alpha}}$. By \require{Trans}, $n \preceq_{\posi{\TN}} 1$, which contradicts \property{One}. Therefore, $\alpha < \beta \implies f_{\alpha} \prec_{\posi{\TN}} f_{\beta}$, for all $\alpha, \beta \in \nonn{\TR}$.

Let $F = \setb{f_{\alpha} : \alpha \in \posib{\TR}{2}}$.  Then $f_2$ is a lower\-/bound of $F$. Let $\underline{f} \in \rc{\posi{\TN}}$ be a lower\-/bound of $F$ such that $f_2 \preceq_{\posi{\TN}} \underline{f}$. 

Suppose $\underline{f} \approx_{\posi{\TN}} \beta^n$, for some $\beta \in \posib{\TR}{2}$. Then $\bra{\frac{2 + \beta}{2}}^n \prec_{\posi{\TN}} \beta^n \approx_{\posi{\TN}} \underline{f}$, which contradicts $\underline{f}$ being a lower\-/bound of $F$. Therefore $\underline{f} \prec_{\posi{\TN}} \alpha^n$, for all $\alpha \in \posib{\TR}{2}$. 

By \require{Order}, $\underline{f} \preceq_{\posi{\TN}} n \underline{f}$. Suppose $n \underline{f} \preceq_{\posi{\TN}} \underline{f}$. By \require{SubHom}, $n \preceq_{\posi{\TN}} 1$, which contradicts \require{One}. Therefore $\underline{f} \prec_{\posi{\TN}} n \underline{f}$.

By \require{SubHom}, $n \underline{f} \prec_{\posi{\TN}} n \alpha^n$, for all $\alpha \in \posib{\TR}{2}$. It can be shown that $n \alpha^n \leq \frac{\beta / \alpha}{\lgbe{\beta / \alpha} e} \beta^n$, for all $\alpha, \beta \in \posib{\TR}{2}$ such that $\alpha < \beta$. By \require{Order} and \require{Scale}, $n \alpha^n \preceq_{\posi{\TN}} \beta^n$, for all $\alpha, \beta \in \posib{\TR}{2}$ such that $\alpha < \beta$. By \require{Trans}, $n \underline{f} \prec_{\posi{\TN}} \alpha^n$, for all $\alpha \in \posib{\TR}{2}$; $n \underline{f}$ is a lower\-/bound for $F$. Since $\underline{f} \prec_{\posi{\TN}} n \underline{f}$, there is no greatest lower\-/bound for $F$. 

\sprove{Incomplete}
\end{proof}

\chapter{Desirable properties}
\label{DesirableProperties}

In this section we provide an extensive list of desirable properties for an $\ohsy$-notation. We will show that they all hold for linear dominance in \sref{ImpliedProperties}. Each property is given for $\ohs{X}$, where $X \in U$. A given property holds for $\ohsy$ if it holds for $\ohs{X}$, for all $X \in U$. The desirable properties are listed in \taref{TableOfDesirableProperties}.

\begin{table}
\begin{tabular}{|l|l|}
\hline 
Name & Property \\
\hline 
\hline 
\textbf{\uproperty{Order}} & $f \lt g \implies f \in \ohx{g}$  \\
\hline 
\uproperty{Reflex} & $f \in \ohx{f}$ \\
\hline 
\textbf{\uproperty{Trans}} & $\bra{f \in \ohx{g} \textrm{ and } g \in \ohx{h}} \implies f \in \ohx{h}$  \\
\hline 
\uproperty{Orderness} & $f \in \ohx{g} \iff \ohx{f} \subset \ohx{g}$ \\
\hline 
\hline 
\uproperty{Zero} & $1 \not\in \oh{\posi{\TN}}{0}$ \\
\hline 
\textbf{\uproperty{One}} & $n \not\in \oh{\posi{\TN}}{1}$ \\
\hline 
\uproperty{TrivialZero} & $\ohx{0} = \setb{0}$ \\
\hline 
\hline 
\textbf{\uproperty{Scale}} & $\ohx{\alpha f} = \ohx{f}$ \\
\hline 
\usproperty{Translation} & $\ohx{f + \beta + \alpha} = \ohx{f + \beta}$  \\
\hline 
\uproperty{PowerH} & $\ohx{f}^{\alpha} = \ohx{f^{\alpha}}$ \\
\hline 
\uproperty{AddCons} & $u \ohx{f} + v \ohx{f} = (u + v) \ohx{f}$ \\
\hline 
\uproperty{MultiCons} & $\ohx{f}^{u} \cdot \ohx{f}^{v} = \ohx{f}^{u + v}$ \\
\hline 
\uproperty{MaxCons} & $\max(\ohx{f}, \ohx{f}) = \ohx{f}$ \\
\hline 
\textbf{\uproperty{Local}} & $\bra{\forall D \in C: \restrb{f}{D} \in \oh{D}{\restr{g}{D}}} \implies f \in \ohx{g}$  \\
\hline 
\uproperty{ScalarHom} & $\alpha \ohx{f} = \ohx{\alpha f}$ \\
\hline 
\uproperty{SubHom} & $u \ohx{f} \subset \ohx{u f}$ \\
\hline 
\uproperty{QSubHom} & $u \ohx{f} \subset \ohx{u f} \quad (\image{u}{X} \subset \nonn{\TQ})$ \\
\hline 
\textbf{\uproperty{NSubHom}} & $u \ohx{f} \subset \ohx{u f} \quad (\image{u}{X} \subset \TN)$ \\
\hline 
\textbf{\usproperty{NSubDiv}} & $u \ohx{f} \subset \ohx{u f} \quad (\image{u}{X} \subset 1 / \posi{\TN})$ \\
\hline 
\uproperty{SuperHom} & $u \ohx{f} \supset \ohx{u f}$ \\
\hline 
\uproperty{SubMulti} & $\ohx{f} \cdot \ohx{g} \subset \ohx{fg}$ \\
\hline 
\uproperty{SuperMulti} & $\ohx{f} \cdot \ohx{g} \supset \ohx{fg}$ \\
\hline 
\uproperty{SubRestrict} & $\bra{\restr{\ohx{f}}{D}} \subset \oh{D}{\restr{f}{D}}$ \\
\hline 
\uproperty{SuperRestrict} & $\bra{\restr{\ohx{f}}{D}} \supset \oh{D}{\restr{f}{D}}$ \\
\hline 
\uproperty{Additive} & $\ohx{f} + \ohx{g} = \ohx{f + g}$ \\
\hline 
\uproperty{Summation} & $\ohx{f + g} = \ohx{\max(f, g)}$ \\
\hline 
\uproperty{Maximum} & $\max(\ohx{f}, \ohx{g}) = \ohx{\max(f, g)}$ \\
\hline 
\uproperty{MaximumSum} & $\max(\ohx{f}, \ohx{g}) = \ohx{f} + \ohx{g}$ \\
\hline 
\textbf{\uproperty{SubComp}} & $\ohx{f} \circ s \subset \oh{Y}{f \circ s}$ \\
\hline 
\usproperty{ISuperComp} & $\ohx{f} \circ s \supset \oh{Y}{f \circ s}$ \quad ($s$ injective) \\
\hline 
\uproperty{Extend} & $\ohx{f} \circ \projections{X}{Y} \subset \oh{X \times Y}{f \circ \projections{X}{Y}}$ \\
\hline 
\uproperty{SubsetSum} & $\ohx{\sum_{(y, z) \in S_x} a(z) h(y)} \subset $ \\
{} & $\ohx{\sum_{(y, z) \in S_x} a(z) \bar{h}(y)}$ \\
\hline 
\end{tabular}
\centering
\caption{Desirable properties for an $\ohsy$-notation. Here $X, Y, Z \in U$, $f, g, u, v \in \rc{X}$, $\overline{h} \in \rc{Y}$, $h \in \oh{Y}{\overline{h}}$, $\alpha, \beta \in \posi{\TR}$, $D \subset X$, $\function{s}{Y}{X}$, $\function{S}{X}{\fpower{Y \times Z}}$, $a \in \rc{Z}$, and $C \subset \power{X}$ is a finite cover of $X$. Primitive properties marked with a bold face.}
\label{TableOfDesirableProperties}
\end{table}

\begin{note}[A computational model for examples]
\label{ModelForExamples}
When analyzing the cost functions of the example-algorithms in this section, addition costs one unit, while all other operations cost nothing.
\end{note}

\section{Dominance properties}

The \define{dominance properties} are those which mirror the desire for $\ohs{X}$ to represent a down\-/set of a preorder, where the preorder is consistent with the partial order $\leq$ in $\rc{X}$. 

\begin{definition}[\uproperty{Order}]
$\ohs{X}$ has \defineproperty{Order}, if 
\begin{equation}
f \leq g \implies f \in \ohx{g},
\end{equation} 
for all $f, g \in \rc{X}$. 
\end{definition}

\begin{example}
Let $X \in U$ and $f \in \rc{X}$. By \require{Order}, $0 \in \ohx{f}$.
\end{example}

\begin{example}[Powers after $1$]
Suppose $\TR \in U$. Let $\alpha, \beta \in \TR$ be such that $\alpha \leq \beta$. Then 
\begin{equation}
x^{\alpha} \leq x^{\beta},
\end{equation}
for all $x \in \nonnb{\TR}{1}$. By \require{Order},
\begin{equation}
x^{\alpha} \in \oh{\nonnb{\TR}{1}}{x^{\beta}}.
\end{equation}
\end{example}

\begin{example}[Powers before $1$]
Suppose $\TR \in U$. Let $\alpha, \beta \in \TR$ be such that $\alpha \leq \beta$. Then 
\begin{equation}
x^{\beta} \leq x^{\alpha},
\end{equation}
for all $x \in (0, 1] \subset \TR$. By \require{Order},
\begin{equation}
x^{\beta} \in \oh{(0, 1]}{x^{\alpha}}.
\end{equation}
\end{example}

\begin{example}[Positive power dominates a logarithm]
\label{LogGrowsSlowly}
Suppose $\nonnb{\TR}{1} \in U$. Let $\alpha, \gamma \in \posi{\TR}$ and $\beta \in \posib{\TR}{1}$. It can be shown that
\begin{equation}
\lgb{\beta}{x}^{\gamma} \leq \bra{\frac{\gamma}{\alpha e \lgbe{\beta}}}^{\gamma} x^{\alpha},
\end{equation}
for all $x \in \nonnb{\TR}{1}$. By \require{Order},
\begin{equation}
\lgbs{\beta}(x)^{\gamma} \in \oh{\nonnb{\TR}{1}}{\bra{\frac{\gamma}{\alpha e \lgbe{\beta}}}^{\gamma} x^{\alpha}}.
\end{equation}
\end{example}

\begin{example}[Positive power dominates a logarithm, generalized]
\label{LogGrowsSlowlyGeneralized}
Let $X \in U$, $\alpha, \gamma \in \posi{\TR}$, $\beta \in \posib{\TR}{1}$, and $f \in \rc{X}$ be such that $f \geq 1$. Then
\begin{equation}
\lgbs{\beta}^{\gamma} \circ f \leq \bra{\frac{\gamma}{\alpha e \lgbe{\beta}}}^{\gamma} f^{\alpha}.
\end{equation}
By \require{Order}, 
\begin{equation}
\lgbs{\beta}^{\gamma} \circ f \in \ohx{\bra{\frac{\gamma}{\alpha e \lgbe{\beta}}}^{\gamma} f^{\alpha}}.
\end{equation}
Since this holds for all $\alpha \in \posi{\TR}$,
\begin{equation}
\lgbs{\beta}^{\gamma} \circ f \in \bigcap_{\alpha \in \posi{\TR}} \ohx{\bra{\frac{\gamma}{\alpha e \lgbe{\beta}}}^{\gamma} f^{\alpha}}.
\end{equation}
\end{example}

\begin{example}[Functions on a finite set]
\label{FunctionsInFiniteSet}
Let $X \in U$ be finite, and $f \in \rc{X}$. Then $f \leq \max(\image{f}{X})$. By \require{Order},
\begin{eqs}
f \in \ohx{\max(\image{f}{X})}.
\end{eqs}
\end{example}

\begin{definition}[\uproperty{Reflex}]
$\ohs{X}$ has \defineproperty{Reflex}, if 
\begin{equation}
f \in \ohx{f},
\end{equation} 
for all $f \in \rc{X}$. 
\end{definition}

\begin{example}
$(5n + 3) \in \oh{\TN}{5n + 3}$.
\end{example}

\begin{definition}[\uproperty{Trans}]
$\ohs{X}$ has \defineproperty{Trans}, if
\begin{equation}
\bra{f \in \ohx{g} \textrm{ and } g \in \ohx{h}} \implies f \in \ohx{h},
\end{equation}
for all $f, g, h \in \rc{X}$.
\end{definition}

\begin{example}
Let $g \in \oh{\TN}{n^2}$, and $f \in \oh{\TN}{g}$. By \require{Trans}, $f \in \oh{\TN}{n^2}$.
\end{example}

\begin{definition}[\uproperty{Orderness}]
$\ohs{X}$ has \defineproperty{Orderness}, if
\begin{equation}
f \in \ohx{g} \iff \ohx{f} \subset \ohx{g},
\end{equation}
for all $f, g \in \rc{X}$. 
\end{definition}

\begin{example}
$n \in \oh{\TN}{n^2} \iff \oh{\TN}{n} \subset \oh{\TN}{n^2}$.
\end{example}

\section{Non-triviality properties}

The \emph{non\-/triviality properties} are those which require that the $\ohsy$-notation be detailed enough. 

\begin{example}
The class of functions $\function{\ohs{X}}{\rc{X}}{\power{\rc{X}}}$, such that $\ohx{f} \coloneqq \rc{X}$, for all $X \in U$, satisfies all of the desirable properties for an $\ohsy$-notation except those of non\-/triviality.
\end{example}

\begin{definition}[\uproperty{Zero}]
$\ohsy$ has \defineproperty{Zero}, if
\begin{equation}
1 \not\in \oh{\posi{\TN}}{0}.
\end{equation}
\end{definition}

\begin{definition}[\uproperty{One}]
$\ohsy$ has \defineproperty{One}, if
\begin{equation}
n \not\in \oh{\posi{\TN}}{1}.
\end{equation}
\end{definition}

\begin{definition}[\uproperty{TrivialZero}]
$\ohs{X}$ has \defineproperty{TrivialZero}, if
\begin{equation}
\ohx{0} = \setb{0}.
\end{equation}
\end{definition}

\section{Abstraction properties}

The \emph{abstraction properties} are those which define the way in which the $\ohsy$-notation identifies functions. 

\begin{definition}[\uproperty{Scale}]
$\ohs{X}$ has \defineproperty{Scale}, if
\begin{equation}
\ohx{\alpha f} = \ohx{f},
\end{equation}
for all $f \in \rc{X}$, and $\alpha \in \posi{\TR}$.
\end{definition}

\begin{example}
$\oh{\TN}{2n} = \oh{\TN}{n}$.
\end{example}

\begin{example}[Power dominates a logarithm, continued]
Let $X \in U$, $\beta \in \posib{\TR}{1}$, $\gamma \in \posi{\TR}$, and $f \in \rc{X}$ be such that $f \geq 1$. Continuing \eref{LogGrowsSlowlyGeneralized}, by \require{Scale},
\begin{equation}
\ohx{\lgbs{\beta}^{\gamma} \circ f} \subset \bigcap_{\alpha \in \posi{\TR}} \ohx{f^{\alpha}}.
\end{equation}
\end{example}

\begin{example}[Functions on a finite set, continued]
Let $X \in U$ be finite, and $f \in \rc{X}$. By \eref{FunctionsInFiniteSet}, $f \in \ohx{\max(\image{f}{X})}$. By \require{Scale}, $f \in \ohx{1}$, if $f \neq 0$, and $f \in \ohx{0}$, if $f = 0$.
\end{example}

\begin{definition}[\uproperty{Translation}]
$\ohs{X}$ has \defineproperty{Translation}, if
\begin{equation}
\ohx{f + \beta + \alpha} = \ohx{f + \beta},
\end{equation}
for all $f \in \rc{X}$, and $\alpha, \beta \in \posi{\TR}$.
\end{definition}

\begin{example}
$\oh{\posi{\TN}}{n + 1} = \oh{\posi{\TN}}{n}$.
\end{example}

\begin{note}[The role of $\beta$ in \property{Translation}]
Suppose $f \in \rc{X}$ is such that that $0 \in \image{f}{X}$. Then $\beta$ protects against transforming a zero cost to a non-zero cost.

Suppose $f \in \rc{X}$ is such that $f > 0$ and $\inf \image{f}{X} = 0$ --- e.g., $f \in \rc{\posi{\TN}}$ such that $f(n) = 1 / n^2$. Then it is possible to call the corresponding algorithm $F$ arbitrary number of times --- with different inputs --- while spending a bounded amount of resources. The $\beta$ protects against transforming a bounded cost function into an unbounded one.
\end{note}

\begin{example}
Suppose $ \posi{\TR}, X \in U$, and $X \neq \emptyset$. Then the following do \emph{not} follow from \property{Translation}: $\oh{\TN}{n + 1} = \oh{\TN}{n}$, $\ohx{1} = \ohx{0}$, and $\oh{\posi{\TR}}{(1 / x) + 1} = \oh{\posi{\TR}}{1 / x}$.
\end{example}

\begin{definition}[\uproperty{PowerH}]
$\ohs{X}$ has \defineproperty{PowerH}, if
\begin{equation}
\ohx{f}^{\alpha} = \ohx{f^{\alpha}},
\end{equation}
for all $f \in \rc{X}$, and $\alpha \in \posi{\TR}$.
\end{definition}

\begin{example}
$\oh{\TN}{n}^2 = \oh{\TN}{n^2}$.
\end{example}

\begin{definition}[\uproperty{AddCons}]
$\ohs{X}$ has \defineproperty{AddCons}, if
\begin{equation}
u \ohx{f} + v \ohx{f} = (u + v) \ohx{f},
\end{equation}
for all $f, u, v \in \rc{X}$.
\end{definition}

\begin{example}
$3 \oh{\TN}{n} + n^2 \oh{\TN}{n} = (3 + n^2)\oh{\TN}{n}$.
\end{example}

\begin{definition}[\uproperty{MultiCons}]
$\ohs{X}$ has \defineproperty{MultiCons}, if
\begin{equation}
\ohx{f}^{u} \cdot \ohx{f}^{v} = \ohx{f}^{u + v},
\end{equation}
for all $f, u, v \in \rc{X}$.
\end{definition}

\begin{example}
$\oh{\TN}{n}^3 \oh{\TN}{n}^{n^2} = \oh{\TN}{n}^{3 + n^2}$.
\end{example}

\begin{definition}[\uproperty{MaxCons}]
$\ohs{X}$ has \defineproperty{MaxCons}, if
\begin{equation}
\max(\ohx{f}, \ohx{f}) = \ohx{f},
\end{equation}
for all $f \in \rc{X}$.
\end{definition}

\section{Structural properties}

The \emph{structural properties} are those which mirror the structure of algorithms: repetition, conditional branching, and abstraction.

\begin{definition}[\uproperty{Local}]
$\ohsy$ has \defineproperty{Local}, if
\begin{equation}
\bra{\forall D \in C: \restrb{f}{D} \in \oh{D}{\restr{g}{D}}} \implies f \in \ohx{g}.
\end{equation}
for all $X \in U$, $f, g \in \rc{X}$, and $C \subset \power{X}$ a finite cover of $X$.
\end{definition}

\begin{definition}[\uproperty{ScalarHom}]
$\ohs{X}$ has \defineproperty{ScalarHom}, if
\begin{equation}
\alpha \ohx{f} = \ohx{\alpha f},
\end{equation}
for all $f \in \rc{X}$, and $\alpha \in \posi{\TR}$.
\end{definition}

\begin{example}
$2 \oh{\TN}{n} = \oh{\TN}{2n}$.
\end{example}

\begin{definition}[\uproperty{SubHom}]
$\ohs{X}$ has \defineproperty{SubHom}, if
\begin{equation}
u \ohx{f} \subset \ohx{uf},
\end{equation}
for all $f, u \in \rc{X}$.
\end{definition}

\begin{definition}[\uproperty{SuperHom}]
$\ohs{X}$ has \defineproperty{SuperHom}, if
\begin{equation}
u \ohx{f} \supset \ohx{uf},
\end{equation}
for all $f, u \in \rc{X}$.
\end{definition}

\begin{definition}[\uproperty{Hom}]
$\ohs{X}$ has \defineproperty{Hom}, if it has \property{SubHom} and \property{SuperHom}.
\end{definition}

\begin{definition}[\uproperty{SubMulti}]
$\ohs{X}$ has \defineproperty{SubMulti}, if
\begin{equation}
\ohx{f} \cdot \ohx{g} \subset \ohx{fg},
\end{equation}
for all $f, g \in \rc{X}$.
\end{definition}

\begin{definition}[\uproperty{SuperMulti}]
$\ohs{X}$ has \defineproperty{SuperMulti}, if
\begin{equation}
\ohx{f} \cdot \ohx{g} \supset \ohx{fg},
\end{equation}
for all $f, g \in \rc{X}$.
\end{definition}

\begin{definition}[\uproperty{Multi}]
$\ohs{X}$ has \defineproperty{Multi}, if it has \property{SubMulti} and \property{SuperMulti}.
\end{definition}

\begin{algorithm}
\caption{An algorithm to demonstrate \property{Multi}.}
\label{alg:multiply}
\begin{algorithmic}[1]
\Procedure {H}{$x$}
\For{$i \in \icon{0}{G(x)}$}
  \State \Call{F}{$x$}
\EndFor
\EndProcedure
\end{algorithmic}
\end{algorithm}

\begin{example}[Demonstration of \property{Multi}]
Consider \aref{alg:multiply}, $H \colon \algof{X}{Y}$. This algorithm runs the same sub-algorithm, $F \colon \algof{X}{Z}$, repeatedly $G(x) \in \TN$ times, where the algorithm $G \colon \algof{X}{\TN}$ has zero cost (i.e. does not perform additions). Let the cost functions of $H$ and $F$ be $h, f \in \rc{X}$, respectively. Then 
\begin{eqs}
h = fG.
\end{eqs}
By \require{Multi},
\begin{eqs}
\ohx{h} & = \ohx{fG} \\
{} & = \ohx{f} \cdot \ohx{G}.
\end{eqs}
That is, if we have analyzed $F$ to the have complexity $\ohx{f}$, and know $\ohx{G}$, then the complexity of $H$ is given by $\ohx{h} = \ohx{f} \cdot \ohx{G}$.
\end{example}

\begin{example}
$\oh{\TN}{n}  \cdot \oh{\TN}{n^2} = \oh{\TN}{n^3}$.
\end{example}

\begin{example}
$\oh{\TN}{0}  \cdot \oh{\TN}{n} = \oh{\TN}{0}$.
\end{example}

\begin{example}
Suppose $\posi{\TR} \in U$. Then $\oh{\posi{\TR}}{1/x}  \cdot \oh{\posi{\TR}}{x} = \oh{\posi{\TR}}{1}$.
\end{example}

\begin{definition}[\uproperty{SubRestrict}]
$\ohs{X}$ has \defineproperty{SubRestrict}, if
\begin{equation}
\bra{\restr{\ohx{f}}{D}} \subset \oh{D}{\restr{f}{D}},
\end{equation}
for all $f \in \rc{X}$, and $D \subset X$.
\end{definition}

\begin{definition}[\uproperty{SuperRestrict}]
$\ohs{X}$ has \defineproperty{SuperRestrict}, if
\begin{equation}
\bra{\restr{\ohx{f}}{D}} \supset \oh{D}{\restr{f}{D}},
\end{equation}
for all $f \in \rc{X}$, and $D \subset X$.
\end{definition}

\begin{definition}[\uproperty{Restrict}]
$\ohs{X}$ has \defineproperty{Restrict}, if it has \property{SubRestrict} and \property{SuperRestrict}.
\end{definition}

\begin{algorithm}
\caption{An algorithm to demonstrate \property{Restrict}.}
\label{alg:restriction}
\begin{algorithmic}[1]
\Procedure {H}{$x$}
\State \Return \Call{F}{$x$}
\EndProcedure
\end{algorithmic}
\end{algorithm}

\begin{example}[Demonstration of \property{Restrict}]
Let $X \in U$, $D \subset X$, and $Y$ be a set. Consider \aref{alg:restriction}, $H \colon \algof{D}{Y}$. This algorithm passes its argument --- as it is --- to the algorithm $F \colon \algof{X}{Y}$. Let $h \in \rc{D}$ and $f \in \rc{X}$ be the cost functions of algorithms $H$ and $F$, respectively. Then
\begin{eqs}
h = \restrb{f}{D}.
\end{eqs}
By \require{Restrict},
\begin{eqs}
\oh{D}{h} = \restrb{\ohx{f}}{D}.
\end{eqs}
That is, if we have analyzed $F$ to have the complexity $\ohx{f}$, then the complexity of $H$ is given by $\oh{D}{h} = \restrb{\ohx{f}}{D}$.
\end{example}

\begin{example}
$\bra{\restr{\oh{\TN}{n}}{2\TN}} = \oh{2\TN}{n}$
\end{example}

\begin{example}
Let $D \coloneqq \setb{(n, n) \in \TN^2 : n \in \TN}$. Then $\bra{\restr{\oh{\TN^2}{mn}}{D}} = \oh{D}{n^2}$.
\end{example}

\begin{example}
$\bra{\restr{\oh{\TN^2}{mn}}{\setb{(1, 1)}}} = \oh{\setb{(1, 1)}}{1}$
\end{example}

\begin{example}
Suppose $\posi{\TR} \in U$, and $g \in \oh{\posi{\TR}}{1 / x}$. By \require{SubRestrict}, $\restrb{g}{\nonnb{\TR}{1}} \in \oh{\nonnb{\TR}{1}}{1 / x}$. By \require{Order}, $1 / x \in \oh{\nonnb{\TR}{1}}{1}$. By \require{Trans}, $\restrb{g}{\nonnb{\TR}{1}} \in \oh{\nonnb{\TR}{1}}{1}$ 
\end{example}

\begin{definition}[\uproperty{Additive}]
$\ohs{X}$ has \defineproperty{Additive}, if
\begin{equation}
\ohx{f} + \ohx{g} = \ohx{f + g},
\end{equation}
for all $f, g \in \rc{X}$.
\end{definition}

\begin{algorithm}
\caption{An algorithm to demonstrate \property{Additive}.}
\label{alg:sequence}
\begin{algorithmic}[1]
\Procedure {H}{$x$}
\State \Return $($\Call{F}{$x$}, \Call{G}{$x$}$)$
\EndProcedure
\end{algorithmic}
\end{algorithm}

\begin{example}[Demonstration of \property{Additive}]
Consider \aref{alg:sequence}, $H \colon \algof{X}{Y \times Z}$, which decomposes into two sub-algorithms $F \colon \algof{X}{Y}$, and $G \colon \algof{X}{Z}$, such that $H(x) = (F(x), G(x))$. Suppose the cost functions of $H$, $F$, and $G$ are $h, f, g \in \rc{X}$, respectively. Then 
\begin{eqs}
h = f + g.
\end{eqs}
By \require{Additive},
\begin{eqs}
\ohx{h} & = \ohx{f + g} \\
{} & = \ohx{f} + \ohx{g}.
\end{eqs}
That is, if we have analyzed $F$ and $G$ to have complexities $\ohx{f}$ and $\ohx{g}$, respectively, then the complexity of $H$ is given by $\ohx{h} = \ohx{f} + \ohx{g}$.
\end{example}

\begin{example}
$\oh{\TN}{n} + \oh{\TN}{n^2} = \oh{\TN}{n + n^2}$.
\end{example}

\begin{example}
Let $X \in U$, and $f, g \in \rc{X}$. By \require{Order}, and \require{Additive}, 
\begin{eqs}
f + \ohx{g} & \subset \ohx{f} + \ohx{g} \\
{} & = \ohx{f + g}.
\end{eqs}
\end{example}

\begin{definition}[\uproperty{Summation}]
$\ohs{X}$ has \defineproperty{Summation}, if
\begin{equation}
\ohx{f + g} = \ohx{\max(f, g)},
\end{equation}
for all $f, g \in \rc{X}$.
\end{definition}

\begin{note}
\uproperty{Summation} states that, in a finite sequence of algorithm calls, the worst cost function over all calls determines the complexity of the call-sequence.
\end{note}

\begin{example}
$\oh{\TN}{n + n^2} = \oh{\TN}{\max(n, n^2)} = \oh{\TN}{n^2}$.
\end{example}

\begin{example}
Suppose $\nonnb{\TR}{1} \in U$. Then
\begin{eqs}
\oh{\nonnb{\TR}{1}}{x + x^2} & = \oh{\nonnb{\TR}{1}}{\max(x, x^2)} \\
{} & = \oh{\nonnb{\TR}{1}}{x^2}.
\end{eqs}
\end{example}

\begin{example}
Suppose $\nonn{\TR} \in U$. It does not follow from \property{Summation} that $\oh{\nonn{\TR}}{x + x^2} = \oh{\nonn{\TR}}{x^2}$.
\end{example}

\begin{definition}[\uproperty{Maximum}]
$\ohs{X}$ has \defineproperty{Maximum}, if
\begin{equation}
\max(\ohx{f}, \ohx{g}) = \ohx{\max(f, g)},
\end{equation}
for all $f, g \in \rc{X}$.
\end{definition}

\begin{example}
$\max(\oh{\TN}{n}, \oh{\TN}{n^2}) = \oh{\TN}{\max(n, n^2)} = \oh{\TN}{n^2}$.
\end{example}

\begin{definition}[\uproperty{MaximumSum}]
$\ohs{X}$ has \defineproperty{MaximumSum}, if
\begin{equation}
\max(\ohx{f}, \ohx{g}) = \ohx{f} + \ohx{g},
\end{equation}
for all $f, g \in \rc{X}$.
\end{definition}

\begin{example}
$\max(\oh{\TN}{n}, \oh{\TN}{n^2}) = \oh{\TN}{n} + \oh{\TN}{n^2}$.
\end{example}

\begin{definition}[\uproperty{SubComp}]
$\ohs{X}$ has \defineproperty{SubComp}, if
\begin{equation}
\ohx{f} \circ s \subset \oh{Y}{f \circ s},
\end{equation}
for all $f \in \rc{X}$ and $\function{s}{Y}{X}$.
\end{definition}

\begin{definition}[\uproperty{ISubComp}]
$\ohs{X}$ has \defineproperty{ISubComp}, if it has \property{SubComp} for all injective $\function{s}{Y}{X}$.
\end{definition}

\begin{definition}[\uproperty{ISuperComp}]
$\ohs{X}$ has \defineproperty{ISuperComp}, if
\begin{equation}
\ohx{f} \circ s \supset \oh{Y}{f \circ s},
\end{equation}
for all $f \in \rc{X}$ and injective $\function{s}{Y}{X}$.
\end{definition}

\begin{note}[\uproperty{SuperComp} does not make sense without injectivity]
Let $\function{s}{Y}{X}$ be a positive constant; $s(y) = c$ for all $y \in Y$, and some $c \in \posi{\TR}$. Then $\ohx{f} \circ s = \setb{s}$, and $\oh{Y}{f \circ s} = \oh{Y}{c}$. Let $g \in \oh{Y}{c} \setminus \setb{s}$. Then $g \not\in \ohx{f} \circ s = \setb{s}$. 
\end{note}

\begin{algorithm}
\caption{An algorithm to demonstrate \property{IComp}. We assume that $F$ does not consume any resources.}
\label{alg:DependentSequence}
\begin{algorithmic}[1]
\Procedure {$H$}{$x$}
\State \Return \Call{G}{$F(x)$}
\EndProcedure
\end{algorithmic}
\end{algorithm}

\begin{example}[Demonstration of \property{IComp}]
\label{CompositionExample}
Consider \aref{alg:DependentSequence}, $H \colon \algof{X}{Z}$, which decomposes into two sub-algorithms $F \colon \algof{X}{Y}$ and $G \colon \algof{Y}{Z}$ such that $F$ is injective and $H = G \circ F$. Denote the cost functions of $H$, $F$ and $G$ by $h, f \in \rc{X}$ and $g \in \rc{Y}$, respectively. Suppose $f = 0$. Then
\begin{eqs}
h & = g \circ F + f \\
{} & = g \circ F.
\end{eqs}
By \require{IComp},
\begin{eqs}
\ohx{h} & = \ohx{g \circ F} \\
{} & = \oh{Y}{g} \circ F.
\end{eqs}
That is, if $f = 0$, and we know $\oh{Y}{g}$, then the complexity of $H$ is given by $\ohx{h} = \oh{Y}{g} \circ F$. 
\end{example}

\begin{definition}[\uproperty{Extend}]
$\ohs{X}$ has \defineproperty{Extend}, if
\begin{equation}
\ohx{f} \circ \projections{X}{Y} \subset \oh{X \times Y}{f \circ \projections{X}{Y}},
\end{equation}
for all $X, Y \in U$ and $f \in \rc{X}$.
\end{definition}

\begin{note}
\uproperty{Extend} is a special case of \property{SubComp}.
\end{note}

\begin{definition}[\uproperty{SubsetSum}]
$\ohsy$ has \defineproperty{SubsetSum}, if
\begin{equation}
\ohx{\sum_{(y, z) \in S_x} a(z) f(y)} \subset \ohx{\sum_{(y, z) \in S_x} a(z) \bar{f}(y)},
\end{equation}
for all $X, Y, Z \in U$, $\function{S}{X}{\fpower{Y \times Z}}$, $a \in \rc{Z}$, $\bar{f} \in \rc{Y}$, and $f \in \oh{Y}{\bar{f}}$.
\end{definition}

\begin{algorithm}
\caption{An algorithm to demonstrate \property{SubsetSum}.}
\label{alg:SubsetSum}
\begin{algorithmic}[1]
\Procedure {$G$}{$x$}
\State $\alpha \coloneqq 0$
\For {$(y, z) \in S_x$}
\State $\alpha \coloneqq \alpha \cdot \Call{F}{y}$
\EndFor
\State \Return $\alpha$
\EndProcedure
\end{algorithmic}
\end{algorithm}

\begin{example}[Demonstration of \property{SubsetSum}]
\label{SubsetSumExample}
Consider \aref{alg:SubsetSum}, $G \colon \algof{X}{\TZ}$, which for $x \in X$ sums $F \colon \algof{Y}{\TZ}$ over $S_x \in \fpower{Y \times Z}$. Let the algorithms $F$ and $G$ take $f \in \rc{Y}$ and $g \in \rc{X}$ operations, respectively. Then
\begin{eqs}
g(x) = \sum_{y \in S_x} f(y).
\end{eqs}
Suppose we have shown that $f \in \oh{Y}{\bar{f}}$. By \require{SubsetSum},
\begin{eqs}
\ohx{g} & = \ohx{\sum_{y \in S_x} f(y)} \\
{} & \subset \ohx{\sum_{y \in S_x} \bar{f}(y)}.
\end{eqs}
That is, if $f$ has an upper bound of $\bar{f}$, then we can construct an upper bound of $g$ from $\bar{f}$. Suppose we have also shown that $\bar{f} \in \oh{Y}{f}$ --- i.e. $f \in \thetah{Y}{\bar{f}}$. Then
\begin{eqs}
\ohx{g} = \ohx{\sum_{y \in S_x} \bar{f}(y)}.
\end{eqs}
\end{example}

\begin{note}[Factor $a$ in \property{SubsetSum}]
\eref{SubsetSumExample} does not make use of the factor $a$ in \property{SubsetSum}. Such general sums occur naturally in the analysis of recursive algorithms; see \sref{MasterTheorems}.
\end{note}

\chapter{Characterization of the \texorpdfstring{$\ohsy$}{O}-notation}
\label{Characterization}

In this chapter we show that the primitive properties are equivalent to the definition of $\ohsy$-notation as linear dominance.

\section{Linear dominance is sufficient}
\label{SufficientDefinition}

In this section we will show the following theorem:

\begin{theorem}[Linear dominance has primitive properties]
\label{LinearImpliesPrimitive}
Let $\function{\lohs{X}}{\rc{X}}{\power{\rc{X}}}$ be defined by $g \in \lohx{f}$ if and only if
\begin{equation}
\exists c \in \posi{\TR}: g \leq c f,
\end{equation}
for all $f, g \in \rc{X}$, and all $X \in U$, where the universe $U$ is the class of all sets. Then $\lohsy$ satisfies the primitive properties.
\end{theorem}

\begin{proof}
The result follows directly from the Lemmas in this section.
\end{proof}

We shall apply the following lemma repeatedly without mentioning it.

\begin{lemma}[Simplification lemma]
\label{LinearSingleConstantLemma}
Let $X \in U$, $I$ be a finite set, $X_i \subset X$, $f_i \in \rc{X_i}$, and $\hat{f}_i \in \loh{X_i}{f_i}$, for all $i \in I$. Then there exists $c \in \posi{\TR}$, such that $\hat{f}_i \lt c f_i$, for all $i \in I$.
\end{lemma}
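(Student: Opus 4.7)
The statement is essentially that finitely many linear dominance constants can be consolidated into a single one. The plan is to unfold the definition for each index, extract the per\-/index constants, and then take their maximum, which exists because $I$ is finite.

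More concretely, I would first invoke the definition of $\lohsy$: for each $i \in I$, since $\hat{f}_i \in \loh{X_i}{f_i}$, there exists some $c_i \in \posi{\TR}$ with $\hat{f}_i \leq c_i f_i$ pointwise on $X_i$. This uses only the definition appearing in \thref{LinearDominanceIsSufficient} and nothing else.

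Next, since $I$ is finite and each $c_i > 0$, I would set
\begin{equation}
c := \max_{i \in I} c_i \in \posi{\TR}.
\end{equation}
Using that $f_i \in \rc{X_i}$ takes values in $\nonn{\TR}$, multiplying the inequality $c_i \leq c$ by $f_i(x) \geq 0$ preserves it, so $c_i f_i \leq c f_i$ on $X_i$. Chaining with $\hat{f}_i \leq c_i f_i$ yields $\hat{f}_i \leq c f_i$ for every $i \in I$, as required.

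There is essentially no obstacle here; the only subtlety worth flagging is that the argument relies on non\-/negativity of $f_i$ (to pass from $c_i \leq c$ to $c_i f_i \leq c f_i$) and on finiteness of $I$ (so that $\max_{i \in I} c_i$ exists in $\posi{\TR}$ rather than being a supremum that could blow up). Both conditions are built into the statement, so the proof is a one\-/line consolidation after unfolding the definition.
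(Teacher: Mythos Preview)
Your proof is correct and follows exactly the paper's approach: extract per-index constants $c_i$ from the definition and take $c = \max_{i \in I} c_i$. The paper's proof is the same one-liner, just without the explicit remark about non-negativity of $f_i$.
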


\begin{proof}
Since $\hat{f}_i \in \loh{X_i}{f_i}$, there exists $c_i \in \posi{\TR}$, such that $\hat{f}_i \lt c_i f_i$, for all $i \in I$. Let $c = \max \setb{c_i : i \in I}$. Then $\hat{f}_i \lt c f_i$, for all $i \in I$.
\end{proof}

\begin{lemma}[Linear dominance has \property{Order}]
\label{LinearOrderConsistency}
Let $X \in U$, and $f, g \in \rc{X}$. Then \sprove{Order}
\begin{equation}
f \leq g \implies f \in \lohx{g}.
\end{equation}
\end{lemma}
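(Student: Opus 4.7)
The plan is to apply the definition of linear dominance directly with the witness constant $c = 1$. Assuming $f \leq g$, we have $f \leq 1 \cdot g$, and since $1 \in \posi{\TR}$, the existential in the definition of $\lohx{g}$ is satisfied. Hence $f \in \lohx{g}$.

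There is no real obstacle here: the lemma is essentially a restatement of the fact that $1$ is a positive real. The only thing to be careful about is that the constant must lie in $\posi{\TR}$ (strictly positive), not merely $\nonn{\TR}$, which is why we pick $c = 1$ rather than, say, $c = 0$. The \thref{LinearSingleConstantLemma} is not needed since only a single inequality is involved.

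Hence the proof will be essentially a single display: take $c := 1 \in \posi{\TR}$; then $f \leq g = c g$, so by the definition $f \in \lohx{g}$, which establishes \prove{Order}.
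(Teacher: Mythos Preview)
Your proposal is correct and matches the paper's proof exactly: both take $c = 1 \in \posi{\TR}$ and observe that $f \leq 1 \cdot g$ gives $f \in \lohx{g}$ directly from the definition.
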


\begin{proof}
Since $f \leq 1g$, it holds that $f \in \lohx{g}$.
\sprove{Order}
\end{proof}

\begin{lemma}[Linear dominance has \property{Trans}]
\label{LinearTransitivity}
Let $X \in U$, and $f, g, h \in \rc{X}$. Then \sprove{Trans}
\begin{equation}
\bra{f \in \lohx{g} \textrm{ and } g \in \lohx{h}} \implies f \in \lohx{h}.
\end{equation}
\end{lemma}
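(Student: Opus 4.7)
The plan is to unfold the definition of $\lohsy$ twice and then compose the two witnessing constants into a single multiplicative constant. Assume the hypothesis $f \in \lohx{g}$ and $g \in \lohx{h}$. By \thref{LinearSingleConstantLemma} (the simplification lemma just proved), I can extract a single constant $c \in \posi{\TR}$ witnessing both containments, but for transitivity it is cleaner to invoke the definition directly and obtain $c_1, c_2 \in \posi{\TR}$ with $f \leq c_1 g$ and $g \leq c_2 h$ as pointwise inequalities of elements of $\rc{X}$.

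Next I would chain these inequalities. Since $c_1 \in \posi{\TR}$ and multiplication by a positive constant preserves the pointwise order on $\rc{X}$, from $g \leq c_2 h$ I get $c_1 g \leq c_1 c_2 h$, and combining with $f \leq c_1 g$ yields $f \leq c_1 c_2 h$. Setting $c := c_1 c_2 \in \posi{\TR}$ gives the required witness, so $f \in \lohx{h}$ by the definition of $\lohsy$. This establishes \property{Trans}.

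There is no real obstacle here; the only thing to be mildly careful about is that the relation $\leq$ on $\rc{X}$ is the lifted pointwise relation (as defined in the Notation section), so each inequality manipulation is really shorthand for a universally quantified pointwise statement. The positivity of $c_1$ ensures that scaling preserves this pointwise ordering on $\nonn{\TR}$, which is all that is needed. The closing \texttt{sprove}/\texttt{prove} annotation for \property{Trans} should be added at the end of the proof, mirroring the style of \thref{LinearOrderConsistency}.
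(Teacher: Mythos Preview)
Your proof is correct and essentially the same as the paper's. The only cosmetic difference is that the paper invokes \thref{LinearSingleConstantLemma} to obtain a single constant $c$ with $f \leq cg$ and $g \leq ch$ and then concludes $f \leq c^2 h$, whereas you keep two separate constants $c_1, c_2$ and use their product; both are the same multiplicative composition of witnesses.
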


\begin{proof}
Let $f \in \lohx{g}$, and $g \in \lohx{h}$. Then there exists $c \in \posi{\TR}$, such that $f \lt c g$ and $g \lt c h$. It follows that $f \lt c^2 h$. Therefore $f \in \lohx{h}$.
\sprove{Trans}
\end{proof}

\begin{lemma}[Linear dominance has \property{Local}]
\label{LinearLocality}
Let $X \in U$, $f, g \in \rc{X}$, and $C \subset \power{X}$ be a finite cover of $X$. Then \sprove{Local}
\begin{equation}
\bra{\forall D \in C: \restrb{f}{D} \in \loh{D}{\restr{g}{D}}} \implies f \in \lohx{g}.
\end{equation}
\end{lemma}

\begin{proof}
Assume $\bra{\restr{f}{D}} \in \loh{D}{\restr{g}{D}}$, for all $D \in C$. Then there exist $c \in \posi{\TR}$ such that $\restrb{f}{D} \lt c \restrb{g}{D}$, for all $D \in C$. Since $C$ covers $X$, $f \lt c g$. Therefore $f \in \lohx{g}$.
\sprove{Local}
\end{proof}

\begin{lemma}[Linear dominance has \property{One}]
\label{LinearOneSeparation}
\begin{equation}
n \not\in \loh{\posi{\TN}}{1}. 
\end{equation} \sprove{One}
\end{lemma}

\begin{proof}
For all $c \in \posi{\TR}$, there exists $n \in \posi{\TN}$ --- for example $n = \ceilb{c} + 1$ --- such that $n > c 1$. Therefore $n \not\in \loh{\posi{\TN}}{1}$. \sprove{One}
\end{proof}

\begin{lemma}[Linear dominance has \property{Scale}]
\label{LinearScaleInvariance}
Let $X \in U$, $f \in \rc{X}$, and $\alpha \in \posi{\TR}$. Then $\lohx{f} \subset \lohx{\alpha f}$. \sprove{Scale}
\end{lemma}

\begin{proof}
Assume $\hat{f} \in \lohx{f}$. Then there exists $c \in \posi{\TR}$, such that
\begin{eqs}
\hat{f} & \lt c f \\
{} & = (c / \alpha) \bra{\alpha f}.
\end{eqs}
Therefore $\hat{f} \in \lohx{\alpha f}$. 
\sprove{Scale}
\end{proof}

\begin{lemma}[Linear dominance has \property{NSubHom} and \property{NSubDiv}]
\label{LinearSubHomogenuity}
Let $X \in U$, and $f, u \in \rc{X}$. Then \sprove{NSubHom} \sprove{NSubDiv}
\begin{equation}
u \lohx{f} \subset \lohx{uf}. 
\end{equation} 
\end{lemma}

\begin{proof}
Let $\hat{f} \in \lohx{f}$. Then there exists $c \in \posi{\TR}$, such that $\hat{f} \lt c f$. This implies $u\hat{f} \lt c uf$. Therefore $u\hat{f} \in \lohx{uf}$; $\lohs{X}$ has \prove{SubHom}. Since $\TN \subset \nonn{\TR}$, $\lohs{X}$ has \prove{NSubHom}. Since $\frac{1}{\posi{\TN}} \subset \posi{\TR}$, $\lohs{X}$ has \prove{NSubDiv}.
\end{proof}

\begin{lemma}[Linear dominance has \property{SubComp}]
\label{LinearSubComposability}
Let $X \in U$, $f \in \rc{X}$, and $\function{s}{Y}{X}$. Then \sprove{SubComp}
\begin{equation}
\lohx{f} \circ s \subset \loh{Y}{f \circ s}.
\end{equation}
\end{lemma}

\begin{proof}
Let $\hat{f} \in \lohx{f}$. Then there exists $c \in \posi{\TR}$, such that $\hat{f} \lt c f$. This implies $\hat{f} \circ s \lt c (f \circ s)$. Therefore $\hat{f} \circ s \in \lohx{f \circ s}$.
\sprove{SubComp}
\end{proof}

\section{Linear dominance is necessary}
\label{NecessaryDefinition}

In this section we will show the following theorem.

\begin{theorem}[Primitive properties imply linear dominance]
Suppose $\ohsy$ has \require{Order}, \require{Trans}, \require{One}, \require{Local}, \require{Scale},  \require{NSubHom}, \require{NSubDiv}, and \require{SubComp}. Then
\begin{equation}
f \in \ohx{g} \iff \exists c \in \posi{\TR}: f \leq cg.
\end{equation}
\end{theorem}

To prove this result, we will use some of the results from \sref{ImpliedProperties}.

\begin{lemma}[$\ohx{1}$ equals the bounded functions]
\label{OhOneIsExactlyBounded}
Suppose $\ohsy$ has \require{Order}, \require{Trans}, \require{One}, \require{Local}, \require{Scale}, and \require{ISubComp}. Then
\begin{equation}
\ohx{1} = \setb{f \in \rc{X} : \exists c \in \posi{\TR} : f \leq c},
\end{equation}
provided $X \neq \emptyset$.
\sprove{OhForOne}
\end{lemma}

\begin{proof}
\proofpart{Implied properties}
$\ohsy$ has \property{ISuperComp} by \proveby{InjectiveSuperComposabilityIsImplied}, and \property{Orderness} by \proveby{OrdernessIsImplied}.

\proofpart{$\subset$}
Assume $f \in \ohx{1}$ such that $f$ is unbounded. Then for every $n \in \posi{\TN}$ there exists $x_n \in X$ such that $f(x_n) \geq n$. Therefore, let $\function{s}{\posi{\TN}}{X}$ be injective such that $n \leq (f \circ s)(n)$, for all $n \in \posi{\TN}$. By \require{Orderness}, $\ohx{f} \subset \ohx{1}$. By \require{Order}, \require{Order} and \require{Trans}, \require{ISuperComp}, and \require{ISubComp},
\begin{eqs}
(n \mapsto n) & \in \oh{\posi{\TN}}{n}\\
{} & \subset \oh{\posi{\TN}}{f \circ s} \\
{} & \subset \ohx{f} \circ s \\
{} & \subset \ohx{1} \circ s \\
{} & \subset \oh{\posi{\TN}}{1 \circ s} \\
{} & = \oh{\posi{\TN}}{1}.
\end{eqs}
This contradicts $\ohsy$ having \require{One}. Therefore $f$ is bounded, which is equivalent to $\exists c \in \posi{\TR} : f \leq c$.

\proofpart{$\supset$} 
Assume $\exists c \in \posi{\TR}: f \leq c$. By \require{Order}, $f \in \ohx{c}$. By \require{Scale}, $f \in \ohx{1}$.

\sprove{OhForOne}
\end{proof}

\begin{lemma}[$\ohsy$-notation for positive functions]
\label{OhForPositive}
Suppose $\ohs{X}$ has \require{Order}, \require{Trans}, \require{Scale}, \require{NSubHom}, and \require{NSubDiv}. Then
\begin{equation}
f \in \ohx{g} \iff f / g \in \ohx{1},
\end{equation}
for all $f, g \in \rc{X}$ such that $g > 0$.
\sprove{OhForPositive}
\end{lemma}

\begin{proof}
$\ohs{X}$ has \property{SubHom} by \proveby{RSubhomogenuityIsImplied}.

\proofpart{$\implies$}
By \require{SubHom},
\begin{eqs}
{} & f \in \ohx{g} \\
\impliesr & f / g \in \ohx{g} / g \\
\impliesr & f / g \in \ohx{g / g} \\
\impliesr & f / g \in \ohx{1}.
\end{eqs}
\proofpart{$\impliedby$}
By \require{SubHom},
\begin{eqs}
{} & f / g \in \ohx{1} \\
\impliesr & f \in \ohx{1} g \\
\impliesr & f \in \ohx{g}.
\end{eqs}
\sprove{OhForPositive}
\end{proof}

\begin{theorem}[Primitive properties imply linear dominance]
\label{PrimitiveImpliesLinear}
Suppose $\ohsy$ has \require{Order}, \require{Trans}, \require{One}, \require{Local}, \require{Scale}, \require{NSubHom}, \require{NSubDiv}, and \require{SubComp}. Then
\begin{equation}
f \in \ohx{g} \iff \exists c \in \posi{\TR}: f \leq cg.
\end{equation}
\end{theorem}

\begin{proof}
\proofpart{Implied properties}
\srequire{SubComp} \sprove{ISubComp}
$\ohs{X}$ has \property{SubRestrict} by \proveby{SubRestrictabilityIsImplied}, and \property{TrivialZero} by \proveby{ZeroTrivialityIsImplied}. 

\proofpart{Positive subset}
Let $G \coloneqq \preimage{g}{\posi{\TR}}$ and $\compl{G} \coloneqq X \setminus G$. 
Then
\begin{eqs}
{} & \restrb{f}{G} \in \oh{G}{\restr{g}{G}} \\
\iffr & \frac{\restrb{f}{G}}{\restrb{g}{G}} \in \oh{G}{1} \\
\iffr & \exists c \in \posi{\TR}: \frac{\restrb{f}{G}}{\restrb{g}{G}} \leq c \\
\iffr & \exists c \in \posi{\TR}: \restrb{f}{G} \leq c \restrb{g}{G} \\
\end{eqs}
where we used \proveby{OhForPositive} and \proveby{OhOneIsExactlyBounded}. \srequire{OhForPositive} \srequire{OhForOne} 

\proofpart{Zero subset}
By \require{TrivialZero},
\begin{eqs}
{} & \restrb{f}{\compl{G}} \in \oh{\compl{G}}{\restr{g}{\compl{G}}} \\
\iffr & \restrb{f}{\compl{G}} = 0.
\end{eqs}
 
\proofpart{Whole set}
By \require{Local} and \require{SubRestrict},
\begin{eqs}
{} & f \in \ohx{g} \\
\iffr & \restrb{f}{G} \in \oh{G}{\restr{g}{G}} \textrm{ and } \restrb{f}{\compl{G}} \in \oh{\compl{G}}{\restr{g}{\compl{G}}} \\
\iffr & \bra{\exists c \in \posi{\TR}: \restrb{f}{G} \leq c \restrb{g}{G}} \textrm{ and } \restrb{f}{\compl{G}} = 0 \\
\iffr & \exists c \in \posi{\TR}: f \leq c g.
\end{eqs}
\end{proof}

\section{Minimal properties}

In this section we will show that, excluding \property{Local}, a given primitive property can not be deduced from the remaining primitive properties.

\begin{definition}[Minimal set of axioms]
A set $\mathcal{S}$ of axioms is \define{minimal}, if no axiom in $\mathcal{S}$ can be proved from the other axioms in $\mathcal{S}$.
\end{definition}

\begin{note}[Proving minimality]
\label{ProvingMinimality}
The minimality of an axiom set $\mathcal{S}$ can be proved by showing that for any axiom $A \in \mathcal{S}$, there is a model of $\mathcal{S} \setminus \setb{A}$ in which $A$ holds and another model of $\mathcal{S} \setminus \setb{A}$ in which $A$ does not hold.
\end{note}

\begin{note}[\uproperty{Local} is implied]
\uproperty{Local} is implied by the other primitive properties by \proveby{LocalityIsImplied}; primitive properties are not minimal.
\end{note}

\begin{note}[Pre-primitive properties]
The \define{pre-primitive properties} are the primitive properties, with \property{Local} excluded.
\end{note}

\begin{table}
\begin{tabular}{|c|c|c|c|c|c|c|c|c|c|l|}
\hline 
{} & $\leq$ & T & $1$ & $\alpha$ & $*$ & $/$ & $\circ$ \\
\hline 
\hline 
$\mohsy$ & \xmark \ref{MultipleOrderFails} & \checkmark \ref{MultipleTransivity} & \checkmark \ref{MultipleOneSeparation} & \checkmark \ref{MultipleScaleInvariance} & \checkmark \ref{MultipleSubHomogeneity} & \checkmark \ref{MultipleSubHomogeneity} & \checkmark \ref{MultipleSubComposability} \\
\hline
$\nohsy$ & \checkmark \ref{NonTransitiveOrder} & \xmark \ref{NonTransitiveTransivityFails} & \checkmark \ref{NonTransitiveOneSeparation} & \checkmark \ref{NonTransitiveScaleInvariance} & \checkmark \ref{NonTransitiveSubHomogeneity} & \checkmark \ref{NonTransitiveSubHomogeneity} & \checkmark \ref{NonTransitiveSubComposability} \\
\hline
$\trohsy$ & \checkmark \ref{LocalOrderConsistency} & \checkmark \ref{LocalTransitivity} & \xmark \ref{TrivialOneSeparationFails} & \checkmark \ref{LocalScaleInvariance} & \checkmark \ref{LocalSubHomogenuity} & \checkmark \ref{LocalSubHomogenuity} & \checkmark \ref{TrivialSubComposability} \\
\hline
$\eohsy$ & \checkmark \ref{ElementwiseOrder} & \checkmark \ref{ElementwiseTransivity} & \checkmark \ref{ElementwiseOneSeparation} & \xmark \ref{ElementwiseScaleInvarianceFails} & \checkmark \ref{ElementwiseSubHomogeneity} & \checkmark \ref{ElementwiseSubHomogeneity} & \checkmark \ref{ElementwiseSubComposability} \\
\hline
$\aohsy$ & \checkmark \ref{AffineOrderConsistency} & \checkmark \ref{AffineTransitivity} & \checkmark \ref{AffineOneSeparation} & \checkmark \ref{AffinePositiveScaleInvariance} & \xmark \ref{AffineSubHomogenuityFails} & \checkmark \ref{AffineSubHomogeneityNDiv} & \checkmark \ref{AffineSubComposability} \\
\hline
$\wohsy$ & \checkmark \ref{PowerOrder} & \checkmark \ref{PowerTransivity} & \checkmark \ref{PowerOneSeparation} & \checkmark \ref{PowerScaleInvariance} & \checkmark \ref{PowerSubHomogeneity} & \xmark \ref{PowerSubHomogeneityDivNFails} & \checkmark \ref{PowerSubComposability} \\
\hline
$\pohsy$ & \checkmark \ref{LocalOrderConsistency} & \checkmark \ref{LocalTransitivity} & \checkmark \ref{AsymptoticOneSeparation} & \checkmark \ref{LocalScaleInvariance} & \checkmark \ref{LocalSubHomogenuity} & \checkmark \ref{LocalSubHomogenuity} & \xmark \ref{AsymptoticInjectiveSubComposabilityFails} \\
\hline
$\lohsy$ & \checkmark \ref{LocalOrderConsistency} & \checkmark \ref{LocalTransitivity} & \checkmark \ref{LinearOneSeparation} & \checkmark \ref{LocalScaleInvariance} & \checkmark \ref{LocalSubHomogenuity} & \checkmark \ref{LocalSubHomogenuity} & \checkmark \ref{LinearSubComposability} \\
\hline 
\end{tabular}
\centering
\caption{The pre-primitive properties fulfilled by each candidate definition. The abbreviations are: $\leq$ for \property{Order}, T for \property{Trans}, $1$ for \property{One}, $\alpha$ for \property{Scale}, $*$ for \property{NSubHom}, $/$ for \property{NSubDiv}, and $\circ$ for \property{SubComp}}
\label{MinimalProperties}
\end{table}

\begin{theorem}[Pre-primitive properties are minimal]
\label{PreprimitivePropertiesAreMinimal}
Pre-primitive properties form a minimal set of axioms for linear dominance. \srequire{Order} \srequire{Trans} \srequire{One} \srequire{Scale} \srequire{NSubHom} \srequire{NSubDiv} \srequire{SubComp}
\end{theorem}

\begin{proof}
That pre-primitive properties are equivalent to the definition of $\ohsy$-notation as linear dominance follows from \proveby{LocalityIsImplied}, \proveby{PrimitiveImpliesLinear}, and \proveby{LinearImpliesPrimitive}.

That pre-primitive properties are minimal follows from \tref{MinimalProperties}, following \nref{ProvingMinimality}.
\end{proof}

\chapter{Working with the \texorpdfstring{$\ohsy$}{O}-notation}
\label{WorkingWith}

In this section we adopt the linear dominance $\lohsy$-notation as \emph{the} $\ohsy$-notation and develop more refined tools for working with it. These tools are useful in a day-to-day basis for an algorithm analyst, because they provide shortcuts over tedious derivations. A cheat sheet for working with the $\lohsy$-notation is given in \fref{CheatSheet} --- it is a simplified version of \fref{DesirableProperties}.

\begin{table}
\begin{tabular}{|l|l|}
\hline 
Name & Property \\
\hline 
\hline 
\uproperty{Order} & $f \lt g \implies f \in \lohx{g}$  \\
\hline 
\uproperty{Reflex} & $f \in \lohx{f}$ \\
\hline 
\uproperty{Trans} & $\bra{f \in \lohx{g} \textrm{ and } g \in \lohx{h}} \implies f \in \lohx{h}$  \\
\hline 
\uproperty{Orderness} & $f \in \lohx{g} \iff \lohx{f} \subset \lohx{g}$ \\
\hline 
\hline 
\uproperty{Zero} & $1 \not\in \loh{\posi{\TN}}{0}$ \\
\hline 
\uproperty{One} & $n \not\in \loh{\posi{\TN}}{1}$ \\
\hline 
\uproperty{TrivialZero} & $\lohx{0} = \setb{0}$ \\
\hline 
\hline 
\uproperty{Scale} & $\lohx{\alpha f} = \lohx{f}$ \\
\hline 
\usproperty{Translation} & $\lohx{f + \beta + \alpha} = \lohx{f + \beta}$  \\
\hline 
\uproperty{PowerH} & $\lohx{f}^{\alpha} = \lohx{f^{\alpha}}$ \\
\hline 
\uproperty{AddCons} & $u \lohx{f} + v \lohx{f} = (u + v) \lohx{f}$ \\
\hline 
\uproperty{MultiCons} & $\lohx{f}^{u} \cdot \lohx{f}^{v} = \lohx{f}^{u + v}$ \\
\hline 
\uproperty{MaxCons} & $\max(\lohx{f}, \lohx{f}) = \lohx{f}$ \\
\hline 
\uproperty{Local} & $\bra{\forall D \in C: \restrb{f}{D} \in \loh{D}{\restr{g}{D}}} \implies f \in \lohx{g}$  \\
\hline 
\uproperty{Hom} & $u \lohx{f} = \lohx{u f}$ \\
\hline 
\uproperty{Multi} & $\lohx{f} \cdot \lohx{g} = \lohx{fg}$ \\
\hline 
\uproperty{Restrict} & $\bra{\restr{\lohx{f}}{D}} = \loh{D}{\restr{f}{D}}$ \\
\hline 
\uproperty{Additive} & $\lohx{f} + \lohx{g} = \lohx{f + g}$ \\
\hline 
\uproperty{Summation} & $\lohx{f + g} = \lohx{\max(f, g)}$ \\
\hline 
\uproperty{Maximum} & $\max(\lohx{f}, \lohx{g}) = \lohx{\max(f, g)}$ \\
\hline 
\uproperty{MaximumSum} & $\max(\lohx{f}, \lohx{g}) = \lohx{f} + \lohx{g}$ \\
\hline 
\uproperty{SubComp} & $\lohx{f} \circ s \subset \loh{Y}{f \circ s}$ \\
\hline 
\uproperty{IComp} & $\lohx{f} \circ s = \loh{Y}{f \circ s}$ \quad ($s$ injective) \\
\hline 
\uproperty{Extend} & $\lohx{f} \circ \projections{X}{Y} \subset \loh{X \times Y}{f \circ \projections{X}{Y}}$ \\
\hline 
\uproperty{SubsetSum} & $\lohx{\sum_{(y, z) \in S_x} a(z) h(y)} \subset $ \\
{} & $\lohx{\sum_{(y, z) \in S_x} a(z) \bar{h}(y)}$ \\
\hline 
\end{tabular}
\centering
\caption{Cheat sheet for $\protect\lohsy$-notation. Here $X, Y, Z \in U$, $f, g, u, v \in \rc{X}$, $\overline{h} \in \rc{Y}$, $h \in \protect\loh{Y}{\overline{h}}$, $\alpha, \beta \in \posi{\TR}$, $D \subset X$, $\function{s}{Y}{X}$, $\function{S}{X}{\fpower{Y \times Z}}$, $a \in \rc{Z}$, and $C \subset \power{X}$ is a finite cover of $X$.}
\label{CheatSheet}
\end{table}

\section{Surprising aspects}
\label{Faq}

In this section we will look at some aspects of $\lohsy$\-/notation which at first may seem surprising.

\begin{algorithm}
\caption{A family of algorithms, parametrized by $i \in \TN$, which take as input $n \in \TN$, and output $n$.}
\label{alg:InfiniteDescent}
\begin{algorithmic}[1]
\Procedure {$\text{identity}_i$}{$n$}
\If {$n < i$}
\State \Return $n$
\EndIf
\State \Return $n + 0$
\EndProcedure
\end{algorithmic}
\end{algorithm}

\begin{example}[Infinite descent]
\label{InfiniteDescent}
Consider a family of algorithms in \aref{alg:InfiniteDescent}, for which there is a separate implementation for each $i \in \TN$. Suppose we are only interested in the number of performed additions. The cost functions are given by $f_i \in \rc{\TN}$ such that
\begin{eqs}
f_i(n) =
\begin{cases}
1, & n \geq i, \\
0, & \text{otherwise},
\end{cases}
\end{eqs}
where $i \in \TN$. Then
\begin{eqs}
f_0 \in \lthetah{\TN}{1},
\end{eqs}
and
\begin{eqs}
f_{i + 1} \in \lsmalloh{\TN}{f_i},
\end{eqs}
for all $i \in \TN$. That is, with linear dominance, the functions $f_i$ form a decreasing sequence of functions. This is how it should be: it is fundamentally different to use resources --- no matter how small an amount --- than to not use resources at all. 
\end{example}

\begin{example}[Same expressions, different functions]
It may be surprising that $1 / x \in \loh{\posi{\TN}}{1}$, but $1 / x \not\in \loh{\posi{\TR}}{1}$. The function $1 / x$ seems to be the same --- why don't they belong to the same $\lohsy$-sets?

A function is a triple, which consists of the domain, the codomain, and the rule connecting each element of the domain to exactly one element of the codomain. An expression such as $1 / x$ is ambiguous as a function definition, because it does not specify the domain and the codomain. The two occurrences of $1 / x$ here are different functions, and need not share any other property apart from being equal on $\posi{\TN}$.

In our example, $1 / x$ on $\posi{\TN}$ is bounded, while $1 / x$ on $\posi{\TR}$ is unbounded.
\end{example}

\section{Master theorems}

Master theorems are popular for solving recurrence equations arising in the analysis of divide-and-conquer algorithms \cite{IntroAlgo2009} --- up to $\lohsy$-equivalence. We state the theorems here, and prove them in \sref{MasterTheorems}.

\newcommand{\ReMasterFunctionOverIntegers}{
\begin{definition}[Master function over integers]
Let $a \in \nonnb{\TR}{1}$, $b \in \nonnb{\TR}{2}$, $d \in \posi{\TR}$, and $F \in \rc{\nonnb{\TN}{1}}$. 
A \definesub{master function}{over integers} is a function $T \in \rc{\nonnb{\TN}{1}}$ defined by the recurrence equation
\begin{equation}
T(n) =
\begin{cases}
a T(\ceilb{n / b}) + F(n), & n \geq b, \\
d, & n < b.
\end{cases}
\end{equation}
The set of such functions is denoted by $\imasters{a}{b}{d}{F}$.
\end{definition}
}

\ReMasterFunctionOverIntegers

\newcommand{\ReMasterTheoremOverIntegers}[1][]{
\begin{theorem}[Master theorem over integers]
\label{MasterTheoremOverIntegers#1}
\srequire{LinearDominance}
Let $T \in \imasters{a}{b}{d}{F}$ be a Master function over integers, and $F \in \loh{\nonnb{\TN}{1}}{n^c}$, where $c \in \nonn{\TR}$. Then
\begin{eqs}
\lgb{b}{a} < c & \implies T \in \loh{\nonnb{\TN}{1}}{n^c}, \\
\lgb{b}{a} = c & \implies T \in \loh{\nonnb{\TN}{1}}{n^c \lgb{b}{bn}}, \\
\lgb{b}{a} > c & \implies T \in \lthetah{\nonnb{\TN}{1}}{n^{\lgb{b}{a}}}.
\end{eqs}
If $F \in \lthetah{\nonnb{\TN}{1}}{n^c}$, then each $\lohs{\nonnb{\TN}{1}}$ can be replaced with $\lthetahs{\nonnb{\TN}{1}}$.
\end{theorem}
}

\ReMasterTheoremOverIntegers[Text]

\newcommand{\ReMasterFunctionOverReals}{
\begin{definition}[Master function over reals]
Let $a \in \nonnb{\TR}{1}$, $b \in \posib{\TR}{1}$, $d \in \posi{\TR}$, and $f \in \rc{\nonnb{\TR}{1}}$. A \definesub{master function}{over reals} is a function $t \in \rc{\nonnb{\TR}{1}}$ defined by the recurrence equation
\begin{equation}
t(x) =
\begin{cases}
a t(x / b) + f(x), & x \geq b, \\
d, & x < b.
\end{cases}
\end{equation}
The set of such functions is denoted by $\rmasters{a}{b}{d}{f}$.
\end{definition}
}

\ReMasterFunctionOverReals

\newcommand{\ReMasterTheoremOverReals}[1][]{
\begin{theorem}[Master theorem over reals]
\label{MasterTheoremOverReals#1}
\srequire{LinearDominance}
Let $t \in \rmasters{a}{b}{d}{f}$ be a Master function over reals, and $f \in \loh{\nonnb{\TR}{1}}{x^c}$, where $c \in \nonn{\TR}$. Then
\begin{eqs}
\lgb{b}{a} < c & \implies t \in \loh{\nonnb{\TR}{1}}{x^c}, \\
\lgb{b}{a} = c & \implies t \in \loh{\nonnb{\TR}{1}}{x^c \lgb{b}{bx}}, \\
\lgb{b}{a} > c & \implies t \in \lthetah{\nonnb{\TR}{1}}{x^{\lgb{b}{a}}}.
\end{eqs}
If $f \in \lthetah{\nonnb{\TR}{1}}{n^c}$, then each $\lohs{\nonnb{\TR}{1}}$ can be replaced with $\lthetahs{\nonnb{\TR}{1}}$.
\end{theorem}
}

\ReMasterTheoremOverReals[Text]

\section{\texorpdfstring{$\protect\lohsy$}{O}-mappings}
\label{OMappings}

\begin{definition}[$\lohsy$-mapping]
A function $\function{T}{\rc{X}}{\rc{Y}}$ is an \define{$\lohsy$-mapping}, if
\begin{eqs}
\image{T}{\lohx{f}} \subset \loh{Y}{T(f)},
\end{eqs}
for all $f \in \rc{X}$. 
\end{definition}

\begin{theorem}[$\lohsy$-mapping by linear dominance]
\label{LinearDominanceOMapping}
Let $\function{T}{\rc{X}}{\rc{Y}}$. Then $T$ is an $\lohsy$-mapping if and only if
\begin{equation}
\bra{\exists c \in \posi{\TR} : f \lt cg} \implies 
\bra{\exists d \in \posi{\TR} : T(f) \lt dT(g)}
\end{equation}
for all $f, g \in \rc{X}$.
\end{theorem}

\begin{proof}
By definition.
\end{proof}

\begin{example}[Non-negative translation is an $\lohsy$-mapping]
Let $\function{T}{\rc{X}}{\rc{X}}$ be such that $T(f) = f + \alpha$, where $\alpha \in \nonn{\TR}$. Let $f, g \in \rc{X}$ be such that $f \in \lohx{g}$. Then there exists $c \in \posi{\TR}$ such that $f \lt cg$, and
\begin{eqs}
T(f) & = f + \alpha \\
{} & \leq cg + \alpha \\
{} & \leq \max(c, 1) (g + \alpha) \\
{} & = \max(c, 1) T(g).
\end{eqs}
Therefore $T$ is an $\lohsy$-mapping. The inverse of $T$ does not exist, since it does not always hold that $f - \alpha \geq 0$.
\end{example}

\begin{example}[Composition is an $\lohsy$-mapping]
Let $\function{T}{\rc{X}}{\rc{Y}}$ be such that $T(f) = f \circ s$, where $\function{s}{Y}{X}$. Let $f, g \in \rc{X}$ be such that $f \in \lohx{g}$. Then there exists $c \in \posi{\TR}$ such that $f \lt cg$, and
\begin{eqs}
T(f) & = f \circ s \\
{} & \leq (cg) \circ s \\
{} & = c (g \circ s) \\
{} & = c T(g).
\end{eqs}
Therefore $T$ is an $\lohsy$-mapping.
\end{example}

\begin{theorem}[Subset-sum is an $\lohsy$-mapping]
\label{SubsetSumIsAnOMapping}
Let $X, Y, Z \in U$, $\function{S}{X}{\fpower{Y \times Z}}$, $a \in \rc{Z}$, $f \in \rc{Y}$, and $\hat{f} \in \lohx{f}$. Let $\function{T}{\rc{Y}}{\rc{X}}$ be such that
\begin{equation}
T(f) = \bra{x \mapsto \sum_{(y, z) \in S_x} a(z) f(y)}.
\end{equation}
Then $T$ is an $\lohsy$-mapping. \srequire{LinearDominance} \sprove{SubsetSum}
\end{theorem}

\begin{proof}
\srequire{LinearDominance}
There exists $c \in \posi{\TR}$ such that $\hat{f} \lt cf$, and so
\begin{eqs}
T(\hat{f}) & = \bra{x \mapsto \sum_{(y, z) \in S_x} a(z) \hat{f}(y)} \\
{} & \leq \bra{x \mapsto \sum_{(y, z) \in S_x} a(z) (cf)(y)} \\
{} & = c \bra{x \mapsto \sum_{(y, z) \in S_x} a(z) f(y)} \\
{} & = c T(f).
\end{eqs}
Therefore $T$ is an $\lohsy$-mapping. \sprove{SubsetSum}
\end{proof}

\section{\texorpdfstring{$\protect\lohsy$}{O}-equalities}


\begin{definition}[$\lohsy$-residual]
A function $\function{\residuals{T}}{\rc{Y}}{\rc{X}}$ is an \defineexp{$\lohsy$-residual}{$\lohsy$-residual} of $\function{T}{\rc{X}}{\rc{Y}}$, if
\begin{equation}
g \in \loh{Y}{T(f)} \iff \residual{T}{g} \in \loh{X}{f},
\end{equation}
for all $f \in \rc{X}$, and $g \in \rc{Y}$. 
\end{definition}

\begin{definition}[$\lohsy$-residuated function]
A function $\function{f}{X}{Y}$ is \defineexp{$\lohsy$-residuated}{$\lohsy$-residual}, if it has an $\lohsy$-residual.
\end{definition}

\begin{note}[]
The definitions given here are special cases of the theory of partitioned sets --- given in \sref{PartitionedSets} --- and of the theory of preordered sets --- given in \sref{PreorderedSets}, formulated in terms of $\lohsy$-sets.
\end{note}

\begin{definition}[Strong $\lohsy$-equality]
A \define{strong $\lohsy$-equality} is a surjective $\lohsy$-residuated function $\function{T}{\rc{X}}{\rc{Y}}$. 
\end{definition}

\begin{theorem}[Strong $\lohsy$-equality rule]
\label{StrongOEqualityRule}
Let $\function{T}{\rc{X}}{\rc{Y}}$ be a strong $\lohsy$-equality. Then
\begin{equation}
T(\lohx{f}) = \loh{Y}{T(f)},
\end{equation}
for all $f \in \rc{X}$.
\end{theorem}

\begin{proof}
This is proved in \thref{TransposeResiduatedSurjectivePreservesDownSets}.
\end{proof}

\begin{note}[$\lohsy$-equality]
We call the $\lohsy$-equality strong, because we do not know how to neatly characterize the preservation of the $\lohsy$-set under a mapping $T$; such functions would be called $\lohsy$-equalities.
\end{note}

\begin{theorem}[Strong $\lohsy$-equality by linear dominance]
\label{StrongOEqualityByLinearDominance}
Suppose $\function{T}{\rc{X}}{\rc{Y}}$. Then $T$ is a strong $\lohsy$-equality if and only if there exists $\function{\residuals{T}}{\rc{Y}}{\rc{X}}$ such that
\begin{eqs}
T(\residual{T}{g}) = g,
\end{eqs}
and
\begin{equation}
\bra{\exists c \in \posi{\TR} : g \lt c T(f)} \iff 
\bra{\exists d \in \posi{\TR} : \residual{T}{g} \lt d f},
\end{equation}
for all $f \in \rc{X}$, and $g \in \rc{Y}$.
\end{theorem}

\begin{proof}
By definition of linear dominance and \thref{PSurjectivityIsEquivalentToHavingRightInverse}, whose special case says that surjectivity is equivalent to having a right-inverse.
\end{proof}

\begin{example}[Injective composition is a strong $\lohsy$-equality]
Let $\function{s}{Y}{X}$ be injective, and $\function{T}{\rc{X}}{\rc{Y}}$ be such that $T(f) = f \circ s$. Let $\function{\residuals{s}}{Y}{\image{s}{Y}}$ be such that $\residual{s}{y} = s(y)$. Then $\residuals{s}$ is bijective, and $T(f) = f \circ \residuals{s}$. Let $\function{\residuals{T}}{\rc{Y}}{\rc{X}}$ be such that $\residual{T}{g} = \left\langle g \circ \invs{\residuals{s}} \right\rangle$, where $\langle \cdot \rangle$ is a domain extension of a function from $\image{s}{Y}$ to $X$ by mapping the new elements to zeros. Let $f \in \rc{X}$, $g \in \rc{Y}$, and $c \in \posi{\TR}$. Then $T(\residual{T}{g}) = g$, and 
\begin{eqs}
{} & g \lt c T(f) \\
\iffr & g \lt c (f \circ \residuals{s}) \\
\iffr & \left\langle g \circ \invs{\residuals{s}} \right\rangle \lt c f \\
\iffr & \residual{T}{g} \lt c f.
\end{eqs}
Therefore $T$ is a strong $\lohsy$-equality.
\end{example}

\begin{example}[Positive power is a strong $\lohsy$-equality]
Let $\function{T}{\rc{X}}{\rc{X}}$ be such that $T(f) = f^{\alpha}$, where $\alpha \in \posi{\TR}$. Let $\function{\residuals{T}}{\rc{X}}{\rc{X}}$ be such that $\residual{T}{g} = g^{1 / \alpha}$. Let $f, g \in \rc{X}$, and $c \in \posi{\TR}$. Then $T(\residual{T}{g}) = g$, and
\begin{eqs}
{} & g \lt c T(f) \\
\iffr & g \lt c f^{\alpha} \\
\iffr & g^{1 / \alpha} \lt c^{1 / \alpha} f \\
\iffr & \residual{T}{g} \lt c^{1 / \alpha} f.
\end{eqs}
Therefore $T$ is a strong $\lohsy$-equality.
\end{example}

\begin{example}[Positive multiplication is a strong $\lohsy$-equality]
Let $\function{T}{\rc{X}}{\rc{X}}$ be such that $T(f) = \alpha f$, where $\alpha \in \posi{\TR}$. Let $\function{\residuals{T}}{\rc{X}}{\rc{X}}$ be such that $\residual{T}{g} = g / \alpha$. Let $f, g \in \rc{X}$, and $c \in \posi{\TR}$. Then $T(\residual{T}{g}) = g$, and
\begin{eqs}
{} & g \lt c T(f) \\
\iffr & g \lt c \alpha f \\
\iffr & g / \alpha \lt c f \\
\iffr & \residual{T}{g} \lt c f.
\end{eqs}
Therefore $T$ is a strong $\lohsy$-equality.
\end{example}

\chapter{Local linear dominance}
\label{LocalLinearDominance}

\begin{table}
\begin{tabular}{|l|l|l|l|}
\hline 
Name & Universe & $\filterset{X}$ & Reference \\
\hline 
\hline
$\trohsy$ Trivial &
sets &
$\setb{\emptyset}$ &
\cite{ONotationBeatcs} \\
\hline 
$\fohsy$ Cofinite & 
sets &
$\cpower{X}$ & 
\cite{DesignAndAnalysisOfComputerAlgorithms}, \cite{ONotationBeatcs} \\
\hline 
$\pohsy$ Asymptotic &
$\bigcup_{d \in \TN} \power{\TR^d}$ &
$\setb{\nonnb{X}{y} : y \in \TR^d}$ &
\cite{IntroAlgo} \\
\hline 
$\cohsy$ Co-asymptotic &
$\bigcup_{d \in \TN} \power{\TR^d}$ &
$\setb{\nlb{X}{y} : y \in \TR^d}$ &
\cite{IntroAlgo2009} \\
\hline 
$\lohsy$ Full &
sets &
$\setb{X}$ & 
\cite{ONotationBeatcs} \\
\hline 
\end{tabular}
\centering
\caption{Example versions of local linear dominance. Here $U$ is the universe, $X \in U$, and $\filterset{X}$ is the filter basis in $X$. The name is used to replace the word `local', as in asymptotic linear dominance.}
\label{ExamplesOfLocalLinearDominances}
\end{table}

In this section we will study a class of candidate definitions for the $\ohsy$-notation, the \emph{local linear dominances}. These definitions work over all universes --- unless a specific version makes additional assumptions. 

\section{Definition}

In this section we provide the definition of local linear dominance. We prove its properties in \sref{ProofsForLocalLinearDominance}.

\begin{definition}[Filter basis]
A set $\filtersetsym \subset \power{X}$ is a \define{filter basis} in a set $X$ if it is
\begin{description}
\item[non-empty] \hfill \\ 
$\filtersetsym \neq \emptyset$,
\item[$\subset$-directed] \hfill \\
$\forall A, B \in \filtersetsym, \exists C \in \filtersetsym : C \subset A \textrm{ and } C \subset B$.
\end{description}
\end{definition}

\begin{note}[Filter basis may not be proper]
Some authors require a filter basis to be proper --- $\emptyset \not\in \filterset{X}$. We allow a filter basis to be non-proper.
\end{note}

\begin{definition}[Local linear dominance]
\defineexp{Local linear dominance}{linear dominance!local} $\rohsy$ on $X \in U$ is defined by $g \in \rohx{f}$ if and only if
\begin{equation}
\exists c \in \posi{\TR}, \exists A \in \filterset{X}: \restrb{g}{A} \leq c \restrb{f}{A},
\end{equation}
where $\setb{\filterset{X} \subset \power{X} : X \in U}$ is a class of filter bases with \define{induced sub-structure}:
\begin{eqs}
\filterset{D} = \setb{A \cap D : A \in \filterset{X}},
\end{eqs}
for all $D \subset X$.
\end{definition}

\begin{note}[Versions]
Each choice of filter bases corresponds to a version of local linear dominance. Some such versions are given in \taref{ExamplesOfLocalLinearDominances}.
\end{note}

\begin{note}[Filter basis and limits]
A filter basis in $X$ is the minimal amount of structure needed to make sense of the limit of a function $f \in \rc{X}$. In fact, a local linear dominance can be characterized by a ratio\-/limit.
\end{note}

\begin{note}[Motivation]
We have already shown in \sref{Characterization} that only one instance of local linear dominance works for algorithm analysis. Why study local linear dominances? There are two reasons.

First, local linear dominance is commonly used for local function approximation in various fields of mathematics. Such results may indirectly find their way into a complexity analysis, in which case we need a way transfer such results to the algorithmic side. This can be done using the tools provided in \sref{LimitTheorems}.

Second, local linear dominances help to train the intuition behind the primitive properties, because they provide successively better approximations to linear dominance. This is shown in \sref{ProofsForLocalLinearDominance}.
\end{note}

\begin{note}[Eventually non-negative]
A function $\function{f}{X}{\TR}$ is \define{eventually non-negative}, if there exists $A \in \filterset{X}$, such that $\restrb{f}{A} \geq 0$. It is possible to generalize the definition of local linear dominance to functions which are eventually non\-/negative. When $\filterset{X} = \setb{X}$, eventually non\-/negative reduces to non\-/negative.
\end{note}

\newcommand{\ReLinearDominanceFromLocalLinearDominance}[1][]{
\begin{theorem}[$\lohsy$ is almost equal to $\rohsy$ for cofinite filter sets]
\label{LinearDominanceFromLocalLinearDominance#1}
Suppose $\card{X \setminus A} < \infty$, for all $A \in \filterset{X}$. Then
\begin{eqs}
\rohx{g} = \lohx{g},
\end{eqs}
for all $g \in \posi{\rc{X}}$. 
\end{theorem}
}

\ReLinearDominanceFromLocalLinearDominance[Text]

\begin{example}[Linear dominance from asymptotic linear dominance]
\label{LinearDominanceFromAsymptoticLinearDominance}
Let $f \in \rc{\posi{\TN}}$ be such that $f(n) = n^2 \abs{\sin(n)} + n + 3$. It holds that $f \in \poh{\posi{\TN}}{n^2}$ by \eref{AsymptoticLinearDominanceFromALimitInN}. Since $n^2 > 0$, and $\card{\posi{\TN} \setminus \nonnb{\TN}{y}} < \infty$, for all $y \in \TR$, it holds that $f \in \loh{\posi{\TN}}{n^2}$ by \thref{LinearDominanceFromLocalLinearDominance}.
\end{example}

\section{Limit theorems}
\label{LimitTheorems}

In this section we show how to transfer a result from a local linear dominance $\rohsy$ to linear dominance $\ohsy$. First, the filter bases associated with $\rohsy$ make it possible to define the concepts of limit superior, limit inferior, and limit. These limits can then be used to characterize the $\rohsy$-notation. Second, $f \in \rohx{g}$ can sometimes be used to deduce $f \in \lohx{g}$. The proofs are given in \sref{ProofsOfLimitsTheorems}.

\begin{note}[Related notations for local linear dominance]
We shall denote the related notations corresponding to $\rohsy$ by $\romegahsy$, $\romegasy$, $\rthetahsy$, and $\rsmallohsy$.
\end{note}

\begin{definition}[Limit superior and limit inferior under a filter basis]
Let $\mathcal{F} \subset \power{X}$ be a filter basis in a set $X$. Then
\begin{eqs}
\limsup_{\mathcal{F}} f & \coloneqq \inf \gset{\sup \image{f}{A}}{A \in \mathcal{F}}, \\
\liminf_{\mathcal{F}} f & \coloneqq \sup \gset{\inf \image{f}{A}}{A \in \mathcal{F}},
\end{eqs}
for all $f \in \rc{X}$.
\end{definition}

\begin{note}[Existence of limit superior and limit inferior]
The $\limsup$ and $\liminf$ are called the \define{limit superior} and the \define{limit inferior}, respectively. By the completeness of $\TXR = \TR \cup \setb{-\infty, +\infty}$, both of them are well-defined as a number in $\TXR$.
\end{note}

\begin{definition}[Limit under a filter basis]
The \define{limit} of $f \in \rc{X}$ under a filter basis $\mathcal{F} \subset \power{X}$ is
\begin{eqs}
\lim_{\mathcal{F}} f = c,
\end{eqs}
whenever $\limsup_{\mathcal{F}} f = \liminf_{\mathcal{F}} f = c \in \TXR$.
\end{definition}

\begin{note}[Division by zero and infinity]
In this section, we use the conventions that $\alpha / 0 = \infty \in \TXR$, for all $\alpha \in \posi{\TR}$, and that $\beta / \infty = 0$, for all $\beta \in \TR$.
\end{note}

\newcommand{\ReRelationBetweenRatioLimits}[1][]{
\begin{theorem}[Relation between ratio-limits]
\label{RelationBetweenRatioLimits#1}
\begin{eqs}
\limsup_{\filterset{F}} \frac{\restrb{f}{F}}{\restrb{g}{F}} = 1 / \liminf_{\filterset{F}} \frac{\restrb{g}{F}}{\restrb{f}{F}},
\end{eqs}
for all $f, g \in \rc{X}$, where $F = \preimage{f}{\posi{\TR}}$.
\end{theorem}
}

\ReRelationBetweenRatioLimits[Text]

\newcommand{\ReLocalLinearOByALimit}[1][]{
\begin{theorem}[$\rohsy$ by a limit]
\label{LocalLinearOByALimit#1}
\begin{eqs}
\limsup_{\filterset{F}} \frac{\restrb{f}{F}}{\restrb{g}{F}} < \infty & \iff f \in \rohx{g},
\end{eqs}
for all $f, g \in \rc{X}$, where $F = \preimage{f}{\posi{\TR}}$.
\end{theorem}
}

\ReLocalLinearOByALimit[Text]

\begin{example}[Asymptotic linear dominance from a limit in $\posi{\TN}$]
\label{AsymptoticLinearDominanceFromALimitInN}
Consider asymptotic linear dominance $\pohs{\posi{\TN}}$. Let $f \in \rc{\posi{\TN}}$ be such that $f(n) = n^2 \abs{\sin(n)} + n + 3$. Since $f(n) > 0$, and
\begin{eqs}
\limsup_{\filterset{\posi{\TN}}} \frac{f(n)}{n^2} = \limsup_{n \to \infty} \frac{f(n)}{n^2} = 1,
\end{eqs}
it holds that $f \in \poh{\posi{\TN}}{n^2}$ by \thref{LocalLinearOByALimit}
\end{example}

\newcommand{\ReLocalLinearOmegaByALimit}[1][]{
\begin{theorem}[$\romegahsy$ by a limit]
\label{LocalLinearOmegaByALimit#1}
\begin{eqs}
\liminf_{\filterset{G}} \frac{\restrb{f}{G}}{\restrb{g}{G}} > 0 & \iff f \in \romegahx{g},
\end{eqs}
for all $f, g \in \rc{X}$, where $G = \preimage{g}{\posi{\TR}}$.
\end{theorem}
}

\ReLocalLinearOmegaByALimit[Text]

\newcommand{\ReLocalLinearSmallOhByALimit}[1][]{
\begin{theorem}[$\rsmallohsy$ by a limit]
\label{LocalLinearSmallOhByALimit#1}
\begin{eqs}
\bra{\limsup_{\filterset{F}} \frac{\restrb{f}{F}}{\restrb{g}{F}} < \infty \land \liminf_{\filterset{G}} \frac{\restrb{f}{G}}{\restrb{g}{G}} = 0} \iff f \in \rsmallohx{g},
\end{eqs}
for all $f, g \in \rc{X}$, where $F = \preimage{f}{\posi{\TR}}$ and $G = \preimage{g}{\posi{\TR}}$.
\end{theorem}
}

\ReLocalLinearSmallOhByALimit[Text]

\begin{example}[Co-asymptotic linear dominance from a limit in $\TN^2$]
\label{CoasymptoticLinearDominanceFromALimitInPlane}
Consider co\-/asymptotic linear dominance $\cohs{\TN^2}$. Let $f \in \rc{\TN^2}$ be such that $f(m, n) = (m + n + 1)\exp(-mn)$. Since $f(m, n) > 0$, $m + n + 1 > 0$, and
\begin{eqs}
\limsup_{\filterset{\TN^2}} \frac{f(m, n)}{m + n + 1} & = 1, \\
\liminf_{\filterset{\TN^2}} \frac{f(m, n)}{m + n + 1} & = 0,
\end{eqs}
it holds that $f \in \smalloh{\TN^2}{m + n + 1}$ by \thref{LocalLinearSmallOhByALimit}.
\end{example}

\begin{example}[Sufficient limit-condition for $\rsmallohsy$]
In particular,
\begin{eqs}
\limsup_{\filterset{F}} \frac{\restrb{f}{F}}{\restrb{g}{F}} = 0 \implies f \in \rsmallohx{g},
\end{eqs}
but this is only a sufficient condition.
\end{example}

\newcommand{\ReTraditionalSmallOByALimit}[1][]{
\begin{theorem}[Traditional $\smallohs{X}$ by a limit]
\label{TraditionalSmallOByALimit#1}
\begin{eqs}
{} & \forall \epsilon \in \posi{\TR}: \exists A \in \filterset{X}: \restrb{f}{A} \leq \epsilon \restrb{g}{A} \\
\iffr & \limsup_{\filterset{F}} \frac{\restrb{f}{F}}{\restrb{g}{F}} = 0,
\end{eqs}
for all $f, g \in \rc{X}$, where  $F = \preimage{f}{\posi{\TR}}$.
\end{theorem}
}

\ReTraditionalSmallOByALimit[Text]

\begin{example}[Asymptotic linear dominance from a limit in $\TN^2$]
\label{AsymptoticLinearDominanceFromALimitInPlane}
Consider asymptotic linear dominance $\pohs{\TN^2}$. Let $f \in \rc{\TN^2}$ be such that $f(m, n) = (m + n + 1)\exp(-mn)$. Since $f > 0$, and
\begin{eqs}
\limsup_{\filterset{\TN^2}} \frac{f}{1} = 0,
\end{eqs}
it holds that $f \in \psmalloh{\TN^2}{1}$ by \thref{LocalLinearSmallOhByALimit}.
\end{example}

\newcommand{\ReLocalLinearSmallOmegaByALimit}[1][]{
\begin{theorem}[$\romegasy$ by a limit]
\label{LocalLinearSmallOmegaByALimit#1}
\begin{eqs}
\bra{\limsup_{\filterset{F}} \frac{\restrb{f}{F}}{\restrb{g}{F}} = \infty \land \liminf_{\filterset{G}} \frac{\restrb{f}{G}}{\restrb{g}{G}} > 0} \iff f \in \romegax{g},
\end{eqs}
for all $f, g \in \rc{X}$, where $F = \preimage{f}{\posi{\TR}}$ and $G = \preimage{g}{\posi{\TR}}$.
\end{theorem}
}

\ReLocalLinearSmallOmegaByALimit[Text]

\begin{example}[Sufficient limit-condition for $\romegasy$]
In particular,
\begin{eqs}
\liminf_{\filterset{G}} \frac{\restrb{f}{G}}{\restrb{g}{G}} = \infty \implies f \in \romegax{g},
\end{eqs}
but this is only a sufficient condition.
\end{example}

\chapter{Conclusion}
\label{Conclusion}

This \manuscript{} provides a rigorous mathematical foundation for the $\ohsy$-notation and its related notations in algorithm analysis. To find the appropriate definition, an exhaustive list of desirable properties was constructed, and their relations were studied. This revealed that the desirable properties can be reduced to \nprim{} primitive properties which imply the others. It was shown that these primitive properties are equivalent to the definition of the $\ohsy$-notation as linear dominance. Master theorems were shown to hold for linear dominance, and $\lohsy$-mappings were defined for easily proving new rules for the $\lohsy$-notation. Other existing definitions were studied and compared to each other based on the primitive properties. Some misuses of the $\ohsy$-notation from the literature were pointed out.

\We{} hope this \manuscript{} to improve the teaching of the topic, and to improve the communication between computer scientists. Computer scientists have used the $\ohsy$-notation correctly intuitively. That intuition is now backed by a solid mathematical foundation.

\ifarxiv
\bibliography{asymptotic}
\bibliographystyle{unsrt}
\else
\printbibliography
\fi

\appendix

\chapter{Notation}
\label{Notation}

In this section we provide the definitions for the used notation. This is to avoid ambiguities arising from differing conventions, such as a differing order of composition of functions. We will assume the Zermelo-Fraenkel set-theory with the axiom of choice, abbreviated ZFC.  

\We{} refer to theorems in the form \thref{LinearImpliesPrimitive} --- a short summary followed by a number. \We{} believe this is more useful than \Cref{LinearImpliesPrimitive} when studying the proofs. The numberings in equations, definitions, theorems etc. share the same counter.

\begin{definition}[Numbers]
The set of natural numbers, integers, and real numbers are denoted by $\TN = \setb{0, 1, 2, ...}$, $\TZ$, and $\TR$, respectively. 
\end{definition}

\begin{definition}[Subsets of a set]
The \define{set of subsets} of a set $X$ is denoted by $\power{X}$.
\end{definition}

\begin{definition}[Finite subsets of a set]
The \define{set of finite subsets} of a set $X$ is denoted by
\begin{eqs}
\fpower{X} \coloneqq \setb{D \in \power{X} : \card{D} < \infty}. 
\end{eqs}
\end{definition}

\begin{definition}[Cofinite subsets of a set]
The \define{set of cofinite subsets} of a set $X$ is denoted by
\begin{eqs}
\cpower{X} \coloneqq \setb{D \in \power{X} : \card{X \setminus D} < \infty}. 
\end{eqs}
\end{definition}

\begin{definition}[Cover of a set]
A set $C$ is a \define{cover} of a set $X$, if
\begin{eqs}
X \subset \bigcup C. 
\end{eqs}
\end{definition}

\begin{definition}[Set of functions]
The set of functions from a set $X$ to a set $Y$ is denoted by $\functions{X}{Y}$, or alternatively by $Y^X$. 
\end{definition}

\begin{definition}[Identity function]
The \define{identity function} in a set $X$ is a function $\function{\iden{X}}{X}{X}$ such that
\begin{eqs}
\iden{X}(x) = x. 
\end{eqs}
\end{definition}

\begin{definition}[Composition]
The \define{composition} of $\function{g}{Y}{Z}$ and $\function{f}{X}{Y}$ is $\function{(g \circ f)}{X}{Z}$ such that $(g \circ f)(x) = g(f(x))$. 
\end{definition}

\begin{definition}[Restriction]
The \define{restriction} of $\function{f}{X}{Y}$ to $D \subset X$ is $\function{\bra{\restr{f}{D}}}{D}{Y}$ such that $\bra{\restr{f}{D}}(x) = f(x)$. 
\end{definition}

\begin{definition}[Inverse of a function]
The \define{inverse} of a function $\function{f}{X}{Y}$ is $\function{\invs{f}}{Y}{X}$ such that
\begin{eqs}
f(\inv{f}{y}) = y, \\
\inv{f}{f(x)} = x,
\end{eqs}
for all $x \in X$, $y \in Y$.
\end{definition}

\begin{definition}[Image of a set under a function]
The \define{image} of a set $S \subset X$ under a function $\function{f}{X}{Y}$ is
\begin{eqs}
\image{f}{S} = \setb{f(x) : x \in S}.
\end{eqs}
\end{definition}

\begin{definition}[Pre-image of a set under a function]
The \define{pre-image} of a set $S \subset Y$ under a function $\function{f}{X}{Y}$ is
\begin{eqs}
\preimage{f}{S} = \setb{x \in X : f(x) \in S}.
\end{eqs}
\end{definition}

\begin{definition}[Increasing function]
A function $\function{f}{\nonn{\TR}}{\nonn{\TR}}$ is \define{increasing}, if $x \leq y \implies f(x) \leq f(y)$, for all $x, y \in \nonn{\TR}$. 
\end{definition}

\begin{example}[Non-negative powers are increasing]
Let $\function{f}{\nonn{\TR}}{\nonn{\TR}}$ be such that $f(x) = x^{\alpha}$, where $\alpha \in \nonn{\TR}$. Then $f$ is increasing.
\end{example}

\begin{definition}[Class of sets]
A \define{class of sets} is an unambiguous collection of sets, which may or may not be a set itself. 
\end{definition}

\begin{definition}[Proper class]
A \define{proper class} is a class of sets which is not a set. 
\end{definition}

\begin{example}[Classes of sets]
Every set is a class of sets. The collection of all sets in ZFC is a proper class.
\end{example}

\begin{note}[Formalization of proper classes]
The proper classes are not formalizable in the ZFC set-theory. This is not a problem for two reasons. First, everything in this \manuscript{} can be carried through without forming proper classes, by only referring to the class elements and their relationships. Alternatively, we may adopt the von Neumann-Bernays-G\"odel set theory \cite{IntroToMathematicalLogic} --- a conservative extension of ZFC which formalizes proper classes.
\end{note}

\begin{definition}[Set of relations]
The set of relations between a set $X$ and a set $Y$ is denoted by $\relations{X}{Y}$. 
\end{definition}

\begin{definition}[Reflexive relation]
A relation $\relationin{\sim}{X}$ is \define{reflexive}, if $x \sim x$, for all $x \in X$.
\end{definition}

\begin{definition}[Transitive relation]
A relation $\relationin{\sim}{X}$ is \define{transitive}, if
\begin{eqs}
x \sim y \land y \sim z \implies x \sim z,
\end{eqs}
for all $x, y, z \in X$.
\end{definition}

\begin{definition}[Preorder]
A \define{preorder} in a set $X$ is a reflexive and transitive relation $\relationin{\preleq}{X}$.
\end{definition}

\begin{definition}[Filtered set]
Let $X$ be a set, and $\relationin{\sim}{X}$. Then
\begin{equation}
X^{\sim y} \coloneqq \setb{x \in X : x \sim y},
\end{equation}
for all $y \in X$. 
\end{definition}

\begin{example}
$\nonn{\TR}$ is the set of non-negative real numbers.
\end{example}

\begin{definition}[Indicator function]
The \define{indicator function} of $S \subset X$ in a set $X$ is a function $\function{\idf{S}}{X}{\nonn{\TR}}$ such that 
\begin{equation}
\idf{S}(x) = 
\begin{cases}
1, & x \in S, \\
0, & x \not\in S.
\end{cases}
\end{equation}
\end{definition}

\begin{definition}[Non-negative real-valued functions]
We are specifically interested in non-negative real-valued functions; we define 
\begin{equation}
\rc{X} \coloneqq \bra{X \to \nonn{\TR}},
\end{equation}
where $X$ is a set. 
\end{definition}

\begin{definition}[Extension of a unary operator to functions]
A unary operator $\function{\ominus}{\nonn{\TR}}{\nonn{\TR}}$ is extended to functions $f \in \rc{X}$ by
\begin{equation}
(\ominus f)(x) = \ominus f(x),
\end{equation}
for all $x \in X$.
\end{definition}

\begin{definition}[Extension of a binary operator to functions]
A binary operator $\function{\oplus}{\nonn{\TR} \times \nonn{\TR}}{\nonn{\TR}}$ is extended to functions $f, g \in \rc{X}$ by
\begin{equation}
(f \oplus g)(x) = f(x) \oplus g(x),
\end{equation}
for all $x \in X$.
\end{definition}

\begin{definition}[Extension of a relation to functions]
A relation $\relationin{\sim}{Y}$ is extended to functions $\function{f, g}{X}{Y}$ by
\begin{equation}
f \sim g \coloniff \forall x \in X: f(x) \sim g(x).
\end{equation}
\end{definition}

\begin{example}
It holds that $f \geq 0$, for all $f \in \rc{X}$.
\end{example}

\begin{example}
Let $x, y \in \TR^d$, where $d \in \posi{\TN}$. Then
\begin{eqs}
x \geq y \iff \forall i \in d: x_i \geq y_i.
\end{eqs}
\end{example}

\begin{note}[Relation-lifting pitfall]
We adopted an implicit convention for lifting relations from sets to functions. This introduces a potential for notational ambiguity. 

Consider the formula $f \not< g$, where $\function{f, g}{X}{\TR}$. If the negation refers to the original relation, then the formula is equivalent to
\begin{eqs}
\forall x \in X : f(x) \geq g(x).
\end{eqs}
However, if the negation refers to the lifted relation, then the formula is equivalent to
\begin{eqs}
\exists x \in X : f(x) \geq g(x).
\end{eqs}
We avoid this ambiguity by not using negation for lifted relations. In particular, we denote the filtered sets of co-asymptotic linear dominance by $\nlb{X}{y}$ --- not by $X^{\not<y}$.
\end{note}

\begin{definition}[Unary function for a set of functions]
A unary function $\function{\ominus}{\rc{X}}{\rc{Y}}$ is extended to $\power{\rc{X}} \to \power{\rc{Y}}$ by
\begin{eqs}
\ominus U = \setb{\ominus u : u \in U}.
\end{eqs}
\end{definition}

\begin{definition}[Binary function for sets of functions]
A binary function $\function{\oplus}{\rc{X} \times \rc{X}}{\rc{Y}}$ is extended to $\power{\rc{X}} \times \power{\rc{X}} \to \power{\rc{Y}}$ by
\begin{eqs}
U \oplus V = \setb{u \oplus v : (u, v) \in U \times V}. 
\end{eqs}
\end{definition}

\begin{definition}[Iteration]
The \define{$i$:th iteration} of $\function{f}{X}{X}$, where $i \in \TN$, is $\function{f^{(i)}}{X}{X}$ such that
\begin{equation}
f^{(i)}(x) =
\begin{cases}
x, & i = 0, \\
f^{(i - 1)}(f(x)), & i > 0.
\end{cases} 
\end{equation}
\end{definition}

\begin{definition}[Projection]
A \define{projection} is a function $\function{\projections{X}{Y}}{X \times Y}{X}$ such that
\begin{eqs}
\projection{X}{Y}{x, y} = x.
\end{eqs}
\end{definition}

\begin{definition}[Universe]
A \define{universe} is a class $U$ of sets such that
\begin{enumerate}
\item $\TN^0 \in U$,
\item $\TN^1 \in U$,
\item $\forall X \in U: \power{X} \subset U$,
\item $\forall X, Y \in U: X \times Y \in U$.
\end{enumerate}
\end{definition}

\begin{definition}[Sub-universe]
A \define{sub-universe} of a universe $U$ is a subclass $V \subset U$ which is also a universe. 
\end{definition}

\begin{example}[Examples of universes]
The smallest universe is given by
\begin{eqs}
\bigcup_{d \in \TN} \power{\TN^d} 
\end{eqs}
Every universe contains this set as a sub-universe. The class of all sets is a universe which is a proper class.
\end{example}

\begin{definition}[Computational problem]
A \define{computational problem} is a function $\function{P}{X}{Y}$, where $X, Y \in U$. 
\end{definition}

\begin{definition}[Set of algorithms]
The set of algorithms\footnote{We will define the term \emph{algorithm} formally in \sref{Algorithms}.} which solve a problem $\function{P}{X}{Y}$, under a given model of computation, is denoted by $\algo{P}$. The set of all algorithms from $X$ to $Y$ is
\begin{eqs}
\algof{X}{Y} = \bigcup_{\function{P}{X}{Y}} \algo{P}.
\end{eqs}
\end{definition}

\begin{definition}[Composition of algorithms]
The \define{composition} of algorithms $G \colon \algof{Y}{Z}$ and $F \colon \algof{X}{Y}$ is the algorithm $G \circ F \colon \algof{X}{Z}$, which is obtained by using the output of $F$ as the input of $G$.
\end{definition}

\begin{note}[]
We will sometimes use an algorithm $F \colon \algof{X}{Y}$ as if it were its induced function instead. 
\end{note}

\begin{definition}[Cost function of an algorithm]
The cost function of an algorithm $F \in \algo{P}$ is denoted by $f_F \in \rc{X}$.
\end{definition}

\chapter{Howell's counterexample}
\label{HowellCounter}

In this section, we consider Howell's counterexample \cite{OhImpossible}, which shows that asymptotic linear dominance $\pohsy$ does not satisfy \property{SubsetSum}.\footnote{We have fixed the error of having the sum-index $i$ run only to $m - 1$.} 

\begin{example}[Howell's counterexample]
Let $X = \TN^2$, $\hat{g} \in \rc{X}$ be such that
\begin{equation}
\hat{g}(m, n) = 
\begin{cases}
2^n, & m = 0, \\
mn, & m > 0,
\end{cases}  
\end{equation}
and $g \in \rc{X}$ be such that $g(m, n) = mn$. Then
\begin{eqs}
\sum_{i = 0}^{m} \hat{g}(i, n) & = 2^n + m(m + 1)n / 2 \\
{} & \not\in \pohx{m(m + 1)n / 2} \\
{} & = \pohx{\sum_{i = 0}^{m} g(i, n)}.
\end{eqs}
\end{example}

\begin{note}[Howell's requirements]
To be precise, Howell required the following properties from an $\ohsy$-notation:
\begin{description}
\item[\property{AsymptoticRefinement}] \hfill \\
\begin{equation}
\ohx{f} \subset \pohx{f}
\end{equation}
\item[\property{Reflex}] \hfill \\
\begin{equation}
f \in \ohx{f}
\end{equation}
\item[\property{AsymptoticOrder}] \hfill \\
\begin{equation}
\bra{\exists y \in \TN^d: \restrb{\hat{f}}{\nonnb{X}{y}} \leq \restrb{f}{\nonnb{X}{y}}} \implies \ohx{\hat{f}} \subset \ohx{f}
\end{equation}
\item[\property{SimpleSubsetSum}] \hfill \\
\begin{eqs}
\sum_{i = 0}^{n_k} \hat{g}(n_1, \dots, n_{k - 1}, i, n_{k + 1}, \dots, n_d) \in \\
\ohx{\sum_{i = 0}^{n_k} g(n_1, \dots, n_{k - 1}, i, n_{k + 1}, \dots, n_d)},
\end{eqs}
\end{description}
where $X = \TN^d$, $\hat{f}, f, g \in \rc{X}$, $\hat{g} \in \ohx{g}$, and $k \in [1, d] \subset \TN$. 
\end{note}

While Howell did not do so, we claim that any sensible definition of $\ohsy$-notation must also satisfy \property{Scale}:
\begin{eqs}
\ohx{\alpha f} = \ohx{f},
\end{eqs}
for all $f \in \rc{X}$ and $\alpha \in \posi{\TR}$. The following theorem then shows that Howell's result only concerns the $\pohsy$-notation.

\begin{theorem}[Howell's definition is asymptotic dominance]
\label{HowellsDefinition}
$\ohsy$ has \require{AsymptoticOrder}, \require{Scale}, \require{Reflex}, and \require{AsymptoticRefinement}. $\implies$ $\ohsy = \pohsy$.
\end{theorem}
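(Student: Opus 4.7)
The plan is to prove two set inclusions; one direction is immediate from \property{AsymptoticRefinement}, so the work lies in showing $\pohx{f} \subset \ohx{f}$ for every $X = \TN^d$ and every $f \in \rc{X}$.

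To that end, I would fix $g \in \pohx{f}$ and unfold the definition of asymptotic linear dominance to obtain $c \in \posi{\TR}$ and $y \in \TN^d$ with $\restry{g}{y} \leq c \restry{f}{y}$. The key move is to read this inequality not as ``$g$ is bounded by $cf$ up to a scalar,'' but as a hypothesis of \property{AsymptoticOrder} applied with $\hat{f} := g$ and with the upper bound taken to be the single function $cf$. That application yields $\ohx{g} \subset \ohx{cf}$. By \property{Reflex}, $g \in \ohx{g}$, so $g \in \ohx{cf}$, and \property{Scale} collapses $\ohx{cf}$ to $\ohx{f}$. This gives $g \in \ohx{f}$, as required.

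Combining with the hypothesis $\ohx{f} \subset \pohx{f}$ provided by \property{AsymptoticRefinement}, we conclude $\ohx{f} = \pohx{f}$ for every $f \in \rc{X}$ and every $X$ in the underlying universe, which is the statement $\ohsy = \pohsy$.

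The argument is quite short, and no step looks like a serious obstacle. The only subtlety worth highlighting is the use of \property{Scale}: without it, \property{AsymptoticOrder} alone would only yield $g \in \ohx{cf}$, which a priori could be a strictly larger set than $\ohx{f}$. This is precisely the role that \property{Scale} plays in absorbing the constant $c$ coming from the definition of $\pohsy$, and it explains why the proposal explicitly lists it among the hypotheses even though Howell did not.
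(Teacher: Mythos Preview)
Your proof is correct and follows essentially the same approach as the paper: unfold the definition of $\pohsy$, apply \property{AsymptoticOrder} to obtain $\ohx{g} \subset \ohx{cf}$, use \property{Scale} to absorb the constant, and use \property{Reflex} to extract $g \in \ohx{f}$; then finish with \property{AsymptoticRefinement}. The only cosmetic difference is that the paper invokes \property{Scale} before \property{Reflex} rather than after, which is immaterial.
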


\begin{proof}
By \require{AsymptoticOrder}, \require{Scale}, and \require{Reflex},
\begin{eqs}
{} \quad & \hat{f} \in \pohx{f} \\
\impliesr & \exists c \in \posi{\TR}, \exists y \in \TN^d : \restrb{\hat{f}}{\nonnb{X}{y}} \leq c \restrb{f}{\nonnb{X}{y}} \\
\impliesr & \exists c \in \posi{\TR} : \ohx{\hat{f}} \subset \ohx{cf} \\
\impliesr & \ohx{\hat{f}} \subset \ohx{f} \\
\impliesr & \hat{f} \in \ohx{f},
\end{eqs}
for all $f, \hat{f} \in \rc{X}$. Therefore $\pohx{f} \subset \ohx{f}$. It follows from \require{AsymptoticRefinement} that $\ohx{f} = \pohx{f}$.
\end{proof}

\chapter{Proofs of implied properties}
\label{ImpliedProperties}

In this section we will show that the primitive properties imply the rest of the properties. 

\begin{definition}[Composite property]
A \define{composite property} is a property which can be equivalently expressed in terms of primitive properties. 
\end{definition}

\begin{table}
\begin{tabular}{|l|c|c|c|c|c|c|c|c|c|c|l|}
\hline 
Property & $\leq$ & T & $1$ & $\alpha$ & L & $*$ & $/$ & $\circ$ & Th \\
\hline 
\hline 
\uproperty{QSubHom} & {} & {} & {} & {} & {} & \checkmark & \checkmark & {} & \ref{QSubhomogenuityIsComposite} \\
\hline 
\uproperty{SubHom} & \checkmark & \checkmark & {} & \checkmark & {} & \checkmark & \checkmark & {} & \ref{RSubhomogenuityIsImplied} \\
\hline 
\uproperty{Reflex} & \checkmark & {} & {} & {} & {} & {} & {} & {} & \ref{ReflexivityIsImplied} \\
\hline 
\uproperty{Zero} & \checkmark & \checkmark & \checkmark & {} & {} & \checkmark & {} & {} & \ref{ZeroSeparationIsImplied} \\
\hline 
\uproperty{Orderness} & \checkmark & \checkmark & {} & {} & {} & {} & {} & {} & \ref{OrdernessIsImplied} \\
\hline 
\uproperty{TrivialZero} & \checkmark & \checkmark & \checkmark & {} & {} & \checkmark & \checkmark & \checkmark & \ref{ZeroTrivialityIsImplied} \\
\hline 
\usproperty{ISuperComp} & \checkmark & {} & {} & {} & \checkmark & {} & {} & \checkmark & \ref{InjectiveSuperComposabilityIsImplied} \\
\hline 
\uproperty{SubRestrict} & {} & {} & {} & {} & {} & {} & {} & \checkmark & \ref{SubRestrictabilityIsImplied} \\
\hline 
\uproperty{SuperRestrict} & \checkmark & {} & {} & {} & \checkmark & {} & {} & {} & \ref{SuperRestrictabilityIsImplied} \\
\hline 
\uproperty{ScalarHom} & \checkmark & \checkmark & {} & \checkmark & {} & {} & {} & {} & \ref{ScalarHomogenuityIsImplied} \\
\hline 
\uproperty{SuperHom} & \checkmark & \checkmark & \checkmark & \checkmark & \checkmark & \checkmark & \checkmark & \checkmark & \ref{SuperHomogenuityIsImplied} \\
\hline 
\uproperty{SubMulti} & \checkmark & \checkmark & {} & \checkmark & {} & \checkmark & \checkmark & {} & \ref{SubMultiplicativityIsImplied} \\
\hline 
\uproperty{SuperMulti} & \checkmark & \checkmark & \checkmark & \checkmark & \checkmark & \checkmark & \checkmark & \checkmark & \ref{SuperMultiplicativityIsImplied} \\
\hline 
\uproperty{AddCons} & \checkmark & \checkmark & {} & {} & \checkmark & {} & {} & \checkmark & \ref{AdditiveConsistencyIsImplied} \\
\hline 
\uproperty{MaxCons} & \checkmark & \checkmark & {} & {} & \checkmark & {} & {} & \checkmark & \ref{MaximumConsistencyIsImplied} \\
\hline 
\usproperty{MultiCons} & \checkmark & \checkmark & {} & {} & \checkmark & {} & {} & \checkmark & \ref{MultiplicativeConsistencyIsImplied} \\
\hline 
\uproperty{Maximum} & \checkmark & \checkmark & {} & {} & \checkmark & {} & {} & \checkmark & \ref{MaximumIsImplied} \\
\hline 
\uproperty{Summation} & \checkmark & \checkmark & {} & \checkmark & {} & {} & {} & {} & \ref{SummationIsImplied} \\
\hline 
\uproperty{MaximumSum} & \checkmark & \checkmark & {} & \checkmark & {} & {} & {} & {} & \ref{MaximumSumIsImplied} \\
\hline 
\uproperty{Additive} & \checkmark & \checkmark & {} & \checkmark & \checkmark & {} & {} & \checkmark & \ref{AdditivityIsImplied} \\
\hline 
\usproperty{Translation} & \checkmark & \checkmark & {} & \checkmark & {} & {} & {} & {} & \ref{TranslationInvarianceIsImplied} \\
\hline 
\uproperty{Extend} & {} & {} & {} & {} & {} & {} & {} & \checkmark & \ref{ExtensibilityIsImplied} \\
\hline 
\uproperty{SubsetSum} & \checkmark & \checkmark & \checkmark & \checkmark & \checkmark & \checkmark & \checkmark & \checkmark & \ref{SubsetSumIsImplied} \\
\hline 
\end{tabular}
\centering
\caption{The primitive properties that imply a given non-primitive property. The abbreviations are: $\leq$ for \property{Order}, T for \property{Trans}, $1$ for \property{One}, $\alpha$ for \property{Scale}, L for \property{Local}, $*$ for \property{NSubHom}, $/$ for \property{NSubDiv}, $\circ$ for \property{SubComp}, and Th for theorem.}
\label{PrimitivePropertiesNeededToImplyOtherProperties}
\end{table}

\begin{proposition}[\uproperty{QSubHom} is a composite]
\label{QSubhomogenuityIsComposite}
$\ohs{X}$ has \prove{QSubHom}. $\iff$ $\ohs{X}$ has \require{NSubHom} and \require{NSubDiv}.
\end{proposition}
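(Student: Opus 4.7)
The plan is to prove both directions using the fact that the three sub-homogenuity properties differ only in the set from which the values of the multiplier $u$ may be drawn, and that every non-negative rational decomposes pointwise as a natural number times a reciprocal of a positive natural number.

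For the forward direction, I would note that $\TN \subset \nonn{\TQ}$ and $1/\posi{\TN} \subset \nonn{\TQ}$. Hence, if $u : X \to \TN$, then $u$ is also a function $X \to \nonn{\TQ}$, and QSubHom yields $u\ohx{f} \subset \ohx{uf}$, which is exactly NSubHom; the same argument with $u : X \to 1/\posi{\TN}$ gives NCancel.

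For the reverse direction, assume NSubHom and NCancel, and let $u : X \to \nonn{\TQ}$. Since every $q \in \nonn{\TQ}$ admits a representation $q = p/r$ with $p \in \TN$ and $r \in \posi{\TN}$ (e.g., the reduced form), we can define functions $m : X \to \TN$ and $n : X \to \posi{\TN}$ with $u(x) = m(x)/n(x)$ for every $x \in X$. Setting $v := 1/n$, we have $v : X \to 1/\posi{\TN}$ and $u = m v$ pointwise. Now fix $\hat{f} \in \ohx{f}$. Applying NSubHom with multiplier $m$ gives $m\hat{f} \in m \ohx{f} \subset \ohx{mf}$. Applying NCancel (in its sub-homogenuity form) with multiplier $v$ to the containment $m\hat{f} \in \ohx{mf}$ gives $v(m\hat{f}) \in v\ohx{mf} \subset \ohx{vmf}$. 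Since $v(m\hat{f}) = u\hat{f}$ and $vmf = uf$, we conclude $u\hat{f} \in \ohx{uf}$. Hence $u\ohx{f} \subset \ohx{uf}$, which is QSubHom.

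There is essentially no difficult step; the only point to verify is that the pointwise decomposition $u = m v$ is a legitimate factorization of functions, which it is since pointwise multiplication in $\rc{X}$ coincides with multiplication of the values. The whole argument is a two-line chain once the decomposition is in hand.
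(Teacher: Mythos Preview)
Your proof is correct and follows essentially the same approach as the paper: both directions hinge on the inclusion $\TN,\,1/\posi{\TN}\subset\nonn{\TQ}$ for the forward direction and on the pointwise decomposition $u=p/q$ with $p(X)\subset\TN$, $q(X)\subset\posi{\TN}$ for the converse. The only cosmetic difference is that the paper phrases the use of \property{NCancel} in its cancellation form (from $q\cdot(p/q)f\in\ohx{q\cdot(p/q)g}$ deduce $(p/q)f\in\ohx{(p/q)g}$), whereas you apply it directly as sub-homogenuity with multiplier $v=1/n$; these are equivalent formulations of the same step.
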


\begin{proof}
\proofpart{$\implies$}
Suppose $\ohs{X}$ has \property{QSubHom}. Then $\ohs{X}$ has \property{NSubHom}, since $\TN \subset \nonn{\TQ}$, and \property{NSubDiv}, since $(1 / \posi{\TN}) \subset \posi{\TQ}$.

\proofpart{$\impliedby$}
Suppose $\ohs{X}$ has \require{NSubHom} and \require{NSubDiv}. Let $f, g, u \in \rc{X}$ be such that $\image{u}{X} \subset \nonn{\TQ}$. Then there exists $p, q \in \rc{X}$, such that $\image{p}{X} \subset \nonn{\TN}$, $\image{q}{X} \subset \posi{\TN}$, and $u = p / q$. By \property{NSubHom} and \property{NSubDiv},
\begin{eqs}
{} & f \in \ohx{g} \\
\impliesr & pf \in \ohx{pg} \\
\impliesr & \frac{p}{q} f \in \ohx{\frac{p}{q} g} \\
\impliesr & uf \in \ohx{ug}. 
\end{eqs} \sprove{QSubHom}
\end{proof}

\begin{proposition}[\uproperty{SubHom} is implied]
\label{RSubhomogenuityIsImplied}
$\ohs{X}$ has \require{Order}, \require{Trans}, \require{Scale}, \require{NSubHom}, and \require{NSubDiv}. $\implies$ $\ohs{X}$ has \prove{SubHom}.
\end{proposition}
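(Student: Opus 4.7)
The plan is to reduce to \property{QSubHom}, which is already available via \thref{QSubhomogenuityIsComposite} under the current hypotheses, and then sandwich the real-valued $u$ between a pair of rational-valued multiples so that the $\nonn{\TR}$-case can be routed through the $\nonn{\TQ}$-case using \property{Order}, \property{Trans}, and \property{Scale}. Concretely, suppose $u \in \rc{X}$ with $u(X) \subset \nonn{\TR}$ (which is automatic) and let $\hat{f} \in \ohx{f}$; the goal is $u\hat{f} \in \ohx{uf}$.

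First I would use the density of $\nonn{\TQ}$ in $\nonn{\TR}$ to build a function $q \in \rc{X}$ with $q(X) \subset \nonn{\TQ}$ satisfying $u \leq q \leq 2u$ pointwise. For each $x \in X$ with $u(x) > 0$, pick any rational $q(x)$ in the non-empty interval $[u(x), 2u(x)]$; for $u(x) = 0$, set $q(x) = 0$. (A choice function exists in ZFC; no further regularity on $q$ is required since the primitive properties make no measurability or continuity demands.)

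Then I would chain the implications as follows. Since $u \leq q$ and $\hat{f} \geq 0$, we have $u\hat{f} \leq q\hat{f}$, so by \property{Order}, $u\hat{f} \in \ohx{q\hat{f}}$. By \thref{QSubhomogenuityIsComposite}, $\ohs{X}$ has \property{QSubHom}, and since $q(X) \subset \nonn{\TQ}$, we obtain $q\hat{f} \in \ohx{qf}$ from $\hat{f} \in \ohx{f}$. By \property{Trans}, $u\hat{f} \in \ohx{qf}$. Next, since $q \leq 2u$ and $f \geq 0$, we have $qf \leq 2uf$, so by \property{Order}, $qf \in \ohx{2uf}$, and by \property{Scale}, $\ohx{2uf} = \ohx{uf}$. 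A final application of \property{Trans} yields $u\hat{f} \in \ohx{uf}$, establishing \property{SubHom}.

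The only delicate point is the construction of $q$: one must be careful that on the zero-set of $u$ the sandwich $u \leq q \leq 2u$ forces $q = 0$, which is what keeps $qf \leq 2uf$ valid pointwise (and avoids the asymmetric failure that ruined the $\lceil u \rceil$-style rounding attempt). Everything else is mechanical application of the listed primitive properties together with the already-proved \property{QSubHom}.
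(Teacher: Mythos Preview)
Your proof is correct and follows essentially the same approach as the paper: both invoke \thref{QSubhomogenuityIsComposite} to obtain \property{QSubHom}, sandwich $u$ between a $\nonn{\TQ}$-valued function $q$ and $2u$ (the paper writes this as $h \circ u$ for a single $h : \nonn{\TR} \to \nonn{\TQ}$ with $x \leq h(x) \leq 2x$), and then chain \property{Order}, \property{QSubHom}, \property{Trans}, and \property{Scale} in the same order you do. Your version merely spells out the intermediate containments more explicitly.
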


\begin{proof}
$\ohs{X}$ has \property{QSubHom} by \proveby{QSubhomogenuityIsComposite}. Let $f, g, u \in \rc{X}$, and $\function{h}{\nonn{\TR}}{\nonn{\TQ}}$ be such that
\begin{eqs}
x \leq h(x) \leq 2x.
\end{eqs}
By \require{QSubHom}, \require{Order}, \require{Trans}, and \require{Scale},
\begin{eqs}
{} & f \in \ohx{g} \\
\implies & (h \circ u) f \in \ohx{(h \circ u) g} \\
\implies & uf \in \ohx{2ug} \\
\implies & uf \in \ohx{ug}. 
\end{eqs} \sprove{SubHom}
\end{proof}

\begin{proposition}[\uproperty{Reflex} is implied]
\label{ReflexivityIsImplied}
$\ohs{X}$ has \require{Order} $\implies$ $\ohs{X}$ has \prove{Reflex}.
\end{proposition}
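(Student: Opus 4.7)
The plan is essentially a one-line deduction: reflexivity asks that $f \in \ohx{f}$ for every $f \in \rc{X}$, and order\-/consistency gives exactly the implication $f \leq g \implies f \in \ohx{g}$. Since the relation $\leq$ on $\nonn{\TR}$ is reflexive, it lifts pointwise to a reflexive relation on $\rc{X}$, so $f \leq f$ holds automatically.

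Concretely, I would fix an arbitrary $X \in U$ and $f \in \rc{X}$, note that $f(x) \leq f(x)$ for every $x \in X$ (which is what the lifted relation $f \leq f$ means by the definition given in the Notation section), and then invoke \require{Order} with both arguments equal to $f$ to conclude $f \in \ohx{f}$. This yields \prove{Reflex}.

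There is no real obstacle here: no other primitive property is needed, and the argument does not rely on any particular concrete definition of $\ohsy$. The only subtlety worth flagging is the \textbf{Empty function pitfall} (\thref{EmptyFunctionPitfall}): when $X = \emptyset$ there is a unique empty function $f$, and the universal quantifier in $f \leq f$ is vacuously true, so the argument still goes through. Thus the proof is a single sentence citing \require{Order} applied to $f \leq f$.
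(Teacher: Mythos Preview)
Your proposal is correct and matches the paper's own proof essentially verbatim: apply \property{Order} to the trivially true inequality $f \leq f$ to obtain $f \in \ohx{f}$. Your remark on the $X = \emptyset$ case is a harmless extra observation not present in the paper.
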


\begin{proof}
By \require{Order}, $f \leq f \implies f \in \ohx{f}$, for all $f \in \rc{X}$; $\ohs{X}$ has \prove{Reflex}. 
\end{proof}

\begin{proposition}[\uproperty{Zero} is implied]
\label{ZeroSeparationIsImplied}
$\ohsy$ has \require{Order}, \require{Trans}, \require{One}, and \require{NSubHom} $\implies$ $\ohsy$ has \prove{Zero}.
\end{proposition}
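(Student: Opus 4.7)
The plan is to argue by contradiction: assume $1 \in \oh{\posi{\TN}}{0}$ and derive $n \in \oh{\posi{\TN}}{1}$, contradicting \require{One}.

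First I would multiply through by the identity function $n \in \rc{\posi{\TN}}$ using \require{NSubHom}. Since $n$ takes values in $\TN$ (indeed in $\posi{\TN} \subset \TN$), the assumption $1 \in \oh{\posi{\TN}}{0}$ yields
\begin{eqs}
n = n \cdot 1 \in n \cdot \oh{\posi{\TN}}{0} \subset \oh{\posi{\TN}}{n \cdot 0} = \oh{\posi{\TN}}{0}.
\end{eqs}
Next, since $0 \leq 1$ pointwise on $\posi{\TN}$, \require{Order} gives $0 \in \oh{\posi{\TN}}{1}$. Then \require{Trans} applied to $n \in \oh{\posi{\TN}}{0}$ and $0 \in \oh{\posi{\TN}}{1}$ produces $n \in \oh{\posi{\TN}}{1}$, which contradicts \require{One}. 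Hence the assumption fails and $1 \not\in \oh{\posi{\TN}}{0}$, which is exactly \prove{Zero}.

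There is no real obstacle here; the only subtlety is recognizing that \require{NSubHom} is the right tool to move from a statement about $1$ and $0$ to a statement about $n$ and $n \cdot 0 = 0$, and then using \require{Order} together with \require{Trans} to re-insert $1$ on the right-hand side so that \require{One} becomes applicable. The argument works uniformly across all $X \in U$ because the hypothesis is instantiated on the single set $\posi{\TN}$, which is assumed to lie in $U$.
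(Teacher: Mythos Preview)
Your proposal is correct and essentially identical to the paper's proof: both argue by contradiction, apply \property{NSubHom} with the identity function $n$ to obtain $n \in \oh{\posi{\TN}}{0}$ from $1 \in \oh{\posi{\TN}}{0}$, then use \property{Order} and \property{Trans} to reach $n \in \oh{\posi{\TN}}{1}$, contradicting \property{One}. The only difference is that you spell out the intermediate products $n\cdot 1 = n$ and $n\cdot 0 = 0$ explicitly, which the paper leaves implicit.
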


\begin{proof}
Suppose $\ohsy$ does not have \property{Zero}, so that $1 \in \oh{\posi{\TN}}{0}$. By \require{NSubHom}, $n \in \oh{\posi{\TN}}{0}$. By \require{Order}, $0 \in \oh{\posi{\TN}}{1}$. By \require{Trans}, $n \in \oh{\posi{\TN}}{1}$. This contradicts \require{One}. \sprove{Zero}
\end{proof}

\begin{proposition}[\uproperty{Orderness} is a composite]
\label{ReflexiveTransitiveIsOrderness}
$\ohs{X}$ has \require{Reflex} and \require{Trans} $\iff$ $\ohs{X}$ has \prove{Orderness}.
\end{proposition}

\begin{proof}
\proofpart{$\implies$} 
Assume $f \in \ohx{g}$. Let $\hat{f} \in \ohx{f}$. By \require{Trans}, $\hat{f} \in \ohx{g}$, and so $\ohx{f} \subset \ohx{g}$. Assume $\ohx{f} \subset \ohx{g}$. By \require{Reflex}, $f \in \ohx{f}$. Therefore $f \in \ohx{g}$, and so $\ohs{X}$ has \prove{Orderness}. 

\proofpart{$\impliedby$} 
By \require{Orderness}, $f \in \ohx{f} \iff \ohx{f} \subset \ohx{f}$. Therefore $\ohs{X}$ has \prove{Reflex}. Let $f \in \ohx{g}$, and $g \in \ohx{h}$. By \require{Orderness}, $\ohx{g} \subset \ohx{h}$. Therefore $f \in \ohx{h}$, and so $\ohs{X}$ has \prove{Trans}.
\end{proof}

\begin{proposition}[\uproperty{Orderness} is implied]
\label{OrdernessIsImplied}
$\ohs{X}$ has \require{Order} and \require{Trans}. $\implies$ $\ohs{X}$ has \prove{Orderness}.
\end{proposition}

\begin{proof}
$\ohs{X}$ has \property{Reflex} by \proveby{ReflexivityIsImplied}. $\ohs{X}$ has \property{Orderness} by \proveby{ReflexiveTransitiveIsOrderness}.
\end{proof}

\begin{proposition}[\uproperty{TrivialZero} is implied]
\label{ZeroTrivialityIsImplied}
$\ohsy$ has \require{Order}, \require{Trans}, \require{One}, \require{Scale}, \require{NSubHom}, \require{NSubDiv}, and \require{SubComp}. $\implies$ $\ohsy$ has \prove{TrivialZero}.
\end{proposition}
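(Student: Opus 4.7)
The plan is to show both inclusions $\ohx{0} \subset \setbra{0}$ and $\setbra{0} \subset \ohx{0}$ for arbitrary $X \in U$. The reverse inclusion is immediate: \proveby{ReflexivityIsImplied} gives \property{Reflex} from \require{Order}, hence $0 \in \ohx{0}$. The interesting direction is to rule out any nonzero element of $\ohx{0}$, and the main obstacle is transferring the issue from a general set $X$ (where \property{Zero} is not directly available) down to $\posi{\TN}$, where \proveby{ZeroSeparationIsImplied} applies.

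To do the transfer, I would argue by contradiction. Suppose $g \in \ohx{0}$ with $g(x_0) > 0$ for some $x_0 \in X$. Let $s : \posi{\TN} \to X$ be the constant function $s(n) = x_0$. By \require{SubComp},
\begin{equation}
g \circ s \in \oh{\posi{\TN}}{0 \circ s} = \oh{\posi{\TN}}{0},
\end{equation}
and $g \circ s$ is the constant function with value $g(x_0) > 0$, i.e.\ $g \circ s = g(x_0) \cdot 1$ on $\posi{\TN}$.

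Now I would leverage \property{Member} (implied by \require{Order} and \require{Trans} via \proveby{MembershipIsImplied}) to rewrite containment as set inclusion: $\oh{\posi{\TN}}{g \circ s} \subset \oh{\posi{\TN}}{0}$. Applying \require{Scale} with $\alpha = g(x_0) \in \posi{\TR}$ yields $\oh{\posi{\TN}}{g \circ s} = \oh{\posi{\TN}}{1}$, so $\oh{\posi{\TN}}{1} \subset \oh{\posi{\TN}}{0}$. Since \property{Reflex} gives $1 \in \oh{\posi{\TN}}{1}$, we conclude $1 \in \oh{\posi{\TN}}{0}$, contradicting \property{Zero} (which follows from \require{Order}, \require{Trans}, \require{One}, and \require{NSubHom} via \proveby{ZeroSeparationIsImplied}).

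The key step --- and the part I expect to be the main obstacle in getting the references right --- is the passage from a general $X$ to $\posi{\TN}$ via a constant map; once we are on $\posi{\TN}$, \property{Zero} plus \property{Scale} finish the argument almost mechanically. Note that \require{NCancel} is not used in this proof, although it is listed among the hypotheses (it is needed upstream only to obtain some of the implied properties elsewhere).
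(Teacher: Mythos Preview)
Your proof is correct and follows essentially the same line as the paper's: assume a nonzero $g\in\ohx{0}$, pull back to $\posi{\TN}$ via a constant map $s$, and derive $1\in\oh{\posi{\TN}}{0}$ in contradiction to \property{Zero}. The only difference is cosmetic: the paper passes from the constant $c=g(x_0)$ to $1$ by invoking \property{SubHom} (via \thref{RSubhomogenuityIsImplied}, which uses \require{NCancel}), whereas you do it directly with \property{Scale} and \property{Member}; your observation that \require{NCancel} is not actually needed for this particular argument is correct.
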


\begin{proof}
$\ohsy$ has \property{Zero} by \proveby{ZeroSeparationIsImplied} and \property{SubHom} by \proveby{RSubhomogenuityIsImplied}. Suppose $\ohsy$ does not have \property{TrivialZero}, so that there exists $f \in \ohx{0}$ such that $f \neq 0$. Then there exists $y \in X$ such that $f(y) = c$, for some $c \in \posi{\TR}$. Let $\function{s}{\posi{\TN}}{X}$ be such that $s(x) = y$. By \require{SubComp} and \require{SubHom},
\begin{eqs}
{} & f \in \ohx{0} \\
\impliesr & f \circ s \in \ohx{0} \circ s \\
\impliesr & c \in \oh{\posi{\TN}}{0 \circ s} \\
\impliesr & 1 \in \oh{\posi{\TN}}{0} / c \\
\impliesr & 1 \in \oh{\posi{\TN}}{0 / c} \\
\impliesr & 1 \in \oh{\posi{\TN}}{0}.
\end{eqs}
This contradicts \require{Zero}. 
\sprove{TrivialZero}
\end{proof}

\begin{proposition}[\uproperty{ISuperComp} is implied]
\label{InjectiveSuperComposabilityIsImplied}
$\ohs{X}$ has \require{Order}, \require{Local}, and \require{ISubComp} for injective $\function{s}{Y}{X}$. $\implies$ $\ohs{X}$ has \prove{ISuperComp} for $s$. \sprove{IComp}
\end{proposition}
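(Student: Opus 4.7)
The plan is to take an arbitrary $\hat{g} \in \oh{Y}{f \circ s}$ and explicitly construct a preimage $\hat{f} \in \ohx{f}$ under composition with $s$; that is, a function $\hat{f} : X \to \nonn{\TR}$ satisfying $\hat{f} \circ s = \hat{g}$ and $\hat{f} \in \ohx{f}$. Because $s$ is injective, its restriction to the image is a bijection $s : Y \to s(Y)$, so the natural candidate is
\begin{equation*}
\hat{f}(x) := \begin{cases} \hat{g}(s^{-1}(x)), & x \in s(Y), \\ f(x), & x \in X \setminus s(Y). \end{cases}
\end{equation*}
The identity $\hat{f} \circ s = \hat{g}$ is then immediate from injectivity.

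Next I would verify $\hat{f} \in \ohx{f}$ by a two-piece locality argument on the cover $C = \setbra{s(Y), X \setminus s(Y)}$ (discarding the empty member when either piece is empty; the degenerate cases $s(Y) = X$ and $Y = \emptyset$ collapse to the same argument or become trivial). On $X \setminus s(Y)$ the two functions coincide, so \property{Order} gives $\restrb{\hat{f}}{X \setminus s(Y)} \in \oh{X \setminus s(Y)}{\restr{f}{X \setminus s(Y)}}$. On $s(Y)$, I would invoke \property{ISubComp} with the injective map $s^{-1} : s(Y) \to Y$: from $\hat{g} \in \oh{Y}{f \circ s}$ it follows that
\begin{equation*}
\hat{g} \circ s^{-1} \in \oh{s(Y)}{(f \circ s) \circ s^{-1}} = \oh{s(Y)}{\restr{f}{s(Y)}},
\end{equation*}
and by construction $\hat{g} \circ s^{-1} = \restrb{\hat{f}}{s(Y)}$. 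Applying \property{Local} to $C$ then yields $\hat{f} \in \ohx{f}$, completing the inclusion $\oh{Y}{f \circ s} \subset \ohx{f} \circ s$.

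The only genuinely delicate step is the use of \property{ISubComp} on $s^{-1}$: this requires that the assumption of ISubComp is read as holding for \emph{every} injective map (in particular for $s^{-1}$), and that the identity $(f \circ s) \circ s^{-1} = \restr{f}{s(Y)}$ is applied correctly on the image. Everything else --- the definition of $\hat{f}$, the cover argument, and the \property{Order} step on $X \setminus s(Y)$ (with reflexivity provided by \proveby{ReflexivityIsImplied}) --- is routine bookkeeping once the extension has been written down.
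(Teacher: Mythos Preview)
Your proposal is correct and follows essentially the same argument as the paper: take $\hat{g}\in\oh{Y}{f\circ s}$, extend it along the bijection $s:Y\to s(Y)$ to a function $\hat{f}$ on $X$, verify the $s(Y)$-piece via \property{ISubComp} applied to the inverse $s^{-1}:s(Y)\to Y$, handle the complement via \property{Order}, and glue with \property{Local}. The only cosmetic difference is that the paper extends by $0$ on $X\setminus s(Y)$ (invoking $0\le f$) while you extend by $f$ itself (invoking reflexivity); both are fine, and your observation that \property{ISubComp} must be read as available for \emph{all} injective maps (in particular $s^{-1}$) is exactly the subtlety the paper also relies on.
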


\begin{proof}
Let $f \in \rc{X}$, and $\function{\underline{s}}{Y}{\image{s}{Y}}$ be such that $\underline{s}(y) = s(y)$. Then $\underline{s}$ is bijective. Let $\hat{g} \in \oh{Y}{f \circ s}$ and $\hat{f} \in \rc{X}$ be such that
\begin{eqs}
\hat{f}(x) = 
\begin{cases}
\bra{\hat{g} \circ \invs{\underline{s}}}(x), & x \in \image{s}{Y}, \\
0, & x \not\in \image{s}{Y}.
\end{cases}
\end{eqs}
Then $\hat{g} = \hat{f} \circ s$; we show that $\hat{f} \in \ohx{f}$. 
By \require{ISubComp}
\begin{eqs}
\hat{f}|\image{s}{Y} & = \hat{g} \circ \invs{\underline{s}} \\
{} & \in \oh{Y}{f \circ s} \circ \invs{\underline{s}} \\
{} & \subset \oh{\image{s}{Y}}{f \circ s \circ \invs{\underline{s}}} \\
{} & = \oh{\image{s}{Y}}{f|\image{s}{Y}}.
\end{eqs}
By \require{Order},
\begin{equation}
\hat{f}|\bra{X \setminus \image{s}{Y}} = 0 \in \oh{X \setminus \image{s}{Y}}{f|\bra{X \setminus \image{s}{Y}}}.
\end{equation}
By \require{Local},
\begin{equation}
\hat{f} \in \ohx{f}.
\end{equation}
Therefore $\ohs{X}$ has \prove{ISuperComp} for $s$. \sprove{IComp}
\end{proof}

\begin{proposition}[\uproperty{SubRestrict} is implied]
\label{SubRestrictabilityIsImplied}
$\ohs{X}$ has \require{ISubComp}. $\implies$ $\ohs{X}$ has \prove{SubRestrict}.
\end{proposition}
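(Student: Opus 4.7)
The plan is to specialize \require{ISubComp} to the inclusion map. Given $D \subset X$, define $s : D \to X$ by $s(d) = d$. Then $s$ is manifestly injective, so \require{ISubComp} applies to $s$.

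Next I would check the two notational translations that make the conclusion of \require{ISubComp} coincide with \prove{SubRestrict}. For a function $g \in \rc{X}$, the composition $g \circ s$ is, by the definition of restriction given in the Notation section, exactly $\restr{g}{D}$. Lifting the composition to sets of functions (using the convention $U \circ s = \setbra{u \circ s}_{u \in U}$ recorded earlier), we therefore obtain
\begin{equation}
\ohx{f} \circ s = \setbra{g \circ s : g \in \ohx{f}} = \setbra{\restr{g}{D} : g \in \ohx{f}} = \restr{\ohx{f}}{D},
\end{equation}
and in particular $f \circ s = \restr{f}{D}$.

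Finally, \require{ISubComp} for this $s$ gives $\ohx{f} \circ s \subset \oh{D}{f \circ s}$, which by the two identifications above is exactly $\restr{\ohx{f}}{D} \subset \oh{D}{\restr{f}{D}}$, i.e.\ \prove{SubRestrict}. There is no real obstacle: the entire content is recognizing restriction as precomposition with the inclusion, which is an injective map, so \require{ISubComp} can be invoked directly.
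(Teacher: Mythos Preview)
Your proof is correct and is essentially identical to the paper's own argument: both take $s : D \to X$ to be the inclusion map, observe that composition with $s$ is restriction, and apply \property{ISubComp} directly. There is nothing to add.
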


\begin{proof}
Let $D \subset X$, and $\function{s}{D}{X}$ be such that $s(x) = x$. Then $s$ is injective. By \require{ISubComp}
\begin{eqs}
\restr{\ohx{f}}{D} & = \ohx{f} \circ s \\
{} & \subset \oh{D}{f \circ s} \\
{} & = \oh{D}{\restr{f}{D}},
\end{eqs}
for all $f \in \rc{X}$. \sprove{SubRestrict}
\end{proof}

\begin{theorem}[\uproperty{SuperRestrict} is implied]
\label{SuperRestrictabilityIsImplied}
$\ohsy$ has \require{Order} and \require{Local}. $\implies$ $\ohsy$ has \prove{SuperRestrict}.
\end{theorem}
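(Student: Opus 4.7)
The plan is to show directly that every function in $\oh{D}{\restr{f}{D}}$ arises as the restriction to $D$ of some function in $\ohx{f}$. That is, starting from an arbitrary $\hat{g} \in \oh{D}{\restr{f}{D}}$, I will construct an extension $\hat{f} \in \rc{X}$ which agrees with $\hat{g}$ on $D$ and satisfies $\hat{f} \in \ohx{f}$; then $\hat{g} = \restr{\hat{f}}{D} \in \restr{\ohx{f}}{D}$, which yields the desired inclusion.

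The natural extension is to put $\hat{f}(x) = \hat{g}(x)$ for $x \in D$ and $\hat{f}(x) = 0$ for $x \in X \setminus D$. By construction, $\restr{\hat{f}}{D} = \hat{g} \in \oh{D}{\restr{f}{D}}$. On the complement, $\restr{\hat{f}}{X \setminus D} = 0 \leq \restr{f}{X \setminus D}$, and so \require{Order} gives $\restr{\hat{f}}{X \setminus D} \in \oh{X \setminus D}{\restr{f}{X \setminus D}}$.

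Finally, $C = \setbra{D, X \setminus D} \subset \powerset{X}$ is a finite cover of $X$, so invoking \require{Local} on this cover yields $\hat{f} \in \ohx{f}$. Thus $\hat{g} \in \restr{\ohx{f}}{D}$, completing the proof.

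I do not expect any real obstacles: the construction is forced once one recognises that \require{Local} allows one to glue an extension on $X \setminus D$, and that \require{Order} lets one pick the trivial extension by $0$. The only point requiring a moment of care is the edge case $D = X$ (so $X \setminus D = \emptyset$), which is harmless because $\restr{\hat{f}}{\emptyset}$ is just the empty function and the two-element cover degenerates correctly; no additional argument is needed.
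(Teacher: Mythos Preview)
Your proof is correct and follows essentially the same approach as the paper's own proof: extend by zero on $X \setminus D$, use \property{Order} on the complement, and glue via \property{Local} on the finite cover $\setbra{D, X \setminus D}$. The only difference is notational (the paper writes $\hat{h}$ for your $\hat{g}$ and $\compl{D}$ for $X \setminus D$), and your explicit remark on the $D = X$ edge case is a harmless addition.
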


\begin{proof}
Let $D \subset X$, $\compl{D} = X \setminus D$, and $\hat{h} \in \oh{D}{\restr{f}{D}}$. Let $\hat{f} \in \rc{X}$ be such that $\restrb{\hat{f}}{D} = \hat{h}$ and $\restrb{\hat{f}}{\compl{D}} = 0$. Then $\restrb{\hat{f}}{D} \in \oh{D}{\restr{f}{D}}$, and by \require{Order}, $\restrb{\hat{f}}{\compl{D}} \in \oh{\compl{D}}{\restr{f}{\compl{D}}}$. By \require{Local}, $\hat{f} \in \ohx{f}$. \sprove{SuperRestrict}
\end{proof}

\begin{theorem}[\uproperty{ScalarHom} is implied]
\label{ScalarHomogenuityIsImplied}
$\ohs{X}$ has \require{Order}, \require{Trans}, and \require{Scale}. $\implies$ $\ohs{X}$ has \prove{ScalarHom}.
\end{theorem}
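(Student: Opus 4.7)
The plan is to reduce the claim $\alpha \ohx{f} = \ohx{\alpha f}$ to the identity $\alpha \ohx{f} = \ohx{f}$ by one appeal to \require{Scale}, which gives $\ohx{\alpha f} = \ohx{f}$ directly. After this reduction, the two inclusions become symmetric: replacing $\alpha$ by $1/\alpha \in \posi{\TR}$ swaps them. So I only need to establish one of them carefully.

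First I would note that by \proveby{MembershipIsImplied}, \require{Order} and \require{Trans} yield \property{Member} (and in particular \property{Reflex} by \proveby{ReflexivityIsImplied}). This is the workhorse: it says $g \in \ohx{f}$ is equivalent to $\ohx{g} \subset \ohx{f}$.

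For the inclusion $\alpha \ohx{f} \subset \ohx{f}$, I would take any $g \in \ohx{f}$ and argue in three short steps: \property{Member} gives $\ohx{g} \subset \ohx{f}$; \require{Scale} gives $\ohx{\alpha g} = \ohx{g}$; and \property{Reflex} gives $\alpha g \in \ohx{\alpha g}$. Chaining these yields $\alpha g \in \ohx{f}$, hence $\alpha \ohx{f} \subset \ohx{f}$. The reverse inclusion $\ohx{f} \subset \alpha \ohx{f}$ is the same argument applied to $1/\alpha$: given $h \in \ohx{f}$, the same chain shows $(1/\alpha) h \in \ohx{f}$, so $h = \alpha \bigl((1/\alpha)h\bigr) \in \alpha \ohx{f}$.

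There is no real obstacle here; the statement is essentially a bookkeeping consequence of \require{Scale} combined with the fact that $\ohx{f}$ is a down-set of a preorder (\property{Member}). The only thing worth being careful about is not to accidentally use \property{SubHom} or \property{SuperHom}, which are stronger and not among the hypotheses — every use of ``multiplying a function inside $\ohsy$ by a positive constant'' must be routed through \require{Scale} applied to the \emph{argument} of $\ohsy$, together with \property{Member} to move the constant outside.
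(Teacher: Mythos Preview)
Your proof is correct and essentially the same as the paper's. The paper also invokes \thref{MembershipIsImplied} and then runs the chain $\alpha \hat{f} \in \ohx{\alpha \hat{f}} = \ohx{\hat{f}} \subset \ohx{f} = \ohx{\alpha f}$ for the $\subset$ direction and the $1/\alpha$ trick for the $\supset$ direction; your only cosmetic difference is factoring out the reduction $\ohx{\alpha f} = \ohx{f}$ up front so that the two inclusions become visibly symmetric.
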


\begin{proof}
$\ohs{X}$ has \property{Orderness} by \proveby{OrdernessIsImplied}.

\proofpart{$\subset$}
Let $\hat{f} \in \ohx{f}$, and $\alpha \in \posi{\TR}$. By \require{Order}, \require{Scale}, \require{Orderness}, and \require{Scale} again,
\begin{eqs}
\alpha \hat{f} & \in \ohx{\alpha \hat{f}} \\
{} & = \ohx{\hat{f}} \\
{} & \subset \ohx{f} \\
{} & = \ohx{\alpha f}.
\end{eqs}
Therefore $\alpha \hat{f} \in \ohx{\alpha f}$.

\proofpart{$\supset$}
Let $\hat{f} \in \ohx{\alpha f}$. By \require{Scale}, $\hat{f} \in \ohx{f}$. By \require{Orderness}, $\ohx{\hat{f}} \subset \ohx{f}$. By \require{Scale}, $\ohx{(1 / \alpha) \hat{f}} \subset \ohx{f}$. By \require{Orderness}, $(1 / \alpha) \hat{f} \in \ohx{f}$. Therefore $\hat{f} \in \alpha \ohx{f}$, and so $\ohs{X}$ has \prove{ScalarHom}.
\end{proof}

\begin{theorem}[\uproperty{SuperHom} is implied]
\label{SuperHomogenuityIsImplied}
$\ohs{X}$ has \require{Order}, \require{Local}, \require{Trans}, \require{One}, \require{Scale}, \require{NSubHom}, \require{NSubDiv}, and $\require{SubComp}$. \linebreak $\implies$ $\ohs{X}$ has \prove{SuperHom}.
\end{theorem}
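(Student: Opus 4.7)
The plan is to exploit the characterization just established in \thref{PrimitiveImpliesLinear}: under the assumed primitive properties, $\ohs{X}$ coincides with linear dominance, so $g \in \ohx{h}$ is equivalent to the existence of some $c \in \posi{\TR}$ with $g \leq c h$. Once we pass to this concrete form, \property{SuperHom} becomes a purely pointwise factoring statement: given $\hat{h} \leq c(uf)$, we must produce $\hat{f} \in \rc{X}$ with $\hat{h} = u \hat{f}$ and $\hat{f} \leq c' f$ for some $c' \in \posi{\TR}$.

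More precisely, first I would fix $u, f \in \rc{X}$ and $\hat{h} \in \ohx{uf}$, and apply \proveby{PrimitiveImpliesLinear} to obtain $c \in \posi{\TR}$ such that $\hat{h} \leq c u f$. Split $X$ as the disjoint union $X = G \cup \compl{G}$, where $G := u^{-1}(\posi{\TR})$ and $\compl{G} := u^{-1}(\setbra{0})$. On $\compl{G}$ the inequality $\hat{h} \leq c u f$ forces $\restrb{\hat{h}}{\compl{G}} = 0$. This gives a well-defined candidate
\begin{equation}
\hat{f}(x) := \begin{cases} \hat{h}(x)/u(x), & x \in G, \\ 0, & x \in \compl{G}, \end{cases}
\end{equation}
which by construction satisfies $u \hat{f} = \hat{h}$ everywhere.

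It remains to check $\hat{f} \in \ohx{f}$. On $G$ we have $\hat{f} = \hat{h}/u \leq c u f / u = c f$, while on $\compl{G}$ we have $\hat{f} = 0 \leq c f$. Thus $\hat{f} \leq c f$ on all of $X$, and by the other direction of \proveby{PrimitiveImpliesLinear} we conclude $\hat{f} \in \ohx{f}$. Therefore $\hat{h} = u \hat{f} \in u \ohx{f}$, which gives $\ohx{uf} \subset u \ohx{f}$.

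The only subtlety worth flagging is the handling of points where $u$ vanishes: we must use the linear-dominance form of the hypothesis to force $\hat{h}$ to be zero there before we can define $\hat{f}$ pointwise. This is precisely why \property{SubComp} (together with \property{Local}, which enters through \thref{PrimitiveImpliesLinear} via \proveby{ZeroTrivialityIsImplied} and \proveby{SubRestrictabilityIsImplied}) is needed in the hypotheses, rather than the more naive set of closure properties --- it is what rules out spurious nonzero values of $\hat{h}$ on $u^{-1}(\setbra{0})$.
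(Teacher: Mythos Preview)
Your argument is correct, but it takes a different route from the paper's own proof. Both proofs construct the same candidate $\hat f$ --- namely $\hat h/u$ on $G = u^{-1}(\posi{\TR})$ and $0$ on $\compl G$ --- and both verify that $u\hat f = \hat h$ and $\hat f \in \ohx{f}$. The difference is in how the verification is carried out. You invoke \thref{PrimitiveImpliesLinear} to reduce everything to the concrete inequality $\hat h \le c\,uf$, after which the pointwise check is immediate. The paper instead argues entirely within the abstract axioms: it uses \property{SubRestrict} and \property{SubHom} to show $\restrb{\hat f}{G} \in \oh{G}{\restr{f}{G}}$, \property{Order} for the $\compl G$ piece, \property{Local} to glue, and \property{TrivialZero} (rather than the raw inequality) to force $\restrb{\hat h}{\compl G} = 0$.

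Your route is shorter and more transparent, but it depends on the full characterization theorem. In the paper's logical ordering, \thref{SuperHomogenuityIsImplied} appears in the ``implied properties'' section \emph{before} \thref{PrimitiveImpliesLinear}, so the paper deliberately avoids that forward reference and shows \property{SuperHom} follows modularly from the primitives. There is no circularity in your approach --- \thref{PrimitiveImpliesLinear} does not use \property{SuperHom} --- so your proof is logically sound; it just trades the paper's self-contained axiomatic derivation for a quicker argument that presupposes the main structure theorem. Your phrase ``just established'' is therefore slightly misleading given the paper's order of presentation.
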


\begin{proof}
\srequire{SubComp} \sprove{ISubComp}
$\ohs{X}$ has \property{SubRestrict} by \proveby{SubRestrictabilityIsImplied}, \property{SubHom} by \proveby{RSubhomogenuityIsImplied}, and \property{TrivialZero} by \proveby{ZeroTrivialityIsImplied}. Let $\hat{h} \in \ohx{fg}$, $G = \preimage{g}{\posi{\TR}}$, and $\compl{G} = X \setminus G$. Let $\hat{f} \in \rc{X}$ be such that $\restrb{\hat{f}}{G} = \restrb{\hat{h}}{G} / \restrb{g}{G}$ and $\restrb{\hat{f}}{\compl{G}} = 0$. By \require{SubRestrict}, and \require{SubHom},
\begin{eqs}
\restrb{\hat{f}}{G} & \in \frac{\restr{\ohx{fg}}{G}}{\restr{g}{G}} \\
{} & \subset \frac{\oh{G}{\restr{fg}{G}}}{\restr{g}{G}} \\
{} & \subset \oh{G}{\restr{f}{G}}.
\end{eqs}
By \require{Order}, $\restrb{\hat{f}}{\compl{G}} \in \oh{\compl{G}}{\restr{f}{\compl{G}}}$. By \require{Local}, $\hat{f} \in \ohx{f}$. It holds that $\restrb{\hat{h}}{\compl{G}} \in \restr{\ohx{fg}}{\compl{G}}$. By \require{SubRestrict}, $\restrb{\hat{h}}{\compl{G}} \in \oh{\compl{G}}{0}$. By \require{TrivialZero}, $\restrb{\hat{h}}{\compl{G}} = 0$. Therefore $\hat{f}g = \hat{h}$.
\sprove{SuperHom}
\end{proof}

\begin{theorem}[\uproperty{SubMulti} is implied]
\label{SubMultiplicativityIsImplied}
$\ohs{X}$ has \require{Order}, \require{Trans}, \require{Scale}, \require{NSubHom}, and \require{NSubDiv}. $\implies$ $\ohs{X}$ has \prove{SubMulti}.
\end{theorem}

\begin{proof}
$\ohs{X}$ has \property{Orderness} by \proveby{OrdernessIsImplied}, and \property{SubHom} by \proveby{RSubhomogenuityIsImplied}. Let $f, g \in \rc{X}$, and $\hat{f} \in \ohx{f}$. By \require{SubHom},
\begin{eqs}
\hat{f} g & \in \ohx{f} g \\
{} & \subset \ohx{fg}.
\end{eqs}
By \require{Orderness},
\begin{eqs}
\ohx{\hat{f} g} \subset \ohx{fg}. 
\end{eqs}
By \require{SubHom},
\begin{eqs}
\hat{f} \ohx{g} & \subset \ohx{\hat{f}g} \\
{} & \subset \ohx{fg}.
\end{eqs}
Therefore $\ohx{f} \cdot \ohx{g} \subset \ohx{fg}$. 
\sprove{SubMulti}
\end{proof}

\begin{theorem}[\uproperty{SuperMulti} is implied]
\label{SuperMultiplicativityIsImplied}
$\ohsy$ has \require{Order}, \require{Trans}, \require{One}, \require{Scale}, \require{NSubHom}, \require{NSubDiv}, \require{Local}, and \require{SubComp}. $\implies$ $\ohsy$ has \prove{SuperMulti}.
\end{theorem}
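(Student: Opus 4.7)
The plan is to derive \property{SuperMulti} directly from \fullref{SuperHomogenuityIsImplied}, which has already been established under exactly the hypotheses in play, combined with the trivial observation that \property{Reflex} holds (by \fullref{ReflexivityIsImplied}).

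More precisely, I would argue as follows. Fix $f, g \in \rc{X}$ and let $\hat h \in \ohx{fg}$. By \property{SuperHom} applied with $u = g$, we have $\ohx{gf} \subset g \ohx{f}$, so there exists $\hat f \in \ohx{f}$ with $\hat h = g \hat f$. Now \property{Reflex} gives $g \in \ohx{g}$, so setting $\hat g := g$ we get $\hat h = \hat f \hat g \in \ohx{f}\ohx{g}$. Hence $\ohx{fg} \subset \ohx{f}\ohx{g}$, which is exactly \property{SuperMulti}.

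There is essentially no obstacle here: the real work has already been done in proving \thref{SuperHomogenuityIsImplied}, which is where the delicate case-split on $g^{-1}(\posi{\TR})$ and its complement took place (and where \property{Local}, \property{SubRestrict}, \property{SubHom}, and \property{TrivialZero} were all invoked). Once that theorem is available, \property{SuperMulti} is a one-line corollary: \property{SuperHom} already supplies a decomposition $\hat h = g \hat f$ with $\hat f \in \ohx{f}$, and pairing it with the tautological choice $\hat g = g$ (using \property{Reflex}) yields the required factorization. Consequently the proof will be very short, and the list of required primitive properties matches that of \thref{SuperHomogenuityIsImplied}, consistent with the entry in \taref{PrimitivePropertiesNeededToImplyOtherProperties}.
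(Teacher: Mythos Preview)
Your proof is correct. The route you take differs from the paper's in presentation rather than in mathematical content: you invoke \thref{SuperHomogenuityIsImplied} as a black box and then observe that $\hat h = g\hat f$ with $\hat f \in \ohx{f}$ and $g \in \ohx{g}$ (by \property{Reflex}) gives the factorization, whereas the paper's proof of \thref{SuperMultiplicativityIsImplied} essentially \emph{inlines} the argument of \thref{SuperHomogenuityIsImplied} --- it repeats the same case-split on $G = g^{-1}(\posi{\TR})$, the same construction of $\hat f$, and the same appeals to \property{SubRestrict}, \property{SubHom}, \property{Order}, \property{Local}, and \property{TrivialZero}. Your version is more modular and avoids this duplication; the paper's version is self-contained at the cost of redundancy. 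Since the hypotheses of the two theorems are identical, nothing is lost by your reduction.
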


\begin{proof}
\srequire{SubComp} \sprove{ISubComp}
$\ohs{X}$ has \property{SubRestrict} by \proveby{SubRestrictabilityIsImplied}, \property{SubHom} by \proveby{RSubhomogenuityIsImplied}, and \property{TrivialZero} by \proveby{ZeroTrivialityIsImplied}. Let $f, g \in \rc{X}$, and $\hat{h} \in \ohx{fg}$. Let $G = \preimage{g}{\posi{\TR}}$, and $\compl{G} = X \setminus G$. Let $\hat{f} \in \rc{X}$ be such that $\restrb{\hat{f}}{G} = \frac{\restrb{\hat{h}}{G}}{\restrb{g}{G}}$ and $\restrb{\hat{f}}{\compl{G}} = 0$. It holds that $\restrb{\hat{h}}{G} \in \restr{\ohx{fg}}{G}$. By \require{SubRestrict}, $\restrb{\hat{h}}{G} \in \oh{G}{\restr{fg}{G}}$. Similarly, $\restrb{\hat{h}}{\compl{G}} \in \oh{\compl{G}}{\restr{fg}{\compl{G}}} = \oh{\compl{G}}{0}$. By \require{SubHom}
\begin{eqs}
\restrb{\hat{f}}{G} & \in \frac{\oh{G}{\restr{fg}{G}}}{\restr{g}{G}} \\
{} & \subset \oh{G}{\frac{\restr{fg}{G}}{\restr{g}{G}}} \\
{} & = \oh{G}{\restr{f}{G}}.
\end{eqs}
By \require{Order}, $\restrb{\hat{f}}{\compl{G}} \in \oh{\compl{G}}{\restr{f}{\compl{G}}}$. By \require{Local}, $\hat{f} \in \ohx{f}$. By \require{Order}, $g \in \ohx{g}$. By definition, $\restrb{\hat{f}}{G} \restrb{g}{G} = \restrb{\hat{h}}{G}$. By \require{TrivialZero}, $\restrb{\hat{h}}{\compl{G}} = 0$, and so $\restrb{\hat{h}}{\compl{G}} = \restrb{\hat{f}}{\compl{G}} \restrb{g}{\compl{G}}$. Therefore $\hat{h} = \hat{f} g$; $\ohsy$ has \prove{SuperMulti}.
\end{proof}

\begin{theorem}[\uproperty{Local} is implied]
\label{LocalityIsImplied}
$\ohs{X}$ has \require{Order}, \require{Trans}, \require{One}, \require{Scale}, \require{NSubHom}, \require{NSubDiv}, and \require{SubComp}. $\implies$ $\ohs{X}$ has \prove{Local}.
\end{theorem}

\begin{proof}
\proofpart{Implied properties}
\srequire{SubComp} \sprove{ISubComp} $\ohs{X}$ has \property{TrivialZero} by \proveby{ZeroTrivialityIsImplied}, and \property{SubMulti} by \proveby{SubMultiplicativityIsImplied}.

\proofpart{Reduction from a finite cover to partition}
Let $f, g \in \rc{X}$. We can refine a finite cover $C \subset \power{X}$ of $X$ to a finite partition $\setb{A_1, \dots, A_m} \subset \power{X}$ of $X$. In the definition of \property{Local}, we assume that
\begin{eqs}
\restrb{f}{D} \in \oh{D}{\restr{g}{D}},
\end{eqs}
for all $D \in C$. By \require{ISubComp},
\begin{eqs}
\restrb{f}{A_i} \in \oh{A_i}{\restr{g}{A_i}},
\end{eqs}
for all $i \in \iccn{1}{m}$. Therefore, we may assume the finite cover of $X$ to be a finite partition of $X$.

\proofpart{Notation}
Let $\setb{A_1, \dots, A_m} \subset \power{X}$ be a finite partition of $X$. Suppose that
\begin{eqs}
\restrb{f}{A_i} \in \oh{A_i}{\restr{g}{A_i}},
\end{eqs}
for all $i \in \iccn{1}{m}$. Let
\begin{eqs}
I = \setb{i \in \iccn{1}{m} : \image{f}{A_i} \cap \posi{\TR} \neq \emptyset}.
\end{eqs}
Let $i \in I$. Let $p_i \in X$ be such that $p_i \in A_i$, and
\begin{eqs}
f(p_i) > 0,
\end{eqs}
Let $c_i, d_i \in \nonn{\TR}$ be such that
\begin{eqs}
c_i & = f(p_i) \\
d_i & = g(p_i).
\end{eqs}
Let $\function{s_i}{X}{X}$ be such that
\begin{eqs}
s_i(x) = 
\begin{cases}
x, & x \in A_i, \\
p_i, & \text{otherwise}.
\end{cases}
\end{eqs}

\proofpart{Positivity}
By \require{ISubComp},
\begin{eqs}
\restrb{f}{\setb{p_i}} \in \oh{\setb{p_i}}{\restr{g}{\setb{p_i}}}.
\end{eqs}
By \require{TrivialZero}, and since $c_i > 0$, it holds that $d_i > 0$. 

\proofpart{Lift}
By \require{SubComp},
\begin{eqs}
\restrb{f}{A_i} \circ s_i \in \ohx{\restrb{g}{A_i} \circ s_i}.
\end{eqs}
This is equivalent to
\begin{eqs}
\bra{f \indi{A_i}{X} + c_i \indi{X \setminus A_i}{X}} \in \ohx{g \indi{A_i}{X} + d_i \indi{X \setminus A_i}{X}}.
\end{eqs}
By \require{Order}, and \require{Scale},
\begin{eqs}
\bra{\indi{A_i}{X} + \frac{1}{c_i} \indi{X \setminus A_i}{X}} \in \ohx{\indi{A_i}{X} + \frac{1}{d_i} \indi{X \setminus A_i}{X}}.
\end{eqs}
By \require{SubMulti},
\begin{eqs}
\bra{f \indi{A_i}{X} + \indi{X \setminus A_i}{X}} \in \ohx{g \indi{A_i}{X} + \indi{X \setminus A_i}{X}}.
\end{eqs}

\proofpart{Sum by multiplying}
We have that
\begin{eqs}
\prod_{i \in I} \bra{f \indi{A_i}{X} + \indi{X \setminus A_i}{X}} & = \sum_{i \in I} f \indi{A_i}{X} = f, \\
\prod_{i \in I} \bra{g \indi{A_i}{X} + \indi{X \setminus A_i}{X}} & = \sum_{i \in I} g \indi{A_i}{X} \leq g.
\end{eqs}
By \require{SubMulti},
\begin{eqs}
f \in \ohx{\sum_{i \in I} g \indi{A_i}{X}}.
\end{eqs}
By \require{Order}, and \require{Trans},
\begin{eqs}
f \in \ohx{g}.
\end{eqs}
\sprove{Local}
\end{proof}

\begin{theorem}[\uproperty{AddCons} is implied]
\label{AdditiveConsistencyIsImplied}
$\ohs{X}$ has \require{Order}, \require{Trans}, \require{Local}, and \require{ISubComp}. $\implies$ $\ohs{X}$ has \prove{AddCons}.
\end{theorem}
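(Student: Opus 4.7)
The plan is to prove the two inclusions separately. The $\supset$ direction is immediate: any element of $(u+v)\ohx{f}$ equals $(u+v)\hat{f} = u\hat{f} + v\hat{f}$ for some $\hat{f} \in \ohx{f}$, and so lies in $u\ohx{f} + v\ohx{f}$.

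For the $\subset$ direction, given $\hat{f}_1, \hat{f}_2 \in \ohx{f}$, I need to produce $\hat{f} \in \ohx{f}$ satisfying $(u+v)\hat{f} = u\hat{f}_1 + v\hat{f}_2$. The natural choice is the pointwise weighted average: set $\hat{f}(x) = \frac{u(x)\hat{f}_1(x) + v(x)\hat{f}_2(x)}{u(x) + v(x)}$ whenever $(u+v)(x) > 0$, and $\hat{f}(x) = 0$ otherwise. Since $u, v \in \rc{X}$ are non-negative, a zero of $u+v$ at $x$ forces $u(x) = v(x) = 0$, so $u\hat{f}_1 + v\hat{f}_2$ also vanishes at $x$; hence the target identity $(u+v)\hat{f} = u\hat{f}_1 + v\hat{f}_2$ holds at every point of $X$.

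To verify $\hat{f} \in \ohx{f}$ I apply \require{Local} to the finite cover $A_1 = \setbra{x \in X : \hat{f}(x) \leq \hat{f}_1(x)}$ and $A_2 = \setbra{x \in X : \hat{f}(x) \leq \hat{f}_2(x)}$. Where $u+v$ vanishes, both inequalities are trivial; elsewhere $\hat{f}(x)$ is a convex combination of $\hat{f}_1(x)$ and $\hat{f}_2(x)$, hence at most $\max(\hat{f}_1(x), \hat{f}_2(x))$, placing $x$ in at least one $A_i$. On $A_i$ we have $\restrb{\hat{f}}{A_i} \leq \restrb{\hat{f}_i}{A_i}$, which by \require{Order} yields $\restrb{\hat{f}}{A_i} \in \oh{A_i}{\restr{\hat{f}_i}{A_i}}$. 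Since $\hat{f}_i \in \ohx{f}$, \proveby{SubRestrictabilityIsImplied} (invoking only \require{ISubComp}) gives $\restrb{\hat{f}_i}{A_i} \in \oh{A_i}{\restr{f}{A_i}}$, and \require{Trans} combines the two to $\restrb{\hat{f}}{A_i} \in \oh{A_i}{\restr{f}{A_i}}$. Then \require{Local} delivers $\hat{f} \in \ohx{f}$.

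The only real subtlety is the treatment of the vanishing set of $u+v$ in the construction of $\hat{f}$; once the convex-combination witness is defined consistently there, the rest is a routine chain of order-consistency, sub-restrictability, transitivity, and locality.
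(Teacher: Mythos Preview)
Your proof is correct and follows essentially the same approach as the paper: construct the pointwise convex combination (zero where $u+v$ vanishes), then use \require{Order}, \property{SubRestrict} (via \require{ISubComp}), \require{Trans}, and \require{Local} on a two-piece cover to conclude $\hat{f}\in\ohx{f}$. The only cosmetic difference is that the paper partitions $X$ by comparing $\hat{f}_1$ and $\hat{f}_2$ directly (the set where $\hat{f}_2\le\hat{f}_1$ and its complement), whereas you cover $X$ by the sets where $\hat{f}\le\hat{f}_1$ and where $\hat{f}\le\hat{f}_2$; both choices work for the same reason.
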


\begin{proof}
$\ohs{X}$ has \property{SubRestrict} by \proveby{SubRestrictabilityIsImplied}. Let $f, u, v \in \rc{X}$, and $U \coloneqq \setb{x \in X: u(x) + v(x) > 0}$.

\proofpart{$\subset$} 
Let $\hat{f}, \hat{g} \in \ohx{f}$, and $F \coloneqq \setb{x \in X: \hat{g}(x) \leq \hat{f}(x)}$. Let $\hat{h} \in \rc{X}$ be such that
\begin{equation}
\hat{h}(x) =
\begin{cases}
\frac{u(x) \hat{f}(x) + v(x) \hat{g}(x)}{u(x) + v(x)}, & x \in U \\
0, & x \not\in U.
\end{cases}
\end{equation}
Then
\begin{eqs}
\hat{h}(x) & = \frac{u(x) \hat{f}(x) + v(x) \hat{g}(x)}{u(x) + v(x)} \\
{} & \leq \frac{u(x) \hat{f}(x) + v(x) \hat{f}(x)}{u(x) + v(x)} \\
{} & = \hat{f}(x),
\end{eqs}
for all $x \in F \cap U$. Also, $\hat{h}(x) = 0 \leq \hat{f}(x)$, for all $x \in F \cap \compl{U}$. By \require{Order}, $\restrb{\hat{h}}{F} \in \oh{F}{\restr{\hat{f}}{F}}$. By \require{SubRestrict} and \require{Trans}, $\restrb{\hat{h}}{F} \in \oh{F}{\restr{f}{F}}$. Similarly, $\restrb{\hat{h}}{\compl{F}} \in \oh{\compl{F}}{\restr{f}{\compl{F}}}$. By \require{Local}, $\hat{h} \in \ohx{f}$. In addition, $u \hat{f} + v \hat{g} = (u + v)\hat{h}$. Therefore $u \hat{f} + v \hat{g} \in (u + v)\ohx{f}$.

\proofpart{$\supset$}
Let $\hat{f} \in \ohx{f}$. Then $(u + v) \hat{f} = u \hat{f} + v \hat{f} \in u \ohx{f} + v \ohx{f}$.

\sprove{AddCons}
\end{proof}

\begin{theorem}[\uproperty{MaxCons} is implied]
\label{MaximumConsistencyIsImplied}
$\ohs{X}$ has \require{Order}, \require{Trans}, \require{Local}, and \require{ISubComp}. $\implies$ $\ohs{X}$ has \prove{MaxCons}.
\end{theorem}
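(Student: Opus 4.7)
The plan is to prove the two inclusions $\max(\ohx{f}, \ohx{f}) \subset \ohx{f}$ and $\ohx{f} \subset \max(\ohx{f}, \ohx{f})$ separately. The reverse inclusion is immediate: for any $\hat{f} \in \ohx{f}$ we have $\hat{f} = \max(\hat{f}, \hat{f})$, which witnesses $\hat{f} \in \max(\ohx{f}, \ohx{f})$.

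The forward inclusion is the substantive part, and my plan is to mimic the locality argument used in \proveby{AdditiveConsistencyIsImplied}. First I would promote the assumed \require{ISubComp} to \property{SubRestrict} via \proveby{SubRestrictabilityIsImplied}. Then, given $\hat{f}, \hat{g} \in \ohx{f}$, I would split $X$ using the cover $\setbra{F, \compl{F}}$, where $F := \setbra{x \in X : \hat{g}(x) \leq \hat{f}(x)}$ and $\compl{F} := X \setminus F$. On $F$, the function $\max(\hat{f}, \hat{g})$ coincides with $\hat{f}$, and on $\compl{F}$ it coincides with $\hat{g}$. Applying \property{SubRestrict} to $\hat{f}$ and $\hat{g}$ yields $\restrb{\max(\hat{f}, \hat{g})}{F} \in \oh{F}{\restr{f}{F}}$ and $\restrb{\max(\hat{f}, \hat{g})}{\compl{F}} \in \oh{\compl{F}}{\restr{f}{\compl{F}}}$. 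Finally, since $\setbra{F, \compl{F}}$ is a finite cover of $X$, \require{Local} reassembles these local bounds into $\max(\hat{f}, \hat{g}) \in \ohx{f}$.

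No step poses a real obstacle; the argument is structurally parallel to the $\subset$-half of \proveby{AdditiveConsistencyIsImplied}, only simpler, because we no longer need to introduce a weighted average $\hat{h}$ --- the pointwise maximum already falls into the appropriate branch on each piece of the cover. The only mildly delicate point is notational: interpreting $\max(\ohx{f}, \ohx{f})$ as $\setbra{\max(u, v) : (u, v) \in \ohx{f} \times \ohx{f}}$ per the paper's pointwise lifting convention, so that the case-split on $F$ vs.\ $\compl{F}$ genuinely exhausts $X$ by trichotomy on $\nonn{\TR}$.
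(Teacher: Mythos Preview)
Your proposal is correct and follows essentially the same approach as the paper: both split $X$ by the cover $\setbra{F, \compl{F}}$ with $F = \setbra{x : \hat{g}(x) \leq \hat{f}(x)}$, use \property{SubRestrict} (obtained from \require{ISubComp} via \proveby{SubRestrictabilityIsImplied}) on each piece, and reassemble with \require{Local}. The only cosmetic difference is that the paper routes the step $\restrb{\max(\hat{f},\hat{g})}{F} \in \oh{F}{\restr{f}{F}}$ through \require{Order} and \require{Trans} (first $\restrb{\hat{h}}{F} \in \oh{F}{\restr{\hat{f}}{F}}$, then transitivity with $\restrb{\hat{f}}{F} \in \oh{F}{\restr{f}{F}}$), whereas you use the literal equality $\restrb{\max(\hat{f},\hat{g})}{F} = \restrb{\hat{f}}{F}$ directly --- your phrasing is slightly more economical but logically identical.
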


\begin{proof}
$\ohs{X}$ has \property{SubRestrict} by \proveby{SubRestrictabilityIsImplied}. 

\proofpart{$\subset$} 
Let $\hat{f}, \hat{g} \in \ohx{f}$, and $F \coloneqq \setb{x \in X: \hat{g}(x) \leq \hat{f}(x)}$. Let $\hat{h} \in \rc{X}$ be such that
\begin{equation}
\hat{h} \coloneqq \max(\hat{f}, \hat{g}).
\end{equation}
Then
\begin{eqs}
\restrb{\hat{h}}{F} & = \restrb{\max(\hat{f}, \hat{g})}{F} \\
{} & = \restrb{\hat{f}}{F}.
\end{eqs}
By \require{Order}, $\restrb{\hat{h}}{F} \in \oh{F}{\restr{\hat{f}}{F}}$. By \require{SubRestrict} and \require{Trans}, $\restrb{\hat{h}}{F} \in \oh{F}{\restr{f}{F}}$. Similarly, $\restrb{\hat{h}}{\compl{F}} \in \oh{\compl{F}}{\restr{f}{\compl{F}}}$. By \require{Local}, $\hat{h} \in \ohx{f}$. In addition, $\max(\hat{f}, \hat{g}) = \hat{h}$. Therefore $\max(\hat{f}, \hat{g}) \in \ohx{f}$.

\proofpart{$\supset$}
Let $\hat{f} \in \ohx{f}$. Then $\hat{f} = \max(\hat{f}, \hat{f}) \in \max(\ohx{f}, \ohx{f})$.

\sprove{MaxCons}
\end{proof}

\begin{theorem}[\uproperty{MultiCons} is implied]
\label{MultiplicativeConsistencyIsImplied}
$\ohs{X}$ has \require{Order}, \require{Trans}, \require{Local}, \require{ISubComp}. $\implies$ $\ohs{X}$ has \prove{MultiCons}.
\end{theorem}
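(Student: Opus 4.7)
The plan is to mimic the template used in \proveby{AdditiveConsistencyIsImplied} and \proveby{MaximumConsistencyIsImplied}: first invoke \proveby{SubRestrictabilityIsImplied} to obtain \property{SubRestrict}; then handle the two inclusions separately; and for the harder inclusion, split the domain via \require{Local}, exhibit a pointwise witness on each piece, and reassemble. The $\supset$ direction is immediate: for $\hat{h} \in \ohx{f}$, we have $\hat{h}^{u+v} = \hat{h}^u \hat{h}^v \in \ohx{f}^u \ohx{f}^v$.

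For $\subset$, given $\hat{f}, \hat{g} \in \ohx{f}$, the task is to produce $\hat{h} \in \ohx{f}$ satisfying $\hat{h}^{u+v} = \hat{f}^u \hat{g}^v$. The natural candidate is the weighted geometric mean of $\hat{f}$ and $\hat{g}$ with weights $u$ and $v$: define
\begin{equation*}
\hat{h}(x) :=
\begin{cases}
\bra{\hat{f}(x)^{u(x)} \hat{g}(x)^{v(x)}}^{1 / (u(x) + v(x))}, & x \in U, \\
0, & x \notin U,
\end{cases}
\end{equation*}
where $U := \setbra{x \in X : u(x) + v(x) > 0}$. Introduce also the splitting set $F := \setbra{x \in X : \hat{g}(x) \leq \hat{f}(x)}$, exactly as in \proveby{AdditiveConsistencyIsImplied} and \proveby{MaximumConsistencyIsImplied}.

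On $F \cap U$, the weighted AM--GM inequality (equivalently, the fact that the weighted geometric mean never exceeds the maximum of its arguments) yields $\hat{h} \leq \hat{f}$; on $\compl{F} \cap U$, symmetrically $\hat{h} \leq \hat{g}$; on $\compl{U}$, $\hat{h} = 0 \leq f$. By \require{Order}, each restriction of $\hat{h}$ lies in the $\ohsy$-set of the corresponding restriction of $\hat{f}$, $\hat{g}$, or $f$; \require{SubRestrict} and \require{Trans} upgrade the first two to the $\ohsy$-set of the restriction of $f$. Applying \require{Local} to the finite cover $\setbra{F \cap U, \compl{F} \cap U, \compl{U}}$ gives $\hat{h} \in \ohx{f}$. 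It remains to verify the pointwise identity $\hat{h}^{u+v} = \hat{f}^u \hat{g}^v$, which holds on $U$ by the definition of $\hat{h}$ and on $\compl{U}$ trivially since both sides equal $1$ under the convention $0^0 = 1$.

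The main obstacle is purely the bookkeeping around zeros: when $\hat{f}(x) = 0$, $\hat{g}(x) = 0$, or $u(x) + v(x) = 0$, the weighted geometric mean must be interpreted consistently and the target identity must be checked case by case. This is a routine pointwise calculation and introduces no new machinery beyond the four assumed primitive properties; the structural skeleton is identical to the earlier consistency proofs, with the weighted arithmetic mean of \proveby{AdditiveConsistencyIsImplied} replaced by the weighted geometric mean.
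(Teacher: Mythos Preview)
Your proposal is correct and essentially identical to the paper's proof: the same weighted geometric mean $\hat{h}$, the same sets $U$ and $F$, the same use of \property{SubRestrict}, \property{Order}, \property{Trans}, and \property{Local}, and the same $0^0 = 1$ convention for the pointwise identity. The only cosmetic difference is that the paper applies \property{Local} to the two-piece cover $\setbra{F, \compl{F}}$ (handling $U$ versus $\compl{U}$ inside each piece when verifying the pointwise bound), whereas you use the three-piece cover $\setbra{F \cap U, \compl{F} \cap U, \compl{U}}$; both are equally valid.
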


\begin{proof}
$\ohs{X}$ has \property{SubRestrict} by \proveby{SubRestrictabilityIsImplied}. Let $f, u, v \in \rc{X}$, and $U \coloneqq \setb{x \in X: u(x) + v(x) > 0}$.

\proofpart{$\subset$} 
Let $\hat{f}, \hat{g} \in \ohx{f}$, and $F \coloneqq \setb{x \in X: \hat{g}(x) \leq \hat{f}(x)}$. Let $\hat{h} \in \rc{X}$ be such that
\begin{equation}
\hat{h}(x) =
\begin{cases}
\bra{\hat{f}(x)^{u(x)} \hat{g}(x)^{v(x)}}^{1 / (u(x) + v(x))}, & x \in U \\
0, & x \not\in U.
\end{cases}
\end{equation}
Then
\begin{eqs}
\hat{h}(x) & = \bra{\hat{f}(x)^{u(x)} \hat{g}(x)^{v(x)}}^{1 / (u(x) + v(x))} \\
{} & \leq \bra{\hat{f}(x)^{u(x)} \hat{f}(x)^{v(x)}}^{1 / (u(x) + v(x))} \\
{} & = \hat{f}(x).
\end{eqs}
for all $x \in F \cap U$. Also, $\hat{h}(x) = 0 \leq \hat{f}(x)$, for all $x \in F \cap \compl{U}$. By \require{Order}, $\restrb{\hat{h}}{F} \in \oh{F}{\restr{\hat{f}}{F}}$. By \require{SubRestrict} and \require{Trans}, $\restrb{\hat{h}}{F} \in \oh{F}{\restr{f}{F}}$. Similarly, $\restrb{\hat{h}}{\compl{F}} \in \oh{\compl{F}}{\restr{f}{\compl{F}}}$. By \require{Local}, $\hat{h} \in \ohx{f}$. In addition, $\hat{f}^u \hat{g}^v = \hat{h}^{u + v}$ (we assume $0^0 = 1$). Therefore $\hat{f}^u \hat{g}^v \in \ohx{f}^{u + v}$.

\proofpart{$\supset$} 
Let $\hat{f} \in \ohx{f}$. Then $\hat{f}^{u + v} = \hat{f}^u \hat{f}^v \in \ohx{f}^u \ohx{f}^v$.

\sprove{MultiCons}
\end{proof}

\begin{theorem}[\uproperty{Maximum} is implied]
\label{MaximumIsImplied}
$\ohs{X}$ has \require{Order}, \require{Trans}, \require{Local}, and \require{ISubComp}. $\implies$ $\ohs{X}$ has \prove{Maximum}.
\end{theorem}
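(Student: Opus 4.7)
The plan is to reduce \property{Maximum} to \property{MaxCons} on each piece of a partition of $X$ and then glue via \property{Local}. The two auxiliary tools I will draw on are \property{SubRestrict} (available by \proveby{SubRestrictabilityIsImplied}, since \require{ISubComp} holds) and \property{MaxCons} (available by \proveby{MaximumConsistencyIsImplied}, whose assumptions match our hypotheses). I will also need \property{Member}, which comes from \proveby{MembershipIsImplied}. The partition will be $F := \setbra{x \in X : f(x) \geq g(x)}$ and $\compl{F} := X \setminus F$, chosen so that $\restrb{\max(f,g)}{F} = \restrb{f}{F}$ and $\restrb{\max(f,g)}{\compl{F}} = \restrb{g}{\compl{F}}$.

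For the $\subset$ inclusion, take $\hat{f} \in \ohx{f}$ and $\hat{g} \in \ohx{g}$. On $F$, \require{SubRestrict} gives $\restrb{\hat{f}}{F} \in \oh{F}{\restr{f}{F}}$ directly, while $\restrb{\hat{g}}{F} \in \oh{F}{\restr{g}{F}} \subset \oh{F}{\restr{f}{F}}$ by \require{Order} (using $g \leq f$ on $F$) and \require{Trans}. Then \require{MaxCons} in $F$ gives $\restrb{\max(\hat{f},\hat{g})}{F} = \max(\restrb{\hat{f}}{F}, \restrb{\hat{g}}{F}) \in \oh{F}{\restr{f}{F}} = \oh{F}{\restrb{\max(f,g)}{F}}$. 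The symmetric argument on $\compl{F}$ yields $\restrb{\max(\hat{f},\hat{g})}{\compl{F}} \in \oh{\compl{F}}{\restrb{\max(f,g)}{\compl{F}}}$. Applying \require{Local} to the cover $\setbra{F, \compl{F}}$ concludes $\max(\hat{f},\hat{g}) \in \ohx{\max(f,g)}$.

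For the $\supset$ inclusion, I will explicitly realize an arbitrary $\hat{h} \in \ohx{\max(f,g)}$ as a pointwise maximum. Define
\begin{equation}
\hat{f}(x) := \begin{cases} \hat{h}(x), & x \in F, \\ 0, & x \not\in F, \end{cases} \qquad \hat{g}(x) := \begin{cases} 0, & x \in F, \\ \hat{h}(x), & x \not\in F. \end{cases}
\end{equation}
By construction $\max(\hat{f},\hat{g}) = \hat{h}$. To see $\hat{f} \in \ohx{f}$: on $F$, \require{SubRestrict} gives $\restrb{\hat{f}}{F} = \restrb{\hat{h}}{F} \in \oh{F}{\restrb{\max(f,g)}{F}} = \oh{F}{\restr{f}{F}}$; on $\compl{F}$, $\restrb{\hat{f}}{\compl{F}} = 0 \leq \restrb{f}{\compl{F}}$ so \require{Order} applies; and \require{Local} finishes. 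The argument for $\hat{g} \in \ohx{g}$ is symmetric.

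I do not expect genuine obstacles here: the split-and-glue pattern is exactly what \require{Local} is designed for, and the only nontrivial ingredient is the observation that on each piece of the partition $\max(f,g)$ collapses to one of $f, g$, which reduces the statement to \property{MaxCons} combined with dominating the smaller function by the larger. The mild subtlety is choosing the partition correctly so that both $\restrb{f}{F}$ and $\restrb{g}{\compl{F}}$ appear as the relevant bounds in the $\supset$ construction.
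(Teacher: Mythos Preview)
Your proof is correct. The $\supset$ direction is essentially identical to the paper's (same construction $\hat{f} = \hat{h}\,\idf{F}$, $\hat{g} = \hat{h}\,\idf{\compl{F}}$, same use of \require{SubRestrict}, \require{Order}, and \require{Local}).

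For the $\subset$ direction you take a slightly different route. The paper argues globally: from $f,g \leq \max(f,g)$ and \require{Order} one gets $f,g \in \ohx{\max(f,g)}$, then \require{Member} gives $\ohx{f},\ohx{g} \subset \ohx{\max(f,g)}$, and a single application of \require{MaxCons} at the level of $X$ finishes. You instead restrict to the partition $\{F,\compl{F}\}$, push both $\restrb{\hat{f}}{F}$ and $\restrb{\hat{g}}{F}$ into $\oh{F}{\restr{f}{F}}$ via \require{SubRestrict}, \require{Order}, and \require{Trans}, apply \require{MaxCons} locally, and glue with \require{Local}. Both arguments are valid under the stated hypotheses; the paper's is a bit shorter since it avoids the split-and-glue for this inclusion, while yours has the virtue of being completely parallel in structure to the $\supset$ half.
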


\begin{proof}
$\ohs{X}$ has \property{Orderness} by \proveby{OrdernessIsImplied}, \property{SubRestrict} by \proveby{SubRestrictabilityIsImplied}, and \property{MaxCons} by \proveby{MaximumConsistencyIsImplied}.

\proofpart{$\subset$} 
Let $f, g \in \rc{X}$. By \require{Order}, $f, g \in \ohx{\max(f, g)}$. By \require{Orderness}, $\ohx{f} \subset \ohx{\max(f, g)}$ and $\ohx{g} \subset \ohx{\max(f, g)}$.  By \require{MaxCons},
\begin{eqs}
\max(\ohx{f}, \ohx{g}) & \subset \max(\ohx{\max(f, g)}, \ohx{\max(f, g)}) \\
{} & = \ohx{\max(f, g)}.
\end{eqs}

\proofpart{$\supset$}
Let $\hat{h} \in \ohx{\max(f, g)}$, and $F \coloneqq \setb{x \in X : g(x) \leq f(x)}$. Let $\hat{f} \in \rc{X}$ be such that
\begin{equation}
\hat{f} \coloneqq \hat{h} \idf{F}.
\end{equation}
By \require{Order}, $\hat{f} \in \ohx{\hat{h}}$. By \require{Trans}, $\hat{f} \in \ohx{\max(f, g)}$. By \require{SubRestrict}
\begin{eqs}
\restrb{\hat{f}}{F} & \in \restr{\ohx{\max(f, g)}}{F} \\
{} & \subset \oh{F}{\restr{f}{F}}.
\end{eqs}
Also, $\restrb{\hat{f}}{\compl{F}} = 0 \leq \restrb{f}{\compl{F}}$. By \require{Order}, $\restrb{\hat{f}}{\compl{F}} \in \ohx{\restr{f}{\compl{F}}}$. By \require{Local}, $\hat{f} \in \ohx{f}$. Similarly, let $\hat{g} \in \rc{X}$ be such that $\hat{g} \coloneqq \hat{h} \idf{\compl{F}}$. Then $\hat{g} \in \ohx{g}$. In addition, $\hat{h} = \max(\hat{f}, \hat{g})$. Therefore $\hat{h} \in \max(\ohx{f}, \ohx{g})$.

\sprove{Maximum}
\end{proof}

\begin{theorem}[\uproperty{Summation} is implied]
\label{SummationIsImplied}
$\ohs{X}$ has \require{Order}, \require{Trans}, and \require{Scale} $\implies$ $\ohs{X}$ has \prove{Summation}.
\end{theorem}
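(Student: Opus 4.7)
The plan is to use the double inequality $\max(f, g) \leq f + g \leq 2 \max(f, g)$ pointwise, and then convert the $\leq$'s into $\ohsy$-memberships via \require{Order}, absorb the factor $2$ via \require{Scale}, and chain through \require{Trans}.

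First I would establish the two key memberships. The inequality $\max(f, g) \leq f + g$ holds pointwise on $X$ (both $f, g$ are non-negative), so by \require{Order}, $\max(f, g) \in \ohx{f + g}$. Symmetrically, $f + g \leq 2 \max(f, g)$ holds pointwise, so by \require{Order}, $f + g \in \ohx{2 \max(f, g)}$, and then by \require{Scale}, $f + g \in \ohx{\max(f, g)}$.

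Next I would derive the set equality from these two memberships by transitivity. For $\supset$: let $\hat{h} \in \ohx{\max(f, g)}$; since $\max(f, g) \in \ohx{f + g}$, \require{Trans} yields $\hat{h} \in \ohx{f + g}$. For $\subset$: let $\hat{h} \in \ohx{f + g}$; since $f + g \in \ohx{\max(f, g)}$, \require{Trans} yields $\hat{h} \in \ohx{\max(f, g)}$. This gives $\oh{X}{f + g} = \oh{X}{\max(f, g)}$, which is \prove{Summation}.

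There is no real obstacle here: the argument is a two-line sandwich. The only mild subtlety is remembering that the factor $2$ on the right must be absorbed by \require{Scale} before \require{Trans} is applied; otherwise one would be stuck with $\ohx{2\max(f, g)}$ rather than $\ohx{\max(f, g)}$. No other implied properties (such as \property{Member}) are needed, since \require{Trans} is already sufficient to push memberships along the two pointwise comparisons.
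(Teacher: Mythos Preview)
Your proof is correct and essentially identical to the paper's: both use the pointwise sandwich $\max(f,g) \leq f+g \leq 2\max(f,g)$, apply \property{Order} to get the two key memberships, absorb the factor $2$ via \property{Scale}, and close with \property{Trans}. The only cosmetic difference is that the paper phrases the transitivity step as a set inclusion $\ohx{f+g} \subset \ohx{2\max(f,g)}$ rather than an element-chase, but the underlying argument is the same.
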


\begin{proof}
Let $f, g \in \rc{X}$. 

\proofpart{$\subset$} 
It holds that
\begin{equation}
f + g \leq 2 \max(f, g).
\end{equation}
By \require{Order} and \require{Trans}, $\ohx{f + g} \subset \ohx{2 \max(f, g)}$. By \require{Scale}, $\ohx{f + g} \subset \ohx{\max(f, g)}$.

\proofpart{$\supset$} 
It holds that
\begin{equation}
\max(f, g) \leq f + g.
\end{equation}
By \require{Order} and \require{Trans}, $\ohx{\max(f, g)} \subset \ohx{f + g}$.

\sprove{Summation}
\end{proof}

\begin{theorem}[\uproperty{MaximumSum} is implied]
\label{MaximumSumIsImplied}
$\ohs{X}$ has \require{Order}, \require{Trans}, and \require{Scale}. $\implies$ $\ohs{X}$ has \prove{MaximumSum}.
\end{theorem}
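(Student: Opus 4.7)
The plan is to prove both inclusions of $\max(\ohx{f}, \ohx{g}) = \ohx{f} + \ohx{g}$ by pointwise decomposition, using the fact that for all $a, b \in \nonn{\TR}$ we have the sandwich $\max(a,b) \leq a + b \leq 2\max(a,b)$. Throughout, I would freely use \property{Member} (available from \require{Order} and \require{Trans} via \proveby{MembershipIsImplied}): whenever $\hat{f} \leq \phi$ and $\phi \in \ohx{f}$, we get $\hat{f} \in \ohx{f}$ by \require{Order} and \require{Trans}.

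For the $\subset$ direction, take $h \in \max(\ohx{f}, \ohx{g})$, so $h = \max(\hat{f}, \hat{g})$ for some $\hat{f} \in \ohx{f}$ and $\hat{g} \in \ohx{g}$. Let $F = \setbra{x \in X : \hat{f}(x) \geq \hat{g}(x)}$ and define
\begin{equation}
\hat{f}' := \hat{f} \idf{F}, \qquad \hat{g}' := \hat{g} \idf{X \setminus F}.
\end{equation}
Then $\hat{f}' + \hat{g}' = \max(\hat{f}, \hat{g}) = h$ by construction, and $\hat{f}' \leq \hat{f}$, $\hat{g}' \leq \hat{g}$ pointwise. By \require{Order} and \require{Trans}, $\hat{f}' \in \ohx{f}$ and $\hat{g}' \in \ohx{g}$, so $h \in \ohx{f} + \ohx{g}$. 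This direction uses only \require{Order} and \require{Trans}.

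For the $\supset$ direction, take $h \in \ohx{f} + \ohx{g}$, so $h = \hat{f} + \hat{g}$ for some $\hat{f} \in \ohx{f}$ and $\hat{g} \in \ohx{g}$. With the same set $F$ as above, define
\begin{equation}
\hat{f}'' := h \idf{F}, \qquad \hat{g}'' := h \idf{X \setminus F}.
\end{equation}
Since the supports are disjoint, $\max(\hat{f}'', \hat{g}'') = \hat{f}'' + \hat{g}'' = h$. On $F$ we have $\hat{f}'' = \hat{f} + \hat{g} \leq 2\hat{f}$, and on $X \setminus F$ we have $\hat{f}'' = 0 \leq 2\hat{f}$; thus $\hat{f}'' \leq 2\hat{f}$ globally. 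By \require{Order}, $\hat{f}'' \in \ohx{2\hat{f}}$, by \require{Scale}, $\ohx{2\hat{f}} = \ohx{\hat{f}}$, and by \require{Trans} (via \property{Member}) $\ohx{\hat{f}} \subset \ohx{f}$, so $\hat{f}'' \in \ohx{f}$. Symmetrically $\hat{g}'' \in \ohx{g}$, so $h = \max(\hat{f}'', \hat{g}'') \in \max(\ohx{f}, \ohx{g})$.

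The mildly delicate step is the $\supset$ direction, because the pointwise inequality $h \leq 2\max(\hat{f}, \hat{g})$ introduces the constant $2$ that cannot simply be discarded without invoking \require{Scale}; this is exactly why \require{Scale} appears in the hypotheses (it is absent from the $\subset$ direction). The rest is bookkeeping with indicator functions to convert between sums and maxima pointwise.
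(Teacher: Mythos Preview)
Your proof is correct and follows essentially the same approach as the paper: both directions are handled by splitting $X$ at the set $F = \setbra{x : \hat{f}(x) \geq \hat{g}(x)}$ and using indicator functions to convert between the sum and the maximum, invoking \property{Scale} exactly once to absorb the factor $2$ in the direction $\ohx{f} + \ohx{g} \subset \max(\ohx{f}, \ohx{g})$. The paper's write-up differs only cosmetically (its $\subset$/$\supset$ labels are swapped relative to yours, and it names the pieces $u,v$ rather than $\hat{f}'',\hat{g}''$).
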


\begin{proof}
\proofpart{$\subset$}
Let $\hat{f} \in \ohx{f}$ and $\hat{g} \in \ohx{g}$. Let $F = \setb{x \in X: \hat{g}(x) \leq \hat{f}(x)}$. Let $u \in \rc{X}$ be such that $u \coloneqq \hat{f} \idf{F}$. By \require{Order}, $u \in \ohx{\hat{f}}$. By \require{Trans}, $u \in \ohx{f}$. Similarly, let $v \in \rc{X}$ be such that $v \coloneqq \hat{g} \idf{\compl{F}}$. Then $v \in \ohx{g}$. In addition, $\max(\hat{f}, \hat{g}) = u + v$. Therefore $\max(\hat{f}, \hat{g}) \in \ohx{f} + \ohx{g}$.

\proofpart{$\supset$} 
Let $f, g \in \rc{X}$, $\hat{f} \in \ohx{f}$, and $\hat{g} \in \ohx{g}$. Let
\begin{eqs}
F = \setb{x \in X: \hat{g}(x) \leq \hat{f}(x)}. 
\end{eqs}
Let $u \in \rc{X}$ be such that $u \coloneqq (\hat{f} + \hat{g}) \idf{F}$. Then $u \leq (2 \hat{f}) \idf{F} \leq 2 \hat{f}$. By \require{Order} and \require{Trans}, $\ohx{u} \subset \ohx{2 \hat{f}}$. By \require{Scale}, $\ohx{u} \subset \ohx{\hat{f}}$. By \require{Order}, $u \in \ohx{\hat{f}}$. By \require{Trans} $u \in \ohx{f}$. Similarly, let $v \in \rc{X}$ be such that
\begin{equation}
v \coloneqq (\hat{f} + \hat{g}) \idf{\compl{F}}.
\end{equation}
Then $v \in \ohx{g}$. In addition, $\max(u, v) = \hat{f} + \hat{g}$. Therefore $\hat{f} + \hat{g} \in \max(\ohx{f}, \ohx{g})$.

\sprove{MaximumSum}
\end{proof}

\begin{theorem}[\uproperty{Additive} is implied]
\label{AdditivityIsImplied}
$\ohs{X}$ has \require{Order}, \require{Trans}, \require{Local}, \require{Scale}, and \require{ISubComp}. $\implies$ $\ohs{X}$ has \prove{Additive}.
\end{theorem}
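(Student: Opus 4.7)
The plan is to chain three of the implied properties that have already been established in this section. Specifically, I would combine \property{MaximumSum}, \property{Maximum}, and \property{Summation} to transport the sum into a maximum, collapse it under the $\ohsy$-notation, and convert back to a sum.

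First, by \proveby{MaximumSumIsImplied}, which needs only \require{Order}, \require{Trans}, and \require{Scale}, we get
\begin{equation}
\ohx{f} + \ohx{g} = \max(\ohx{f}, \ohx{g}).
\end{equation}
Next, by \proveby{MaximumIsImplied}, which needs \require{Order}, \require{Trans}, \require{Local}, and \require{ISubComp}, we get
\begin{equation}
\max(\ohx{f}, \ohx{g}) = \ohx{\max(f, g)}.
\end{equation}
Finally, by \proveby{SummationIsImplied}, which needs only \require{Order}, \require{Trans}, and \require{Scale}, we get
\begin{equation}
\ohx{\max(f, g)} = \ohx{f + g}.
\end{equation}
All of these required primitive properties are among the hypotheses of the theorem, so composing the three equalities yields $\ohx{f} + \ohx{g} = \ohx{f + g}$, establishing \prove{Additive}.

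There is no real obstacle: the proof is a routine citation chain, since the three intermediate lemmas have already done the genuine work (the two inclusions in \property{MaximumSum} use the inequalities $\hat{f}+\hat{g}\le 2\max(\hat{f},\hat{g})$ and $\max(\hat{f},\hat{g}) \le \hat{f}+\hat{g}$ together with \require{Scale}, while \property{Maximum} uses indicator-function splittings combined with \require{Local}). The only thing to verify when writing this up is that each invoked lemma's hypothesis list is a subset of the hypothesis list of the current theorem, which \taref{PrimitivePropertiesNeededToImplyOtherProperties} already asserts.
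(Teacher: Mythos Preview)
Your proposal is correct and matches the paper's proof essentially verbatim: the paper also chains \thref{SummationIsImplied}, \thref{MaximumIsImplied}, and \thref{MaximumSumIsImplied}, just writing the three equalities in the reverse order (starting from $\ohx{f+g}$ rather than from $\ohx{f}+\ohx{g}$). The hypothesis check you describe is exactly what the paper's automated dependency verification records in \taref{PrimitivePropertiesNeededToImplyOtherProperties}.
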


\begin{proof}
Let $f, g \in \rc{X}$. $\ohs{X}$ has \property{Summation} by \proveby{SummationIsImplied}. Therefore, 
\begin{equation}
\ohx{f + g} = \ohx{\max(f, g)}.
\end{equation}
$\ohs{X}$ has \property{Maximum} by \proveby{MaximumIsImplied}. Therefore, 
\begin{equation}
\ohx{\max(f, g)} = \max(\ohx{f}, \ohx{g}).  
\end{equation}
$\ohs{X}$ has \property{MaximumSum} by \proveby{MaximumSumIsImplied}. Therefore, 
\begin{equation}
\max(\ohx{f}, \ohx{g}) = \ohx{f} + \ohx{g}.   
\end{equation}
Therefore $\ohx{f + g} = \ohx{f} + \ohx{g}$.

\sprove{Additive}
\end{proof}

\begin{theorem}[\uproperty{Translation} is implied]
\label{TranslationInvarianceIsImplied}
$\ohs{X}$ has \require{Order}, \require{Trans}, and \require{Scale} $\implies$ $\ohs{X}$ has \prove{Translation}.
\end{theorem}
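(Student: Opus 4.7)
The plan is to sandwich $f + \alpha$ between two scalar multiples of $f$, using the positivity lower bound $f \geq \beta > 0$, and then invoke \require{Order}, \require{Trans}, and \require{Scale} to collapse the resulting chain of inclusions. Concretely, since $1 \leq f / \beta$, we have $\alpha \leq (\alpha / \beta) f$, and therefore
\begin{equation}
f \leq f + \alpha \leq \bra{1 + \alpha / \beta} f.
\end{equation}
This is the whole geometric content of the theorem; the rest is bookkeeping with the primitive properties.

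For the inclusion $\ohx{f + \alpha} \subset \ohx{f}$, I would apply \require{Order} to the right inequality above to get $f + \alpha \in \ohx{(1 + \alpha/\beta) f}$, then use \require{Scale} to rewrite this as $f + \alpha \in \ohx{f}$. Invoking \proveby{MembershipIsImplied} (which is available since $\ohs{X}$ has \require{Order} and \require{Trans}) promotes this to $\ohx{f + \alpha} \subset \ohx{f}$. For the reverse inclusion, the left inequality and \require{Order} give $f \in \ohx{f + \alpha}$, and \property{Member} again upgrades this to $\ohx{f} \subset \ohx{f + \alpha}$. Combining yields $\ohx{f + \alpha} = \ohx{f}$.

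There is no real obstacle here: the only non-routine step is recognizing that the lower bound $f \geq \beta$ is exactly what is needed to turn the additive perturbation by $\alpha$ into a multiplicative perturbation by $1 + \alpha/\beta$, which \require{Scale} can then absorb. Without the positivity assumption the argument breaks, which explains why \property{Translation} is only stated for bounded-below $f$ (and why $\ohx{1} \neq \ohx{0}$ in general).
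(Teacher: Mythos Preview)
Your proof is correct and essentially identical to the paper's: both establish the sandwich $f \leq f + \alpha \leq (1 + \alpha/\beta)f$ and then use \property{Order}, \property{Trans}, and \property{Scale} to collapse it. The only cosmetic difference is that you route through \thref{MembershipIsImplied} explicitly, whereas the paper invokes \property{Order} and \property{Trans} directly to lift membership to containment; these are the same argument.
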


\begin{proof}
Let $\alpha, \beta \in \posi{\TR}$, and $f \in \rc{X}$.

\proofpart{$\subset$}
Since $(f + \beta) / \beta \geq 1$,
\begin{eqs}
(f + \beta) + \alpha & \leq (f + \beta) + \alpha ((f + \beta) / \beta) \\
{} & = (1 + \alpha / \beta) (f + \beta).
\end{eqs}
By \require{Order} and \require{Trans}, $\ohx{f + \beta + \alpha} \subset \ohx{(1 + \alpha / \beta)(f + \beta)}$. By \require{Scale} $\ohx{f + \beta + \alpha} \subset \ohx{f + \beta}$

\proofpart{$\supset$}
By \require{Order} and \require{Trans}, $\ohx{f + \beta} \subset \ohx{f + \beta + \alpha}$.

\sprove{Translation}
\end{proof}

\begin{theorem}[\uproperty{Extend} is implied]
\label{ExtensibilityIsImplied}
$\ohs{X}$ has \require{SubComp} $\implies$ $\ohs{X}$ has \prove{Extend}.
\end{theorem}

\begin{proof}
This follows from \require{SubComp} by $X = X^*$, $Y = X^* \times Y^*$, and $s = \projections{X^*}{Y^*}$. \sprove{Extend}
\end{proof}

\begin{theorem}[\uproperty{SubsetSum} is implied]
\label{SubsetSumIsImplied}
$\ohs{X}$ has primitive properties. \srequire{LinearDominance} $\implies$ $\ohs{X}$ has \prove{SubsetSum}.
\end{theorem}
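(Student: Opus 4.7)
The plan is to reduce the statement immediately to the characterization of $\ohsy$ as linear dominance established in \thref{PrimitiveImpliesLinear}. By that theorem, the hypothesis $\hat{f} \in \oh{B}{f}$ is equivalent to the existence of a single constant $c \in \posi{\TR}$ with $\hat{f} \leq c f$ pointwise on the whole domain. Once this global, uniform multiplicative bound is available, the subset-sum inequality is a direct pointwise manipulation of finite sums using the nonnegativity of the weights $a(y, i)$.

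Concretely, the steps I would carry out are as follows. First, invoke \thref{PrimitiveImpliesLinear} to extract $c \in \posi{\TR}$ such that $\hat{f}(S(y, i), i) \leq c f(S(y, i), i)$ for every $y \in Y$ and every $i \in [0, n(y))$; the point is that the same $c$ works at every argument, which is precisely the strength of the linear dominance characterization. Second, fix $y \in Y$, multiply the pointwise bound by the nonnegative weight $a(y, i)$, and sum over $i \in [0, n(y))$ to obtain
\[
\sum_{i \in [0, n(y))} a(y, i) \hat{f}(S(y, i), i) \;\leq\; c \sum_{i \in [0, n(y))} a(y, i) f(S(y, i), i).
\]
Third, since this holds for every $y \in Y$ with the \emph{same} $c$, applying \thref{PrimitiveImpliesLinear} in the other direction yields the containment in the $\oh{Y}{\cdot}$-set that \property{SubsetSum} requires.

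There is no serious obstacle: the heavy lifting was done in \thref{PrimitiveImpliesLinear}, and the present lemma essentially reads the conclusion off. What is noteworthy is the contrast with Howell's counterexample showing that \property{SubsetSum} fails for asymptotic linear dominance $\pohsy$. That failure stems from $\pohsy$ admitting slice-dependent thresholds beyond which $\hat{f} \leq c f$ takes effect, so when the subsequent outer summation crosses parameter slices, no single threshold suffices. Linear dominance is threshold-free and global, which is exactly what makes the above step of pulling $c$ outside the sum legitimate — so the very feature that distinguishes linear dominance from its asymptotic variants is what makes \property{SubsetSum} hold.
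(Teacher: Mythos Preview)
Your proof is correct and is essentially the same as the paper's: the paper defers the claim to \thref{MorphismRule} together with \eref{SubsetSumIsAnOMapping}, but unwinding those two references yields exactly your argument --- pass from the primitive properties to linear dominance, extract a single constant $c$ with $\hat{f} \leq c f$, and pull $c$ through the finite nonnegative sum. The only difference is packaging: the paper names the map $T$ an $O$-mapping and invokes the general $O$-mapping rule, whereas you perform the identical pointwise computation directly.
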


\begin{proof}
This is proved in \proveby{SubsetSumIsAnOMapping}.
\end{proof}

\begin{theorem}[\uproperty{SubsetSum} implies \property{SubComp}]
\label{SubsetSumImpliesSubComposability}
$\ohs{X}$ has \require{Order}, \require{Trans} and \require{SubsetSum}. \srequire{SubsetSum} $\implies$ $\ohs{X}$ has \prove{SubComp}.
\end{theorem}

\begin{proof}
$\ohs{X}$ has \property{Reflex} by \proveby{ReflexivityIsImplied}, and \property{Orderness} by \proveby{OrdernessIsImplied}. 

Let $\function{s}{X}{Y}$, and $S_x = \setb{(s(x), 0) : x \in X}$. Let $\function{a}{\TN}{\nonn{\TR}}$ be such that $a(z) = 1$. Then for $f \in \rc{Y}$,
\begin{eqs}
\ohx{\sum_{(y, z) \in S_x} a(z) f(y)} = \ohx{f \circ s}.
\end{eqs}
Similarly, for $\bar{f} \in \rc{Y}$,
\begin{eqs}
\ohx{\sum_{(y, z) \in S_x} a(z) \bar{f}(y)} = \ohx{\bar{f} \circ s}.
\end{eqs}
By \require{SubsetSum},
\begin{eqs}
\ohx{f \circ s} \subset \ohx{\bar{f} \circ s}.
\end{eqs}
By \require{Orderness},
\begin{eqs}
(f \circ s) \in \ohx{\bar{f} \circ s}.
\end{eqs}
\sprove{SubComp}
\end{proof}

\chapter{Proofs of local linear dominance properties}
\label{ProofsForLocalLinearDominance}

In this section we prove some of the properties of a local linear dominance. We shall apply the following simplification lemma without mentioning it, since it is used in almost every proof.

\begin{lemma}[Simplification lemma for $\rohsy$]
\label{LocalSingleConstantLemma}
Let $X \in U$, $I$ be a finite set, $X_i \subset X$, $f_i \in \rc{X_i}$, and $\hat{f}_i \in \roh{X_i}{f_i}$, for all $i \in I$. Then there exists $c \in \posi{\TR}$ and $A \in \filterset{X}$, such that
\begin{equation}
\restrb{\hat{f}_i}{(X_i \cap A)} \lt c \restrb{f_i}{(X_i \cap A)},
\end{equation}
for all $i \in I$.
\end{lemma}

\begin{proof}
Since $\hat{f}_i \in \roh{X_i}{f_i}$, there exists $c_i \in \posi{\TR}$ and $B_i \in \filterset{X_i}$, such that
\begin{equation}
\restrb{\hat{f}_i}{B_i} \lt c_i \restrb{f}{B_i},
\end{equation}
for all $i \in I$. By induced sub-structure, there exists $A_i \in \filterset{X}$ such that $B_i = X_i \cap A_i$, for all $i \in I$. Since $\filterset{X}$ is $\subset$-directed, and $I$ is finite, there exists $A \in \filterset{X}$ such that $A \subset \bigcap_{i \in I} A_i$. Let $c = \max \setb{c_i : i \in I}$. Since $(X \cap A) \subset (X \cap A_i)$,
\begin{equation}
\restrb{\hat{f}_i}{(X_i \cap A)} \lt c \restrb{f_i}{(X_i \cap A)},
\end{equation}
for all $i \in I$.
\end{proof}

\begin{note}[Filter basis and simplification lemma]
A filter basis in $X$ seems to be the minimal amount of structure needed to prove the simplification lemma. Indeed, we first provided the abstraction of a filter basis solely to prove this lemma in its most general form. It is only later that we noticed that this structure also allows us to define limits.
\end{note}

\begin{theorem}[\uproperty{Order} for $\rohsy$]
\label{LocalOrderConsistency}
Let $X \in U$, and $f, g \in \rc{X}$. Then \sprove{Order}
\begin{equation}
f \leq g \implies \rohx{f} \subset \rohx{g}.
\end{equation}
\end{theorem}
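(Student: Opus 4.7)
The plan is to unpack the definition of $\rohx{f}$ and chase a witness. Let me fix $\hat{f} \in \rohx{f}$. By the definition of local linear dominance, there exist $c \in \posi{\TR}$ and $A \in \filterset{X}$ such that $\restrb{\hat{f}}{A} \leq c \restrb{f}{A}$. The hypothesis $f \leq g$, interpreted pointwise in $X$, restricts to $\restrb{f}{A} \leq \restrb{g}{A}$, and so
\begin{equation}
\restrb{\hat{f}}{A} \leq c \restrb{f}{A} \leq c \restrb{g}{A}.
\end{equation}
This exhibits the same $c$ and the same $A \in \filterset{X}$ as witnesses for $\hat{f} \in \rohx{g}$, which is the desired containment.

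No use is made of the $\subset$-directedness or of the induced sub-structure condition on $\filterset{X}$; the only requirement is that $\filterset{X}$ be non-empty, which is already assumed. In particular, the simplification lemma \thref{LocalSingleConstantLemma} is not needed here, since only one element of $\rohx{f}$ is being transported at a time and its witness $(c, A)$ can be reused verbatim. There is no real obstacle in this proof; the lemma is the ``order'' half of the structure and essentially amounts to transitivity of $\leq$ composed with a pointwise scaling on the set $A$. The slightly stronger form $\rohx{f} \subset \rohx{g}$ (rather than merely $f \in \rohx{g}$) is what is actually stated and what the argument delivers.
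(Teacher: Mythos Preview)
Your proof is correct and essentially identical to the paper's: pick $\hat{f} \in \rohx{f}$, extract the witnesses $c$ and $A$, use $f \leq g$ restricted to $A$ to conclude $\restrb{\hat{f}}{A} \leq c\restrb{g}{A}$, hence $\hat{f} \in \rohx{g}$. Your additional remarks about not needing directedness or the simplification lemma, and about the subset form being what is actually proved, are accurate and add useful context that the paper omits.
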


\begin{proof}
Let $\hat{f} \in \rohx{f}$. Then there exists $c \in \posi{\TR}$ and $A \in \filterset{X}$ such that
\begin{equation}
\restrb{\hat{f}}{A} \leq c \restrb{f}{A}.
\end{equation}
Since $f \leq g$,
\begin{equation}
\restrb{\hat{f}}{A} \leq c \restrb{g}{A}.
\end{equation}
Therefore $\hat{f} \in \rohx{g}$.
\sprove{Order}
\end{proof}

\begin{theorem}[\uproperty{Trans} for $\rohsy$]
\label{LocalTransitivity}
Let $X \in U$, and $f, g, h \in \rc{X}$. Then \sprove{Trans}
\begin{equation}
\bra{f \in \rohx{g} \textrm{ and } g \in \rohx{h}} \implies f \in \rohx{h}.
\end{equation}
\end{theorem}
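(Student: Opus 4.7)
My plan is to imitate the proof of \thref{LinearTransitivity} almost verbatim, with the simplification lemma \thref{LocalSingleConstantLemma} doing the work that $\subset$-directedness of the filter-set is designed for. Concretely, I would begin by assuming $f \in \rohx{g}$ and $g \in \rohx{h}$, so that by the very definition of local linear dominance there exist $c_1, c_2 \in \posi{\TR}$ and $A_1, A_2 \in \filterset{X}$ with $\restrb{f}{A_1} \leq c_1 \restrb{g}{A_1}$ and $\restrb{g}{A_2} \leq c_2 \restrb{h}{A_2}$.

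The essential step is to invoke \thref{LocalSingleConstantLemma} with the finite index set $I = \setbra{1, 2}$, $X_1 = X_2 = X$, taking the pairs $(\hat{f}_1, f_1) = (f, g)$ and $(\hat{f}_2, f_2) = (g, h)$. This produces a common $c \in \posi{\TR}$ and a single $A \in \filterset{X}$ such that $\restrb{f}{A} \leq c \restrb{g}{A}$ and $\restrb{g}{A} \leq c \restrb{h}{A}$ simultaneously. Chaining the two inequalities on $A$ yields $\restrb{f}{A} \leq c^2 \restrb{h}{A}$, so that taking the witness constant $c^2 \in \posi{\TR}$ and the witness filter-set element $A$ gives $f \in \rohx{h}$ directly from the definition.

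There is no real obstacle here; the whole point of packaging $\subset$-directedness of $\filterset{X}$ into the simplification lemma was precisely to make transitivity (and, later on, locality, sub-homogenuity, etc.) a one-line chase. The only thing to be careful about is not to fall into the temptation of re-proving the directedness inline: once \thref{LocalSingleConstantLemma} is available, the proof reduces to writing down $\restrb{f}{A} \leq c \restrb{g}{A} \leq c^2 \restrb{h}{A}$ and concluding. I would therefore keep the write-up to three or four lines, in the same style as \thref{LinearTransitivity}, and mark the property as proved with \texttt{\textbackslash sprove\{Trans\}} at the end.
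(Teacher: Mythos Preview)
Your proposal is correct and is essentially identical to the paper's proof: the paper likewise (silently) applies \thref{LocalSingleConstantLemma} to obtain a single $c \in \posi{\TR}$ and $A \in \filterset{X}$ with $\restrb{f}{A} \leq c\restrb{g}{A}$ and $\restrb{g}{A} \leq c\restrb{h}{A}$, then chains to $\restrb{f}{A} \leq c^2 \restrb{h}{A}$. The only cosmetic difference is that the paper suppresses the intermediate $c_1,c_2,A_1,A_2$ entirely, as announced right after the simplification lemma.
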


\begin{proof}
Let $f \in \rohx{g}$, and $g \in \rohx{h}$. Then there exists $c \in \posi{\TR}$ and $A \in \filterset{X}$, such that
\begin{eqs}
\restrb{f}{A} & \lt c \restrb{g}{A}, \\
\restrb{g}{A} & \lt c \restrb{h}{A}.
\end{eqs}
It follows that
\begin{eqs}
\restrb{f}{A} & \lt c (c \restrb{h}{A}) \\
{} & = c^2 \restrb{h}{A}.
\end{eqs}
Therefore $f \in \rohx{h}$.
\sprove{Trans}
\end{proof}

\begin{theorem}[\uproperty{Local} for $\rohsy$]
\label{LocalLocality}
Let $X \in U$, $f, g \in \rc{X}$, and $C \subset \power{X}$ be a finite cover of $X$. Then \sprove{Local}
\begin{equation}
\bra{\forall D \in C: \restrb{f}{D} \in \roh{D}{\restr{g}{D}}} \implies f \in \rohx{g}.
\end{equation}
\end{theorem}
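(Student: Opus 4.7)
The plan is to reduce the statement directly to the simplification lemma (\thref{LocalSingleConstantLemma}) for $\rohsy$, combined with the covering property of $C$.

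First I would invoke \thref{LocalSingleConstantLemma} on the finite index set $I = C$, taking $X_D = D$, $f_D = \restrb{g}{D}$, and $\hat{f}_D = \restrb{f}{D}$ for each $D \in C$. The hypothesis supplies $\restrb{f}{D} \in \roh{D}{\restr{g}{D}}$ for every $D \in C$, so the lemma yields a single constant $c \in \posi{\TR}$ and a single filter-set element $A \in \filterset{X}$ such that
\begin{equation}
\restrb{f}{D \cap A} \leq c \restrb{g}{D \cap A}
\end{equation}
for every $D \in C$. This is the crux of the proof: the $\subset$-directedness of $\filterset{X}$ and the induced sub-structure condition packaged inside the simplification lemma allow the finitely many per-piece filter-set witnesses to be replaced by one common witness in $\filterset{X}$.

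Next I would use the hypothesis that $C$ covers $X$, i.e.\ $\bigcup_{D \in C} D = X$, to conclude $\bigcup_{D \in C} (D \cap A) = A$. Since the above pointwise inequality holds on every piece $D \cap A$ with the same constant $c$, it holds on their union, giving $\restrb{f}{A} \leq c \restrb{g}{A}$. By the definition of local linear dominance this means $f \in \rohx{g}$, as required.

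I do not expect a serious obstacle here: the work has already been absorbed into the simplification lemma, and what remains is a one-line consolidation using the cover. The only point worth a brief remark in the written proof is that the finiteness of $C$ is what lets the simplification lemma apply; without finiteness the directedness of $\filterset{X}$ would not suffice to produce a common element $A$.
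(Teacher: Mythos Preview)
Your proposal is correct and mirrors the paper's proof essentially step for step: the paper also applies the simplification lemma (implicitly, per its stated convention) to obtain a single $c$ and $A \in \filterset{X}$ with $\restrb{f}{D \cap A} \leq c\,\restrb{g}{D \cap A}$ for all $D \in C$, then uses that $C$ covers $X$ to conclude $\restrb{f}{A} \leq c\,\restrb{g}{A}$. Your explicit remark on why finiteness of $C$ is needed is a nice addition but otherwise there is no substantive difference.
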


\begin{proof}

Assume $\bra{\restr{f}{D}} \in \roh{D}{\restr{g}{D}}$, for all $D \in C$. Then there exist $c \in \posi{\TR}$ and $A \in \filterset{X}$ such that
\begin{eqs}
\qquad & \restrb{\restrb{f}{D}}{(D \cap A)} \lt c \restrb{\restrb{g}{D}}{(D \cap A)} \\
\iffr & \bra{\restr{f}{\bra{D \cap A}}} \lt c \bra{\restr{g}{\bra{D \cap A}}},
\end{eqs}
for all $D \in C$. Since $C$ covers $X$,
\begin{equation}
\restrb{f}{A} \lt c \restrb{g}{A}.
\end{equation}
Therefore $f \in \rohx{g}$.
\sprove{Local}
\end{proof}

\begin{theorem}[\uproperty{One} for $\rohsy$ characterized]
\label{LocalOneSeparationCharacterized}
\begin{equation}
\bra{\forall A \in \filterset{\posi{\TN}} : |A| \not\in \TN} \iff n \not\in \roh{\posi{\TN}}{1}.
\end{equation}
\end{theorem}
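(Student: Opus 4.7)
The plan is to unfold the definition of $\rohsy$ applied to the identity function $n \mapsto n$ and the constant $1$, and observe that the condition $n \in \roh{\posi{\TN}}{1}$ is equivalent to the existence of a bounded (equivalently, finite) filter set. The biconditional then follows by contraposition.

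For the forward direction ($\Rightarrow$), I would argue by contrapositive: assume $n \in \roh{\posi{\TN}}{1}$, so there exist $c \in \posi{\TR}$ and $A \in \filterset{\posi{\TN}}$ with $\restrb{n}{A} \lt c \restrb{1}{A}$, i.e.\ every element of $A$ is at most $c$. Since $A \subset \posi{\TN}$ and every element of $A$ is bounded above by $\ceilb{c}$, the set $A$ is finite, so $|A| \in \TN$. This contradicts the hypothesis that every $A \in \filterset{\posi{\TN}}$ is infinite.

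For the reverse direction ($\Leftarrow$), I would again use the contrapositive: suppose some $A \in \filterset{\posi{\TN}}$ is finite, so $|A| \in \TN$. Let $c = \max(A) \in \posi{\TR}$ (if $A = \emptyset$, pick any $c \in \posi{\TR}$; the restriction condition holds vacuously). Then $\restrb{n}{A} \lt c = c \restrb{1}{A}$, so by definition $n \in \roh{\posi{\TN}}{1}$, contradicting $n \not\in \roh{\posi{\TN}}{1}$.

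The proof is essentially a two-line unfolding of definitions on each side; the only subtlety is handling the empty filter set in the reverse direction (which is why $\filterset{X}$ is required to be non-empty in the framework). There is no real obstacle, so I expect this to be one of the shorter proofs in the section.
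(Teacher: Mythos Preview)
Your proposal is correct and follows essentially the same approach as the paper: both unfold the definition of $\rohsy$ and reduce the question to whether some filter set is finite (bounded) in $\posi{\TN}$. The only cosmetic differences are that the paper proves the forward direction directly rather than by contrapositive, and handles the possibly-empty $A$ uniformly via $c = \max(A \cup \setbra{1})$ instead of a case split.
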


\begin{proof}
\proofpart{$\implies$}
Suppose every set in $\filterset{\posi{\TN}}$ is infinite. Then for all $c \in \posi{\TR}$ and $A \in \filterset{\posi{\TN}}$ there exists $n \in A$ such that $n > c 1$. Therefore $n \not\in \roh{\posi{\TN}}{1}$.

\proofpart{$\impliedby$}
Suppose some set $A \in \filterset{\posi{\TN}}$ is finite. Let $c = \max(A \cup \setb{1})$. Then $\restrb{n}{A} \leq c \restrb{1}{A}$. Therefore $n \in \roh{\posi{\TN}}{1}$.
\end{proof}

\begin{theorem}[\uproperty{Scale} for $\rohsy$]
\label{LocalScaleInvariance}
Let $X \in U$, $f \in \rc{X}$, and $\alpha \in \posi{\TR}$. Then $\rohx{\alpha f} = \rohx{f}$. \sprove{Scale}
\end{theorem}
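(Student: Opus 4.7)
The plan is to prove both inclusions directly from the definition of $\rohsy$, by simply absorbing the positive factor $\alpha$ into the multiplicative constant $c$. The argument is symmetric: multiplying $f$ by $\alpha$ and by $1/\alpha$ yields each inclusion.

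For the $\subset$ direction, I would take any $\hat{f} \in \rohx{\alpha f}$. By definition there exist $c \in \posi{\TR}$ and $A \in \filterset{X}$ such that $\restrb{\hat{f}}{A} \leq c \restrb{(\alpha f)}{A} = (c \alpha) \restrb{f}{A}$. Since $c \alpha \in \posi{\TR}$, this witnesses $\hat{f} \in \rohx{f}$. For the $\supset$ direction, I would take any $\hat{f} \in \rohx{f}$, obtain $c \in \posi{\TR}$ and $A \in \filterset{X}$ with $\restrb{\hat{f}}{A} \leq c \restrb{f}{A} = (c / \alpha) \restrb{(\alpha f)}{A}$, and note that $c / \alpha \in \posi{\TR}$, which gives $\hat{f} \in \rohx{\alpha f}$.

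There is no real obstacle here, since the filter set $A$ is reused verbatim in both directions and only the constant has to be rescaled; the hypothesis $\alpha > 0$ is exactly what ensures that both $c \alpha$ and $c / \alpha$ remain in $\posi{\TR}$. No appeal to the simplification lemma \thref{LocalSingleConstantLemma} or to the structural requirements on $\filterset{X}$ (directedness, induced sub\-/structure) is needed, because only one dominance witness is manipulated at a time.
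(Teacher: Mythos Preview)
Your proposal is correct and matches the paper's proof essentially verbatim: both directions absorb the positive scalar into the constant, reusing the same filter set $A$ and rescaling $c$ to $c\alpha$ or $c/\alpha$ respectively.
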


\begin{proof}
\proofpart{$\subset$}
Assume $\hat{f} \in \rohx{\alpha f}$. Then there exists $c \in \posi{\TR}$ and $A \in \filterset{X}$, such that
\begin{eqs}
\restrb{\hat{f}}{A} & \lt c \restrb{(\alpha f)}{A} \\
{} & = (c \alpha) \restrb{f}{A}.
\end{eqs}
Therefore $\hat{f} \in \rohx{f}$. 

\proofpart{$\supset$}
Assume $\hat{f} \in \rohx{f}$. Then there exists $c \in \posi{\TR}$ and $A \in \filterset{X}$, such that
\begin{eqs}
\restrb{\hat{f}}{A} & \lt c \restrb{f}{A} \\
{} & = (c / \alpha) \restrb{(\alpha f)}{A}.
\end{eqs}
Therefore $\hat{f} \in \rohx{\alpha f}$.

\sprove{Scale}
\end{proof}

\begin{theorem}[\uproperty{SubHom} for $\rohsy$]
\label{LocalSubHomogenuity}
Let $X \in U$, and $f, u \in \rc{X}$. Then \sprove{SubHom}
\begin{equation}
u \rohx{f} \subset \rohx{uf}.
\end{equation}
\end{theorem}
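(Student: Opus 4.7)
The plan is to mirror the proof of $\lohsy$\-/sub\-/homogenuity (\proveby{LinearSubHomogenuity}), adapting it to the filter\-/set setting. The only change is that the inequality witnessing dominance now holds only after restriction to some $A \in \filterset{X}$, but multiplying through by $u$ preserves the restriction.

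Concretely, I would start by taking an arbitrary element $\hat{f} \in \rohx{f}$ and unpacking the definition: there exist $c \in \posi{\TR}$ and $A \in \filterset{X}$ with $\restrb{\hat{f}}{A} \leq c \restrb{f}{A}$. Since $u \geq 0$ (because $u \in \rc{X}$), the pointwise product $\restrb{u}{A}$ is non-negative, so multiplying both sides of the restricted inequality by $\restrb{u}{A}$ preserves it, yielding
\begin{equation}
\restrb{(u\hat{f})}{A} = \restrb{u}{A} \restrb{\hat{f}}{A} \leq c \restrb{u}{A} \restrb{f}{A} = c \restrb{(uf)}{A}.
\end{equation}
This exhibits the same witnesses $c$ and $A$ for $u\hat{f} \in \rohx{uf}$, completing the proof.

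There is no real obstacle here: the argument is essentially a one-line calculation that transports the defining inequality through pointwise multiplication by $u$, using only that restriction commutes with pointwise product and that $u$ takes non-negative values. No appeal to the $\subset$\-/directedness of $\filterset{X}$ or to the induced sub-structure axiom is needed, since we only manipulate a single pair $(c, A)$ --- the simplification lemma \proveby{LocalSingleConstantLemma} is not invoked. Finally, the conclusion should be annotated with \sprove{SubHom} to match the pattern of the surrounding lemmas.
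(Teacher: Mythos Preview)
Your proof is correct and follows essentially the same approach as the paper's own proof: take $\hat{f}\in\rohx{f}$, extract the witnesses $c$ and $A$, and multiply the restricted inequality by the non-negative function $\restrb{u}{A}$. The only cosmetic difference is that the paper concludes $\restrb{u\hat{f}}{A}\leq c^2\restrb{uf}{A}$ whereas you obtain the sharper bound $c\restrb{uf}{A}$; either constant suffices to witness $u\hat{f}\in\rohx{uf}$.
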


\begin{proof}
Let $\hat{f} \in \rohx{f}$. Then there exists $c \in \posi{\TR}$ and $A \in \filterset{X}$, such that
\begin{eqs}
\restrb{\hat{f}}{A} & \lt c \restrb{f}{A}.
\end{eqs}
This implies
\begin{eqs}
\restrb{u\hat{f}}{A} \lt c \restrb{uf}{A}.
\end{eqs}
Therefore $u\hat{f} \in \rohx{uf}$.
\sprove{SubHom}
\end{proof}

\begin{theorem}[\uproperty{SuperHom} for $\rohsy$ characterized]
\label{LocalSuperHomogenuityCharacterized}
$\ohs{X}$ has \property{SuperHom}. $\iff$ $\filterset{X} = \setb{X}$.
\end{theorem}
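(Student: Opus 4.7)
The plan is to prove the biconditional by direct argument in each direction, using an explicit pointwise construction for $(\Longleftarrow)$ and an explicit counterexample for $(\Longrightarrow)$.

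For $(\Longleftarrow)$: When $\filterset{X} = \setbra{X}$, the relation $\rohs{X}$ reduces to linear dominance on $X$, so $g \in \rohx{f}$ iff $\exists c \in \posi{\TR}$ with $g \leq c f$. Given $\hat{h} \in \rohx{uf}$, fix $c$ with $\hat{h} \leq c u f$ and take the pointwise quotient
\begin{equation}
\hat{f}(x) := \begin{cases} \hat{h}(x) / u(x), & u(x) > 0, \\ 0, & u(x) = 0. \end{cases}
\end{equation}
Two checks finish it: where $u(x) = 0$, the inequality $\hat{h}(x) \leq c u(x) f(x) = 0$ forces $\hat{h}(x) = 0 = u(x) \hat{f}(x)$; where $u(x) > 0$, clearly $u(x) \hat{f}(x) = \hat{h}(x)$ and $\hat{f}(x) = \hat{h}(x) / u(x) \leq c f(x)$. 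Hence $u \hat{f} = \hat{h}$ and $\hat{f} \in \rohx{f}$, so $\hat{h} \in u \rohx{f}$, establishing \property{SuperHom}.

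For $(\Longrightarrow)$: I argue the contrapositive. Suppose some $A \in \filterset{X}$ satisfies $A \neq X$, and fix $y \in X \setminus A$. Define $u := \idf{X \setminus \setbra{y}}$, $f \equiv 1$, and $\hat{h} := \idf{\setbra{y}}$. Then $uf$ vanishes at $y$ and equals $1$ elsewhere, while $\restrb{\hat{h}}{A} \equiv 0 \leq \restrb{uf}{A}$, so the pair $(1, A)$ witnesses $\hat{h} \in \rohx{uf}$. On the other hand, $\hat{h}(y) = 1$ but for any $\hat{f} \in \rc{X}$ we have $u(y) \hat{f}(y) = 0 \cdot \hat{f}(y) = 0$; hence $\hat{h} \not\in u \rohx{f}$, refuting \property{SuperHom}.

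The argument is elementary and I do not anticipate a substantive obstacle. The one point worth flagging is the $(\Longleftarrow)$ direction: the pointwise quotient is globally well-defined precisely because $\hat{h}$ must vanish on the zero set of $u$, which relies on the inequality $\hat{h} \leq c u f$ holding everywhere rather than only on a proper member of $\filterset{X}$; this is exactly where the assumption $\filterset{X} = \setbra{X}$ is used, and also why the counterexample in $(\Longrightarrow)$ succeeds the moment a smaller filter-set element is available.
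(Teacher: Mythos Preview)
Your proof is correct and follows essentially the same approach as the paper. The $(\Longleftarrow)$ direction is the same pointwise quotient construction (the paper writes it with variable names $f,g$ in place of your $u,f$), and your $(\Longrightarrow)$ counterexample is a single-point specialization of the paper's, which takes the multiplier to be $\idf{A}$ and lets $\hat{h}$ equal $1$ on all of $X\setminus A$; both exploit the same mechanism that the multiplier vanishes somewhere outside $A$ while $\hat{h}$ does not.
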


\begin{proof}
\proofpart{$\implies$}
Suppose $\filterset{X} \neq \setb{X}$. Then there exists exists $A \in \filterset{X}$ such that $A \neq X$. Let $\compl{A} = X \setminus A$, and $f \in \rc{X}$ be such that $f = \indi{A}{X}$. Let $g \in \rc{X}$, and $\hat{h} \in \rc{X}$ be such that $\hat{h} = fg\indi{A}{X} + \indi{\compl{A}}{X}$. Then $\restrb{\hat{h}}{A} \leq \restrb{fg}{A}$, and so $\hat{h} \in \rohx{fg}$. Let $\hat{g} \in \rohx{g}$. Then $0 =\restrb{f \hat{g}}{\compl{A}} \neq \restrb{\hat{h}}{\compl{A}} = 1$; $\ohs{X}$ does not have \property{SuperHom}.

\proofpart{$\impliedby$}
Suppose $\filterset{X} = \setb{X}$. Let $f, g \in \rc{X}$, and $\hat{h} \in \rohx{fg}$. Then there exists $c \in \posi{\TR}$ such that
\begin{equation}
\hat{h} \leq cfg.
\end{equation}
Let $F = \preimage{f}{\posi{\TR}}$, and $\compl{F} = X \setminus F$. Let $\hat{g} \in \rc{X}$ be such that $\restrb{\hat{g}}{F} = \restrb{\hat{h}}{F} / \restrb{f}{F}$ and $\restrb{\hat{g}}{\compl{F}} = 0$. Then $\hat{g} \leq cg$, and so $\hat{g} \in \rohx{g}$. In addition, $\hat{h} = f \hat{g}$. 
\end{proof}

\begin{theorem}[\uproperty{SuperMulti} for $\rohsy$]
\label{LocalSuperMultiplicativity}
Let $X \in U$, and $f, g \in \rc{X}$. Then \sprove{SuperMulti}
\begin{equation}
\rohx{f} \rohx{g} \supset \rohx{fg}.
\end{equation}
\end{theorem}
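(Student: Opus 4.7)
The plan is to take an arbitrary $\hat{h} \in \rohx{fg}$ and explicitly construct witnesses $\hat{f} \in \rohx{f}$ and $\hat{g} \in \rohx{g}$ with $\hat{f}\hat{g} = \hat{h}$. By the definition of $\rohsy$, there exist $c \in \posi{\TR}$ and $A \in \filterset{X}$ with $\restrb{\hat{h}}{A} \leq c \restrb{fg}{A}$. The crucial observation is that membership in $\rohx{f}$ only requires the pointwise bound to hold on \emph{some} filter-set element; I may therefore reuse this very same $A$ to witness both factors and leave their values outside $A$ entirely unconstrained.

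First I would partition $A$ using the positivity set $F := f^{-1}(\posi{\TR})$. On $A \cap F$, define $\hat{f} := f$ and $\hat{g} := \hat{h}/f$; this gives $\hat{f}\hat{g} = \hat{h}$ pointwise, while $\hat{g} \leq c g$ on this piece follows from dividing $\hat{h} \leq c f g$ through by $f$. On $A \setminus F$ the same hypothesis gives $\hat{h} \leq c f g = 0$, forcing $\hat{h}$ to vanish, so I may set $\hat{f} := 0$ and $\hat{g} := 0$ without violating $\hat{f}\hat{g} = \hat{h}$, and both bounds $\hat{f} \leq f$ and $\hat{g} \leq c g$ hold trivially. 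Finally, on $X \setminus A$ I simply set $\hat{f} := \hat{h}$ and $\hat{g} := 1$, which reproduces the identity $\hat{f}\hat{g} = \hat{h}$; no bound is needed here since $A$ is the witnessing filter-set element for both factors.

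Assembling the three pieces, $\restrb{\hat{f}}{A} \leq \restrb{f}{A}$ by construction and $\restrb{\hat{g}}{A} \leq c \restrb{g}{A}$ by the case analysis, so $\hat{f} \in \rohx{f}$ and $\hat{g} \in \rohx{g}$. The only conceptually delicate point --- and the only place where the argument could derail --- is ensuring the product identity survives on $A \setminus F$; this is resolved by the observation that the hypothesis $\hat{h} \leq c f g$ automatically forces $\hat{h}$ to vanish wherever $f$ does inside $A$, legitimising the choice $\hat{f} = \hat{g} = 0$ there. Everything else is routine case-splitting, so I expect no further obstacle; notably, no use of $\subset$-directedness or induced sub-structure of $\filterset{X}$ is needed, making the proof uniform across every local linear dominance listed in \taref{ExamplesOfLocalLinearDominances}.
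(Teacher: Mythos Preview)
Your proof is correct and essentially identical to the paper's, with the roles of $f$ and $g$ swapped: the paper sets $\hat{g} := g$ on $A$ (and $1$ off $A$) and then defines $\hat{f}$ as the quotient $\hat{h}/\hat{g}$ where $\hat{g}>0$, whereas you set $\hat{f} := f$ on $A$ (and $\hat{h}$ off $A$) and define $\hat{g}$ as the quotient $\hat{h}/f$ where $f>0$. Your remark that neither $\subset$-directedness nor induced sub-structure is needed is also correct and matches the paper's proof.
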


\begin{proof}
Let $\hat{h} \in \rohx{fg}$. Then there exists $c \in \posi{\TR}$ and $A \in \filterset{X}$, such that
\begin{equation}
\restrb{\hat{h}}{A} \lt c \restrb{fg}{A}.
\end{equation}
Let $\hat{g} \in \rc{X}$ be such that
\begin{equation}
\hat{g}(x) =
\begin{cases}
g(x), & x \in A, \\
1, & x \not\in A. \\
\end{cases}
\end{equation}
Then $\hat{g} \in \rohx{g}$, since $\restrb{\hat{g}}{A} = \restrb{g}{A}$. Let $\hat{G} = \setb{x \in X : \hat{g}(x) > 0}$, and let $\hat{f} \in \rc{X}$ be such that
\begin{equation}
\hat{f}(x) =
\begin{cases}
\hat{h}(x) / \hat{g}(x), & x \in \hat{G} \\
0, & x \not\in \hat{G}.
\end{cases}
\end{equation}
If $x \in \hat{G}$, then $\hat{h}(x) = \hat{f}(x) \hat{g}(x)$. If $x \not\in \hat{G}$, then $x \in A$, and $g(x) = 0 = \hat{g}(x)$. Since $\hat{h}(x) \leq c f(x) g(x)$, it follows that $\hat{h}(x) = 0$. Therefore $\hat{h}(x) = \hat{f}(x) \hat{g}(x)$, and $\hat{h} = \hat{f} \hat{g}$. 
If $x \in A \setminus \hat{G}$, then $\hat{f}(x) = 0$, and $\hat{f}(x) \leq c f(x)$. If $x \in A \cap \hat{G}$, then $g(x) = \hat{g}(x) > 0$, and $\hat{f}(x) = \hat{h}(x) / g(x) \leq c f(x)$. Therefore $\hat{f} \in \rohx{f}$, and $\hat{h} \in \rohx{f} \rohx{g}$.
\sprove{SuperMulti}
\end{proof}

\begin{note}[\uproperty{SuperMulti} in another way?]
In proving \property{SuperMulti}, we cannot refer to \thref{SuperMultiplicativityIsImplied}, since there are local linear dominances which do not satisfy its assumptions.
\end{note}

\begin{theorem}[\uproperty{SubComp} for $\rohsy$ for fixed $\function{s}{Y}{X}$]
\label{LocalSubComposabilityForFixedS}
Let $X, Y \in U$, and $\function{s}{Y}{X}$. Then
\begin{eqs}
{} & \forall f \in \rc{X}: \roh{X}{f} \circ s \subset \roh{Y}{f \circ s} \\
\iffr & \forall A_X \in \filterset{X}, \exists A_Y \in \filterset{Y}: \image{s}{A_Y} \subset A_X.
\end{eqs}
\end{theorem}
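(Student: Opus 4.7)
The plan is to prove the biconditional by checking each direction separately, using the structure of local linear dominance (in particular, the description of $\roh{Y}{0}$).

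\textbf{Direction $(\impliedby)$.} This is the easy half. Assume that for every $A_X \in \filterset{X}$ there exists $A_Y \in \filterset{Y}$ with $s(A_Y) \subset A_X$. Let $f \in \rc{X}$ and $\hat{f} \in \rohx{f}$. By definition there exist $c \in \posi{\TR}$ and $A_X \in \filterset{X}$ with $\restrb{\hat{f}}{A_X} \leq c \restrb{f}{A_X}$. Choose $A_Y \in \filterset{Y}$ with $s(A_Y) \subset A_X$. Then for every $y \in A_Y$, $s(y) \in A_X$, so
\begin{equation}
(\hat{f} \circ s)(y) = \hat{f}(s(y)) \leq c f(s(y)) = c (f \circ s)(y),
\end{equation}
i.e.\ $\restrb{(\hat{f} \circ s)}{A_Y} \leq c \restrb{(f \circ s)}{A_Y}$, so $\hat{f} \circ s \in \roh{Y}{f \circ s}$.

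\textbf{Direction $(\implies)$.} The idea is to feed the hypothesis a test pair $(f,\hat{f})$ that forces the filter condition. Fix $A_X \in \filterset{X}$ and let $\compl{A_X} = X \setminus A_X$. Take $f = 0 \in \rc{X}$ and $\hat{f} = \indi{\compl{A_X}}{X} \in \rc{X}$. Since $\restrb{\hat{f}}{A_X} = 0 = 1 \cdot \restrb{f}{A_X}$, we have $\hat{f} \in \roh{X}{f}$ using the witness $A_X$. By hypothesis $\hat{f} \circ s \in \roh{Y}{f \circ s} = \roh{Y}{0}$, so there exist $c \in \posi{\TR}$ and $A_Y \in \filterset{Y}$ with $\restrb{(\hat{f} \circ s)}{A_Y} \leq c \cdot 0 = 0$; equivalently $(\hat{f} \circ s)(y) = 0$ for every $y \in A_Y$. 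By the definition of $\hat{f}$, this says $s(y) \in A_X$ for every $y \in A_Y$, i.e.\ $s(A_Y) \subset A_X$.

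\textbf{Expected difficulty.} Both directions are essentially computational once one realizes that $g \in \roh{Y}{0}$ is equivalent to $g$ vanishing on some $A \in \filterset{Y}$ (a one-line consequence of the definition). The only creative step is picking the test function in $(\implies)$; the choice $\hat{f} = \indi{\compl{A_X}}{X}$ against $f = 0$ converts the subcomposability statement directly into the filter\-/pull\-/back condition. No use of any primitive property of $\rohsy$ beyond the definition is needed, which is why the statement holds uniformly across all versions of local linear dominance listed in \taref{ExamplesOfLocalLinearDominances}.
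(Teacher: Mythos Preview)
Your proof is correct and follows the same overall strategy as the paper: the $(\impliedby)$ direction is identical, and for $(\implies)$ both you and the paper plug in an indicator-based test pair to force the filter condition. The only difference is the choice of test pair---the paper uses $f=\indi{A_X}{X}$ and $\hat f=1$, while you use $f=0$ and $\hat f=\indi{\compl{A_X}}{X}$---and your choice is arguably a little cleaner, since comparing against $0$ immediately yields vanishing on $A_Y$ without the extra ``since the functions are indicator functions'' step.
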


\begin{proof}
First notice that
\begin{eqs}
{} & \image{s}{A_Y} \subset A_X \\
\iffr & A_Y \subset \preimage{s}{A_X} \\
\iffr & \indi{A_Y}{Y} \leq \indi{\preimage{s}{A_X}}{Y} \\
\iffr & \indi{A_Y}{Y} \leq \indi{A_X}{X} \circ s,
\end{eqs}
for all $A_X \in \filterset{X}$, $A_Y \in \filterset{Y}$. We attempt to prove the statement in this form. 

\proofpart{$\implies$}
Let $A_X \in \filterset{X}$, $f = \indi{A_X}{X}$, and $\hat{f} \in \rc{X}$ be such that $\hat{f} = 1$. Then $\restrb{\hat{f}}{A_X} \leq 1 \restrb{f}{A_X}$, and so $\hat{f} \in \rohx{f}$. By assumption, $\hat{f} \circ s = 1 \in \roh{Y}{f \circ s}$, and so there exists $d \in \posi{\TR}$ and $A_Y \in \filterset{Y}$ such that
\begin{eqs}
\indi{A_Y}{Y} & \leq \indi{A_Y}{Y} d (f \circ s) \\
{} & = \indi{A_Y}{Y} d \bra{\indi{A_X}{X} \circ s} \\
{} & \leq d \bra{\indi{A_X}{X} \circ s}.
\end{eqs}
Since the functions are indicator functions, this is equivalent to
\begin{equation}
\indi{A_Y}{Y} \leq \indi{A_X}{X} \circ s.
\end{equation}

\proofpart{$\impliedby$}
Let $f \in \rc{X}$ and $\hat{f} \in \roh{X}{f}$. Then there exists $c \in \posi{\TR}$ and $A_X \in \filterset{X}$ such that $\idf{A_X} \hat{f} \leq \idf{A_X} c f$. By assumption, there exists $A_Y \in \filterset{Y}$ such that $\indi{A_Y}{Y} \leq \indi{A_X}{X} \circ s$. Then
\begin{eqs}
{} & \idf{A_X} \hat{f} \leq \idf{A_X} c f \\
\impliesr & \bra{\idf{A_X} \circ s} \bra{\hat{f} \circ s} \leq 
\bra{\idf{A_X} \circ s} c \bra{f \circ s} \\
\impliesr & \indi{A_Y}{Y} \bra{\hat{f} \circ s} \leq 
\indi{A_Y}{Y} c \bra{f \circ s} \\
\impliesr & \bra{\hat{f} \circ s} \in \roh{Y}{f \circ s}.
\end{eqs}
\end{proof}

\begin{theorem}[Characterization of \property{SubComp} for $\rohsy$]
\label{LocalSubComposabilityCharacterized}
$\rohsy$ has \property{SubComp} if and only if
\begin{eqs}
\bra{\forall X \in U: \filterset{X} = \setb{X}} \lor \bra{\forall X \in U: \emptyset \in \filterset{X}}.
\end{eqs}
\end{theorem}

\begin{proof}
\proofpart{$\implies$}
Suppose there exists $X \in U$, such that $\emptyset \not\in \filterset{X}$ and $\filterset{X} \neq \setb{X}$. Let $A \in \filterset{X}$, $x^* \in X \setminus A$, and $\function{s}{X}{X}$ be such that $s(x) = x^*$. Then $\image{s}{X} \cap A = \emptyset$, and $\rohsy$ does not have \property{SubComp} under $s$ by \thref{LocalSubComposabilityForFixedS}. This shows that either $\filterset{X} = \setb{X}$, or $\emptyset \in \filterset{X}$, for all $X \in U$. 

Suppose there exists $X, Y \in U$, such that $\emptyset \in \filterset{X}$, and $\filterset{Y} = \setb{Y}$. Then \property{SubComp} does not hold for any $\function{s}{Y}{X}$, by \thref{LocalSubComposabilityForFixedS}, because $\image{s}{Y} \not\subset \emptyset$.

\proofpart{$\impliedby$}
Let $X, Y \in U$, and $\function{s}{Y}{X}$. Either $\emptyset \in \filterset{X}$ and $\emptyset \in \filterset{Y}$, or $\filterset{X} = \setb{X}$ and $\filterset{Y} = \setb{Y}$. In either case, \property{SubComp} holds by \thref{LocalSubComposabilityForFixedS}.
\end{proof}

\begin{theorem}[\uproperty{SubComp} for $\rohsy$ for positive functions]
\label{LocalSubComposabilityForPositiveFunctions}
$\forall X \in U, \forall A \in \filterset{X}: \card{X \setminus A} < \infty \implies \rohsy$ has \property{SubComp} for positive functions .
\end{theorem}

\begin{proof}
Let $\hat{f} \in \rohx{f}$, where $f \in \posi{\rc{X}}$. Then there exists $A \in \filterset{X}$ and $c \in \posi{\TR}$, such that
\begin{eqs}
{} & \restrb{\hat{f}}{A} \leq c \restrb{f}{A} \\
\iffr & \frac{\restrb{\hat{f}}{A}}{\restrb{f}{A}} \leq c.
\end{eqs}
Since $\card{X \setminus A} < \infty$, let
\begin{eqs}
d = \max \setb{\frac{\restr{\hat{f}}{(X \setminus A)}}{\restr{f}{(X \setminus A)}}} \cup \setb{c}.
\end{eqs}
Then $d \in \posi{\TR}$, and
\begin{eqs}
{} & \frac{\hat{f}}{f} \leq d \\
\iffr & \hat{f} \leq d f.
\end{eqs}
Let $\function{s}{Y}{X}$, where $Y \in U$. Then
\begin{eqs}
(\hat{f} \circ s) \leq d (f \circ s).
\end{eqs}
Therefore $\bra{\hat{f} \circ s} \in \rohx{f \circ s}$.
\end{proof}

\begin{theorem}[Characterization of \uproperty{Extend} for $\rohsy$]
\label{LocalExtensibilityCharacterized}
$\rohs{X}$ has \property{Extend} if and only if
\begin{eqs}
\forall A \in \filterset{X}: \exists B \in \filterset{X \times Y}: \image{\projections{X}{Y}}{B} \subset A.
\end{eqs}
\end{theorem}

\begin{proof}
This follows directly from \thref{LocalSubComposabilityForFixedS}.
\end{proof}

\begin{theorem}[\uproperty{PowerH} for $\rohsy$]
\label{LocalPowerHomogenuity}
Let $X \in U$, $f \in \rc{X}$, and $\alpha \in \posi{\TR}$. Then \sprove{PowerH}
\begin{eqs}
\rohx{f}^{\alpha} = \rohx{f^{\alpha}}.
\end{eqs}
\end{theorem}

\begin{proof}
\proofpart{$\subset$}
Let $\hat{f} \in \rohx{f}$. Then there exists $A \in \filterset{X}$, and $c \in \posi{\TR}$, such that
\begin{eqs}
\restrb{\hat{f}}{A} \leq c \restrb{f}{A}.
\end{eqs}
Then
\begin{eqs}
\restrb{\hat{f}}{A}^{\alpha} \leq c^{\alpha} \restrb{f}{A}^{\alpha}.
\end{eqs}
Therefore $\hat{f}^{\alpha} \in \rohx{f^{\alpha}}$.

\proofpart{$\supset$}
Let $\hat{f} \in \rohx{f^{\alpha}}$. Then there exists $A \in \filterset{X}$, and $c \in \posi{\TR}$, such that
\begin{eqs}
\restrb{\hat{f}}{A} \leq c \restrb{f^{\alpha}}{A}.
\end{eqs}
Then
\begin{eqs}
\restrb{\hat{f}}{A}^{1 / \alpha} \leq c \restrb{f}{A}.
\end{eqs}
Let $g \in \rc{X}$ be such that $g = \hat{f}^{1 / \alpha}$. Then $g \in \rohx{f}$, and $g^{\alpha} = \hat{f}$. 
\sprove{PowerH}
\end{proof}

\begin{note}[\uproperty{PowerH} from primitive properties]
We would rather want to prove \property{PowerH} from primitive properties, as done for other non-primitive properties in \sref{ImpliedProperties}. However, \we{} \were{} unable to come up with such a proof.
\end{note}

\begin{theorem}[\uproperty{TrivialZero} for $\rohsy$ characterized]
\label{LocalZeroTrivialityCharacterized}
$\rohs{X}$ has \property{TrivialZero} if and only if $\filterset{X} = \setb{X}$.
\end{theorem}

\begin{proof}
\proofpart{$\implies$}
Suppose there exists $A \in \filterset{X}$ such that $A \neq X$. Then $\indi{X \setminus A}{X} \in \ohx{0}$ and \property{TrivialZero} does not hold.

\proofpart{$\impliedby$}
Suppose $\filterset{X} = \setb{X}$, and $f \in \ohx{0}$. Then there exists $c \in \posi{\TR}$, such that
\begin{eqs}
f \leq c 0.
\end{eqs}
Therefore $f = 0$.
\end{proof}

\ReLinearDominanceFromLocalLinearDominance

\begin{proof}
\proofpart{$\subset$}
Suppose $f \in \rohx{g}$. Then there exists $A \in \filterset{X}$ and $c \in \posi{\TR}$, such that
\begin{eqs}
{} & \restrb{f}{A} \leq c \restrb{g}{A} \\
\iffr & \frac{\restrb{f}{A}}{\restrb{g}{A}} \leq c.
\end{eqs}
Since $X \setminus A$ is finite, let
\begin{eqs}
d & = \max \bra{\frac{f}{g}}\bra{X \setminus A} \cup \setb{c}.
\end{eqs}
Then
\begin{eqs}
{} & \frac{f}{g} \leq d \\
\iffr & f \leq dg.
\end{eqs}
Therefore $f \in \lohx{g}$.

\proofpart{$\supset$}
Suppose $f \in \lohx{g}$. Then there exists $c \in \posi{\TR}$, such that
\begin{eqs}
f \leq cg.
\end{eqs}
Let $A \in \filterset{X}$. Then
\begin{eqs}
\restrb{f}{A} \leq c \restrb{g}{A}.
\end{eqs}
Therefore $f \in \rohx{g}$.
\end{proof}

\chapter{Proofs of limit theorems}
\label{ProofsOfLimitsTheorems}

In this section we prove the limit theorems for local linear dominance.

\ReRelationBetweenRatioLimits

\begin{proof}
\proofpart{Notation}
Let
\begin{eqs}
c \coloneqq \limsup_{\filterset{F}} \frac{\restrb{f}{F}}{\restrb{g}{F}},
\end{eqs}
and
\begin{eqs}
d \coloneqq \liminf_{\filterset{F}} \frac{\restrb{g}{F}}{\restrb{f}{F}}.
\end{eqs}

\proofpart{$c = \infty \iff d = 0$}
\begin{eqs}
{} & c = \infty \\
\iffr & \forall A \in \filterset{F}: \sup\image{\bra{\frac{\restr{f}{A}}{\restr{g}{A}}}}{A} = \infty \\
\iffr & \forall A \in \filterset{F}: \exists x \in A: g(x) = 0 \\
\iffr & \forall A \in \filterset{F}: \inf\image{\bra{\frac{\restr{g}{A}}{\restr{f}{A}}}}{A} = 0 \\
\iffr & d = 0.
\end{eqs}
Therefore, if $c = \infty$, or $d = 0$, we have that
\begin{eqs}
c = 1 / d.
\end{eqs}

\proofpart{$c = 0 \implies d = \infty$}
Suppose $c = 0$. For any $\epsilon \in \posi{\TR}$, there exists $A \in \filterset{F}$, such that
\begin{eqs}
{} & \sup\image{\bra{\frac{\restr{f}{A}}{\restr{g}{A}}}}{A} \leq \epsilon \\
\impliesr & \frac{\restrb{f}{A}}{\restrb{g}{A}} \leq \epsilon \\
\impliesr & \frac{\restrb{g}{A}}{\restrb{f}{A}} \geq \frac{1}{\epsilon} \\
\impliesr & \inf\image{\bra{\frac{\restr{g}{A}}{\restr{f}{A}}}}{A} \geq \frac{1}{\epsilon} \\
\end{eqs}
Therefore,
\begin{eqs}
d \geq \frac{1}{\epsilon}.
\end{eqs}
Since this holds for all $\epsilon$,
\begin{eqs}
d = \infty.
\end{eqs}
Therefore, if $c = 0$, then $d = \infty$, and
\begin{eqs}
c = 1 / d.
\end{eqs}

\proofpart{$c = 0 \impliedby d = \infty$}
Suppose $d = \infty$. For $\epsilon \in \posi{\TR}$, there exists $A \in \filterset{F}$, such that
\begin{eqs}
{} & \inf\image{\bra{\frac{\restr{g}{A}}{\restr{f}{A}}}}{A} \geq \epsilon \\
\impliesr & \frac{\restrb{g}{A}}{\restrb{f}{A}} \geq \epsilon \\
\impliesr & \frac{\restrb{f}{A}}{\restrb{g}{A}} \leq \frac{1}{\epsilon} \\
\impliesr & \sup\image{\bra{\frac{\restr{f}{A}}{\restr{g}{A}}}}{A} \leq \frac{1}{\epsilon}.
\end{eqs}
Therefore
\begin{eqs}
c \leq \frac{1}{\epsilon}.
\end{eqs}
Since this holds for all $\epsilon$,
\begin{eqs}
c = 0.
\end{eqs}
Therefore, if $d = \infty$, then $c = 0$, and
\begin{eqs}
c = 1 / d.
\end{eqs}

\proofpart{$c \geq 1 / d$}
Suppose $c, d \in \posi{\TR}$. By definition, for any $\epsilon \in \posi{\TR}$, there exists $A \in \filterset{F}$, such that
\begin{eqs}
{} & \sup\image{\bra{\frac{\restr{f}{A}}{\restr{g}{A}}}}{A} - c \leq \epsilon \\
\impliesr & \frac{\restrb{f}{A}}{\restrb{g}{A}} - c \leq \epsilon \\
\impliesr & \frac{\restrb{f}{A}}{\restrb{g}{A}} \leq c + \epsilon \\
\impliesr & \frac{\restrb{g}{A}}{\restrb{f}{A}} \geq \frac{1}{c + \epsilon} \\
\impliesr & \inf\image{\bra{\frac{\restr{g}{A}}{\restr{f}{A}}}}{A} \geq \frac{1}{c + \epsilon}.
\end{eqs}
Therefore
\begin{eqs}
d \geq \frac{1}{c + \epsilon}.
\end{eqs}
Since this holds for all $\epsilon$,
\begin{eqs}
{} & d \geq \frac{1}{c} \\
\impliesr & c \geq 1 / d.
\end{eqs}

\proofpart{$c \leq 1 / d$}
Suppose $c, d \in \posi{\TR}$. By definition, for any $\epsilon \in \posi{\TR}$ such that $\epsilon < d$, there exists $A \in \filterset{F}$, such that
\begin{eqs}
{} & d - \inf\image{\bra{\frac{\restr{g}{A}}{\restr{f}{A}}}}{A} \leq \epsilon \\
\impliesr & d - \frac{\restrb{g}{A}}{\restrb{f}{A}} \leq \epsilon \\
\impliesr & d - \epsilon \leq \frac{\restrb{g}{A}}{\restrb{f}{A}} \\
\impliesr & \frac{1}{d - \epsilon} \geq \frac{\restrb{f}{A}}{\restrb{g}{A}} \\
\impliesr & \frac{1}{d - \epsilon} \geq \sup\image{\bra{\frac{\restr{f}{A}}{\restr{g}{A}}}}{A}.
\end{eqs}
Therefore
\begin{eqs}
c \leq \frac{1}{d - \epsilon}.
\end{eqs}
Since this holds for all $\epsilon$,
\begin{eqs}
c \leq \frac{1}{d}.
\end{eqs}
\end{proof}

\ReLocalLinearOByALimit

\begin{proof}
\proofpart{Notation}
Let
\begin{eqs}
d \coloneqq \limsup_{\filterset{F}} \frac{\restrb{f}{F}}{\restrb{g}{F}}.
\end{eqs}

\proofpart{$\implies$}
Suppose $d < \infty$. By definition, for any $\epsilon \in \posi{\TR}$, there exists $A \in \filterset{F}$, such that $\restrb{g}{A} > 0$, and 
\begin{eqs}
{} & \sup\image{\bra{\frac{\restr{f}{A}}{\restr{g}{A}}}}{A} - d \leq \epsilon \\
\impliesr & \frac{\restrb{f}{A}}{\restrb{g}{A}} - d \leq \epsilon \\
\impliesr & \restrb{f}{A} \leq (d + \epsilon) \restrb{g}{A}.
\end{eqs}
Therefore $\restrb{f}{F} \in \roh{F}{\restr{g}{F}}$. By \property{Order}, $\restrb{f}{\compl{F}} \in \roh{F}{\restr{g}{\compl{F}}}$, where $\compl{F} = X \setminus F$. By \property{Local}, $f \in \rohx{g}$.

\proofpart{$\impliedby$}
Suppose $f \in \rohx{g}$. By \property{Restrict}, $\restrb{f}{F} \in \roh{F}{\restr{g}{F}}$. Therefore, there exists $A \in \filterset{F}$ and $c \in \posi{\TR}$, such that
\begin{eqs}
{} & \restrb{f}{A} \leq c \restrb{g}{A} \\
\end{eqs}
In particular, $\restrb{g}{A} > 0$. Therefore
\begin{eqs}
{} & \frac{\restrb{f}{A}}{\restrb{g}{A}} \leq c \\
\impliesr & \sup\image{\bra{\frac{\restr{f}{A}}{\restr{g}{A}}}}{A} \leq c.
\end{eqs}
Therefore
\begin{eqs}
d \leq c < \infty.
\end{eqs}
\end{proof}

\ReLocalLinearOmegaByALimit

\begin{proof}
This follows from \thref{LocalLinearOByALimit} and \thref{RelationBetweenRatioLimits}.
\end{proof}

\ReLocalLinearSmallOhByALimit

\begin{proof}
By definition,
\begin{eqs}
{} & f \in \rsmallohx{g} \\
\iffr & f \in \rohx{g} \land \lnot (g \in \rohx{f}) \\
\iffr & f \in \rohx{g} \land \lnot (f \in \romegahx{g}).
\end{eqs}
The first term is obtained by \thref{LocalLinearOByALimit}, and the second term is obtained by \thref{LocalLinearOmegaByALimit}.
\end{proof}

\ReTraditionalSmallOByALimit

\begin{proof}
It holds that
\begin{eqs}
{} & \forall \epsilon \in \posi{\TR}: \exists A \in \filterset{X}: \restrb{f}{A} \leq \epsilon \restrb{g}{A} \\
\iffr & \forall \epsilon \in \posi{\TR}: \exists A \in \filterset{F}: \restrb{f}{A} \leq \epsilon \restrb{g}{A} \\
\iffr & \forall \epsilon \in \posi{\TR}: \exists A \in \filterset{F}: \frac{\restrb{f}{A}}{\restrb{g}{A}} \leq \epsilon \\
\iffr & \forall \epsilon \in \posi{\TR}: \exists A \in \filterset{F}: \sup \image{\frac{\restrb{f}{A}}{\restrb{g}{A}}}{A} \leq \epsilon \\
\iffr & \forall \epsilon \in \posi{\TR}: \limsup_{\filterset{F}} \frac{\restrb{f}{F}}{\restrb{g}{F}} \leq \epsilon \\
\iffr & \limsup_{\filterset{F}} \frac{\restrb{f}{F}}{\restrb{g}{F}} = 0.
\end{eqs}
\end{proof}

\ReLocalLinearSmallOmegaByALimit

\begin{proof}
Similarly to \thref{LocalLinearSmallOhByALimit}.
\end{proof}

\chapter{Proofs of Master theorems}
\label{MasterTheorems}

In this section we will show that various Master theorems hold for the $\lohsy$-notation as defined by linear dominance. The theorems are simpler for linear dominance than for asymptotic linear dominance; there are no ``regularity'' requirements (see \cite{IntroAlgo2009}).

\section{Master theorem over powers}
\label{MasterTheoremOverPowersSection}

\begin{definition}[Master function over powers]
Let $a \in \nonnb{\TR}{1}$, $b \in \posib{\TR}{1}$, $d \in \posi{\TR}$, $B = \setb{b^i : i \in \TN}$, and $F \in \rc{B}$. A \definesub{master function}{over powers} is a function $T \in \rc{B}$ defined by the recurrence equation
\begin{equation}
T(n) =
\begin{cases}
a T(n / b) + F(n), & n \geq b, \\
d, & n < b.
\end{cases}
\end{equation}
The set of such functions is denoted by $\pmasters{a}{b}{d}{F}$.
\end{definition}

\begin{theorem}[Logarithm swap]
\label{LogarithmSwap}
Let $x, y, b \in \posi{\TR}$ be such that $b \neq 1$. Then
\begin{equation}
x^{\lgb{b}{y}} = y^{\lgb{b}{x}}.
\end{equation}
\end{theorem}

\begin{proof}
\begin{eqs}
x^{\lgb{b}{y}} & = \bra{b^{\lgb{b}{x}}}^{\lgb{b}{y}} \\
{} & = b^{\lgb{b}{x} \lgb{b}{y}} \\
{} & = \bra{b^{\lgb{b}{y}}}^{\lgb{b}{x}} \\
{} & = y^{\lgb{b}{x}}.
\end{eqs}
\end{proof}

\begin{theorem}[Explicit form for a Master function over powers]
\label{ExplicitFormForMasterFunctionOverPowers}
Let $T \in \pmasters{a}{b}{d}{F}$ be a Master function over powers. Then
\begin{equation}
T(n) = n^{\lgb{b}{a}} d + \sum_{i = 0}^{\lgb{b}{n} - 1} a^i F(n / b^i).
\end{equation}
\end{theorem}

\begin{proof}
\proofpart{Pattern}
Expanding the recurrence, we find that
\begin{eqs}
T(1) & = d, \\
T(b) & = ad + F(b), \\
T(b^2) & = a (ad + F(b)) + F(b^2) \\
{} & = a^2 d + a F(b) + F(b^2), \\
T(b^3) & = a (a^2 d + a F(b) + F(b^2)) + F(b^3) \\
{} & = a^3 d + a^2 F(b) + a F(b^2) + F(b^3).
\end{eqs}

\proofpart{Induction}
This suggests the pattern
\begin{equation}
T(b^m) = a^m d + \sum_{i = 0}^{m - 1} a^i F(b^{m - i}),
\end{equation}
where $m \in \TN$. We prove this by induction. For the base case, $T(1) = T(b^0) = d$. For the induction step, if $m > 0$, then
\begin{eqs}
T(b^m) & = a T(b^{m - 1}) + F(b^m) \\
{} & = a \bra{a^{m - 1} d + \sum_{i = 0}^{m - 2} a^i F(b^{m - 1 - i})} + F(b^m) \\
{} & = a^m d + \sum_{i = 0}^{m - 2} a^{i + 1} F(b^{m - 1 - i}) + F(b^m) \\
{} & = a^m d + \sum_{i = 1}^{m - 1} a^i F(b^{m - i}) + F(b^m) \\
{} & = a^m d + \sum_{i = 0}^{m - 1} a^i F(b^{m - i}).
\end{eqs}
If $n = b^m$, then $m = \lgb{b}{n}$, and
\begin{eqs}
T(n) & = a^{\lgb{b}{n}} d + \sum_{i = 0}^{\lgb{b}{n} - 1} a^i F(n / b^i) \\
{} & = n^{\lgb{b}{a}} d + \sum_{i = 0}^{\lgb{b}{n} - 1} a^i F(n / b^i),
\end{eqs}
where the last step is by \proveby{LogarithmSwap}.
\end{proof}

\begin{theorem}[Difference of powers]
\label{DifferenceOfPowers}
Suppose $\alpha, \beta \in \nonn{\TR}$ are such that $\alpha < \beta$. Then
\begin{eqs}
(n^{\beta} - n^{\alpha}) \in \lthetah{B}{n^{\beta} - 1}.
\end{eqs}
\end{theorem}

\begin{proof}
Since $\alpha < \beta$, we have that $(1 - b^{\alpha - \beta}) > 0$. Then
\begin{eqs}
(n^{\beta} - 1) (1 - b^{\alpha - \beta}) & \leq (n^{\beta} - 1) (1 - n^{\alpha - \beta}) \\
{} & \leq n^{\beta} (1 - n^{\alpha - \beta}) \\
{} & = n^{\beta} - n^{\alpha} \\
{} & \leq n^{\beta} - 1.
\end{eqs}
\end{proof}

\begin{lemma}[Master summation lemma]
\label{MasterSummationLemma}
\srequire{LinearDominance}
Let $a \in \nonnb{\TR}{1}$, $b \in \posib{\TR}{1}$, $c \in \nonn{\TR}$, and $B = \setb{b^i : i \in \TN}$. Let $S \in \rc{B}$ be such that
\begin{eqs}
S(n) = n^c \sum_{i = 0}^{\lgb{b}{n} - 1} (a / b^c)^i.
\end{eqs}
Then
\begin{eqs}
\lgb{b}{a} < c & \implies S \in \lthetah{B}{n^c - 1}, \\
\lgb{b}{a} = c & \implies S \in \lthetah{B}{n^c \lgb{b}{n}}, \\
\lgb{b}{a} > c & \implies S \in \lthetah{B}{n^{\lgb{b}{a}} - 1}.
\end{eqs}
\end{lemma}

\begin{proof}
\sproveby{LinearDominanceImpliesEverything}

\proofpart{$\lgb{b}{a} = c$}
This implies $a / b^c = 1$. Then
\begin{eqs}
n^c \sum_{i = 0}^{\lgb{b}{n} - 1} (a / b^c)^i = n^c \lgb{b}{n}.
\end{eqs}
Therefore $S \in \lthetah{B}{n^c \lgb{b}{n}}$.

\proofpart{$\lgb{b}{a} \neq c$}
Suppose $\lgb{b}{a} \neq c$, and let $\gamma = \frac{1}{(a / b^c) - 1}$. Then
\begin{eqs}
n^c \sum_{i = 0}^{\lgb{b}{n} - 1} (a / b^c)^i & = n^c \frac{(a / b^c)^{\lgb{b}{n}} - 1}{(a / b^c) - 1} \\
{} & = \gamma n^c \bra{a^{\lgb{b}{n}} / n^c - 1} \\
{} & = \gamma \bra{a^{\lgb{b}{n}} - n^c} \\
{} & = \gamma \bra{n^{\lgb{b}{a}} - n^c}.
\end{eqs}
Therefore
\begin{eqs}
S(n) = \gamma \bra{n^{\lgb{b}{a}} - n^c}.
\end{eqs}

\proofpart{$\lgb{b}{a} > c$}
This implies $\gamma > 0$. Then $S \in \lthetah{B}{n^{\lgb{b}{a}} - 1}$ by \thref{DifferenceOfPowers} and \require{Scale}.

\proofpart{$\lgb{b}{a} < c$}
This implies $\gamma < 0$. Then $S \in \lthetah{B}{n^c - 1}$ by \thref{DifferenceOfPowers} and \require{Scale}.
\end{proof}

\begin{theorem}[Master theorem over powers]
\label{MasterTheoremOverPowers}
\srequire{LinearDominance}
Let $T \in \pmasters{a}{b}{d}{F}$ be a Master function over powers, and $F \in \loh{B}{n^c}$, where $c \in \nonn{\TR}$. Then
\begin{eqs}
\lgb{b}{a} < c & \implies T \in \loh{B}{n^c}, \\
\lgb{b}{a} = c & \implies T \in \loh{B}{n^c \lgb{b}{bn}}, \\
\lgb{b}{a} > c & \implies T \in \lthetah{B}{n^{\lgb{b}{a}}}.
\end{eqs}
If $F \in \lthetah{B}{n^c}$, then each $\lohs{B}$ can be replaced with $\lthetahs{B}$.
\end{theorem}

\begin{proof}
\sproveby{LinearDominanceImpliesEverything}

\proofpart{Explicit forms}
By \proveby{ExplicitFormForMasterFunctionOverPowers},
\begin{equation}
T(n) = n^{\lgb{b}{a}} d + \sum_{i = 0}^{\lgb{b}{n} - 1} a^i F(n / b^i),
\end{equation}
Let $S \in \rc{B}$ be such that
\begin{eqs}
S(n) = n^c \sum_{i = 0}^{\lgb{b}{n} - 1} (a / b^c)^i.
\end{eqs}
Since $F \in \loh{B}{n^c}$, by \require{SubsetSum}
\begin{eqs}
\loh{B}{\sum_{i = 0}^{\lgb{b}{n} - 1} a^i F(n / b^i)} & \subset \loh{B}{\sum_{i = 0}^{\lgb{b}{n} - 1} a^i (n / b^i)^c} \\
{} & = \loh{B}{n^c \sum_{i = 0}^{\lgb{b}{n} - 1} (a / b^c)^i} \\
{} & = \loh{B}{S}.
\end{eqs}

\proofpart{$\lgb{b}{a} < c$}
By \thref{MasterSummationLemma},
\begin{eqs}
\loh{B}{S} = \loh{B}{n^c - 1}.
\end{eqs}
By \require{Additive}, \require{Scale}, and since $\lgb{b}{a} < c$,
\begin{eqs}
\loh{B}{T} & = \loh{B}{n^{\lgb{b}{a}}} + \loh{B}{\sum_{i = 0}^{\lgb{b}{n} - 1} a^i F(n / b^i)} \\
{} & \subset \loh{B}{n^{\lgb{b}{a}}} + \loh{B}{n^c - 1} \\
{} & = \loh{B}{n^{\lgb{b}{a}} + n^c - 1} \\
{} & = \loh{B}{n^c}.
\end{eqs}

\proofpart{$\lgb{b}{a} = c$}
By \thref{MasterSummationLemma},
\begin{eqs}
\loh{B}{S} = \loh{B}{n^c \lgb{b}{n}}.
\end{eqs}
By \require{Additive} and \require{Scale},
\begin{eqs}
\loh{B}{T} & = \loh{B}{n^c} + \loh{B}{\sum_{i = 0}^{\lgb{b}{n} - 1} a^i F(n / b^i)} \\
{} & \subset \loh{B}{n^c} + \loh{B}{n^c \lgb{b}{n}} \\
{} & = \loh{B}{n^c + n^c \lgb{b}{n}} \\
{} & = \loh{B}{n^c \lgb{b}{bn}}.
\end{eqs}

\proofpart{$\lgb{b}{a} > c$}
By \thref{MasterSummationLemma},
\begin{eqs}
\loh{B}{S} = \loh{B}{n^{\lgb{b}{a}} - 1}.
\end{eqs}
By \require{Additive} and \require{Scale},
\begin{eqs}
\loh{B}{T} & = \loh{B}{n^{\lgb{b}{a}}} + \loh{B}{\sum_{i = 0}^{\lgb{b}{n} - 1} a^i F(n / b^i)} \\
{} & \subset \loh{B}{n^{\lgb{b}{a}}}  + \loh{B}{n^{\lgb{b}{a}} - 1} \\
{} & = \loh{B}{2n^{\lgb{b}{a}} - 1} \\
{} & = \loh{B}{n^{\lgb{b}{a}}}.
\end{eqs}
By \require{Order}, $\loh{B}{n^{\lgb{b}{a}}} \subset \loh{B}{T}$. Therefore $\loh{B}{T} = \loh{B}{n^{\lgb{b}{a}}}$.

\proofpart{$F \in \lthetah{B}{n^c}$}
Suppose we know that $F \in \lthetah{B}{n^c}$. Then by \property{SubsetSum}
\begin{eqs}
\loh{B}{\sum_{i = 0}^{\lgb{b}{n} - 1} a^i F(n / b^i)} = \loh{B}{S},
\end{eqs}
and we get rid of subsets in the above proofs.
\end{proof}

\section{Master theorem over reals}

\ReMasterFunctionOverReals

\begin{theorem}[Explicit form for a Master function over reals]
\label{ExplicitFormForMasterFunctionOverReals}
Let $t \in \rmasters{a}{b}{d}{f}$ be a Master function over reals. Then
\begin{equation}
t(x) = a^{\floorb{\lgb{b}{x}}} d + \sum_{i = 0}^{\floorb{\lgb{b}{x}} - 1} a^i f(x / b^i).
\end{equation}
\end{theorem}

\begin{proof}
\proofpart{Pattern}
Suppose $b^0 \leq x < b^1$. Then
\begin{equation}
t(x) = d.
\end{equation}
Suppose $b^1 \leq x < b^2$. Then
\begin{eqs}
t(x) & = ad + f(x).
\end{eqs}
Suppose $b^2 \leq x < b^3$. Then
\begin{eqs}
t(x) & = a \bra{ad + f(x / b)} + f(x) \\
{} & = a^2 d + a f(x / b) + f(x).
\end{eqs}
Suppose $b^3 \leq x < b^4$. Then
\begin{eqs}
t(x) & = a \bra{a^2 d + a f(x / b^2) + f(x / b)} + f(x) \\
{} & = a^3 d + a^2 f(x / b^2) + a f(x / b) + f(x).
\end{eqs}

\proofpart{Induction}
This suggests the pattern
\begin{equation}
t(x) = a^m d + \sum_{i = 0}^{m - 1} a^i f(x / b^i),
\end{equation}
where $m \in \TN$ is such that $b^m \leq x < b^{m + 1}$, which is equivalent to $m = \floorb{\lgb{b}{x}}$. That is,
\begin{equation}
t(x) = a^{\floorb{\lgb{b}{x}}} d + \sum_{i = 0}^{\floorb{\lgb{b}{x}} - 1} a^i f(x / b^i).
\end{equation}
This can be proved by induction, as with the analogous theorem for powers. 
\end{proof}

\begin{theorem}[Master reduction from reals to powers]
\label{MasterReductionFromRealsToPowers}
\srequire{LinearDominance}
Suppose $T \in \pmasters{a}{b}{d}{F}$ is a Master function over powers, and $t \in \rmasters{a}{b}{d}{f}$ is a Master function over reals. Then
\begin{eqs}
f \in \loh{\nonnb{\TR}{1}}{F(b^{\floorb{\lgb{b}{x}}})} & \implies t \in \loh{\nonnb{\TR}{1}}{T(b^{\floorb{\lgb{b}{x}}})}, \\
f \in \lomegah{\nonnb{\TR}{1}}{F(b^{\floorb{\lgb{b}{x}}})} & \implies t \in \lomegah{\nonnb{\TR}{1}}{T(b^{\floorb{\lgb{b}{x}}})}.
\end{eqs}
\end{theorem}

\begin{proof}
\sproveby{LinearDominanceImpliesEverything}
\proofpart{Explicit forms}
We have that
\begin{eqs}
T(b^{\floorb{\lgb{b}{x}}}) & = a^{\floorb{\lgb{b}{x}}} d + \sum_{i = 0}^{\floorb{\lgb{b}{x}} - 1} a^i F(b^{\floorb{\lgb{b}{x}}} / b^i), \\
t(x) & = a^{\floorb{\lgb{b}{x}}} d + \sum_{i = 0}^{\floorb{\lgb{b}{x}} - 1} a^i f(x / b^i),
\end{eqs}
by \proveby{ExplicitFormForMasterFunctionOverPowers} and \proveby{ExplicitFormForMasterFunctionOverReals}. 

\proofpart{$\lohsy$-sets}
Note that
\begin{eqs}
b^{\floorb{\lgb{b}{x}}} / b^i = b^{\floorb{\lgb{b}{x / b^i}}}.
\end{eqs}
By \require{Additive} and \require{SubsetSum},
\begin{eqs}
\loh{\nonnb{\TR}{1}}{t} & = \loh{\nonnb{\TR}{1}}{a^{\floorb{\lgb{b}{x}}} d} + \loh{\nonnb{\TR}{1}}{\sum_{i = 0}^{\floorb{\lgb{b}{x}} - 1} a^i f(x / b^i)} \\
{} & \subset \loh{\nonnb{\TR}{1}}{a^{\floorb{\lgb{b}{x}}} d} + \loh{\nonnb{\TR}{1}}{\sum_{i = 0}^{\floorb{\lgb{b}{x}} - 1} a^i F(b^{\floorb{\lgb{b}{x / b^i}}})} \\
{} & = \loh{\nonnb{\TR}{1}}{T(b^{\floorb{\lgb{b}{x}}})}.
\end{eqs}
The proof for $\lomegahsy$-sets is similar.
\end{proof}

\begin{lemma}[Identity equivalent]
\label{IdentityEquivalent}
\begin{eqs}
x \in \lthetah{\nonnb{\TR}{1}}{b^{\floorb{\lgb{b}{x}}}}.
\end{eqs}
\end{lemma}

\begin{proof}
\begin{eqs}
b^{\floorb{\lgb{b}{x}}} & \leq b^{\lgb{b}{x}} \\
{} & = x \\
{} & \leq b^{\floorb{\lgb{b}{x}}} b,
\end{eqs}
for all $x \in \nonnb{\TR}{1}$.
\end{proof}

\ReMasterTheoremOverReals

\begin{proof}
\sproveby{LinearDominanceImpliesEverything}

\proofpart{Reduction to powers}
Let $T \in \pmasters{a}{b}{d}{F}$ be a Master function over powers, where $F(n) = n^c$. By \thref{IdentityEquivalent},
\begin{eqs}
\loh{\nonnb{\TR}{1}}{x} = \loh{\nonnb{\TR}{1}}{b^{\floorb{\lgb{b}{x}}}}.
\end{eqs}
By \property{PowerH},
\begin{eqs}
\loh{\nonnb{\TR}{1}}{x^c} & = \loh{\nonnb{\TR}{1}}{b^{\floorb{\lgb{b}{x}}c}} \\
{} & = \loh{\nonnb{\TR}{1}}{F(b^{\floorb{\lgb{b}{x}}})}.
\end{eqs}
Since $f \in \loh{\nonnb{\TR}{1}}{x^c}$ by assumption,
\begin{eqs}
f \in \loh{\nonnb{\TR}{1}}{F(b^{\floorb{\lgb{b}{x}}})}.
\end{eqs}
By \proveby{MasterReductionFromRealsToPowers},
\begin{eqs}
\loh{\nonnb{\TR}{1}}{t} \subset \loh{\nonnb{\TR}{1}}{T(b^{\floorb{\lgb{b}{x}}})}.
\end{eqs}

\proofpart{$\lgb{b}{a} < c$}
By \thref{MasterTheoremOverPowers},
\begin{eqs}
\loh{B}{T} \subset \loh{B}{n^c}.
\end{eqs}
By \require{SubsetSum},
\begin{eqs}
\loh{\nonnb{\TR}{1}}{T(b^{\floorb{\lgb{b}{x}}})} & \subset \loh{\nonnb{\TR}{1}}{\bra{b^{\floorb{\lgb{b}{x}}}}^c} \\ 
{} & = \loh{\nonnb{\TR}{1}}{x^c}.
\end{eqs}

\proofpart{$\lgb{b}{a} = c$}
By \thref{MasterTheoremOverPowers},
\begin{eqs}
\loh{B}{T} \subset \loh{B}{n^c \lgb{b}{bn}}.
\end{eqs}
By \require{SubsetSum},
\begin{eqs}
\loh{\nonnb{\TR}{1}}{T(b^{\floorb{\lgb{b}{x}}})} & \subset \loh{\nonnb{\TR}{1}}{\bra{b^{\floorb{\lgb{b}{x}}}}^c \lgb{b}{b b^{\floorb{\lgb{b}{x}}}}} \\
{} & = \loh{\nonnb{\TR}{1}}{x^c \lgb{b}{bx}}.
\end{eqs}

\proofpart{$\lgb{b}{a} > c$}
By \thref{MasterTheoremOverPowers},
\begin{eqs}
\loh{B}{T} = \loh{B}{n^{\lgb{b}{a}}}.
\end{eqs}
By \require{SubsetSum},
\begin{eqs}
\loh{\nonnb{\TR}{1}}{t} & = \loh{\nonnb{\TR}{1}}{\bra{b^{\floorb{\lgb{b}{x}}}}^{\lgb{b}{a}}} \\ 
{} & = \loh{\nonnb{\TR}{1}}{x^{\lgb{b}{a}}}.
\end{eqs}

\proofpart{$f \in \lthetah{\nonnb{\TR}{1}}{n^c}$}
If $f \in \lthetah{\nonnb{\TR}{1}}{n^c}$, then the subsets in the above proofs can be replaced with equalities by \thref{MasterTheoremOverPowers}.
\end{proof}

\section{Master theorem over integers}

\ReMasterFunctionOverIntegers

\begin{note}[Stricter requirement]
The requirement $b \geq 2$ is stricter than the requirement $b > 1$ for the other Master theorems; this is needed to avoid the recursion getting stuck to a fixed-point $\geq b$. 
\end{note}

\begin{definition}[Ceiling division]
The \define{ceiling division} is a function  $\function{N}{\nonnb{\TN}{1}}{\nonnb{\TN}{1}}$ such that $N(n) = \ceilb{n / b}$. 
\end{definition}

\begin{definition}[Ceiling division number]
The \define{ceiling division number} is a function $\function{M}{\nonnb{\TN}{1}}{\TN}$ such that $M(n) = \min\setb{i \in \TN : N^{(i)}(n) < b}$.
\end{definition}

\begin{theorem}[Fixed points of the ceiling division]
\label{FixedPointsOfCeilingDivision}
Let $b \in \posib{\TR}{1}$ and
\begin{equation}
F = \setb{n \in \nonnb{\TN}{1} : \ceilb{n / b} = n}.
\end{equation}
Then
\begin{eqs}
F = \setb{n \in \nonnb{\TN}{1} : 1 \leq n < b / (b - 1)}.
\end{eqs}
In addition, $F = \setb{1} \iff b \geq 2$.
\end{theorem}

\begin{proof}
To characterize the fixed points,
\begin{eqs}
{} & n \in F \\
\iffr & \ceilb{n / b} = n \\
\iffr & \exists v \in \TR : -b < v \leq 0 \textrm{ and } nb + v = n \\
\iffr & \exists v \in \TR : -b < v \leq 0 \textrm{ and } v = n(1 - b) \\
\iffr & -b < n(1 - b) \leq 0 \\
\iffr & b / (b - 1) > n \geq 0 \\
\iffr & 0 \leq n < b / (b - 1) \\
\iffr & 1 \leq n < b / (b - 1),
\end{eqs}
where we use a version of Euclidean division and the last step follows because $n \in \nonnb{\TN}{1}$.
Based on this characterization,
\begin{eqs}
{} & F = \setb{1} \\
\iffr & b / (b - 1) \leq 2 \\
\iffr & b \leq 2 (b - 1) \\
\iffr & 0 \leq b - 2 \\
\iffr & b \geq 2.
\end{eqs}
Let us also note that
\begin{eqs}
{} & b / (b - 1) \leq b \\
\iffr & b \leq b (b - 1) \\
\iffr & 1 \leq b - 1 \\
\iffr & b \geq 2.
\end{eqs}
\end{proof}

\begin{theorem}[Explicit form for a Master function over integers]
\label{ExplicitFormForMasterFunctionOverIntegers}
Let $T \in \imasters{a}{b}{d}{F}$ be a Master function over integers. Then
\begin{equation}
T(n) = a^{M(n)} d + \sum_{i = 0}^{M(n) - 1} a^i F(N^{(i)}(n)),
\end{equation}
\end{theorem}

\begin{proof}
The function $M$ is well-defined by \proveby{FixedPointsOfCeilingDivision}, since $b \geq 2$. Let $n \in \nonnb{\TN}{1}$. 

\proofpart{Pattern}
Suppose $M(n) = 0$. Then
\begin{equation}
T(n) = d.
\end{equation}
Suppose $M(n) = 1$. Then
\begin{eqs}
T(n) & = a T(\ceilb{n / b}) + F(n) \\
{} & = a T(N^{(1)}(n)) + F(n) \\
{} & = a d + F(n).
\end{eqs}
Suppose $M(n) = 2$. Then
\begin{eqs}
T(n) & = a T(N^{(1)}(n)) + F(n) \\
{} & = a \bra{a T(\ceilb{N^{(1)}(n) / b}) + F(N^{(1)}(n))} + F(n) \\
{} & = a^2 T(N^{(2)}(n)) + a F(N^{(1)}(n)) + F(n) \\
{} & = a^2 d + a F(N^{(1)}(n)) + F(n).
\end{eqs}
Suppose $M(n) = 3$. Then
\begin{eqs}
T(n) & = a^2 T(N^{(2)}(n)) + a F(N^{(1)}(n)) + F(n) \\
{} & = a^2 \bra{a T(\ceilb{N^{(2)}(n) / b}) + F(N^{(2)}(n))} + a F(N^{(1)}(n)) + F(n) \\
{} & = a^3 T(N^{(3)}(n)) + a^2 F(N^{(2)}(n)) + a F(N^{(1)}(n)) + F(n) \\
{} & = a^3 d + a^2 F(N^{(2)}(n)) + a F(N^{(1)}(n)) + F(n).
\end{eqs}

\proofpart{Induction}
This suggests the pattern
\begin{equation}
T(n) = a^{M(n)} d + \sum_{i = 0}^{M(n) - 1} a^i F(N^{(i)}(n)).
\end{equation}
This can be proved by induction, as in the explicit form for powers.
\end{proof}

\begin{theorem}[Bounds for $N$]
\label{BoundsForN}
Let $n \in \nonnb{\TN}{1}$, and $b \in \nonnb{\TR}{2}$. Then
\begin{eqs}
n / b^i & \leq N^{(i)}(n) < n / b^i + 2.
\end{eqs}
\end{theorem}

\begin{proof}
\proofpart{Pattern}
It holds that $\ceilb{x} < x + 1$ for all $x \in \TR$. Therefore,
\begin{eqs}
N^{(1)}(n) & = \ceilb{n / b} \\
{} & < (n / b) + 1 \\
N^{(2)}(n) & = \ceilb{N^{(1)}(n) / b} \\
{} & \leq \ceilb{((n / b) + 1) / b} \\
{} & = \ceilb{n / b^2 + 1 / b} \\
{} & < n / b^2 + 1 / b + 1 \\
N^{(3)}(n) & = \ceilb{N^{(2)}(n) / b} \\
{} & \leq \ceilb{n / b^3 + 1 / b^2 + 1 / b} \\
{} & < n / b^3 + 1 / b^2 + 1 / b + 1.
\end{eqs}

\proofpart{Induction}
This suggests the pattern
\begin{equation}
N^{(i)}(n) < n / b^i + \sum_{j = 0}^{i - 1} (1 / b)^j.
\end{equation}
This can be proved by induction. Since $b > 1$,
\begin{eqs}
N^{(i)}(n) & < n / b^i + \frac{1 - (1 / b)^i}{1 - (1 / b)} \\
{} & < n / b^i + \frac{1}{1 - (1 / b)}.
\end{eqs}
Since $b \geq 2$, 
\begin{equation}
N^{(i)}(n) < n / b^i + 2.
\end{equation}
Similarly, since $x \leq \ceilb{x}$, for all $x \in \TR$, it can be proved that $n / b^i \leq N^{(i)}(n)$.
\end{proof}

\begin{note}[Ceiling division pitfall]
It holds that $\ceilb{\ceilb{n / a} / b} = \ceilb{n / (ab)}$, for all $a, b \in \posi{\TN}$ and $n \in \TN$. A potential pitfall in the proof of \proveby{BoundsForN} is to assume that this also holds when $a, b \in \posi{\TR}$. A counterexample is given by $n = 6$ and $a = b = 5/2$.
\end{note}

\begin{theorem}[Bounds for $M$]
\label{BoundsForM}
Let $n \in \nonnb{\TN}{1}$, and $b \in \nonnb{\TR}{2}$. Then
\begin{eqs}
\floorb{\lgb{b}{n}} & \leq M(n) \leq \floorb{\lgb{b}{n}} + 2.
\end{eqs}
\end{theorem}

\begin{proof}
By \thref{BoundsForN},
\begin{eqs}
N^{(\floorb{\lgb{b}{n}} + 1)}(n) & < \frac{n}{b^{\floorb{\lgb{b}{n}} + 1}} + 2 \\
{} & < \frac{n}{b^{\lgb{b}{n}}} + 2 \\
{} & = \frac{n}{n} + 2 \\
{} & = 3.
\end{eqs}
This is equivalent to $N^{(\floorb{\lgb{b}{n}} + 1)}(n) \in \setb{1, 2}$. Since $N^{(1)}(2) = \ceilb{2 / b} = 1$, it follows that $M(n) \leq \floorb{\lgb{b}{n}} + 2$. Similarly, it follows that $\floorb{\lgb{b}{n}} \leq M(n)$.
\end{proof}

\begin{theorem}[Multiplicative bounds for $N^{(i)}(n)$]
\label{MultiplicativeBoundsForN}
Let $n \in \nonnb{\TN}{1}$, $b \in \nonnb{\TR}{2}$, and $i \in [0, \floorb{\lgb{b}{n}} + 1]$. Then
\begin{equation}
n / b^i \leq N^{(i)}(n) < 3 b (n / b^i).
\end{equation}
\end{theorem}

\begin{proof}
We would like to find $\beta \in \posi{\TR}$ such that
\begin{eqs}
{} & n / b^i + 2 \leq \beta (n / b^i) \\
\iff \; & \beta \geq 1 + 2b^i / n,
\end{eqs}
For the given argument-sets,
\begin{eqs}
1 + 2b^i / n & \leq 1 + 2b^{\floorb{\lgb{b}{n}} + 1} / n \\
{} & \leq 1 + 2b^{\lgb{b}{n} + 1} / n \\
{} & = 1 + 2b \\
{} & \leq b + 2b \\
{} & = 3b.
\end{eqs}
Therefore $\beta = 3b$ suffices. Then
\begin{eqs}
n / b^i & \leq N^{(i)}(n) \\
{} & < n / b^i + 2 \\
{} & \leq 3b (n / b^i),
\end{eqs}
by \proveby{BoundsForN}.
\end{proof}

\ReMasterTheoremOverIntegers

\begin{proof}
\sproveby{LinearDominanceImpliesEverything}

\proofpart{Explicit form}
By \thref{ExplicitFormForMasterFunctionOverIntegers},
\begin{equation}
T(n) = a^{M(n)} d + \sum_{i = 0}^{M(n) - 1} a^i F(N^{(i)}(n)),
\end{equation}

\proofpart{First term}
By \proveby{BoundsForM},
\begin{eqs}
{} & \floorb{\lgb{b}{n}} \leq M(n) \leq \floorb{\lgb{b}{n}} + 2 \\
\impliesr & a^{\floorb{\lgb{b}{n}}} d \leq a^{M(n)} d \leq a^{\floorb{\lgb{b}{n}} + 2} d \\
\impliesr & a^{\floorb{\lgb{b}{n}}} d \leq a^{M(n)} d \leq a^{\floorb{\lgb{b}{n}}} a^2 d \\
\end{eqs}
By \require{Order} and \require{Scale},
\begin{eqs}
\loh{\nonnb{\TN}{1}}{a^{M(n)} d} = \loh{\nonnb{\TN}{1}}{a^{\floorb{\lgb{b}{n}}}}.
\end{eqs}

\proofpart{Sum term}
Since $F \in \loh{\nonnb{\TN}{1}}{n^c}$, by \require{SubsetSum}
\begin{eqs}
{} & \loh{\nonnb{\TN}{1}}{\sum_{i = 0}^{M(n) - 1} a^i F(N^{(i)}(n))} \\
\air{\subset} & \loh{\nonnb{\TN}{1}}{\sum_{i = 0}^{M(n) - 1} a^i (N^{(i)}(n))^c}.
\end{eqs}
By \proveby{MultiplicativeBoundsForN},
\begin{eqs}
(n / b^i) \leq N^{(i)}(n) < 3b (n / b^i),
\end{eqs}
for all $i \in [0, \floorb{\lgb{b}{n}} + 1]$. By \require{Order} and \require{Scale},
\begin{eqs}
{} & \loh{\nonnb{\TN}{1}}{\sum_{i = 0}^{M(n) - 1} a^i (N^{(i)}(n))^c} \\
\air{=} & \loh{\nonnb{\TN}{1}}{\sum_{i = 0}^{M(n) - 1} a^i (n/b^i)^c}.
\end{eqs}

\proofpart{Combined terms}
We combine the first term and the sum term by \require{Additive}:
\begin{eqs}
\loh{\nonnb{\TN}{1}}{T} \subset \loh{\nonnb{\TN}{1}}{a^{\floorb{\lgb{b}{n}}} + \sum_{i = 0}^{M(n) - 1} a^i (n/b^i)^c}.
\end{eqs}
By \proveby{BoundsForM},
\begin{eqs}
\floorb{\lgb{b}{n}} \leq M(n) \leq \floorb{\lgb{b}{n}} + 2.
\end{eqs}
Therefore, by \require{Order},
\begin{eqs}
{} & \loh{\nonnb{\TN}{1}}{a^{\floorb{\lgb{b}{n}}} + \sum_{i = 0}^{\floorb{\lgb{b}{n}} - 1} a^i (n/b^i)^c} \\
\air{\subset} & \loh{\nonnb{\TN}{1}}{a^{\floorb{\lgb{b}{n}}} + \sum_{i = 0}^{M(n) - 1} a^i (n/b^i)^c} \\
\air{\subset} & \loh{\nonnb{\TN}{1}}{a^{\floorb{\lgb{b}{n}}} + \sum_{i = 0}^{\floorb{\lgb{b}{n}} + 1} a^i (n/b^i)^c}.
\end{eqs}
The last two terms in the sum are given by
\begin{eqs}
{} & a^{\floorb{\lgb{b}{n}}} (n / b^{\floorb{\lgb{b}{n}}})^c + a^{\floorb{\lgb{b}{n}} + 1} (n / b^{\floorb{\lgb{b}{n}} + 1})^c \\
\air{=} & a^{\floorb{\lgb{b}{n}}} (n / b^{\floorb{\lgb{b}{n}}})^c (1 + a / b^c) \\
\air{\in} & \lthetah{\nonnb{\TN}{1}}{a^{\floorb{\lgb{b}{n}}}}.
\end{eqs}
Therefore,
\begin{eqs}
{} & \loh{\nonnb{\TN}{1}}{a^{\floorb{\lgb{b}{n}}} + \sum_{i = 0}^{M(n) - 1} a^i (n/b^i)^c} \\
\air{=} & \loh{\nonnb{\TN}{1}}{a^{\floorb{\lgb{b}{n}}} + \sum_{i = 0}^{\floorb{\lgb{b}{n}} - 1} a^i (n/b^i)^c}.
\end{eqs}

\proofpart{Reduction to reals}
Let $t \in \rmasters{a}{b}{d}{f}$ be a Master function over reals, where $f \in \rc{\posib{\TR}{1}}$ is such that $f(x) = x^c$. By \thref{ExplicitFormForMasterFunctionOverReals},
\begin{eqs}
t(x) = a^{\floorb{\lgb{b}{x}}} + \sum_{i = 0}^{\floorb{\lgb{b}{x}} - 1} a^i (x/b^i)^c.
\end{eqs}
Therefore,
\begin{eqs}
\loh{\nonnb{\TN}{1}}{T} \subset \loh{\nonnb{\TN}{1}}{\restr{t}{\nonnb{\TN}{1}}}.
\end{eqs}
The results follow from \thref{MasterTheoremOverReals}. 

\proofpart{$\thetahsy$-sets}
If $F \in \thetah{\nonnb{\TN}{1}}{n^c}$, then we can get rid of the subsets in the above proof.
\end{proof}

\chapter{Comparison of definitions}
\label{CandidateDefinitions}

In this section we will study various candidate definitions for the $\ohsy$-notation. This highlights the ways in which some familiar candidate definitions fail the primitive properties. The properties of the candidate definitions, proved in this section, are summarized upfront in \taref{tabcomparison}. 

\begin{table}
\small
\begin{tabular}{|l|l|l|l|l|l|l|}
\hline 
Property &
$\lohsy$ &
$\fohsy$ &
$\cohsy$ &
$\pohsy$ &
$\aohsy$
\\
\hline 
\hline 
\textbf{\uproperty{Order}} & 
\multicolumn{4}{l|}{\checkmark \ref{LocalOrderConsistency}} &
\checkmark \ref{AffineOrderConsistency}
\\
\hline 
\uproperty{Reflex} &
\multicolumn{5}{l|}{\checkmark \ref{ReflexivityIsImplied}}
\\
\hline 
\textbf{\uproperty{Trans}} &
\multicolumn{4}{l|}{\checkmark \ref{LocalTransitivity}} &
\checkmark \ref{AffineTransitivity}
\\
\hline 
\uproperty{Orderness} &
\multicolumn{5}{l|}{\checkmark \ref{OrdernessIsImplied}} \\
\hline 
\hline 
\uproperty{Zero} &
\multicolumn{4}{l|}{\checkmark \ref{ZeroSeparationIsImplied}} & 
\xmark \ref{AffineZeroSeparationFails}
\\
\hline 
\uproperty{TrivialZero} &
\checkmark \ref{LocalZeroTrivialityCharacterized} & 
\xmark \ref{LocalZeroTrivialityCharacterized} &
\xmark \ref{LocalZeroTrivialityCharacterized} &
\xmark \ref{LocalZeroTrivialityCharacterized} &
\xmark \ref{ZeroTrivialityFailsForAffineDominance}
\\
\hline 
\textbf{\uproperty{One}} &
\checkmark \ref{LinearOneSeparation} &
\checkmark \ref{CofiniteOneSeparation} &
\checkmark \ref{CoasymptoticOneSeparation} &
\checkmark \ref{AsymptoticOneSeparation} &
\checkmark \ref{AffineOneSeparation}
\\
\hline 
\hline 
\textbf{\uproperty{Scale}} &
\multicolumn{4}{l|}{\checkmark \ref{LocalScaleInvariance}} &
\checkmark \ref{AffinePositiveScaleInvariance}
\\
\hline 
\uproperty{ScalarHom} &
\multicolumn{5}{l|}{\checkmark \ref{ScalarHomogenuityIsImplied}} \\
\hline 
\textbf{\uproperty{NSubHom}} &
\multicolumn{4}{l|}{\checkmark \ref{LocalSubHomogenuity}} & \xmark \ref{AffineSubHomogenuityFails} \\
\hline 
\textbf{\usproperty{NSubDiv}} &
\multicolumn{4}{l|}{\checkmark \ref{LocalSubHomogenuity}} & \checkmark \ref{AffineSubHomogeneityNDiv} \\
\hline 
\uproperty{QSubHom} &
\multicolumn{4}{l|}{\checkmark \ref{QSubhomogenuityIsComposite}} & \xmark \ref{AffineSubHomogenuityFails} \\
\hline 
\uproperty{SubHom} &
\multicolumn{4}{l|}{\checkmark \ref{RSubhomogenuityIsImplied}} & \xmark \ref{AffineSubHomogenuityFails} \\
\hline 
\uproperty{SuperHom} &
\checkmark \ref{LocalSuperHomogenuityCharacterized} & \xmark \ref{LocalSuperHomogenuityCharacterized} & \xmark \ref{LocalSuperHomogenuityCharacterized} & \xmark \ref{LocalSuperHomogenuityCharacterized} & \xmark \ref{AffineSuperHomogenuityFails} \\
\hline 
\uproperty{PowerH} &
\multicolumn{4}{l|}{\checkmark \ref{LocalPowerHomogenuity}} &
\checkmark \ref{AffinePowerHomogenuity}
\\
\hline 
\uproperty{AddCons} &
\multicolumn{5}{l|}{\checkmark \ref{AdditiveConsistencyIsImplied}} \\
\hline 
\usproperty{MultiCons} &
\multicolumn{5}{l|}{\checkmark \ref{MultiplicativeConsistencyIsImplied}} \\
\hline 
\uproperty{MaxCons} &
\multicolumn{5}{l|}{\checkmark \ref{MaximumConsistencyIsImplied}} \\
\hline 
\hline 
\textbf{\uproperty{Local}} &
\multicolumn{4}{l|}{\checkmark \ref{LocalLocality}} & 
\checkmark \ref{AffineLocality}
\\
\hline 
\uproperty{SubMulti} &
\multicolumn{4}{l|}{\checkmark \ref{SubMultiplicativityIsImplied}} & 
\xmark \ref{AffineSubHomogenuityFails}
\\
\hline 
\uproperty{SuperMulti} &
\multicolumn{4}{l|}{\checkmark \ref{LocalSuperMultiplicativity}} & 
\checkmark \ref{AffineSuperMultiplicativity}
\\
\hline 
\uproperty{SubRestrict} &
\multicolumn{5}{l|}{\checkmark \ref{SubRestrictabilityIsImplied}} \\
\hline 
\uproperty{SuperRestrict} &
\multicolumn{5}{l|}{\checkmark \ref{SuperRestrictabilityIsImplied}} \\
\hline 
\uproperty{Maximum} &
\multicolumn{5}{l|}{\checkmark \ref{MaximumIsImplied}} \\
\hline 
\uproperty{Summation} &
\multicolumn{5}{l|}{\checkmark \ref{SummationIsImplied}} \\
\hline 
\uproperty{MaximumSum} &
\multicolumn{5}{l|}{\checkmark \ref{MaximumSumIsImplied}} \\
\hline 
\uproperty{Additive} &
\multicolumn{5}{l|}{\checkmark \ref{AdditivityIsImplied}} \\
\hline 
\usproperty{Translation} &
\multicolumn{5}{l|}{\checkmark \ref{TranslationInvarianceIsImplied}} \\
\hline 
\textbf{\uproperty{SubComp}} &
\checkmark \ref{LinearSubComposability} &
\xmark \ref{CofiniteSubComposabilityFails} &
\xmark \ref{CoasymptoticInjectiveSubComposabilityFails} &
\xmark \ref{AsymptoticInjectiveSubComposabilityFails} &
\checkmark \ref{AffineSubComposability}
\\
\hline 
\usproperty{ISubComp} &
\checkmark \ref{LinearSubComposability}  &
\checkmark \ref{CofiniteInjectiveComposition} &
\xmark \ref{CoasymptoticInjectiveSubComposabilityFails} &
\xmark \ref{AsymptoticInjectiveSubComposabilityFails} &
\checkmark \ref{AffineSubComposability}
\\
\hline 
\usproperty{ISuperComp} &
\checkmark \ref{InjectiveSuperComposabilityIsImplied} &
\checkmark \ref{InjectiveSuperComposabilityIsImplied} &
? &
? &
\checkmark \ref{InjectiveSuperComposabilityIsImplied}
\\
\hline 
\uproperty{Extend} &
\checkmark \ref{LinearExtensibility} &
\xmark \ref{CofiniteExtensibilityFails} &
\xmark \ref{CoasymptoticExtensibilityFails} &
\checkmark \ref{AsymptoticExtensibility} &
\checkmark \ref{AffineExtensibility}
\\
\hline 
\uproperty{SubsetSum} &
\checkmark \ref{SubsetSumIsAnOMapping} &
\xmark \ref{SubsetSumImpliesSubComposability} &
\xmark \ref{SubsetSumImpliesSubComposability} &
\xmark \ref{SubsetSumImpliesSubComposability} &
?
\\
\hline 
\hline 
Universe &
sets &
sets &
$U$ &
$U$ &
sets
\\
\hline 
\end{tabular}
\centering
\caption{Comparison of $\ohsy$-notations. The \checkmark and \xmark mean that we have proved and disproved, respectively, the property. The following number refers to the corresponding theorem. Primitive properties are marked with a bold face, and $U = \bigcup_{d \in \TN} \power{\TR^d}$.}
\label{tabcomparison}
\end{table}

\section{Trivial linear dominance}

\begin{definition}[Trivial linear dominance]
\defineexp{Trivial linear dominance}{linear dominance!trivial} $\trohsy$ is defined by
\begin{eqs}
\trohx{f} = \rc{X},
\end{eqs}
for all $f \in \rc{X}$, and all $X \in U$, where $U$ is the class of all sets.
\end{definition}

\begin{theorem}[Trivial linear dominance is local linear dominance]
\label{TrivialLinearDominanceIsLocalLinearDominance}
The family $\filtersetsym = \setb{\emptyset : X \in U}$ is a family of filter bases with induced sub-structure.
\end{theorem}

\begin{proof}
\proofpart{Directedness}
Let $A, B \in \filterset{X}$, where $X \in U$. Then $A = \emptyset = B$, and we may choose $C \coloneqq \emptyset \in \filterset{X}$, so that $C \subset A \cap B$. 

\proofpart{Induced sub-structure}
Let $D \subset X$, where $X \in U$. Then $\filterset{D} = \setb{\emptyset} = \setb{A \cap D : A \in \filterset{X}}$.  	
\end{proof}

\begin{theorem}[\uproperty{One} fails for $\trohsy$]
\label{TrivialOneSeparationFails}
$\trohsy$ does not have \property{One}.
\end{theorem}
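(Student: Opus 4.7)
The plan is to invoke the immediately preceding result \proveby{TrivialEquivalent}, which established that $\tohx{f} = \rc{X}$ for every $f \in \rc{X}$ and every $X \in U$. Specializing this to $X = \posi{\TN}$ and $f = 1$, we get $\toh{\posi{\TN}}{1} = \rc{\posi{\TN}}$.

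Next I would observe that the identity function $n \in \rc{\posi{\TN}}$ (i.e.\ $n \mapsto n$) is a non-negative real-valued function on $\posi{\TN}$, hence an element of $\rc{\posi{\TN}}$. Therefore $n \in \toh{\posi{\TN}}{1}$, which directly contradicts the requirement $n \not\in \oh{\posi{\TN}}{1}$ in the definition of \property{One}. Thus $\tohsy$ does not have \property{One}.

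There is no real obstacle here: the theorem is an immediate corollary of \proveby{TrivialEquivalent}, and the proof reduces to a one-line specialization and membership check. The only subtlety worth noting in the write-up is to make explicit which set and which function are being used as the counterexample, so that the failure of the primitive property \property{One} is unambiguously witnessed.
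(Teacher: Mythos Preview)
Your proposal is correct and matches the paper's own proof essentially verbatim: the paper simply notes that $n \in \rc{\posi{\TN}} = \toh{\posi{\TN}}{1}$ by \proveby{TrivialEquivalent}, which is exactly your specialization-and-membership argument.
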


\begin{proof}
It holds that $n \in \rc{\posi{\TN}} = \troh{\posi{\TN}}{1}$.
\end{proof}

\begin{theorem}[\uproperty{SubComp} for $\trohsy$]
\label{TrivialSubComposability}
$\trohsy$ has \property{SubComp}.
\end{theorem}

\begin{proof}
The right side of the implication
\begin{eqs}
(f \circ s) \in \oh{Y}{g \circ s}
\end{eqs}
holds by the definition of trivial dominance for all $f, g \in \rc{X}$ and $\function{s}{Y}{X}$.
\end{proof}

\begin{note}[Trivial linear dominance has lots of nice properties?]
Trivial linear dominance satisfies all of the desirable properties, except those of non\-/triviality. This underlines the importance of the non\-/triviality properties.
\end{note}

\begin{note}[Equivalent definitions]
Suppose local linear dominance is defined so that $\emptyset \in \filterset{X}$, for all $X \in U$. Then it is equivalent to trivial linear dominance.
\end{note}

\section{Asymptotic linear dominance}
\label{AsymptoticLinearDominance}

Recall the definition of asymptotic linear dominance from \sref{Introduction}:
\defineasymptotic

\begin{theorem}[Asymptotic linear dominance is local linear dominance]
\label{AsymptoticLinearDominanceIsLocalLinearDominance}
The family $\filtersetsym = \setb{\setb{\nonnb{X}{y} : y \in \TR^d} : X \in U}$ is a family of filter bases with induced sub-structure.
\end{theorem}

\begin{proof}
\proofpart{Directedness}
Let $X \subset \TR^d$, and $A, B \in \filterset{X}$. Then there exist $y_1, y_2 \in \TR^d$, such that $A = \nonnb{X}{y_1}$, and $B = \nonnb{X}{y_2}$. Let $y \coloneqq \sup_{\leq}\setb{y_1, y_2}$, and $C \coloneqq \nonnb{X}{y}$. Then $C \in \filterset{X}$, and $C \subset A \cap B$. 

\proofpart{Induced sub-structure}
Let $D \subset X \subset \TR^d$. Then
\begin{eqs}
\filterset{D} & = \setb{\nonnb{D}{y} : y \in \TR^d} \\
{} & = \setb{\nonnb{X}{y} \cap D : y \in \TR^d}.
\end{eqs}
\end{proof}

\begin{note}[Intrinsic pitfall]
\label{AsymptoticIntrinsicPitfall}
Let us consider a variant of asymptotic linear dominance, where $\filterset{X} = \setb{\nonnb{X}{y} : y \in X}$, when $X \subset \TR^d$. We call this an \emph{intrinsic} definition --- in contrast to the extrinsic ($y \in \TR^d$) definition we have given. Then $\filterset{X}$ is not a filter basis, and the definition fails many properties.

An example is given by $X = (-\infty, 1) \times \TR \cup [1, \infty) \times (-\infty, 1)$, $y_1 = (0, 1) \in X$, and $y_2 = (1, 0) \in X$. Then $\nonnb{X}{y_1} \cap \nonnb{X}{y_2} = \emptyset$, but there is no $y_3 \in X$ such that $\nonnb{X}{y_3} = \emptyset$.
\end{note}

\begin{theorem}[\uproperty{One} for $\pohsy$]
\label{AsymptoticOneSeparation}
$n \not\in \poh{\posi{\TN}}{1}$.
\end{theorem}

\begin{proof}
$\pohsy$ has \property{One} if and only if $\bra{\forall A \in \filterset{\posi{\TN}} : |A| \not\in \TN}$ by \proveby{LocalOneSeparationCharacterized}. Let $X \coloneqq \posi{\TN}$ and $A \in \filterset{X}$. Then there exists $y \in \TN$ such that $A = \nonnb{X}{y}$. Let $\function{f}{A}{\TN}$ be such that $f(x) = x - y$. Let $x_1, x_2 \in A$. Then
\begin{eqs}
{} \quad & f(x_1) = f(x_2) \\
{} \iffr &  x_1 - y = x_2 - y \\
{} \iffr &  x_1 = x_2.
\end{eqs}
Therefore $f$ is injective. Let $z \in \TN$. Then $f(z + y) = (z + y) - y = z$. Therefore $f$ is surjective. Since $f$ is a bijection between $A$ and $\TN$, $|A| = |\TN| \not\in \TN$. Therefore \prove{One} holds for $\pohsy$.
\end{proof}

\begin{theorem}[\uproperty{SubComp} for $\pohsy$ for fixed $\function{s}{Y}{X}$]
\label{AsymptoticCompositionForFixedS}
Let $d_x, d_y \in \posi{\TN}$, $X \in U$ be such that $X \subset \TR^{d_x}$, $Y \in U$ be such that $Y \subset \TR^{d_y}$, and $\function{s}{Y}{X}$. Then \property{SubComp} holds for $\pohsy$ and $s$ if and only if
\begin{equation}
\forall x^* \in \TR^{d_x}, \exists y^* \in \TR^{d_y}: \image{s}{\nonnb{Y}{y^*}} \subset \nonnb{X}{x^*}.
\end{equation}
\end{theorem}

\begin{proof}
Substitute the filter bases of $\pohsy$ into \proveby{LocalSubComposabilityForFixedS}.
\end{proof}

\begin{theorem}[\uproperty{ISubComp} fails for $\pohsy$ in $\TN^2$]
\label{AsymptoticInjectiveSubComposabilityFails}
$\pohsy$ does not have \property{ISubComp} from $\TN^2$ to $\TN$.
\end{theorem}

\begin{proof}
Let $X \coloneqq \TN^2$, $x^* \coloneqq (1, 0) \in X$, $Y \coloneqq \TN$, $y^* \in \TN$, and $\function{s}{Y}{X}$ be such that $s(n) = (0, n)$. Then $\image{s}{\nonnb{Y}{y^*}} \cap \nonnb{X}{x^*} = \emptyset$. Since $\image{s}{\nonnb{Y}{y^*}}$ is non-empty, $\image{s}{\nonnb{Y}{y^*}} \not\subset \nonnb{X}{x^*}$. \uproperty{ISubComp} fails by \thref{AsymptoticCompositionForFixedS}.
\end{proof}

\begin{theorem}[\uproperty{ISubComp} fails for $\pohsy$ in $\TZ$]
\label{AsymptoticInjectiveSubComposabilityFailsInZ}
$\pohsy$ does not have \property{ISubComp} from $\TZ$ to $\TN$.
\end{theorem}

\begin{proof}
Let $X \coloneqq \TZ$, $Y = \TN$, $x^* \coloneqq 0 \in X$, and $\function{s}{Y}{X}$ be such that $s(n) = -(n + 1)$. Then $\image{s}{\nonnb{Y}{y^*}} \cap \nonnb{X}{x^*} = \emptyset$, for all $y^* \in Y$. Since $\image{s}{\nonnb{Y}{y^*}}$ is non-empty, $\image{s}{\nonnb{Y}{y^*}} \not\subset \nonnb{X}{x^*}$. \uproperty{ISubComp} fails by \thref{AsymptoticCompositionForFixedS}.
\end{proof}

\begin{theorem}[\uproperty{Extend} for $\pohsy$]
\label{AsymptoticExtensibility}
$\pohsy$ has \prove{Extend}.
\end{theorem}

\begin{proof}
Let $A \in \filterset{X}$ and $B = A \times Y$. Then $B \in \filterset{X \times Y}$, and $\image{\projections{X}{Y}}{B} = A$. The result follows by \thref{LocalExtensibilityCharacterized}. \sprove{Extend}
\end{proof}

\section{Co-asymptotic linear dominance}
\label{CoasymptoticLinearDominance}

Recall the definition of co-asymptotic linear dominance from \sref{Introduction}:
\definecoasymptotic

\begin{theorem}[Co-asymptotic linear dominance is local linear dominance]
\label{CoasymptoticLinearDominanceIsLocalLinearDominance}
The family $\filtersetsym = \setb{\setb{\nlb{X}{y} : y \in \TR^d} : X \in U}$ is a family of filter bases with induced sub-structure.
\end{theorem}

\begin{proof}
\proofpart{Directedness}
Let $X \subset \TR^d$, and $A, B \in \filterset{X}$. Then there exists $y_1, y_2 \in \TR^d$, such that $A = \nlb{X}{y_1}$ and $B = \nlb{X}{y_2}$. Let $y \coloneqq \sup_{\leq}\setb{y_1, y_2}$, and $C \coloneqq \nlb{X}{y}$. Then $C \in \filterset{X}$ and $C \subset A \cap B$. 

\proofpart{Induced sub-structure}
Let $D \subset X \subset \TR^d$. Then
\begin{eqs}
\filterset{D} & = \setb{\nlb{D}{y} : y \in \TR^d} \\
{} & = \setb{\nlb{X}{y} \cap D : y \in \TR^d}. 
\end{eqs}
\end{proof}

\begin{note}[Intrinsic pitfall]
The same intrinsic pitfall as described in \nref{AsymptoticIntrinsicPitfall} for asymptotic linear dominance also holds for co-asymptotic linear dominance.
\end{note}

\begin{theorem}[\uproperty{One} for $\cohsy$]
\label{CoasymptoticOneSeparation}
$n \not\in \coh{\posi{\TN}}{1}$.
\end{theorem}

\begin{proof}
The proof is the same as in \proveby{AsymptoticOneSeparation}, since co-asymptotic linear dominance is equivalent to asymptotic linear dominance in $\posi{\TN}$.
\end{proof}

\begin{theorem}[Composability for $\cohsy$ for fixed $\function{s}{Y}{X}$]
\label{CoasymptoticCompositionForFixedS}
Let $d_x, d_y \in \posi{\TN}$, $X \in U$ be such that $X \subset \TR^{d_x}$, $Y \in U$ be such that $Y \subset \TR^{d_y}$, and $\function{s}{Y}{X}$. Then composability holds for $\cohsy$ and $s$ if and only if
\begin{equation}
\forall x^* \in \TR^{d_x}, \exists y^* \in \TR^{d_y}: \image{s}{\nlb{Y}{y^*}} \subset \nlb{X}{x^*}.
\end{equation}
\end{theorem}

\begin{proof}
Substitute the filter bases of $\cohsy$ into \proveby{LocalSubComposabilityForFixedS}.
\end{proof}

\begin{theorem}[\uproperty{ISubComp} fails for $\cohsy$ in $\TZ$]
\label{CoasymptoticInjectiveSubComposabilityFails}
$\cohsy$ does not have \property{ISubComp}.
\end{theorem}

\begin{proof}
Let $X \coloneqq \TZ$, $Y = \TN$, $x^* \coloneqq 0 \in X$, and $\function{s}{Y}{X}$ be such that $s(n) = -(n + 1)$. Then $\image{s}{\nlb{Y}{y^*}} \cap \nlb{X}{x^*} = \emptyset$, for all $y^* \in Y$. Since $\image{s}{\nlb{Y}{y^*}}$ is non-empty, $\image{s}{\nlb{Y}{y^*}} \not\subset \nlb{X}{x^*}$. \uproperty{ISubComp} fails by \thref{CoasymptoticCompositionForFixedS}.
\end{proof}

\begin{theorem}[\uproperty{Extend} fails for $\cohsy$]
\label{CoasymptoticExtensibilityFails}
$\cohsy$ does not have \property{Extend}.
\end{theorem}

\begin{proof}
Let $A = \posi{\TN}$. For any $B \in \filterset{\TN^2}$, $0 \in \image{\projections{\TN}{\TN}}{B}$, but $0 \not\in A$. \uproperty{Extend} fails by \thref{LocalExtensibilityCharacterized}.
\end{proof}

\begin{theorem}[$\cohsy$ is equal to $\fohsy$ in $\TN^d$]
\label{CoasymptoticIsEquivalentToCofiniteInN}
\begin{eqs}
\cohs{X} = \fohs{X},
\end{eqs}
for all $X \subset \TN^d$.
\end{theorem}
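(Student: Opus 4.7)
The plan is to show the two containments separately by translating between the filter-sets defining $\cohsy$ and $\fohsy$ on $X \subset \TN^d$. By definition, $\filterset{X}$ for $\cohsy$ consists of the sets $\nlb{X}{y} = X \setminus X^{<y}$ for $y \in \TR^d$, while for $\fohsy$ it consists of all cofinite subsets of $X$. So it suffices to prove that (a) every set of the form $\nlb{X}{y}$ is cofinite in $X$, and (b) every cofinite subset of $X$ contains some $\nlb{X}{y}$. Together with the common dominance condition $g \leq c f$ on the filter element, this gives the equality $\cohx{f} = \fohx{f}$ for each $f \in \rc{X}$.

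For (a), fix $y \in \TR^d$ and observe that $X^{<y} = \setbra{x \in X : \forall i \in [1,d] : x_i < y_i}$. Since $X \subset \TN^d$, this set is contained in $\prod_{i = 1}^{d} \bra{\TN \cap [0, y_i)}$, each factor of which is finite. Hence $X^{<y}$ is finite and $\nlb{X}{y}$ is cofinite in $X$.

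For (b), let $A \in \cpowerset{X}$ and put $F := X \setminus A$, which is finite. If $F = \emptyset$, any $y \in \TR^d$ with $y_i \leq 0$ gives $X^{<y} = \emptyset$, hence $\nlb{X}{y} = X \subset A$. Otherwise, define $y \in \TR^d$ componentwise by $y_i := \max_{x \in F} x_i + 1$. Then every $x \in F$ satisfies $x_i < y_i$ for all $i$, so $F \subset X^{<y}$, which is equivalent to $\nlb{X}{y} \subset A$.

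Combining these two facts, if $g \in \cohx{f}$ then there exist $c \in \posi{\TR}$ and $y \in \TR^d$ with $\restryd{g}{y} \leq c \restryd{f}{y}$; by (a) the set $\nlb{X}{y}$ is cofinite, so $g \in \fohx{f}$. Conversely, if $g \in \fohx{f}$ then there exist $c \in \posi{\TR}$ and cofinite $A$ with $\restrb{g}{A} \leq c \restrb{f}{A}$; by (b) we can find $y$ with $\nlb{X}{y} \subset A$, and restricting the inequality yields $g \in \cohx{f}$. The only subtle step is the boundary case $F = \emptyset$ in (b), which requires admitting real values of $y$ with $y_i \leq 0$ (this is why the definition uses $y \in \TR^d$ rather than $y \in \TN^d$); no deeper obstacle arises.
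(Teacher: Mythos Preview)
Your proof is correct and follows essentially the same two-inclusion argument as the paper: showing each $\nlb{X}{y}$ is cofinite, and conversely that every cofinite $A$ contains some $\nlb{X}{y}$ by taking $y$ to be one more than the componentwise maximum of $X \setminus A$. Your treatment is in fact more careful than the paper's (you spell out why $X^{<y}$ is finite and handle the $F = \emptyset$ case explicitly); only your closing parenthetical is slightly off, since $y = 0 \in \TN^d$ already gives $X^{<y} = \emptyset$, so real $y$ is not actually needed for that edge case.
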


\begin{proof}
Let $d \in \posi{\TN}$, $X \subset \TN^d$, $f \in \rc{X}$, and $x \in \TN^d$. Then $\nlb{X}{x} \in \cpower{X}$, and therefore $\coh{X}{f} \subset \foh{X}{f}$. Let $A \in \cpower{X}$, and $y = 1 + \sup_{\leq} \bra{X \setminus A}$. Then $\nlb{X}{y} \subset A$, and therefore $\foh{X}{f} \subset \coh{X}{f}$.
\end{proof}

\begin{theorem}[\uproperty{IComp} holds for $\cohsy$ in $\TN^d$]
\label{CoasymptoticInjectiveComposability}
$\cohsy$ has \property{IComp} in $\TN^d$.
\end{theorem}

\begin{proof}
This follows from \proveby{CoasymptoticIsEquivalentToCofiniteInN} and \proveby{CofiniteInjectiveComposition}.
\end{proof}

\section{Cofinite linear dominance}
\label{CofiniteLinearDominance}

Recall the definition of cofinite linear dominance from \sref{Introduction}:
\definecofinite

\begin{theorem}[Cofinite linear dominance is local linear dominance]
\label{CofiniteLinearDominanceIsLocalLinearDominance}
The family $\setb{\cpower{X} : X \in U}$ is a family of filter bases with induced sub-structure.
\end{theorem}

\begin{proof}
\proofpart{Directedness}
Let $A, B \in \filterset{X}$, where $X \in U$. Then we may choose $C \coloneqq A \cap B \in \filterset{X}$, so that $C \subset A \cap B$. 

\proofpart{Induced sub-structure}
Let $D \subset X$, where $X \in U$. If $B \in \filterset{D}$, then $B = D \cap (X \setminus (D \setminus B))$, and $(X \setminus (D \setminus B)) \in \filterset{X}$. If $A \in \filterset{X}$, then $A \cap D \in \filterset{D}$. 
\end{proof}

\begin{note}[Fr\'echet filter]
The filter basis for cofinite linear dominance is also known as the Fr\'echet filter.
\end{note}

\begin{theorem}[\uproperty{One} for $\fohsy$]
\label{CofiniteOneSeparation}
$\fohsy$ has \property{One}.
\end{theorem}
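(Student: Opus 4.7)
The plan is to reduce to the already-proved characterization \thref{LocalSuperHomogenuityCharacterized}\dots\ wait, specifically to \thref{LocalOneSeparationCharacterized}, which states that a local linear dominance $\rohsy$ has \property{One} if and only if every set in $\filterset{\posi{\TN}}$ is infinite. Since cofinite linear dominance is a local linear dominance with filter\-/set $\filterset{X} = \cpowerset{X}$, it suffices to show that every cofinite subset of $\posi{\TN}$ is infinite.

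To this end, I would argue by contradiction. Suppose $A \in \cpowerset{\posi{\TN}}$ were finite. Then $\posi{\TN} = A \cup (\posi{\TN} \setminus A)$ is a union of two finite sets (the second is finite by definition of $\cpowerset{\posi{\TN}}$), hence itself finite --- contradicting the fact that $\posi{\TN}$ is infinite. Therefore every $A \in \cpowerset{\posi{\TN}}$ satisfies $|A| \notin \TN$, and \thref{LocalOneSeparationCharacterized} yields $n \not\in \foh{\posi{\TN}}{1}$, i.e.\ \property{One}.

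The proof is essentially a one\-/liner once the local\-/linear\-/dominance machinery is in place; there is no real obstacle, merely the bookkeeping of invoking the right characterization and noting $\filterset{X} = \cpowerset{X}$ for $\fohsy$. The only potential subtlety is making sure we are working in $\posi{\TN}$ rather than $\TN$, but this makes no difference since $\posi{\TN}$ is also infinite.
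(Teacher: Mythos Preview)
Your proof is correct and follows essentially the same approach as the paper: both invoke \thref{LocalOneSeparationCharacterized} and then verify that every cofinite subset of $\posi{\TN}$ is infinite. The only cosmetic difference is that the paper exhibits an explicit bijection $f : A \to \TN$, $f(n) = |A^{< n}|$, whereas you argue by contradiction via the union of two finite sets; both are equally valid, though you should clean up the opening sentence where you cite the wrong theorem before correcting yourself.
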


\begin{proof}
$\fohsy$ has \property{One} if and only if $\bra{\forall A \in \filterset{\posi{\TN}} : |A| \not\in \TN}$ by \proveby{LocalOneSeparationCharacterized}. Let $X \coloneqq \posi{\TN}$ and $A \in \filterset{X} = \cpower{X}$. Let $\function{f}{A}{\TN}$ be such that $f(n) = |A^{< n}|$. Then $f$ is a bijection between $A$ and $\TN$, and so $|A| = |\TN| \not\in \TN$. Therefore \prove{One} holds for $\fohsy$.
\end{proof}

\begin{theorem}[\uproperty{SubComp} for $\fohsy$ for fixed $\function{s}{Y}{X}$]
\label{CofiniteCompositionForFixedS}
Let $X, Y \in U$, and $\function{s}{Y}{X}$. Then $\fohsy$ has \prove{SubComp} for $s$ if and only if $s$ is finite-to-one:
\begin{equation}
\forall x \in X: |\preimage{s}{\setb{x}}| \in \TN.
\end{equation}
\end{theorem}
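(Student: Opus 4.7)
The plan is to apply \thref{LocalSubComposabilityForFixedS} with the filter-set $\filterset{X} = \cpowerset{X}$ and then translate the resulting existence statement into the finiteness condition on preimages. By that theorem, $\fohsy$ has \property{SubComp} for $s$ if and only if
\[
\forall A_X \in \cpowerset{X} : \exists A_Y \in \cpowerset{Y}: s(A_Y) \subset A_X,
\]
so everything reduces to showing this is equivalent to $\forall A \in \cpowerset{X}: |s^{-1}(X \setminus A)| \in \TN$.

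The key set-theoretic identity I would use is $s(A_Y) \subset A_X \iff A_Y \subset s^{-1}(A_X)$, combined with $s^{-1}(A_X) = Y \setminus s^{-1}(X \setminus A_X)$.

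For the forward direction, I would take an arbitrary $A \in \cpowerset{X}$, obtain $A_Y \in \cpowerset{Y}$ with $s(A_Y) \subset A$, and deduce $s^{-1}(X \setminus A) \subset Y \setminus A_Y$, which is finite by assumption, so $|s^{-1}(X \setminus A)| \in \TN$. For the reverse direction, given $A_X \in \cpowerset{X}$, I would set $A_Y := s^{-1}(A_X)$; its complement $Y \setminus A_Y = s^{-1}(X \setminus A_X)$ is finite by hypothesis, so $A_Y \in \cpowerset{Y}$, and clearly $s(A_Y) = s(s^{-1}(A_X)) \subset A_X$.

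There is no significant obstacle; the only thing to be careful about is the bookkeeping between $s$, $s^{-1}$, images, preimages, and set complements, which is routine.
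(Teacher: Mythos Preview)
Your proposal is correct and follows essentially the same route as the paper: reduce via \thref{LocalSubComposabilityForFixedS} to the filter-set condition, then handle both directions by the identity $Y \setminus s^{-1}(A_X) = s^{-1}(X \setminus A_X)$ with $A_Y := s^{-1}(A_X)$ as the witness. The only cosmetic difference is that the paper argues the forward direction by contraposition (assuming $s^{-1}(X\setminus A_X)$ is infinite and showing no cofinite $A_Y$ can satisfy $s(A_Y)\subset A_X$), whereas your direct containment $s^{-1}(X\setminus A)\subset Y\setminus A_Y$ is slightly cleaner.
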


\begin{proof}
The $\fohsy$ has \property{SubComp} for $s$ if and only if
\begin{equation}
\forall A_X \in \filterset{X}, \exists A_Y \in \filterset{Y}: \image{s}{A_Y} \subset A_X
\end{equation}
by \proveby{LocalSubComposabilityForFixedS}. Substituting the filter bases of $\fohsy$, this is equivalent to
\begin{equation}
\forall A_X \in \cpower{X}, \exists A_Y \in \cpower{Y}: \image{s}{A_Y} \subset A_X.
\end{equation}
We will show equivalence to this formula.

\proofpart{$\implies$}
Suppose there exists $x \in X$ such that $|\preimage{s}{\setb{x}}| \not\in \TN$. Then $|Y| \not\in \TN$, and therefore $\emptyset \not\in \cpower{Y}$. Let $A_X = X \setminus \setb{x} \in \cpower{X}$, and $A_Y \in \cpower{Y}$. Then $A_Y \neq \emptyset$, and  $A_Y \cap \preimage{s}{\setb{x}} \neq \emptyset$, which is equivalent to $x \in \image{s}{A_Y}$. Therefore $\image{s}{A_Y} \not\subset A_X$; $\fohsy$ does not have \property{SubComp} for $s$.

\proofpart{$\impliedby$}
Let $A_X \in \cpower{X}$ be such that $X \setminus A_X = \setb{x_1, \dots, x_n}$, and $A_Y = \preimage{s}{A_X}$. Then $\image{s}{A_Y} = A_X$, and
\begin{eqs}
|Y \setminus A_Y| & = |Y \setminus \preimage{s}{A_X}| \\
{} & = |\preimage{s}{X \setminus A_X}| \\
{} & = |\preimage{s}{\setb{x_1, \dots, x_n}}| \\
{} & \leq |\preimage{s}{\setb{x_1}}| + \cdots + |\preimage{s}{\setb{x_n}}| \\
{} & \in \TN.
\end{eqs}
Therefore $A_Y \in \cpower{Y}$; $\fohsy$ has \prove{SubComp} for $s$.
\end{proof}

\begin{theorem}[\uproperty{SubComp} for $\fohsy$ for positive $f \in \rc{X}$]
\label{CofiniteCompositionPositive}
Let $X, Y \in U$, $\function{s}{Y}{X}$, and $f \in \posi{\rc{X}}$. Then
\begin{equation}
\foh{X}{f} \circ s \subset \foh{X}{f \circ s}.
\end{equation}
\end{theorem}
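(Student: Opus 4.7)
The plan is to show that the assumption $f > 0$ lets us upgrade any cofinite linear bound into a \emph{global} linear bound on $X$; once the bound holds everywhere, composing with $s$ preserves it trivially, and we can choose the cofinite witness in $Y$ to be $Y$ itself. (Note that the right-hand side should read $\foh{Y}{f \circ s}$, since $f \circ s \in \rc{Y}$.)

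First I would unpack the hypothesis. Let $\hat{f} \in \foh{X}{f}$. By definition there exist $c \in \posi{\TR}$ and $A \in \cpowerset{X}$ with $\restrb{\hat{f}}{A} \leq c \restrb{f}{A}$. Set $F := X \setminus A$, which is finite by cofiniteness of $A$. If $F = \emptyset$ then $A = X$ and $\hat{f} \leq c f$ globally; otherwise, since $f > 0$ and $F$ is finite and nonempty, the quantity $m := \min f(F) \in \posi{\TR}$ is well-defined and strictly positive, and $M := \max \hat{f}(F) \in \nonn{\TR}$ is well-defined. Then for every $x \in F$,
\begin{equation}
\hat{f}(x) \leq M = \frac{M}{m} \cdot m \leq \frac{M}{m} f(x).
\end{equation}

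Next I would combine the two bounds. Let $d := \max(c, M/m)$ (or $d := c$ in the $F = \emptyset$ case, and $d := c$ if $M = 0$). Then $\hat{f} \leq d f$ holds on $A$ (by the original bound, since $c \leq d$) and on $F$ (by the calculation above), hence on all of $X$. Composing with $s$ yields $\hat{f} \circ s \leq d (f \circ s)$ on all of $Y$. Taking the cofinite witness $Y \in \cpowerset{Y}$ (which is always in $\cpowerset{Y}$ since $Y \setminus Y = \emptyset$ is finite), we conclude $\hat{f} \circ s \in \foh{Y}{f \circ s}$, as desired.

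The main obstacle is recognizing that positivity of $f$ on the finite exceptional set $F$ is exactly what lets us absorb that finite region into the linear bound by choosing a larger constant. Without positivity (e.g.\ if $f$ had zeros in $F$) the ratio $M/m$ would be undefined or infinite, which is precisely the failure mode illustrated in \eref{SubcomposabilityDoesNotHoldExample}. No subtle analysis is needed beyond this, so the proof is essentially a direct estimate.
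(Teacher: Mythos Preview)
Your proof is correct and follows essentially the same approach as the paper: upgrade the cofinite linear bound to a global one by exploiting positivity of $f$ on the finite exceptional set $F = X \setminus A$, then compose. The paper's version is marginally more compact, defining $\alpha = \max\bigl(\{\hat{f}(x)/(cf(x))\}_{x \in X \setminus A} \cup \{1\}\bigr)$ so that $\hat{f} \leq \alpha c f$ globally, but the idea and structure are identical to yours (and your observation about the typo in the codomain is correct).
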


\begin{proof}
It holds that $\card{X \setminus A} < \infty$, for all $A \in \filterset{X}$, $X \in U$. The claim follows from \thref{LocalSubComposabilityForPositiveFunctions}.
\end{proof}

\begin{theorem}[\uproperty{SubComp} fails for $\fohsy$]
\label{CofiniteSubComposabilityFails}
$\fohsy$ does not have \property{SubComp}.
\end{theorem}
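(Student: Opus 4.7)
The plan is to invoke \thref{CofiniteCompositionForFixedS}, which characterises when $\fohsy$ has \property{SubComp} for a fixed $s : Y \to X$: the property holds for $s$ if and only if $|s^{-1}(X \setminus A)| \in \TN$ for every cofinite $A \in \cpowerset{X}$. To disprove the universal property, it therefore suffices to exhibit a single triple $(X, Y, s)$ together with one $A \in \cpowerset{X}$ for which this cardinality condition is violated.

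The natural counterexample is the one already identified in \eref{SubcomposabilityDoesNotHoldExample}: I would set $X = Y = \TN$ and define $s : \TN \to \TN$ by $s(n) = 0$ for $n \in 2\TN$ and $s(n) = n$ for $n \in 2\TN + 1$. The choice of $A$ is then dictated by the collapse performed by $s$: take $A = \TN \setminus \setbra{0}$, which lies in $\cpowerset{\TN}$ because its complement $\setbra{0}$ is finite. Its preimage satisfies $s^{-1}(\TN \setminus A) = s^{-1}(\setbra{0}) = 2\TN$, whose cardinality is not a natural number. By \thref{CofiniteCompositionForFixedS}, $\fohsy$ therefore fails \property{SubComp} for this particular $s$, and that suffices to conclude that $\fohsy$ does not satisfy \property{SubComp} as a property of the whole $\ohsy$\-/notation.

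I do not anticipate any technical obstacle: \thref{CofiniteCompositionForFixedS} has already reduced the question to a purely set\-/theoretic cardinality check, and the collapsing map from \eref{SubcomposabilityDoesNotHoldExample} is tailor\-/made to send an infinite set of inputs onto a single point whose complement in $X$ is cofinite. The only minor judgement call is the choice of counterexample, and that choice is forced by the intuition that \property{SubComp} can only fail for $\fohsy$ when a map $s$ concentrates infinitely many points of $Y$ onto a finite ``exceptional'' set in $X$ --- which is exactly the behaviour of the $s$ above.
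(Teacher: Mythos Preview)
Your proposal is correct and matches the paper's proof essentially verbatim: the paper uses the same map $s$ from \eref{SubcomposabilityDoesNotHoldExample}, observes that $|s^{-1}(\setbra{0})| \not\in \TN$, and invokes \thref{CofiniteCompositionForFixedS}. The only difference is that the paper appends an optional concrete instance of $f$ and $\hat{f}$ witnessing the failure, which your argument does not need since the characterisation theorem already settles the matter.
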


\begin{proof}
Let $\function{s}{\TN}{\TN}$ be such that
\begin{equation}
s(n) =
\begin{cases}
0, & n \in 2\TN, \\
n, & n \in 2\TN + 1.
\end{cases}
\end{equation}
Then $|\preimage{s}{\setb{0}}| \not\in \TN$, and so \property{SubComp} fails for $\fohsy$ and $s$ by \proveby{CofiniteCompositionForFixedS}. For example, let $\function{\hat{f}}{\TN}{\TN}$ be such that $\hat{f}(0) > 0$ and $\function{f}{\TN}{\TN}$ be such that $f(0) = 0$ and $\hat{f} \in \foh{\TN}{f}$. Then $\bra{\hat{f} \circ s} \not\in \foh{\TN}{f \circ s}$.
\end{proof}

\begin{theorem}[\uproperty{SubsetSum} fails for $\fohsy$]
\label{CofiniteSubsetSumFails}
$\fohsy$ does not have \property{SubsetSum}.
\end{theorem}

\begin{proof}
Since \property{SubsetSum} implies \property{SubComp}, and \property{SubComp} does not hold by \thref{CofiniteSubComposabilityFails}, neither does \property{SubsetSum}.
\end{proof}

\begin{theorem}[\uproperty{IComp} for $\fohsy$]
\label{CofiniteInjectiveComposition}
$\fohsy$ has \prove{IComp}.
\sprove{ISubComp}
\end{theorem}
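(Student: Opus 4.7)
The plan is to split \property{IComp} into its two halves, \property{ISubComp} and \property{ISuperComp}, and dispatch each by quoting results already established in the paper.

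For \property{ISubComp}, I would apply the characterization \proveby{CofiniteCompositionForFixedS}, which says that \property{SubComp} holds for a given $s : Y \to X$ under $\fohsy$ if and only if $|s^{-1}(X \setminus A)| \in \TN$ for every $A \in \cpowerset{X}$. If $s$ is injective, then for any subset $B \subset X$ the restriction of $s$ to $s^{-1}(B)$ is an injection into $B$, so $|s^{-1}(B)| \leq |B|$. Taking $B = X \setminus A$, which is finite by the definition of $\cpowerset{X}$, gives $|s^{-1}(X \setminus A)| \leq |X \setminus A| \in \TN$, which is exactly the required condition. Hence \property{ISubComp} holds.

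For \property{ISuperComp}, I would invoke \proveby{InjectiveSuperComposabilityIsImplied}, which derives \property{ISuperComp} from \property{Order}, \property{Local}, and \property{ISubComp} (for the same injective $s$). The first two hypotheses are furnished for all local linear dominances, and so in particular for $\fohsy$, by \proveby{LocalOrderConsistency} and \proveby{LocalLocality}. The third was just proved. Assembling these gives \property{ISuperComp}, and together with \property{ISubComp} this yields \property{IComp}.

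The only substantive step is the cardinality inequality for injective $s$, which is elementary; everything else is bookkeeping of previously proved properties. No genuine obstacle is expected, since the theorem is essentially a corollary of \proveby{CofiniteCompositionForFixedS} and \proveby{InjectiveSuperComposabilityIsImplied}.
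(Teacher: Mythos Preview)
Your proposal is correct and follows essentially the same route as the paper: establish \property{ISubComp} via \proveby{CofiniteCompositionForFixedS} using the finiteness of preimages under an injection, then derive \property{ISuperComp} from \proveby{InjectiveSuperComposabilityIsImplied} together with \proveby{LocalOrderConsistency} and \proveby{LocalLocality}. The only cosmetic difference is that the paper phrases the cardinality step as $|s^{-1}(\{x\})| \leq 1$ for each $x$, whereas you argue directly that $|s^{-1}(X \setminus A)| \leq |X \setminus A|$; both are the same elementary fact.
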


\begin{proof}
Let $X, Y \in U$, and $\function{s}{Y}{X}$ be injective. Then $|\preimage{s}{\setb{x}}| \leq 1$, for all $x \in X$. Therefore $\fohsy$ has \property{SubComp} for $s$ by \proveby{CofiniteCompositionForFixedS}. 
\srequire{SubComp} \sprove{ISubComp} 
Since $\fohsy$ has \property{Order} by \proveby{LocalOrderConsistency}, and \property{Local} by \proveby{LocalLocality}, $\fohsy$ has \property{ISuperComp} by \proveby{InjectiveSuperComposabilityIsImplied}.
\sprove{IComp}
\end{proof}

\begin{theorem}[\uproperty{Extend} fails for $\fohsy$]
\label{CofiniteExtensibilityFails}
$\fohsy$ does not have \property{Extend}.
\end{theorem}

\begin{proof}
Let $A = \posi{\TN}$. For any $B \in \filterset{\TN^2}$, $0 \in \image{\projections{\TN}{\TN}}{B}$, but $0 \not\in A$. \uproperty{Extend} fails by \thref{LocalExtensibilityCharacterized}.
\end{proof}

\section{Linear dominance}
\label{LinearDominance}

Recall the definition of linear dominance from \sref{Introduction}:
\definelinear

\begin{theorem}[Linear dominance is local linear dominance]
\label{LinearDominanceIsLocalLinearDominance}
The family $\setb{\setb{X} : X \in U}$ is a family of filter bases with induced sub-structure.
\end{theorem}

\begin{proof}
\proofpart{Directedness}
Let $A, B \in \filterset{X}$, where $X \in U$. Then $A = X = B$, and we may choose $C \coloneqq X \in \filterset{X}$, so that $C \subset A \cap B$. 

\proofpart{Induced sub-structure}
Let $D \subset X$, where $X \in U$. Then $\filterset{D} = \setb{D} = \setb{A \cap D : A \in \filterset{X}}$.
\end{proof}

\begin{note}[]
The primitive properties of linear dominance are proved in \sref{SufficientDefinition}.
\end{note}

\begin{theorem}[\uproperty{Extend} for $\lohsy$]
\label{LinearExtensibility}
$\lohsy$ has \prove{Extend}.
\end{theorem}

\begin{proof}
Let $A \in \filterset{X}$ and $B = X \times Y$. Then $A = X$ and $B \in \filterset{X \times Y}$. Since $\image{\projections{X}{Y}}{B} = X = A$, the result follows by \thref{LocalExtensibilityCharacterized}. \sprove{Extend}
\end{proof}

\section{Affine dominance}
\label{AffineDominance}

Recall the definition of affine dominance from \sref{Introduction}:
\defineaffine

\begin{note}[]
Affine dominance is not a local linear dominance. 
\end{note}

We shall apply the following lemma without mentioning it, since it is used in almost every proof.

\begin{theorem}[Simplification lemma for $\aohsy$]
\label{AffineSingleConstant}
Let $I$ be a finite set, $X_i \in U$, $f_i \in \rc{X_i}$, and $\hat{f}_i \in \aoh{X_i}{f_i}$, for all $i \in I$. Then there exists $c \in \posi{\TR}$, such that
\begin{equation}
\hat{f}_i \lt c f_i + c,
\end{equation}
for all $i \in I$.
\end{theorem}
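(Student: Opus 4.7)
The plan is to mimic the proofs of \thref{LinearSingleConstantLemma} and \thref{LocalSingleConstantLemma}: unfold the definition of affine dominance for each index, then take the maximum of the resulting constants, using finiteness of $I$ to guarantee the maximum exists.

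First I would apply the definition of $\aohsy$: since $\hat{f}_i \in \aoh{X_i}{f_i}$, there exists $c_i \in \posi{\TR}$ such that $\hat{f}_i \leq c_i f_i + c_i$, for each $i \in I$. Next I would set $c = \max\setbra{c_i}_{i \in I}$, which is well-defined and lies in $\posi{\TR}$ because $I$ is finite and each $c_i > 0$.

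The only step that requires a brief check is the monotonicity in the constant: I need $c_i f_i + c_i \leq c f_i + c$. This follows because $c \geq c_i$ and $f_i \in \rc{X_i}$ is non-negative, so $c f_i \geq c_i f_i$ and $c \geq c_i$ add to give the desired inequality. Chaining this with the bound from the previous paragraph yields $\hat{f}_i \leq c f_i + c$, for all $i \in I$, which is the claim.

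There is no real obstacle here; the lemma is a direct analogue of \thref{LinearSingleConstantLemma} and needs nothing beyond the non-negativity of the functions in $\rc{X_i}$ and the finiteness of $I$ to ensure the max exists. The only tiny subtlety compared to the linear case is that now both the coefficient and the additive constant are controlled by the same $c_i$, but taking a single common upper bound $c$ handles both simultaneously.
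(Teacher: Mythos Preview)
Your proposal is correct and matches the paper's proof essentially line for line: pick $c_i$ from the definition, set $c = \max\setbra{c_i}_{i \in I}$, and conclude. You even make explicit the monotonicity step $c_i f_i + c_i \leq c f_i + c$ that the paper leaves implicit.
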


\begin{proof}
There exists $c_i \in \posi{\TR}$, such that $\hat{f}_i \lt c_i f_i + c_i$, for all $i \in I$.
Let $c = \max\setb{c_i : i \in I}$. Then $\hat{f_i} \lt c f_i + c$, for all $i \in I$.
\end{proof}

\begin{theorem}[\uproperty{Order} for $\aohsy$]
\label{AffineOrderConsistency}
Let $X \in U$, and $f, g \in \rc{X}$. Then 
\begin{equation}
f \leq g \implies \aohx{f} \subset \aohx{g}.
\end{equation}
\sprove{Order}
\end{theorem}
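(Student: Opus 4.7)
The plan is a direct unfolding of the affine-dominance definition. Suppose $f \leq g$ and let $\hat{f} \in \aohx{f}$. By definition there exists $c \in \posi{\TR}$ such that $\hat{f} \leq c f + c$. Since $f \leq g$ and $c > 0$, we have $c f \leq c g$, hence
\begin{equation}
\hat{f} \leq c f + c \leq c g + c,
\end{equation}
which is precisely the statement that $\hat{f} \in \aohx{g}$. Therefore $\aohx{f} \subset \aohx{g}$.

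No inductive step or case analysis is required; the only mild subtlety is remembering that the additive constant $c$ is the same on both sides so there is nothing to adjust after bounding $f$ by $g$. The main (very small) obstacle is simply to note that positivity of $c$ is what lets us preserve the inequality under multiplication, a point already implicit in the definition of $\aohsy$. This proof is strictly easier than \thref{LinearOrderConsistency}, since the ``$+c$'' slack only helps.
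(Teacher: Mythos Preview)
Your proof is correct and matches the paper's argument essentially verbatim: pick $\hat{f}\in\aohx{f}$, use the defining $c\in\posi{\TR}$ with $\hat{f}\le cf+c$, and replace $f$ by $g$ via $f\le g$ to conclude $\hat{f}\le cg+c$. The extra remarks about positivity of $c$ and the comparison with \thref{LinearOrderConsistency} are fine but not needed.
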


\begin{proof}
Let $\hat{f} \in \aohx{f}$. Then there exists $c \in \posi{\TR}$ such that $\hat{f} \leq cf + c$. Since $f \leq g$, it follows that $\hat{f} \leq cg + c$. Therefore $\hat{f} \in \aohx{g}$.
\sprove{Order}
\end{proof}

\begin{theorem}[\uproperty{Trans} for $\aohsy$]
\label{AffineTransitivity}
Let $X \in U$, and $f, g, h \in \rc{X}$. Then 
\begin{equation}
\bra{f \in \aohx{g} \textrm{ and } g \in \aohx{h}} \implies f \in \aohx{h}.
\end{equation}
\sprove{Trans}
\end{theorem}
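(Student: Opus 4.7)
The plan is to mirror the structure of \thref{LinearTransitivity} for linear dominance, adapted to handle the extra additive term. I would start by applying \thref{AffineSingleConstant} to unify the two witnessing constants: since $f \in \aohx{g}$ and $g \in \aohx{h}$ both hold, there exists a single $c \in \posi{\TR}$ such that $f \leq cg + c$ and $g \leq ch + c$.

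Next I would substitute the bound on $g$ into the bound on $f$, which gives
\begin{equation}
f \leq c(ch + c) + c = c^2 h + (c^2 + c).
\end{equation}
To conclude $f \in \aohx{h}$, I need a single positive constant $c'$ with $c^2 h + (c^2 + c) \leq c' h + c'$. Taking $c' := c^2 + c$ works: the multiplicative coefficient $c^2$ is bounded by $c^2 + c = c'$ (since $c > 0$), and the additive term $c^2 + c$ equals $c'$ exactly. Hence $f \leq c' h + c'$, and $f \in \aohx{h}$.

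The only subtlety compared to the linear dominance proof is bookkeeping the additive constant: in the linear case the witness $c^2$ arose cleanly from composing two multiplicative bounds, whereas here the substitution produces a mixed expression in $h$ and a constant, and I must pick $c'$ large enough to simultaneously dominate both the coefficient and the additive term. There is no deeper obstacle; this is pure arithmetic once the simplification lemma puts the two hypotheses on a common footing.
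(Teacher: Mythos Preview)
Your proof is correct and essentially identical to the paper's: both invoke \thref{AffineSingleConstant} to obtain a common constant $c$, substitute to get $f \leq c^2 h + (c^2 + c)$, and then absorb the coefficient into $c' = c^2 + c$ using $c^2 \leq c^2 + c$. The paper writes the final inequality inline rather than naming $c'$, but the argument is the same.
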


\begin{proof}
Let $f \in \aohx{g}$, and $g \in \aohx{h}$. Then there exists $c \in \posi{\TR}$, such that $f \lt c g + c$ and $g \lt c h + c$. It follows that
\begin{eqs}
f & \lt c (c h + c) + c \\
{} & = c^2 h + (c^2 + c) \\
{} & \leq (c^2 + c) h + (c^2 + c).
\end{eqs}
Therefore $f \in \aohx{h}$.
\sprove{Trans}
\end{proof}

\begin{theorem}[\uproperty{Local} for $\aohsy$]
\label{AffineLocality}
Let $X \in U$, $f, g \in \rc{X}$, and $C \subset \power{X}$ be a finite cover of $X$. Then
\begin{equation}
\forall D \in C: \restrb{f}{D} \in \aoh{D}{\restr{g}{D}} \implies f \in \aohx{g}.
\end{equation}
\sprove{Local}
\end{theorem}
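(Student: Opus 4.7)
The plan is to mirror the proof of \property{Local} for local linear dominance (\thref{LocalLocality}), adapted to the affine setting. The key observation is that affine dominance differs from linear dominance only by the additive slack, and this slack behaves just as well under taking a maximum over a finite index set.

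First I would invoke the simplification lemma for $\aohsy$ (\thref{AffineSingleConstant}) on the finite family $\setbra{\restrb{f}{D}}_{D \in C}$, each known to lie in $\aoh{D}{\restr{g}{D}}$. This yields a single constant $c \in \posi{\TR}$ such that $\restrb{f}{D} \leq c \restrb{g}{D} + c$ holds simultaneously for every $D \in C$. The finiteness of $C$ is essential here, as in the linear-dominance case.

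Next I would propagate this pointwise: for each $x \in X$, since $C$ is a cover, there exists $D \in C$ with $x \in D$, and then
\begin{equation}
f(x) = \restrb{f}{D}(x) \leq c \restrb{g}{D}(x) + c = c g(x) + c.
\end{equation}
Since $x$ was arbitrary, $f \leq c g + c$, hence $f \in \aohx{g}$ by definition.

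I expect no real obstacle: the argument is essentially the same as \thref{LocalLocality}, and the additive term $+c$ presents no difficulty because it is shared across all pieces once the simplification lemma has unified the constants. The only subtlety worth flagging is that we need the same $c$ on both the multiplicative and additive parts, which is automatic from how \thref{AffineSingleConstant} is formulated.
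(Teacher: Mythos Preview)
Your proposal is correct and essentially identical to the paper's own proof: both apply the simplification lemma \thref{AffineSingleConstant} to obtain a single constant $c$ valid for all $D \in C$, then use that $C$ covers $X$ to conclude $f \leq c g + c$. Your version merely spells out the pointwise step more explicitly.
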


\begin{proof}
Assume $\bra{\restr{f}{D}} \in \aoh{D}{\restr{g}{D}}$, for all $D \in C$. Since $C$ is finite, there exists $c \in \posi{\TR}$, such that $\bra{\restr{f}{D}} \lt c \bra{\restr{g}{D}} + c$, for all $D \in C$. Since $C$ covers $X$, $f \lt c g + c$. Therefore $f \in \aohx{g}$.
\sprove{Local}
\end{proof}

\begin{theorem}[\uproperty{Zero} fails for $\aohsy$]
\label{AffineZeroSeparationFails}
Let $X \in U$ be non-empty. Then
\begin{equation}
\aohx{1} = \aohx{0}.
\end{equation}
\end{theorem}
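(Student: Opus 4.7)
The plan is to unfold the definition of affine dominance on both sides and observe that each principal down-set collapses to the same object, namely the collection of functions in $\rc{X}$ bounded above by a positive constant.

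First I would compute $\aohx{0}$ directly: by definition, $g \in \aohx{0}$ iff there exists $c \in \posi{\TR}$ with $g \leq c \cdot 0 + c = c$, so $\aohx{0} = \setbra{g \in \rc{X} : \exists c \in \posi{\TR} : g \leq c}$. Next I would compute $\aohx{1}$: the condition $g \leq c \cdot 1 + c = 2c$ is also just ``$g$ is bounded above by a positive constant'', since any bound $M$ is realised by taking $c = M/2$.

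For a self-contained argument I would then package the two inclusions. The inclusion $\aohx{0} \subset \aohx{1}$ is immediate from \thref{AffineOrderConsistency} applied to $0 \leq 1$. For the reverse inclusion, given $g \in \aohx{1}$ with $g \leq 2c$, choose $c' := 2c \in \posi{\TR}$; then $g \leq c' = c' \cdot 0 + c'$, so $g \in \aohx{0}$. The non-emptiness of $X$ is not actually needed in the argument; it is part of the hypothesis because the analogous statement in $\rc{\emptyset}$ would be vacuous (see \thref{EmptyFunctionPitfall}).

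The expected obstacle is essentially nil. The conceptual point that the theorem is illustrating is that the additive slack ``$+c$'' inside the affine dominance relation absorbs any multiplicative distinction between the constant functions $0$ and $1$, which is precisely why $\aohsy$ will shortly be shown to fail \property{Zero}.
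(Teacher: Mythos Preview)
Your argument is correct. The route differs slightly from the paper's: you unfold the definition and identify both $\aohx{0}$ and $\aohx{1}$ with the set of bounded functions (equivalently, you handle the nontrivial inclusion by the explicit constant replacement $c' := 2c$). The paper instead works at the level of the abstract properties: it observes $1 \in \aohx{0}$ because $1 \leq 1\cdot 0 + 1$, then invokes \thref{MembershipIsImplied} (which needs \property{Order} and \property{Trans}) to turn this into $\aohx{1} \subset \aohx{0}$; the reverse inclusion comes from $0 \in \aohx{1}$ via \property{Order} and another application of \property{Member}. Your version is more elementary in that it avoids \property{Trans} and the \property{Member} machinery; the paper's version has the advantage of exercising the general framework. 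Your remark that non-emptiness of $X$ plays no role in the equality itself is accurate; the hypothesis is there only so that the functions $0$ and $1$ are distinct, making the failure of \property{Zero} meaningful.
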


\begin{proof}
$\aohs{X}$ has \property{Order} by \proveby{AffineOrderConsistency}, \property{Trans} by \proveby{AffineTransitivity}, and \property{Orderness} by \proveby{OrdernessIsImplied}. It holds that $(x \mapsto 1) \leq 1 (x \mapsto 0) + 1$, for all $x \in X$. Therefore $1 \in \aohx{0}$. By \require{Orderness}, $\aohx{1} \subset \aohx{0}$. By \require{Order}, $0 \in \aohx{1}$. By \require{Orderness}, $\aohx{1} \supset \aohx{0}$.
\end{proof} 

\begin{theorem}[\uproperty{One} for $\aohsy$]
\label{AffineOneSeparation}
\begin{equation}
n \not\in \aoh{\posi{\TN}}{1}.
\end{equation}
\sprove{One}
\end{theorem}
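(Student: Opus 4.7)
The plan is to unfold the definition of affine dominance and derive a direct contradiction from the unboundedness of $\posi{\TN}$. By definition, $n \in \aoh{\posi{\TN}}{1}$ would mean that there exists $c \in \posi{\TR}$ such that the identity function $n \mapsto n$ on $\posi{\TN}$ satisfies the pointwise inequality $n \leq c \cdot 1 + c = 2c$ for every $n \in \posi{\TN}$.

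I would then proceed by contradiction. Assume such a $c$ exists; pick any natural number strictly greater than $2c$, for instance $n = \ceilb{2c} + 1 \in \posi{\TN}$. Then $n > 2c$, contradicting the required inequality. Therefore no such $c$ can exist, giving $n \not\in \aoh{\posi{\TN}}{1}$.

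This is essentially the same structure used in \proveby{LinearOneSeparationCharacterized} for linear dominance, with the only difference being the extra additive $c$ on the right-hand side, which is harmless since $\posi{\TN}$ is unbounded. There is no genuine obstacle; the main thing to be careful about is just writing the contradiction cleanly using the \sprove{One} macro at the end so that the automated property tracking picks it up.
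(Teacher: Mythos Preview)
Your proposal is correct and essentially identical to the paper's proof: both unfold the definition of affine dominance to the inequality $n \leq 2c$ and then exhibit, for any given $c \in \posi{\TR}$, a natural number exceeding $2c$ (the paper phrases it as ``$n > c\cdot 1 + c$ for all $n \in \posib{\TN}{2c}$'' rather than an explicit contradiction, but the content is the same).
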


\begin{proof}
It holds that $n > c 1 + c$, for all $c \in \posi{\TR}$, and $n \in \posib{\TN}{2c}$. Therefore $n \not\in \aoh{\posi{\TN}}{1}$. 
\sprove{One}
\end{proof}

\begin{theorem}[\uproperty{Scale} for $\aohsy$]
\label{AffinePositiveScaleInvariance}
Let $X \in U$, $f \in \rc{X}$, and $\alpha \in \posi{\TR}$. Then 
\begin{equation}
\aohx{\alpha f} = \aohx{f}.
\end{equation}
\sprove{Scale}
\end{theorem}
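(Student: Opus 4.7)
The plan is to verify both set inclusions directly from the definition of affine dominance, namely $g \in \aohx{f} \iff \exists c \in \posi{\TR} : g \leq cf + c$. The key observation is that an affine bound $cf + c$ already has independent coefficients on the $f$-term and the constant term, so scaling $f$ by a positive constant can be absorbed by adjusting $c$ (and, if necessary, taking a maximum with the original $c$ to keep the single shared constant in the definition).

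For the inclusion $\aohx{\alpha f} \subset \aohx{f}$, I would take $\hat{f} \in \aohx{\alpha f}$, obtain $c \in \posi{\TR}$ with $\hat{f} \leq c(\alpha f) + c = (c\alpha) f + c$, and then set $d := \max(c\alpha, c) \in \posi{\TR}$, which yields $\hat{f} \leq d f + d$, hence $\hat{f} \in \aohx{f}$. For the reverse inclusion $\aohx{f} \subset \aohx{\alpha f}$, I would take $\hat{f} \in \aohx{f}$, obtain $c \in \posi{\TR}$ with $\hat{f} \leq c f + c = (c / \alpha)(\alpha f) + c$, and then set $d := \max(c / \alpha, c) \in \posi{\TR}$ to conclude $\hat{f} \leq d (\alpha f) + d$, hence $\hat{f} \in \aohx{\alpha f}$.

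There is essentially no obstacle here: the argument is a one-line manipulation for each direction. The only subtlety, which I flag as the ``main'' thing to get right, is that the definition of affine dominance uses a single constant $c$ for both the coefficient and the additive shift. This is why I take the maximum of the two candidate constants in each direction rather than leaving the coefficient and the shift with different scaling factors. Since $\alpha > 0$, the quantities $c\alpha$ and $c/\alpha$ are positive, and the maximum with $c$ stays in $\posi{\TR}$, so the resulting witness is admissible.
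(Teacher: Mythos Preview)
Your proposal is correct and matches the paper's proof essentially line for line: both directions are handled by absorbing the scaling into a single constant via $\max(c\alpha,c)$ and $\max(c/\alpha,c)$, respectively. You also correctly identify the one subtlety, namely that the definition uses a single shared constant, which is exactly why the maximum is taken.
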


\begin{proof}
\proofpart{$\subset$}
Assume $\hat{f} \in \aohx{\alpha f}$. Then there exists $c \in \posi{\TR}$, such that
\begin{eqs}
\hat{f} & \lt c (\alpha f) + c \\
{} & \lt \max(c \alpha, c) f + \max(c \alpha, c).
\end{eqs}
Therefore $\hat{f} \in \aohx{f}$. 

\proofpart{$\supset$}
Assume $\hat{f} \in \aohx{f}$. Then there exists $c \in \posi{\TR}$, such that
\begin{eqs}
\hat{f} & \lt c f + c \\
{} & = (c / \alpha) (\alpha f) + c \\
{} & \lt \max(c / \alpha, c) (\alpha f) + \max(c / \alpha, c).
\end{eqs}
Therefore $\hat{f} \in \aohx{\alpha f}$.

\sprove{Scale}
\end{proof}

\begin{theorem}[Composability for $\aohsy$]
\label{AffineSubComposability}
Let $X, Y \in U$, $f \in \rc{X}$, and $\function{s}{Y}{X}$. Then
\begin{equation}
\aoh{X}{f} \circ s \subset \aoh{Y}{f \circ s},
\end{equation}
\sprove{SubComp}
\end{theorem}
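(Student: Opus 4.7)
The plan is to mirror the proof of \property{SubComp} for linear dominance (\thref{LinearSubComposability}) almost verbatim, since the affine definition only differs from the linear one by an additive constant that survives composition intact. The key observation is that composition on the right preserves pointwise inequalities: if $\phi \leq \psi$ as functions on $X$, then $\phi \circ s \leq \psi \circ s$ as functions on $Y$, and composition distributes over the constant-function construction in the sense that $(c f + c) \circ s = c (f \circ s) + c$, because the constant function $c$ on $X$ pulls back to the constant function $c$ on $Y$.

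First I would take an arbitrary $\hat{f} \in \aohx{f}$ and unfold the definition of affine dominance to obtain some $c \in \posi{\TR}$ with $\hat{f} \leq c f + c$ pointwise on $X$. Next I would compose both sides with $s$ to get $\hat{f} \circ s \leq (c f + c) \circ s$ pointwise on $Y$. Then I would rewrite the right-hand side as $c (f \circ s) + c$, using that the scalar multiplication by $c$ and the addition of the constant $c$ commute with right-composition. Finally, witnessing this same $c$, I would conclude $\hat{f} \circ s \in \aoh{Y}{f \circ s}$, which gives the desired containment $\aoh{X}{f} \circ s \subset \aoh{Y}{f \circ s}$.

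There is essentially no obstacle here: the constant additive slack $c$ that distinguishes affine dominance from linear dominance causes no trouble under right-composition, in contrast to the situation for, say, local filter-based notations where the filter set $A \subset X$ has to be pulled back through $s$ in a controlled way. The proof is one display of inequalities. I do not even need \property{Order} or \property{Trans} to make this work; it follows directly from the quantifier in the definition.
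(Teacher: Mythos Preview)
Your proposal is correct and essentially identical to the paper's proof: take $\hat{f} \in \aohx{f}$, obtain $c \in \posi{\TR}$ with $\hat{f} \leq cf + c$, compose with $s$ to get $\hat{f} \circ s \leq c(f \circ s) + c$, and conclude. The paper's proof is just these three lines without the surrounding commentary.
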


\begin{proof}
Let $\hat{f} \in \aohx{f}$. Then there exists $c \in \posi{\TR}$ such that $\hat{f} \lt cf + c$. This implies $(\hat{f} \circ s) \lt c(f \circ s) + c$. Therefore $(\hat{f} \circ s) \in \aoh{Y}{f \circ s}$.

\sprove{SubComp}
\end{proof}

\begin{theorem}[\uproperty{SubHom} for $\aohsy$ characterized]
\label{AffineSubHomogenuityCharacterization}
\begin{equation}
f \aohx{g} \subset \aohx{fg} \iff f \in \aohx{fg},
\end{equation}
for all $f, g \in \rc{X}$.
\end{theorem}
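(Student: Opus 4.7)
The plan is to prove both directions directly from the definition of affine dominance, namely $h \in \aohx{k} \iff \exists c \in \posi{\TR} : h \leq ck + c$. The key observation that makes the forward direction work is that the constant function $1$ always lies in $\aohx{g}$ for any $g \in \rc{X}$: indeed, $1 \leq 1 \cdot g + 1$ since $g \geq 0$. The main obstacle, which turns out to be mild, is tracking constants in the backward direction, because multiplying an affine bound $\hat{g} \leq cg + c$ by $f$ produces a spurious term $cf$ that must be reabsorbed into a bound of the form $k(fg) + k$.

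For the forward direction ($\Rightarrow$), I would assume $f \aohx{g} \subset \aohx{fg}$. Since $1 \leq 1 \cdot g + 1$, the constant function $1$ is an element of $\aohx{g}$. Therefore $f = f \cdot 1 \in f \aohx{g} \subset \aohx{fg}$, which gives the desired conclusion $f \in \aohx{fg}$.

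For the backward direction ($\Leftarrow$), I would assume $f \in \aohx{fg}$ and let $\hat{g} \in \aohx{g}$ be arbitrary. By the definition, there exists $c \in \posi{\TR}$ with $\hat{g} \leq cg + c$, and there exists $d \in \posi{\TR}$ with $f \leq d(fg) + d$. Multiplying the first bound by $f$ gives $f \hat{g} \leq c(fg) + cf$, and substituting the bound on $cf$ yields
\begin{equation}
f \hat{g} \leq c(fg) + c(d(fg) + d) = (c + cd)(fg) + cd.
\end{equation}
Setting $k := \max(c + cd, cd) \in \posi{\TR}$ gives $f \hat{g} \leq k(fg) + k$, hence $f \hat{g} \in \aohx{fg}$. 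Since $\hat{g}$ was arbitrary, $f \aohx{g} \subset \aohx{fg}$, completing the equivalence.
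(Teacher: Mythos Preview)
Your proof is correct and follows essentially the same approach as the paper. For the forward direction you argue directly (using $1 \in \aohx{g}$ and hence $f = f\cdot 1 \in f\aohx{g}$) where the paper phrases the same idea as a contrapositive; for the backward direction you track two constants $c,d$ explicitly, whereas the paper invokes its simplification lemma to use a single constant, but the underlying computation is identical.
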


\begin{proof}
$\aohs{X}$ has \property{Order} by \proveby{AffineOrderConsistency}.

\proofpart{$\implies$}
Suppose $f \not\in \aohx{fg}$. Let $\hat{g} \in \rc{X}$ be such that $\hat{g} = 1$. Then $f \in \aohx{f}$ by \require{Order}, and $\hat{g} \in \aohx{g}$, since $\hat{g}(x) = 1 \leq 1g(x) + 1$, for all $x \in X$. Then $f \hat{g} = f \not\in \aohx{fg}$. Therefore $f \aohx{g} \not\subset \aohx{fg}$.

\proofpart{$\impliedby$}
Suppose $f \in \aohx{fg}$ and $\hat{g} \in \aohx{g}$. Then there exists $c \in \posi{\TR}$ such that
\begin{eqs}
f & \leq cfg + c \\
\hat{g} & \leq cg + c.
\end{eqs}
Therefore
\begin{eqs}
f\hat{g} & \leq f(cg + c) \\
{} & = cfg + cf \\
{} & \leq cfg + c\bra{cfg + c} \\
{} & \leq (c^2 + c) fg + c^2 \\
{} & \leq (c^2 + c) fg + (c^2 + c).
\end{eqs}
Therefore $f \hat{g} \in \aohx{fg}$.
\end{proof}

\begin{theorem}[\uproperty{NSubHom} fails for $\aohsy$]
\label{AffineSubHomogenuityFails}
There exists $f, g \in \rc{\posi{\TN}}$ such that $f(\posi{\TN}) \subset \posi{\TN}$, and
\begin{equation}
f \aoh{\posi{\TN}}{g} \not\subset \aoh{\posi{\TN}}{fg}.
\end{equation}
\end{theorem}

\begin{proof}
Let $f \in \rc{\posi{\TN}}$ be such that $f(n) = n$, and $g \in \rc{\posi{\TN}}$ be such that $g(n) = 1 / n$. Then
\begin{eqs}
f\bra{\ceilb{3c}} & = \ceilb{3c} \\
{} & > c + c \\
{} & = c f\bra{\ceilb{3c}} g\bra{\ceilb{3c}} + c,
\end{eqs}
for all $c \in \posi{\TR}$. Therefore $f \not\in \aohx{fg}$, and so \property{NSubHom} fails by \thref{AffineSubHomogenuityCharacterization}.
\end{proof}

\begin{theorem}[\uproperty{NSubDiv} for $\aohsy$]
\label{AffineSubHomogeneityNDiv}
Let $X \in U$, and $f, u \in \rc{X}$ be such that $\image{u}{X} \subset 1 / \posi{\TN}$. Then \sprove{NSubDiv}
\begin{equation}
u \rohx{f} \subset \rohx{uf}.
\end{equation}
\end{theorem}

\begin{proof}
Let $\hat{f} \in \aohx{f}$. Then there exists $c \in \posi{\TR}$, such that
\begin{eqs}
\hat{f} \leq c f + c.
\end{eqs}
This implies
\begin{eqs}
u \hat{f} & \leq u(cf + c) \\
{} & = c (uf) + uc \\
{} & \leq c (uf) + c.
\end{eqs}
Therefore $u\hat{f} \in \aohx{uf}$.
\sprove{NSubDiv}
\end{proof}

\begin{theorem}[\uproperty{Extend} for $\aohsy$]
\label{AffineExtensibility}
$\aohsy$ has \prove{Extend}.
\end{theorem}

\begin{proof}
Suppose $f \in \aohx{g}$. Then there exists $c \in \posi{\TR}$ such that
\begin{eqs}
f \leq cg + c.
\end{eqs}
Then
\begin{eqs}
f \circ \projections{X}{Y} \leq c \bra{g \circ \projections{X}{Y}} + c.
\end{eqs}
Therefore $\aohsy$ has \prove{Extend}.
\end{proof}

\begin{theorem}[\uproperty{SuperHom} fails for $\aohsy$]
\label{AffineSuperHomogenuityFails}
There exists $f, g \in \rc{X}$ such that
\begin{equation}
f \aohx{g} \not\supset \aohx{fg}.
\end{equation}
\end{theorem}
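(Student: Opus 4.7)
The plan is to exploit the zero-collapse already established in \thref{AffineZeroSeparationFails}: since $\aohx{1} = \aohx{0}$ whenever $X$ is non-empty, the set $\aohx{0}$ contains many non-zero functions, while multiplication by the constant function $0$ on the left of $\aohx{g}$ yields only the zero function. Mixing these two asymmetric facts will break \property{SuperHom}.

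More precisely, I would take $X$ to be any non-empty set (say $X = \posi{\TN}$) and choose $f, g \in \rc{X}$ to both be the zero function. Then $fg = 0$ pointwise, so by \thref{AffineZeroSeparationFails} we have the constant function $1 \in \aohx{0} = \aohx{fg}$, witnessed by the bound $1 \leq 1 \cdot 0 + 1$. On the other hand, $f \aohx{g} = 0 \cdot \aohx{0}$ consists of pointwise products $0 \cdot \hat g = 0$ for every $\hat g \in \aohx{0}$, hence $f \aohx{g} = \setbra{0}$. Since $1 \neq 0$ on a non-empty domain, the containment $\aohx{fg} \subset f \aohx{g}$ fails.

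The argument has no real obstacle, since the counterexample is already packaged inside the failure of zero-separation. The only thing to be careful about is to present the witness $\hat h = 1$ explicitly as an element of $\aohx{fg}$ (verifying the defining affine inequality with $c = 1$) and then to observe that every element of $f \aohx{g}$ is forced to vanish pointwise because $f = 0$.
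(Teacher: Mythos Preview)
Your proposal is correct and is essentially the paper's own argument: the paper also takes $f = 0$ and exhibits $\hat h = 1 \in \aohx{fg}$ via $1 \le 1\cdot fg + 1$, then notes $f\hat g = 0 \ne 1$ for every $\hat g \in \aohx{g}$. The only cosmetic difference is that the paper leaves $g$ arbitrary while you fix $g = 0$, which changes nothing.
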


\begin{proof}
Let $f = 0$, and $\hat{h} \in \rc{X}$ be such that $\hat{h} = 1$. Then $\hat{h} \leq 1 fg + 1$, and so $\hat{h} \in \aohx{fg}$. For all $\hat{g} \in \aohx{g}$ it holds that $f \hat{g} = 0 \neq 1 = \hat{h}$. Therefore $f \aohx{g} \not\supset \aohx{fg}$.
\end{proof}

\begin{theorem}[\uproperty{SuperMulti} for $\aohsy$]
\label{AffineSuperMultiplicativity}
Let $f, g \in \rc{X}$. Then
\begin{equation}
\aohx{f} \aohx{g} \supset \aohx{fg}.
\end{equation}
\sprove{SuperMulti}
\end{theorem}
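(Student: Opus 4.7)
The plan is to show that any $\hat{h} \in \aohx{fg}$ can be factored as $\hat{h} = \hat{f}\,\hat{g}$ with $\hat{f} \in \aohx{f}$ and $\hat{g} \in \aohx{g}$. Given the affine bound $\hat{h} \leq cfg + c$, the natural idea is to let $\hat{g}$ be essentially $g$ plus a safety term that keeps it strictly positive, and to define $\hat{f}$ as the corresponding quotient.

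Concretely, I would proceed as follows. Start by taking $\hat{h} \in \aohx{fg}$, so that there exists $c \in \posi{\TR}$ with $\hat{h} \leq c f g + c$. Define $\hat{g} := g + 1$ and $\hat{f} := \hat{h}/\hat{g}$. Since $\hat{g} \geq 1 > 0$, the quotient $\hat{f}$ is a well-defined element of $\rc{X}$, and by construction $\hat{f}\hat{g} = \hat{h}$. Membership $\hat{g} \in \aohx{g}$ is immediate from $\hat{g} = g + 1 \leq 1 \cdot g + 1$.

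The only nontrivial step is membership $\hat{f} \in \aohx{f}$, which follows from the elementary pointwise inequality
\begin{equation}
fg + 1 \leq (f+1)(g+1),
\end{equation}
valid because $(f+1)(g+1) - (fg+1) = f + g \geq 0$. Dividing by $g+1 \geq 1$ and using $\hat{h} \leq cfg + c$ yields
\begin{equation}
\hat{f} = \frac{\hat{h}}{g+1} \leq \frac{c(fg + 1)}{g+1} \leq c(f+1) = cf + c,
\end{equation}
so $\hat{f} \in \aohx{f}$, which completes the factorization and hence the containment.

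The main obstacle, such as it is, is not analytical but algebraic: one must avoid dividing by $g$ when $g$ can vanish. The trick of replacing $g$ with $g+1$ is harmless because affine dominance already tolerates the additive constant (this is precisely where $\aohsy$ behaves better than $\lohsy$ on this property), and the resulting bound $(fg+1)/(g+1) \leq f + 1$ is exactly what the affine form of the conclusion asks for.
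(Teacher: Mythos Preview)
Your proof is correct and follows essentially the same approach as the paper: pick a strictly positive $\hat{g} \in \aohx{g}$ bounded below by $1$, and set $\hat{f} = \hat{h}/\hat{g}$. The only cosmetic difference is that the paper chooses $\hat{g} = \max(g,1)$ (so that $g/\hat{g} \leq 1$ and $1/\hat{g} \leq 1$ give $\hat{f} \leq cf + c$ immediately), whereas you choose $\hat{g} = g+1$ and use the equivalent inequality $(fg+1)/(g+1) \leq f+1$; both routes are equally short.
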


\begin{proof}
Let $\hat{h} \in \aohx{fg}$. Then there exists $c \in \posi{\TR}$ such that
\begin{equation}
\hat{h} \leq cfg + c.
\end{equation}
Let $\hat{g} \in \rc{X}$ be such that $\hat{g}(x) = \max(g(x), 1)$. Since $\max(g(x), 1) \leq 1 g(x) + 1$, it holds that $\hat{g} \in \aohx{g}$. Then
\begin{eqs}
\frac{\hat{h}}{\hat{g}} & \leq cf \frac{g}{\hat{g}} + \frac{c}{\hat{g}} \\
{} & \leq cf + c.
\end{eqs}
Let $\hat{f} \in \rc{X}$ be such that $\hat{f} = \hat{h} / \hat{g}$. By the above, $\hat{f} \in \aohx{f}$. In addition, $\hat{f} \hat{g} = \hat{h}$. Therefore $\hat{h} \in \aohx{f} \aohx{g}$.
\sprove{SuperMulti}
\end{proof}

\begin{theorem}[\uproperty{PowerH} for $\aohsy$]
\label{AffinePowerHomogenuity}
Let $X \in U$, $f \in \rc{X}$, and $\alpha \in \posi{\TR}$. Then
\begin{equation}
\aohx{f}^\alpha = \aohx{f^{\alpha}}.
\end{equation}
\end{theorem}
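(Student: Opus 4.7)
The plan is to prove both containments directly from the definition $g \domi f \iff \exists c \in \posi{\TR} : g \leq cf + c$, exploiting the elementary inequality
\begin{equation}
(a + b)^{\beta} \leq K_{\beta} (a^{\beta} + b^{\beta}),
\end{equation}
valid for all $a, b \in \nonn{\TR}$ and $\beta \in \posi{\TR}$, with $K_{\beta} = 2^{\beta - 1}$ when $\beta \geq 1$ (by convexity of $t \mapsto t^{\beta}$) and $K_{\beta} = 1$ when $0 < \beta < 1$ (by subadditivity of concave functions vanishing at $0$). We cannot invoke \proveby{PositivePowerHomogenuityIsImplied} here, since $\aohsy$ fails \require{SubHom} by \proveby{AffineSubHomogenuityFails}, so a direct argument is required.

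For the $\subset$ direction, I would take $\hat{f} \in \aohx{f}$, so that $\hat{f} \leq cf + c$ for some $c \in \posi{\TR}$. Raising both sides to the power $\alpha$ and applying the inequality with $\beta = \alpha$, $a = cf$, $b = c$, gives
\begin{equation}
\hat{f}^{\alpha} \leq (cf + c)^{\alpha} \leq K_{\alpha}(c^{\alpha} f^{\alpha} + c^{\alpha}) = (K_{\alpha} c^{\alpha}) f^{\alpha} + (K_{\alpha} c^{\alpha}).
\end{equation}
Setting $d = K_{\alpha} c^{\alpha}$ yields $\hat{f}^{\alpha} \leq d f^{\alpha} + d$, so $\hat{f}^{\alpha} \in \aohx{f^{\alpha}}$, and therefore $\aohx{f}^{\alpha} \subset \aohx{f^{\alpha}}$.

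For the $\supset$ direction, I would take $\hat{g} \in \aohx{f^{\alpha}}$, so that $\hat{g} \leq c f^{\alpha} + c$ for some $c \in \posi{\TR}$, and define $\hat{f} := \hat{g}^{1 / \alpha}$, which trivially satisfies $\hat{f}^{\alpha} = \hat{g}$. It remains to show $\hat{f} \in \aohx{f}$. Applying the same inequality with $\beta = 1 / \alpha$, $a = f^{\alpha}$, $b = 1$,
\begin{equation}
\hat{f} = \hat{g}^{1/\alpha} \leq (c f^{\alpha} + c)^{1/\alpha} = c^{1/\alpha}(f^{\alpha} + 1)^{1/\alpha} \leq c^{1/\alpha} K_{1/\alpha} ((f^{\alpha})^{1/\alpha} + 1) = c^{1/\alpha} K_{1/\alpha} (f + 1),
\end{equation}
so setting $d' = c^{1/\alpha} K_{1/\alpha}$ gives $\hat{f} \leq d' f + d'$, hence $\hat{f} \in \aohx{f}$ and $\hat{g} = \hat{f}^{\alpha} \in \aohx{f}^{\alpha}$.

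The main obstacle is really only bookkeeping: one must verify the two-case estimate for $(a+b)^{\beta}$ and ensure the constants absorb correctly into a single additive-multiplicative bound of the form $\leq d(\cdot) + d$. There are no edge cases at $f(x) = 0$ since $\alpha > 0$ makes $f^{\alpha}$ well defined on $\nonn{\TR}$, and the constants are additive precisely because the affine form $cf + c$ tolerates an additive slack. No additional properties of $\aohsy$ beyond the definition are needed.
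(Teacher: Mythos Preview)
Your proof is correct and in fact takes a cleaner route than the paper's own argument. The paper also proves both inclusions directly from the definition, but for the estimate $(f+1)^{\alpha} \leq C(f^{\alpha}+1)$ it invokes the binomial series, splitting $X$ into the regions $F = \{x : f(x) \geq 1\}$ and its complement $\compl{F}$, expanding $(f+1)^{\alpha}$ on each piece, summing the resulting series to $2^{\alpha}$, and then reassembling via \property{Local}. Your use of the power-mean inequality $(a+b)^{\beta} \leq K_{\beta}(a^{\beta}+b^{\beta})$ with $K_{\beta} = \max(2^{\beta-1},1)$ accomplishes the same bound in a single stroke, with no case split on the domain, no appeal to locality, and no convergence considerations for the binomial series. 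The paper's approach buys nothing extra here; your argument is strictly more elementary and arguably preferable.
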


\begin{proof}
$\aohsy$ has \property{Local} by \proveby{AffineLocality}.

\proofpart{$\subset$}
Let $\hat{f} \in \aohx{f}$. Then there exists $c \in \posi{\TR}$ such that
\begin{equation}
\hat{f} \leq cf + c.
\end{equation}
Let $\alpha \in \posi{\TR}$. Since positive powers are increasing,
\begin{eqs}
\hat{f}^{\alpha} & \leq \bra{c f + c}^{\alpha} \\
{} & = c^{\alpha} \bra{f + 1}^{\alpha}.
\end{eqs}
Let $F = \setb{x \in X: f(x) \geq 1}$. Then for $x \in F$,
\begin{eqs}
\bra{f(x) + 1}^{\alpha} & \leq \bra{f(x) + f(x)}^{\alpha} \\
{} & = 2^{\alpha} f(x)^{\alpha}.
\end{eqs}
Let $\compl{F} = X \setminus F$. Then for $x \in \compl{F}$,
\begin{eqs}
\bra{f(x) + 1}^{\alpha} & \leq \bra{1 + 1}^{\alpha} \\
{} & = 2^{\alpha}.
\end{eqs}
Therefore,
\begin{eqs}
f & \leq 2^{\alpha} (f^{\alpha} + 1) \\
{} & = 2^{\alpha} f^{\alpha} + 2^{\alpha}.
\end{eqs}
It follows that
\begin{eqs}
\hat{f}^{\alpha} & \leq c^{\alpha} 2^{\alpha} (f^{\alpha} + 1),
\end{eqs}
which shows that $\hat{f}^{\alpha} \in \aohx{f^{\alpha}}$.

\proofpart{$\supset$}
Let $\alpha \in \posi{\TR}$, and $\hat{g} \in \aohx{f^{\alpha}}$. Then there exists $c \in \posi{\TR}$ such that
\begin{equation}
\hat{g} \leq cf^{\alpha} + c.
\end{equation}
Since positive powers are increasing,
\begin{eqs}
\hat{g}^{1 / \alpha} & \leq \bra{c f^{\alpha} + c}^{1 / \alpha} \\
{} & = c^{1 / \alpha} (f^{\alpha} + 1)^{1 / \alpha}.
\end{eqs}
Similarly to above, we can prove that $\hat{g}^{1 / \alpha} \in \aohx{f}$. Therefore, let $\hat{f} \in \rc{X}$ be such that $\hat{f} = \hat{g}^{1 / \alpha}$. Then $\hat{f} \in \aohx{f}$ and $\hat{f}^{\alpha} = \hat{g}$. Therefore $\hat{g} \in \aohx{f}^{\alpha}$.
\end{proof}

\begin{theorem}[Subset\-/sum rule for $\aohsy$ in some cases]
\label{AffineSubsetSummabilityInSomeCases}
Let $X, Y$, $Z \in U$, $\function{S}{X}{\fpower{Y \times Z}}$, $a \in \rc{Z}$, $f \in \rc{Y}$, and $\hat{f} \in \oh{Y}{f}$. Then \prove{SubsetSum} holds for $\aohsy$ if
\begin{equation}
\exists M \in \nonn{\TR}, \forall x \in X: \sum_{(y, z) \in S_x} a(z) \leq M.
\end{equation}
\end{theorem}

\begin{proof}
Let $\function{T}{\rc{Y}}{\rc{X}}$ such that
\begin{equation}
T(f) = \bra{y \mapsto \sum_{(y, z) \in S_x} a(z) f(y)}.
\end{equation}
There exists $c \in \posi{\TR}$ such that $\hat{f} \leq c f + c$. It follows that
\begin{eqs}
T(\hat{f}) & \leq c T(f) + c \sum_{(y, z) \in S_x} a(z) \\
{} & \leq c T(f) + c M \\
{} & \leq \max(c M, c) T(f) + \max(c M, c).
\end{eqs}
Therefore $T(\hat{f}) \in \aoh{Y}{T(f)}$, and so \prove{SubsetSum} holds.
\end{proof}

\begin{note}[Subset-sum rule for $\aohsy$]
It is open to us to characterize when exactly \property{SubsetSum} holds for $\aohsy$. By \proveby{AffineContainment} we would expect \property{SubsetSum} to fail, since \property{SubsetSum} does not hold for $\fohsy$ either.
\end{note}

\begin{theorem}[\uproperty{TrivialZero} fails for $\aohsy$]
\label{ZeroTrivialityFailsForAffineDominance}
$\aohsy$ does not have \property{TrivialZero}.
\end{theorem}

\begin{proof}
It holds that
\begin{eqs}
1 \leq 1 \cdot 0 + 1.
\end{eqs}
Therefore $1 \in \aoh{\posi{\TN}}{0}$, and $\aoh{\posi{\TN}}{0} \neq \setb{0}$.
\end{proof}

\section{Containment relations}

Analysis of local linear dominance reveals that making the filter basis sets larger increases the number of fulfilled desirable properties --- provided that at least \property{One} is satisfied. 

Trivial linear dominance has the smallest filter basis sets. It satisfies all of the desirable properties except those of non\-/triviality. Its worst defect is the failure of \property{One}. In fact, every function in $\rc{X}$ is equivalent for every $X \in U$. 

Asymptotic linear dominance has the next smallest filter basis sets. Its worst defect is the failure of \property{SubComp} in $\TN^d$. 

Co-asymptotic linear dominance improves upon asymptotic linear dominance by satisfying \property{ISubComp} in $\TN^d$. Its worst defects are the failure of \property{SubComp} in $\TN^d$, and the failure of \property{ISubComp} in $\TZ$. 

Cofinite linear dominance improves upon coasymptotic linear dominance by satisfying \property{ISubComp} in all universes. Its worst defect is the failure of \property{SubComp} in $\TN$.

Linear dominance has the largest filter basis sets, and fulfills all of the desirable properties. We formalize the intuitive size-comparison of the filter basis sets by the following theorem.

\begin{theorem}[Containment in $\TR^d$]
\label{Containment}
\begin{equation}
\lohx{f} \air{\subset} \fohx{f} \air{\subset} \cohx{f} \air{\subset} \pohx{f} \air{\subset} \trohx{f},
\end{equation}
for all $X \subset \TR^d$ and $f \in \rc{X}$.
\end{theorem}

\begin{proof}
Assume $d \in \posi{\TN}$ and $X \subset \TR^d$.

\proofpart{$\lohx{f} \subset \fohx{f}$}
Suppose $\hat{f} \in \lohx{f}$. Then there exists $c \in \posi{\TR}$ such that $\hat{f} \leq cf$. Since $X \in \cpower{X}$, it holds that $\restrb{\hat{f}}{X} \leq c\restrb{f}{X}$. Therefore $\hat{f} \in \fohx{f}$.

\proofpart{$\fohx{f} \subset \cohx{f}$}
Suppose $\hat{f} \in \fohx{f}$. Then there exists $c \in \posi{\TR}$ and $A \in \cpower{X}$ such that $\restrb{\hat{f}}{A} \leq c\restrb{f}{A}$. Let $y = \sup_{\leq} \bra{X \setminus A} + 1$. Then $\nlb{X}{y} \subset A$, and so $\restrb{\hat{f}}{\nlb{X}{y}} \leq c\restrb{f}{\nlb{X}{y}}$. Therefore $\hat{f} \in \cohx{f}$.

\proofpart{$\cohx{f} \subset \pohx{f}$}
Suppose $\hat{f} \in \cohx{f}$. Then there exist $c \in \posi{\TR}$ and $y \in \TR^d$ such that $\restrb{\hat{f}}{\nlb{X}{y}} \leq c\restrb{f}{\nlb{X}{y}}$. Since $\nonnb{X}{y} \subset \nlb{X}{y}$, it holds that $\restrb{\hat{f}}{\nonnb{X}{y}} \leq c\restrb{f}{\nonnb{X}{y}}$. Therefore $\hat{f} \in \pohx{f}$.

\proofpart{$\pohx{f} \subset \trohx{f}$}
Suppose $\hat{f} \in \pohx{f}$. Then $\hat{f} \in \trohx{f}$, since every function is equivalent under trivial linear dominance.
\end{proof}

Affine dominance has surprisingly good properties; \we{} \were{} especially surprised to be able to prove \property{PowerH} for it. Its worst defect is the failure of \property{SubHom}. In terms of containment, affine dominance has smaller filter bases than cofinite linear dominance.

\begin{theorem}[Affine containment]
\label{AffineContainment}
\begin{equation}
\fohx{f} \subset \aohx{f},
\end{equation}
for all $X \in U$ and $f \in \rc{X}$.
\end{theorem}

\begin{proof}
Let $\hat{f} \in \fohx{f}$. Then there exists $A \in \cpower{X}$ and $c \in \posi{\TR}$ such that $\restrb{\hat{f}}{A} \leq c \restrb{f}{A}$. Since $A$ is cofinite, let $d = \max\bra{\hat{f}\bra{X \setminus A} \cup \setb{c}}$. Then $\hat{f} \leq d f + d$. Therefore $\hat{f} \in \aohx{f}$.
\end{proof}

\chapter{Proofs of minimality}
\label{ProofsOfMinimality}

In this section we consider additional candidate definitions. These definitions are used --- together with those in \sref{CandidateDefinitions} --- to prove the minimality of pre-primitive properties in \sref{Characterization}.

\section{Elementwise dominance}
\label{ElementwiseDominance}

\begin{definition}[Elementwise dominance]
\defineexp{Elementwise dominance}{dominance!elementwise} $\eohsy$ is defined by $f \in \eohx{g}$ if and only if
\begin{eqs}
f \leq g,
\end{eqs}
for all $g \in \rc{X}$, and all $X \in U$, where $U$ is the class of all sets.
\end{definition}

\begin{theorem}[\uproperty{Order} for $\eohsy$]
\label{ElementwiseOrder}
$\eohsy$ has \prove{Order}.
\end{theorem}

\begin{proof}
By definition. \sprove{Order}
\end{proof}

\begin{theorem}[\uproperty{Trans} for $\eohsy$]
\label{ElementwiseTransivity}
$\eohsy$ has \prove{Trans}.
\end{theorem}

\begin{proof}
Let $f \in \eohx{g}$, $g \in \eohx{h}$, and $h \in \rc{X}$. Then
\begin{eqs}
{} & f \leq g, \\
{} & g \leq h.
\end{eqs}
Therefore
\begin{eqs}
f \leq h.
\end{eqs}
\sprove{Trans}
\end{proof}

\begin{theorem}[\uproperty{Scale} fails for $\eohsy$]
\label{ElementwiseScaleInvarianceFails}
$\eohsy$ does not have \property{Scale}.
\end{theorem}

\begin{proof}
Let $f \in \eohx{g}$ be such that $f \neq 0$ and $\alpha \in \posi{\TR}$ be such that $\alpha < 1$. Let $p \in X$ be such that $f(p) > 0$. Then
\begin{eqs}
f(p) > \alpha f(p).
\end{eqs}
\end{proof}

\begin{theorem}[\uproperty{Local} for $\eohsy$]
\label{ElementwiseLocality}
$\eohsy$ has \prove{Local}.
\end{theorem}

\begin{proof}
Let $C \subset \power{X}$ be a finite cover of $X$, and suppose $\restrb{f}{D} \in \eohx{\restr{g}{D}}$ for all $D \in C$. Then
\begin{eqs}
\restrb{f}{D} \leq \restrb{g}{D},
\end{eqs}
for all $D \in C$. This implies
\begin{eqs}
f \leq g.
\end{eqs}
\sprove{Local}
\end{proof}

\begin{theorem}[\uproperty{One} for $\eohsy$]
\label{ElementwiseOneSeparation}
$\eohsy$ has \prove{One}.
\end{theorem}

\begin{proof}
It holds that  $n > 1$ for $n = 2$. \sprove{One}
\end{proof}

\begin{theorem}[\uproperty{SubHom} for $\eohsy$]
\label{ElementwiseSubHomogeneity}
$\eohsy$ has \prove{SubHom}. \sprove{SubHomN} \sprove{SubDivN}
\end{theorem}

\begin{proof}
Let $f \in \eohx{g}$, and $g, u \in \rc{X}$. Then
\begin{eqs}
{} & f \leq g.
\end{eqs}
This implies
\begin{eqs}
uf \leq ug.
\end{eqs}
\sprove{SubHom} \sprove{SubHomN} \sprove{SubDivN}
\end{proof}

\begin{theorem}[\uproperty{SubComp} for $\eohsy$]
\label{ElementwiseSubComposability}
$\eohsy$ has \prove{SubComp}.
\end{theorem}

\begin{proof}
Let $f \in \eohx{g}$, and $\function{s}{Y}{X}$. Then
\begin{eqs}
{} & f \leq g.
\end{eqs}
This implies
\begin{eqs}
f \circ s \leq g \circ s.
\end{eqs}
\sprove{SubComp}
\end{proof}

\section{Multiple dominance}
\label{MultipleDominance}

\begin{definition}[Multiple dominance]
\defineexp{Multiple dominance}{dominance!multiple} $\mohsy$ is defined by $f \in \mohx{g}$ if and only if there exists $c \in \posi{\TR}$ such that
\begin{eqs}
f = cg,
\end{eqs}
for all $g \in \rc{X}$, and all $X \in U$, where $U$ is the class of all sets.
\end{definition}

\begin{theorem}[\uproperty{Order} fails for $\mohsy$]
\label{MultipleOrderFails}
$\mohsy$ does not have \property{Order}.
\end{theorem}

\begin{proof}
Let $f, g \in \rc{\posi{\TN}}$ be such that $f = 0$ and $g = 1$. Then $f \leq g$, and
\begin{eqs}
0 = f \neq cg = c,
\end{eqs}
for all $c \in \posi{\TR}$.
\end{proof}

\begin{theorem}[\uproperty{Trans} for $\mohsy$]
\label{MultipleTransivity}
$\mohsy$ has \prove{Trans}.
\end{theorem}

\begin{proof}
Let $f \in \mohx{g}$, $g \in \mohx{h}$, and $h \in \rc{X}$. Then there exists $c, d \in \posi{\TR}$ such that
\begin{eqs}
{} & f = cg, \\
{} & g = dh.
\end{eqs}
Therefore
\begin{eqs}
f = cd h.
\end{eqs}
\sprove{Trans}
\end{proof}

\begin{theorem}[\uproperty{Scale} for $\mohsy$]
\label{MultipleScaleInvariance}
$\mohsy$ has \property{Scale}.
\end{theorem}

\begin{proof}
Let $f \in \mohx{g}$, and $\alpha \in \posi{\TR}$. Then there exists $c \in \posi{\TR}$ such that
\begin{eqs}
f = cg.
\end{eqs}
This implies
\begin{eqs}
f = c \frac{1}{\alpha} \alpha g.
\end{eqs}
\end{proof}

\begin{theorem}[\uproperty{Local} fails for $\mohsy$]
\label{MultipleLocalityFails}
$\mohsy$ does not have \property{Local}.
\end{theorem}

\begin{proof}
Let $f \in \rc{\TN}$ be such that
\begin{eqs}
f(n) = 
\begin{cases}
2, & n > 0, \\
1, & n = 0.
\end{cases}
\end{eqs}
Let $g \in \rc{\TN}$ be such that $g = 4$. Then there is no $c \in \posi{\TR}$ such that $f = cg$.
\end{proof}

\begin{theorem}[\uproperty{One} for $\mohsy$]
\label{MultipleOneSeparation}
$\mohsy$ has \prove{One}.
\end{theorem}

\begin{proof}
It holds that $n > c$ for $n = \ceilb{c} + 1$, for all $c \in \posi{\TR}$. \sprove{One}
\end{proof}

\begin{theorem}[\uproperty{SubHom} for $\mohsy$]
\label{MultipleSubHomogeneity}
$\mohsy$ has \prove{SubHom}. \sprove{SubHomN} \sprove{SubDivN}
\end{theorem}

\begin{proof}
Let $f \in \mohx{g}$, and $g, u \in \rc{X}$. Then there exists $c \in \posi{\TR}$ such that
\begin{eqs}
f = cg.
\end{eqs}
This implies
\begin{eqs}
uf = cug.
\end{eqs}
\sprove{SubHom} \sprove{SubHomN} \sprove{SubDivN}
\end{proof}

\begin{theorem}[\uproperty{SubComp} for $\mohsy$]
\label{MultipleSubComposability}
$\mohsy$ has \prove{SubComp}.
\end{theorem}

\begin{proof}
Let $f \in \mohx{g}$, and $\function{s}{Y}{X}$. Then there exists $c \in \posi{\TR}$ such that
\begin{eqs}
f = cg.
\end{eqs}
This implies
\begin{eqs}
(f \circ s) = c (g \circ s).
\end{eqs}
\sprove{SubComp}
\end{proof}

\section{Non-transitive dominance}
\label{NonTransitiveDominance}

\begin{definition}[Non-transitive dominance]
\defineexp{Non-transitive dominance}{dominance!non-transitive} $\nohsy$ is defined by $f \in \nohx{g}$ if and only if 
\begin{eqs}
\bra{\exists c \in \posi{\TR}: f = cg} \lor \bra{f \leq 2g},
\end{eqs}
for all $g \in \rc{X}$, and all $X \in U$, where $U$ is the class of all sets.
\end{definition}

\begin{theorem}[\uproperty{Order} for $\nohsy$]
\label{NonTransitiveOrder}
$\nohsy$ has \property{Order}.
\end{theorem}

\begin{proof}
Let $f, g \in \rc{\posi{\TN}}$ be such $f \leq g$. Then
\begin{eqs}
f \leq 2g.
\end{eqs}
\end{proof}

\begin{theorem}[\uproperty{Trans} fails for $\nohsy$]
\label{NonTransitiveTransivityFails}
$\nohsy$ does not have \property{Trans}.
\end{theorem}

\begin{proof}
Let $p \in X$, and $f, g, h \in \rc{X}$ be such that $h = 1$, $g = 2$, and
\begin{eqs}
f(x) =
\begin{cases}
4 & x \neq p, \\
3 & \text{otherwise}.
\end{cases}
\end{eqs}
Then
\begin{eqs}
f & \leq 2g = 4 \\
g & \leq 2h = 2,
\end{eqs}
so that $f \in \nohx{g}$ and $g \in \nohx{h}$. However, $f(p) > 2h(p)$, and there is no $c \in \posi{\TR}$ such that $f = ch$. Therefore $f \not\in \nohx{h}$.
\end{proof}

\begin{theorem}[\uproperty{Scale} for $\nohsy$]
\label{NonTransitiveScaleInvariance}
$\nohsy$ has \property{Scale}.
\end{theorem}

\begin{proof}
Let $\alpha \in \posi{\TR}$. Then
\begin{eqs}
f = \frac{1}{\alpha} \alpha f.
\end{eqs}
\end{proof}

\begin{theorem}[\uproperty{Local} fails for $\nohsy$]
\label{NonTransitiveLocalityFails}
$\nohsy$ does not have \property{Local}.
\end{theorem}

\begin{proof}
Let $f, g \in \rc{\TN}$ be such that $g = 1$, and
\begin{eqs}
f(n) = 
\begin{cases}
4, & n > 0, \\
3, & n = 0.
\end{cases}
\end{eqs}
Then $\restrb{f}{\setb{0}} = 3 = 3 \restrb{g}{\setb{0}}$, and $\restrb{f}{\posi{\TN}} = 4 = 4 \restrb{g}{\posi{\TN}}$. Therefore $\restrb{f}{\setb{0}} \in \noh{\setb{0}}{g}$ and $\restrb{f}{\posi{\TN}} \in \oh{\posi{\TN}}{\restr{g}{\posi{\TN}}}$. However, $f(0) > 2g(0)$, and there is no $c \in \posi{\TR}$ such that $f = cg$.
\end{proof}

\begin{theorem}[\uproperty{One} for $\nohsy$]
\label{NonTransitiveOneSeparation}
$\nohsy$ has \prove{One}.
\end{theorem}

\begin{proof}
\begin{eqs}
{} & n \in \noh{\posi{\TN}}{1} \\
\iffr & \bra{\exists c \in \posi{\TR}: \forall n \in \posi{\TN}: n = c} \lor \bra{\forall n \in \posi{\TN}: n \leq 2},
\end{eqs}
which is false.\sprove{One}
\end{proof}

\begin{theorem}[\uproperty{SubHom} for $\nohsy$]
\label{NonTransitiveSubHomogeneity}
$\nohsy$ has \prove{SubHom}. \sprove{SubHomN} \sprove{SubDivN}
\end{theorem}

\begin{proof}
Let $f \in \nohx{g}$, and $g, u \in \rc{X}$. Suppose there exists $c \in \posi{\TR}$ such that
\begin{eqs}
f = cg.
\end{eqs}
Then
\begin{eqs}
uf = cug.
\end{eqs}
Suppose $f \leq 2g$. Then
\begin{eqs}
uf \leq 2(ug).
\end{eqs}
\sprove{SubHom} \sprove{SubHomN} \sprove{SubDivN}
\end{proof}

\begin{theorem}[\uproperty{SubComp} for $\nohsy$]
\label{NonTransitiveSubComposability}
$\nohsy$ has \prove{SubComp}.
\end{theorem}

\begin{proof}
Let $f \in \nohx{g}$, and $g, u \in \rc{X}$. Suppose there exists $c \in \posi{\TR}$ such that $f = cg$. Then
\begin{eqs}
f \circ s = c (g \circ s).
\end{eqs}
Suppose $f \leq 2g$. Then
\begin{eqs}
f \circ s \leq 2(g \circ s).
\end{eqs}
\sprove{SubComp}
\end{proof}

\section{Power dominance}
\label{PowerDominance}

\begin{definition}[Clamped power]
The \define{clamped $k$-power}, where $k \in \posi{\TN}$, is a function $\function{\clpower{\dummy}{k}}{\nonn{\TR}}{\nonn{\TR}}$ such that
\begin{eqs}
\clpower{x}{k} =
\begin{cases}
x^k & x \geq 1, \\
x & x < 1.
\end{cases}
\end{eqs}
\end{definition}

\begin{definition}[Power dominance]
\defineexp{Power dominance}{dominance!power} $\wohsy$ is defined by $f \in \wohx{g}$ if and only if there exists $c \in \posi{\TR}$ and $k \in \posi{\TN}$ such that
\begin{eqs}
f \leq c\clpower{g}{k},
\end{eqs}
for all $g \in \rc{X}$, and all $X \in U$, where $U$ is the class of all sets.
\end{definition}

\begin{theorem}[\uproperty{Order} for $\wohsy$]
\label{PowerOrder}
$\wohsy$ has \prove{Order}.
\end{theorem}

\begin{proof}
Suppose $f \leq g$. Then
\begin{eqs}
f \leq 1 \clpower{g}{1} = g.
\end{eqs}
\sprove{Order}
\end{proof}

\begin{theorem}[\uproperty{Trans} for $\wohsy$]
\label{PowerTransivity}
$\wohsy$ has \prove{Trans}.
\end{theorem}

\begin{proof}
Let $f \in \wohx{g}$, $g \in \wohx{h}$, and $h \in \rc{X}$. Then there exists $c \in \posi{\TR}$ such that
\begin{eqs}
{} & f \leq c \clpower{g}{k}, \\
{} & g \leq c \clpower{h}{k}.
\end{eqs}
We may assume $c \geq 1$. Then
\begin{eqs}
f \leq c \clpower{(c \clpower{h}{k})}{k}.
\end{eqs}

\proofpart{Small values}
Suppose $x \in X$ is such that $c \clpower{h(x)}{k} < 1$. Then
\begin{eqs}
f(x) & \leq c^2 \clpower{h(x)}{k} \\
{} & \leq c^{k + 1} \clpower{h(x)}{k^2}.
\end{eqs}

\proofpart{Large values}
Suppose $x \in X$ is such that $c \clpower{h(x)}{k} \geq 1$ and $h(x) \geq 1$. Then
\begin{eqs}
f(x) & \leq c (c h(x)^k)^k \\
{} & = c^{k + 1} h(x)^{k^2} \\
{} & \leq c^{k + 1} \clpower{h(x)}{k^2}.
\end{eqs}
Suppose $x \in X$ is such that $c \clpower{h(x)}{k} \geq 1$ and $h(x) < 1$. Then
\begin{eqs}
f(x) & \leq c (c h(x))^k \\
{} & = c^{k + 1} h(x)^k \\
{} & \leq c^{k + 1} \clpower{h(x)}{k^2}.
\end{eqs}
\sprove{Trans}
\end{proof}

\begin{theorem}[\uproperty{Scale} for $\wohsy$]
\label{PowerScaleInvariance}
$\wohsy$ has \property{Scale}.
\end{theorem}

\begin{proof}
Let $f \in \rc{X}$ and $\alpha \in \posi{\TR}$. Then
\begin{eqs}
f & = \frac{1}{\alpha} \alpha f \\
{} & = \frac{1}{\alpha} \clpower{(\alpha f)}{1}.
\end{eqs}
\end{proof}

\begin{theorem}[\uproperty{Local} for $\wohsy$]
\label{PowerLocality}
$\wohsy$ has \prove{Local}.
\end{theorem}

\begin{proof}
Let $C \subset \power{X}$ be a finite cover of $X$, and suppose $\restrb{f}{D} \in \wohx{\restr{g}{D}}$ for all $D \in C$. Then there exists $c_D \in \posi{\TR}$ and $k_D \in \posi{\TN}$ such that
\begin{eqs}
\restrb{f}{D} \leq c_D \clpower{\restrb{g}{D}}{k_D},
\end{eqs}
for all $D \in C$. Let $c = \max \setb{c_D : D \in C}$ and $k = \max \setb{k_D : D \in C}$. Then
\begin{eqs}
f \leq c \clpower{g}{k}.
\end{eqs}
\sprove{Local}
\end{proof}

\begin{theorem}[\uproperty{One} for $\wohsy$]
\label{PowerOneSeparation}
$\wohsy$ has \prove{One}.
\end{theorem}

\begin{proof}
Let $c \in \posi{\TR}$. Then $n > c \clpower{1}{k}$ for $n = \ceilb{c} + 1$. \sprove{One}
\end{proof}

\begin{lemma}[Clamped power lemma]
\label{ClampedPowerLemma}
\begin{eqs}
u\clpower{g}{k} \leq \clpower{(ug)}{k},
\end{eqs}
for all $g \in \nonn{\TR}$, $u, k \in \posi{\TN}$.
\end{lemma}

\begin{proof}
\proofpart{$ug < 1$}
Since $u \geq 1$, it holds that $g < 1$. Then
\begin{eqs}
u \clpower{g}{k} & = u g \\
{} & = \clpower{(ug)}{k}.
\end{eqs}

\proofpart{$ug \geq 1$ and $g < 1$}
Then
\begin{eqs}
u \clpower{g}{k} & = ug \\
{} & \leq (ug)^k \\
{} & = \clpower{(ug)}{k}.
\end{eqs}

\proofpart{$ug \geq 1$ and $g \geq 1$}
Then
\begin{eqs}
u \clpower{g}{k} & = u g^k \\
{} & \leq u^k g^k \\
{} & = (ug)^k \\
{} & = \clpower{(ug)}{k}.
\end{eqs}
\end{proof}

\begin{theorem}[\uproperty{NSubHom} for $\wohsy$]
\label{PowerSubHomogeneity}
$\wohsy$ has \prove{NSubHom}. 
\end{theorem}

\begin{proof}
Let $f \in \wohx{g}$, and $g, u \in \rc{X}$ be such that $\image{u}{X} \subset \TN$. Then there exists $c \in \posi{\TR}$ and $k \in \posi{\TN}$ such that
\begin{eqs}
f \leq c \clpower{g}{k}.
\end{eqs}
This implies
\begin{eqs}
uf & \leq cu \clpower{g}{k} \\
{} & \leq c \clpower{(ug)}{k},
\end{eqs}
where we used \proveby{ClampedPowerLemma}.
\sprove{NSubHom} 
\end{proof}

\begin{theorem}[\uproperty{NSubDiv} fails for $\wohsy$]
\label{PowerSubHomogeneityDivNFails}
$\wohsy$ does not have \property{NSubDiv}. 
\end{theorem}

\begin{proof}
Let $f, g, u \in \rc{\posi{\TN}}$ be such that
\begin{eqs}
f(n) & = n^2, \\
g(n) & = n, \\
u(n) & = n.
\end{eqs}
Then $f \leq 1 \clpower{g}{2}$. Let $c \in \posi{\TR}$ and $k \in \posi{\TN}$. Then the inequality
\begin{eqs}
n = f / u \leq d \clpower{\bra{g / u}}{m} = d
\end{eqs}
fails for $n = \ceilb{d} + 1$.
\end{proof}

\begin{theorem}[\uproperty{SubComp} for $\wohsy$]
\label{PowerSubComposability}
$\wohsy$ has \prove{SubComp}.
\end{theorem}

\begin{proof}
Let $f \in \wohx{g}$, and $\function{s}{Y}{X}$. Then there exists $c \in \posi{\TR}$ and $k \in \posi{\TN}$ such that
\begin{eqs}
f \leq c \clpower{g}{k}.
\end{eqs}
This implies
\begin{eqs}
(f \circ s) \leq c \clpower{(g \circ s)}{k}.
\end{eqs}
\sprove{SubComp}
\end{proof}

\chapter{Partitioned sets}
\label{PartitionedSets}

In this section we develop some theory of partitioned sets. Partitioned sets occur in the theory of $\ohsy$-notation, because the equality of $\ohsy$-sets in a set $X$ --- $\ohx{f} = \ohx{g}$ --- is an equivalence relation in $\rc{X}$.\footnote{An equivalence relation is a reflexive, symmetric, and transitive relation.} 

\begin{definition}[Partitioned set]
A \define{partitioned set} is a set $X$ with an associated equivalence $\relationin{\preeq}{X}$ in $X$.
\end{definition}

\begin{note}[Conventions]
Let $X$ and $Y$ be partitioned sets. We will often shorten the word \emph{partition} into a single letter p.
\end{note}

\begin{definition}[Partition\-/preserving]
A function $\function{f}{X}{Y}$ is \define{p-preserving} if
\begin{eqs}
x_1 \preeq x_2 \implies f(x_1) \preeqb f(x_2),
\end{eqs}
for all $x_1, x_2 \in X$.
\end{definition}

\begin{note}[Homomorphisms]
The p-preserving functions are the homomorphisms of partitioned sets; they preserve the partition structure. \end{note}

\begin{definition}[Partition closure]
The \define{p-closure} on $X$ is a function $\function{\bar{\; \cdot \;}}{\power{X}}{\power{X}}$ such that
\begin{equation}
\bar{D} = \setb{x \in X : \exists d \in D: x \preeq d}.
\end{equation}
\end{definition}

\newcommand{\blackdot}[1]{\filldraw (#1) circle (0.2)}
\newcommand{\whitedot}[1]{\filldraw[white, draw=black] (#1) circle (0.2)}

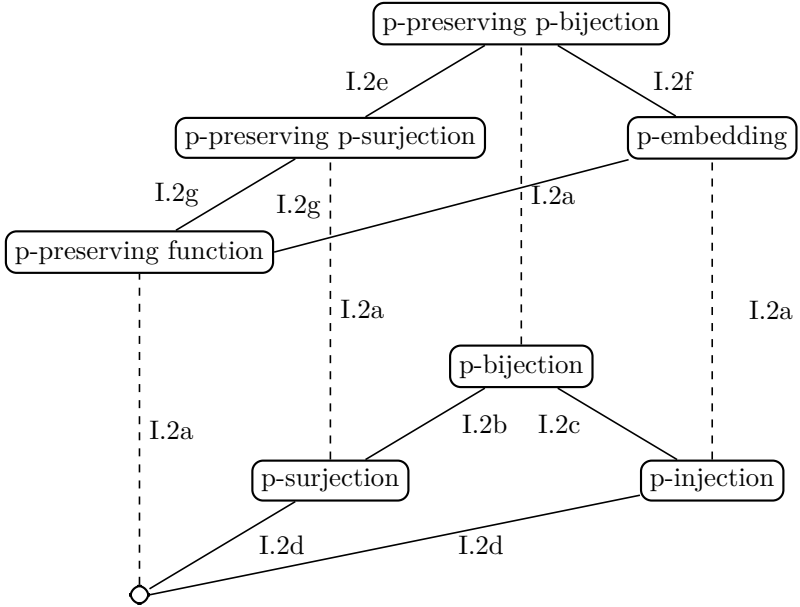
\begin{figure}
\centering
\begin{tikzpicture}[every node/.style={draw,rounded corners,thick}]
        \node (bot) at (0, 0) {};
        \node (sur) at (2.5, 1.5) {p-surjection};
        \node (bij) at (5, 3) {p-bijection};
        \node (inj) at (7.5, 1.5) {p-injection};
        \node (p_inj) at (7.5, 6) {p-embedding};
        \node (p_bij) at (5, 7.5) {p-preserving p-bijection};
        \node (p_sur) at (2.5, 6) {p-preserving p-surjection};
        \node (p) at (0, 4.5) {p-preserving function};
        
        \draw[semithick] (bot) -- node[right=10, draw=none] {\ref{PNothing}} (sur);
        
        \draw[semithick] (sur) -- node[right=10, draw=none] {\ref{PSurjectiveNotPPreservingOrPInjective}} (bij);
        
        \draw[semithick] (p_inj) -- node[right=10, draw=none] {\ref{PInjectiveAndPPreservingNotPSurjective}} (p_bij);

        \draw[semithick] (bot.east) -- node[right=20, draw=none] {\ref{PNothing}} (inj);
        
        \draw[semithick, dashed] (inj) -- node[right=10, draw=none] {\ref{PBijectiveNotPPreserving}} (p_inj);

        \draw[semithick, dashed] (bot) -- node[right, draw=none] {\ref{PBijectiveNotPPreserving}} (p);

        \draw[semithick] (p) -- node[left=10, draw=none] {\ref{PPreservingNotPInjectiveOrPSurjective}} (p_sur);

        \draw[semithick] (p_sur) -- node[left=10, draw=none] {\ref{PSurjectiveAndPPreservingNotPInjective}} (p_bij);

        \draw[semithick] (p.east) -- node[left=45, draw=none] {\ref{PPreservingNotPInjectiveOrPSurjective}} (p_inj);

        \draw[semithick] (inj) -- node[left=10, draw=none] {\ref{PInjectiveNotPPreservingOrPSurjective}} (bij);

        \draw[semithick, dashed] (sur) -- node[right, draw=none] {\ref{PBijectiveNotPPreserving}} (p_sur);

        \draw[semithick, dashed] (bij) -- node[right, draw=none] {\ref{PBijectiveNotPPreserving}} (p_bij);
\end{tikzpicture}
\caption{A Hasse diagram of functions between partitioned sets, ordered by the `generalizes' partial order. The edge labels --- which refer to \fref{PartitionPreservingProper} --- show that the generalizations are proper.}
\label{PartitionPreservingDiagram}
\end{figure}

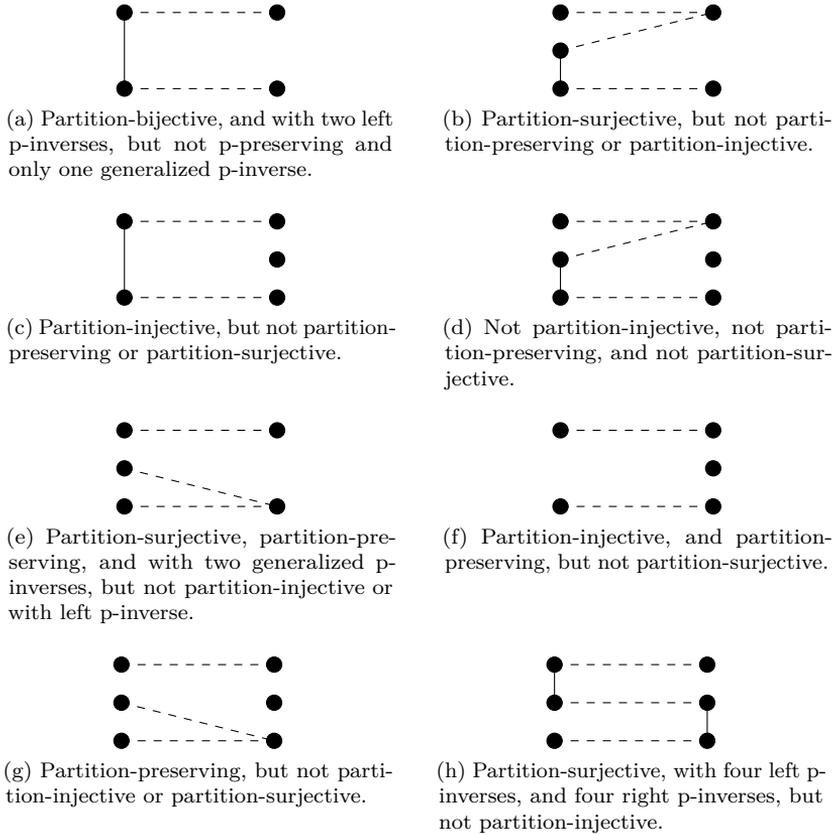
\begin{figure}
\centering
\hfill
\subfloat[Partition\-/bijective, and with two left p-inverses, but not p\-/preserving and only one generalized p-inverse.]
{
\makebox[.4\textwidth]{
\begin{tikzpicture}[scale = 0.5]
\blackdot{0, 2};
\blackdot{0, 0};
\blackdot{4, 0};
\blackdot{4, 2};
\draw[dashed] (0, 2) -- (4, 2);
\draw[dashed] (0, 0) -- (4, 0);
\draw (0, 0) -- (0, 2);
\end{tikzpicture}
}
\label{PBijectiveNotPPreserving}
}
\hfill
\subfloat[Partition\-/surjective, but not partition\-/preserving or partition\-/injective.]
{
\makebox[.4\textwidth]{
\begin{tikzpicture}[scale = 0.5]
\blackdot{0, 1};
\blackdot{0, 0};
\blackdot{4, 0};
\blackdot{4, 2};
\blackdot{0, 2};
\draw[dashed] (0, 1) -- (4, 2);
\draw[dashed] (0, 0) -- (4, 0);
\draw[dashed] (0, 2) -- (4, 2);
\draw (0, 1) -- (0, 0);
\end{tikzpicture}
}
\label{PSurjectiveNotPPreservingOrPInjective}
}
\hfill\null

\hfill
\subfloat[Partition\-/injective, but not partition\-/preserving or partition\-/surjective.]
{
\makebox[.4\textwidth]{
\begin{tikzpicture}[scale = 0.5]
\blackdot{0, 2};
\blackdot{0, 0};
\blackdot{4, 0};
\blackdot{4, 1};
\blackdot{4, 2};
\draw[dashed] (0, 2) -- (4, 2);
\draw[dashed] (0, 0) -- (4, 0);
\draw (0, 0) -- (0, 2);
\end{tikzpicture}
}
\label{PInjectiveNotPPreservingOrPSurjective}
}
\hfill
\subfloat[Not partition\-/injective, not partition\-/preserving, and not partition\-/surjective.]
{
\makebox[.4\textwidth]{
\begin{tikzpicture}[scale = 0.5]
\blackdot{0, 1};
\blackdot{0, 2};
\blackdot{0, 0};
\blackdot{4, 0};
\blackdot{4, 1};
\blackdot{4, 2};
\draw[dashed] (0, 1) -- (4, 2);
\draw[dashed] (0, 0) -- (4, 0);
\draw[dashed] (0, 2) -- (4, 2);
\draw (0, 0) -- (0, 1);
\end{tikzpicture}
}
\label{PNothing}
}
\hfill\null

\hfill
\subfloat[Partition\-/surjective, partition\-/preserving, and with two generalized p-inverses, but not partition\-/injective or with left p-inverse.]
{
\makebox[.4\textwidth]{
\begin{tikzpicture}[scale = 0.5]
\blackdot{0, 0};
\blackdot{0, 1};
\blackdot{4, 0};
\blackdot{4, 2};
\blackdot{0, 2};
\draw[dashed] (0, 2) -- (4, 2);
\draw[dashed] (0, 0) -- (4, 0);
\draw[dashed] (0, 1) -- (4, 0);
\end{tikzpicture}
}
\label{PSurjectiveAndPPreservingNotPInjective}
}
\hfill
\subfloat[Partition\-/injective, and partition\-/preserving, but not partition\-/surjective.]
{
\makebox[.4\textwidth]{
\begin{tikzpicture}[scale = 0.5]
\blackdot{0, 0};
\blackdot{0, 2};
\blackdot{4, 0};
\blackdot{4, 1};
\blackdot{4, 2};
\draw[dashed] (0, 2) -- (4, 2);
\draw[dashed] (0, 0) -- (4, 0);
\end{tikzpicture}
}
\label{PInjectiveAndPPreservingNotPSurjective}
}
\hfill\null

\hfill
\subfloat[Partition\-/preserving, but not partition\-/injective or partition\-/surjective.]
{
\makebox[.4\textwidth]{
\begin{tikzpicture}[scale = 0.5]
\blackdot{0, 0};
\blackdot{0, 2};
\blackdot{0, 1};
\blackdot{4, 0};
\blackdot{4, 1};
\blackdot{4, 2};
\draw[dashed] (0, 2) -- (4, 2);
\draw[dashed] (0, 0) -- (4, 0);
\draw[dashed] (0, 1) -- (4, 0);
\end{tikzpicture}
}
\label{PPreservingNotPInjectiveOrPSurjective}
}
\hfill
\subfloat[Partition\-/surjective, with four left p-inverses, and four right p-inverses, but not partition\-/injective.]
{
\makebox[.4\textwidth]{
\begin{tikzpicture}[scale = 0.5]
\blackdot{0, 0};
\blackdot{0, 1};
\blackdot{0, 2};
\blackdot{4, 0};
\blackdot{4, 1};
\blackdot{4, 2};
\draw[dashed] (0, 2) -- (4, 2);
\draw[dashed] (0, 1) -- (4, 1);
\draw[dashed] (0, 0) -- (4, 0);
\draw (4, 0) -- (4, 1);
\draw (0, 1) -- (0, 2);
\end{tikzpicture}
}
\label{PSurjectiveWithPInversesNotPInjective}
}
\hfill\null
\caption{Diagrams to show that the various definitions of functions on partitioned sets are not equivalent. Equivalent dots are connected with a solid line.}
\label{PartitionPreservingProper}
\end{figure}

\section{Generalized partition-inverse}

\begin{definition}[Generalized partition\-/inverse]
A \define{generalized p-inverse} of $\function{f}{X}{Y}$ is $\function{\ginvs{f}}{Z}{X}$ such that $\image{f}{X} \subset Z \subset Y$, and
\begin{eqs}
f(\ginv{f}{f(x)}) \preeqb f(x),
\end{eqs}
for all $x \in X$.
\end{definition}

\begin{theorem}[Construction for a generalized p-inverse]
\label{ConstructionForGeneralizedInverse}
Let \hfill \\ $\function{f}{X}{Y}$. Then $\function{\hat{f}}{Z}{X}$ is a generalized p-inverse of $f$ if and only if
\begin{eqs}
\hat{f}(y) \in \preimage{f}{\pclosure{\setb{y}}},
\end{eqs}
for all $y \in \image{f}{X}$.
\end{theorem}

\begin{proof}
It holds that
\begin{eqs}
{} & \hat{f}(f(x)) \in \preimage{f}{\pclosure{\setb{f(x)}}} \\
\iffr & f(\hat{f}(f(x))) \in \pclosure{\setb{f(x)}} \\
\iffr & f(\hat{f}(f(x))) \preeqb f(x),
\end{eqs}
for all $x \in X$.
\end{proof}

\begin{note}[Generalized partition\-/inverse exists]
\thref{ConstructionForGeneralizedInverse} shows that a generalized p-inverse always exists.
\end{note}

\begin{note}[Generalized partition\-/inverse may not be unique]
\fref{PSurjectiveWithPInversesNotPInjective} shows that a function $\function{f}{X}{Y}$ can have many non-equivalent generalized p-inverses.
\end{note}

\section{Partition-injectivity and left partition-inverse}

\begin{definition}[Partition\-/injectivity]
A function $\function{f}{X}{Y}$ is \define{p-injective}, if 
\begin{eqs}
f(x_1) \preeqb f(x_2) \implies x_1 \preeq x_2, 
\end{eqs}
for all $x_1, x_2 \in X$.
\end{definition}

\begin{definition}[Left partition\-/inverse]
A \define{left p-inverse} of $\function{f}{X}{Y}$ is a function $\function{\linvs{f}}{Z}{X}$ such that $\image{f}{X} \subset Z \subset Y$ and
\begin{eqs}
\linv{f}{f(x)} \preeq x,
\end{eqs}
for all $x \in X$.
\end{definition}

\begin{theorem}[Generalized partition\-/inverse is a left partition\-/inverse for a partition\-/injective function]
\label{GeneralizedPInverseIsLeftPInverseForPInjective}
Let $\function{f}{X}{Y}$ be p-injective. Then a generalized p-inverse of $f$ is a left p-inverse of $f$.
\end{theorem}

\begin{proof}
Let $\function{\ginvs{f}}{Z}{X}$ be a generalized p-inverse of $f$. By definition,
\begin{eqs}
f(\ginv{f}{f(x)}) \preeqb f(x),
\end{eqs}
for all $x \in X$. Since $f$ is p-injective,
\begin{eqs}
\ginv{f}{f(x)} \preeq x,
\end{eqs}
for all $x \in X$. Therefore $\ginvs{f}$ is a left p-inverse of $f$.
\end{proof}

\begin{note}[Generalized partition\-/inverse may not be a left partition\-/inverse]
\fref{PSurjectiveAndPPreservingNotPInjective} shows that a generalized p-inverse may not be a left p-inverse.
\end{note}

\begin{theorem}[Left partition\-/inverse is a generalized partition\-/inverse for a partition\-/preserving function]
\label{LeftPInverseIsGeneralizedPInverseForPPreserving}
Let $\function{f}{X}{Y}$ be p-preserving. Then a left p-inverse of $f$ is a generalized p-inverse of $f$.
\end{theorem}

\begin{proof}
Let $\function{\linvs{f}}{Z}{X}$ be a left p-inverse of $f$. By definition,
\begin{eqs}
\linv{f}{f(x)} \preeq x,
\end{eqs}
for all $x \in X$. Since $f$ is p-preserving,
\begin{eqs}
f(\linv{f}{f(x)}) \preeqb f(x),
\end{eqs}
for all $x \in X$. Therefore $\linvs{f}$ is a generalized p-inverse of $f$.
\end{proof}

\begin{note}[Left partition\-/inverse may not be a generalized partition\-/inverse]
\fref{PBijectiveNotPPreserving} shows a non-p-preserving function which has two left p-inverses, but only one generalized p-inverse; a left p-inverse may not be a generalized p-inverse.
\end{note}

\begin{theorem}[Partition\-/injectivity is equivalent to having a p-preserving left p-inverse]
\label{PInjectivityIsEquivalentToPPreservingLeftInverse}
Let $\function{f}{X}{Y}$. Then $f$ is p-injective if and only if $f$ has a p-preserving left p-inverse.
\end{theorem}

\begin{proof}
\proofpart{$\implies$}
Let $\function{\hat{f}}{\image{f}{X}}{X}$ be such that
\begin{eqs}
\hat{f}(y) \in \preimage{f}{\pclosure{\setb{y}}}.
\end{eqs}
Then $\hat{f}$ is a generalized p-inverse of $f$ by \thref{ConstructionForGeneralizedInverse}. Since $f$ is p-injective, $\hat{f}$ is a left p-inverse by \thref{GeneralizedPInverseIsLeftPInverseForPInjective}. It holds that
\begin{eqs}
{} & x_1, x_2 \in \preimage{f}{\pclosure{\setb{y}}} \\
\impliesr & f(x_1), f(x_2) \in \pclosure{\setb{y})} \\
\impliesr & f(x_1) \preeqb y \preeqb f(x_2) \\
\impliesr & x_1 \preeq x_2, && \why{$f$ p-injective}
\end{eqs}
for all $x_1, x_2 \in X$ and $y \in \image{f}{X}$. That is, all elements in $\preimage{f}{\pclosure{\setb{y}}}$ are equivalent to each other, for all $y \in \image{f}{X}$. Let $y_1, y_2 \in \image{f}{X}$ be such that $y_1 \preeqb y_2$. Then
\begin{eqs}
\pclosure{\setb{y_1}} = \pclosure{\setb{y_2}}.
\end{eqs}
This implies that
\begin{eqs}
{} & \hat{f}(y_1), \hat{f}(y_2) \in \preimage{f}{\pclosure{\setb{y_1}}}.
\end{eqs}
However, since all elements in $\preimage{f}{\pclosure{\setb{y_1}}}$ are equivalent to each other,
\begin{eqs}
\hat{f}(y_1) \preeq \hat{f}(y_2).
\end{eqs}
Therefore $\hat{f}$ is p-preserving. 

\proofpart{$\impliedby$}
Let $\function{\linvs{f}}{Z}{X}$ be a p-preserving left p-inverse of $f$. Then
\begin{eqs}
{} & f(x_1) \preeqb f(x_2) \\
\impliesr & \linv{f}{f(x_1)} \preeq \linv{f}{f(x_2)} && \why{$\linvs{f}$ p-preserving} \\
\impliesr & x_1 \preeq x_2, && \why{$\linvs{f}$ left p-inverse of $f$}
\end{eqs}
for all $x_1, x_2 \in X$. Therefore $f$ is p-injective.
\end{proof}

\begin{note}[Non\-/partition\-/preserving left p-inverse]
\fref{PSurjectiveWithPInversesNotPInjective} shows that the existence of a non-p-preserving left p-inverse does not imply p-injectivity.
\end{note}

\begin{theorem}[Left p-inverses are equivalent on the image]
\label{LeftPInversesAreEquivalent}
The left p-inverses of $\function{f}{X}{Y}$ are equivalent to each other on $\image{f}{X}$.
\end{theorem}

\begin{proof}
Suppose $\function{g}{Z}{X}$ and $\function{g}{W}{X}$ are left p-inverses of $f$. For each $y \in \image{f}{X}$, there exists $x_y \in X$ such that $y = f(x_y)$. Since $g$ and $h$ are left p-inverses,
\begin{eqs}
g(y) = g(f(x_y)) \preeq x_y \preeq h(f(x_y)) = h(y),
\end{eqs}
for all $y \in \image{f}{X}$.
\end{proof}

\section{Partition-surjectivity and right partition-inverse}

\begin{definition}[Partition\-/surjectivity]
A function $\function{f}{X}{Y}$ is \define{p-surjective}, if 
\begin{eqs}
\pclosure{\image{f}{X}} = Y.
\end{eqs}
\end{definition}

\begin{note}[Right partition\-/inverse]
A \define{right p\-/inverse} of $\function{f}{X}{Y}$ is $\function{\rinvs{f}}{Y}{X}$ such that
\begin{eqs}
f(\rinv{f}{y}) \preeqb y,
\end{eqs}
for all $y \in Y$.
\end{note}

\begin{theorem}[Partition\-/surjectivity is equivalent to having a right p-inverse]
\label{PSurjectivityIsEquivalentToHavingRightInverse}
Let $\function{f}{X}{Y}$. Then $f$ is p-surjective if and only if $f$ has a right p-inverse.
\end{theorem}

\begin{proof}
\proofpart{$\implies$}
Since $f$ is p-surjective,
\begin{eqs}
\preimage{f}{\pclosure{\setb{y}}} \neq \emptyset,
\end{eqs}
for all $y \in Y$. Let $\function{\hat{f}}{Y}{X}$ be such that
\begin{eqs}
\hat{f}(y) \in \preimage{f}{\pclosure{\setb{y}}}.
\end{eqs}
Then
\begin{eqs}
{} & f(\hat{f}(y)) \in \pclosure{\setb{y}} \\
\iffr & f(\hat{f}(y)) \preeqb y,
\end{eqs}
for all $y \in Y$. Therefore $\hat{f}$ is a right p-inverse of $f$.

\proofpart{$\impliedby$}
Let $\function{\rinvs{f}}{Y}{X}$ be a right p-inverse of $f$. Then
\begin{eqs}
f(\rinv{f}{y}) \preeqb y,
\end{eqs}
for all $y \in Y$. Therefore $f$ is p-surjective.
\end{proof}

\begin{theorem}[Right p-inverse is order-reflecting for order-preserving]
\label{RightPInverseIsOrderReflectingForOrderPreserving}
Let $\function{f}{X}{Y}$ be order-preserving. Then a right p-inverse of $f$ is order-reflecting.
\end{theorem}

\begin{proof}
Let $\function{\rinvs{f}}{Y}{X}$ be a right p-inverse of $f$. Then
\begin{eqs}
{} & \rinv{f}{y_1} \preleq \rinv{f}{y_2} \\
{} \impliesr & f(\rinv{f}{y_1}) \preleqb f(\rinv{f}{y_2}) && \why{$f$ order-preserving} \\
{} \impliesr & y_1 \preleqb y_2, && \why{$\rinvs{f}$ right p-inverse of $f$}
\end{eqs}
for all $y_1, y_2 \in Y$. Therefore $\rinvs{f}$ is order-reflecting.
\end{proof}

\section{Partition-bijectivity and partition-inverse}

\begin{definition}[Partition\-/bijectivity]
A function $\function{f}{X}{Y}$ is \define{p-bijective}, if it is both p-injective and p-surjective.
\end{definition}

\begin{definition}[Partition\-/inverse]
A \define{p-inverse of $f$} is a function $\function{\hat{f}}{Y}{X}$ which is both a left p-inverse and a right p-inverse of $f$. 
\end{definition}

\begin{theorem}[Partition\-/bijectivity is equivalent to having a p-preserving p-inverse]
\label{PBijectivityIsEquivalentToHavingPPreservingInverse}
Let $\function{f}{X}{Y}$. Then $f$ is p-bijective if and only if $f$ has a p-preserving p-inverse.
\end{theorem}

\begin{proof}
\proofpart{$\implies$}
A p-preserving left p-inverse $\function{\linvs{f}}{\image{f}{X}}{X}$ of $f$ exists by \thref{PInjectivityIsEquivalentToPPreservingLeftInverse}. We may extend $\linvs{f}$ to a p-preserving left p-inverse of $f$ on $\pclosure{\image{f}{X}}$. Since $f$ is p-surjective, $\pclosure{\image{f}{X}} = Y$. We then notice that the extended $\linvs{f}$ satisfies the construction of the right p-inverse $\rinvs{f}$ in \thref{PSurjectivityIsEquivalentToHavingRightInverse}.

\proofpart{$\impliedby$}
This follows from \thref{PInjectivityIsEquivalentToPPreservingLeftInverse} and \thref{PSurjectivityIsEquivalentToHavingRightInverse}.
\end{proof}

\begin{theorem}[Partition\-/bijectivity implies a unique p-inverse]
Let $\function{f}{X}{Y}$ be p-bijective. Then $f$ has a unique p-inverse up to an equivalence.
\label{PBijectivityImpliesUniquePInverse}
\end{theorem}

\begin{proof}
Since $f$ is p-bijective, there exists a p-inverse $\function{g}{Y}{X}$ of $f$ by \thref{PBijectivityIsEquivalentToHavingPPreservingInverse}. Suppose there exists another p-inverse $\function{h}{Y}{X}$ of $f$. Then
\begin{eqs}
{} & f(g(y)) \preeqb y \preeqb f(h(y)) \\
\impliesr & g(y) \preeq h(y), && \why{$f$ p-injective}
\end{eqs}
for all $y \in Y$.
\end{proof}

\section{Partition-embedding}

\begin{definition}[Partition-embedding]
A function $\function{f}{X}{Y}$ is \define{p-embedding}, if it is both p-preserving and p-injective.
\end{definition}

\begin{theorem}[Partition-embedding by preimages]
\label{PEmbeddingByPreimages}
Let $\function{f}{X}{Y}$. Then $f$ is a p-embedding if and only if
\begin{eqs}
\preimage{f}{\ppclosure{f(x)}} = \ppclosure{x},
\end{eqs}
for all $x \in X$.
\end{theorem}

\begin{proof}
\proofpart{$\implies$}
It holds that
\begin{eqs}
{} & z \in \preimage{f}{\ppclosure{f(x)}} \\
\iffr & f(z) \in \ppclosure{f(x)} \\
\iffr & f(z) \preeqb f(x) \\
\iffr & z \preeq x && \why{$f$ p-embedding} \\
\iffr & z \in \ppclosure{x},
\end{eqs}
for all $x \in X$.

\proofpart{$\impliedby$}
To show that $f$ is p-embedding,
\begin{eqs}
{} & x_1 \preeq x_2 \\
\iffr & \ppclosure{x_1} = \ppclosure{x_2} \\
\iffr & \preimage{f}{\ppclosure{f(x_1)}} = \preimage{f}{\ppclosure{f(x_2)}} && \why{assumption} \\
\iffr & \ppclosure{f(x_1)} = \ppclosure{f(x_2)} && \why{$f$ p-embedding} \\
\iffr & f(x_1) \preeqb f(x_2),
\end{eqs}
for all $x_1, x_2 \in X$. 
\end{proof}

\chapter{Preordered sets}
\label{PreorderedSets}

In this section we develop some theory of preorders, generalizing the theory of partial orders. Preordered sets occur in the theory of $\ohsy$-notation, because $\ohx{f}$ is a principal down-set of a preorder in $\rc{X}$. 

\begin{definition}[Preordered set]
A \define{preordered set} is a set $X$ with an associated preorder $\relationin{\preleq}{X}$.
\end{definition}

\begin{note}[Conventions]
Let $X$ and $Y$ be preordered sets. 
\end{note}

\begin{definition}[Order-preserving]
A function $\function{f}{X}{Y}$ is \define{order-preserving}, or \define{monotone}, if
\begin{eqs}
x_1 \preleq x_2 \implies f(x_1) \preleqb f(x_2),
\end{eqs}
for all $x_1, x_2 \in X$.
\end{definition}

\begin{note}[Homomorphisms]
The order-preserving functions are the homomorphisms of preordered sets; they preserve the preorder structure. 
\end{note}

\begin{definition}[Order-reflecting]
A function $\function{f}{X}{Y}$ is \define{order-reflecting}, if
\begin{eqs}
f(x_1) \preleqb f(x_2) \implies x_1 \preleq x_2,
\end{eqs}
for all $x_1, x_2 \in X$.
\end{definition}

\begin{definition}[Order-embedding]
A function $\function{f}{X}{Y}$ is \define{order-embedding}, if it is both order-preserving and order-reflecting.
\end{definition}

\begin{definition}[Order-isomorphism]
A function $\function{f}{X}{Y}$ is an \define{order-isomorphism}, if it is p-surjective and order-embedding.
\end{definition}

The relationships between such functions is given in \figref{OrderPreservingDiagram}, with \figref{OrderPreservingProper} showing that the inclusions are proper.

\section{Order and partitions}

\begin{definition}[Induced equivalence]
Given a preorder $\relationin{\preleq}{X}$, the \define{induced equivalence} is $\relationin{\preeq}{X}$ such that 
\begin{eqs}
x_1 \preeq x_2 \iff x_1 \preleq x_2 \textrm{ and } x_2 \preleq x_1.
\end{eqs}
\end{definition}

\begin{note}[Relation to partitioned sets]
The induced equivalence $\preeq$ on a preordered set $(X, \preleq)$ partitions $X$; the theory of partitioned sets interacts with the theory of preordered sets.
\end{note}

\begin{theorem}[Order-preserving is partition\-/preserving]
\label{OrderPreservingIsPartitionPreserving}
Let \hfill \\ $\function{f}{X}{Y}$ be order-preserving. Then $f$ is p-preserving.
\end{theorem}

\begin{proof}
Let $x_1, x_2 \in X$ be such that $x_1 \preeq x_2$. Then
\begin{eqs}
x_1 \preleq x_2 \implies f(x_1) \preleqb f(x_2).
\end{eqs}
Similarly,
\begin{eqs}
x_2 \preleq x_1 \implies f(x_2) \preleqb f(x_1).
\end{eqs}
It follows that
\begin{eqs}
f(x_1) \preeqb f(x_2).
\end{eqs}
Therefore $f$ is p-preserving.
\end{proof}

\begin{theorem}[Order-reflecting implies partition\-/injective]
\label{OrderReflectingImpliesInjective}
Let $\function{f}{X}{Y}$ be order-reflecting. Then $f$ is partition\-/injective.
\end{theorem}

\begin{proof}
It holds that
\begin{eqs}
{} & f(x_1) \preeqb f(x_2) \\
\impliesr & f(x_1) \preleqb f(x_2) \textrm{ and } f(x_2) \preleqb f(x_1) \\
\impliesr & x_1 \preleq x_2 \textrm{ and } x_2 \preleq x_1 \\
\impliesr & x_1 \preeq x_2,
\end{eqs}
for all $x_1, x_2 \in X$. 
\end{proof}

\section{Order-preserving functions and down-sets}

\begin{definition}[Generated down\-/set]
The \define{generated down\-/set} in $X$ is a function $\function{\downset{X}{\dummy}}{\power{X}}{\power{X}}$ such that
\begin{equation}
\downset{X}{D} = \setb{x \in X : \exists d \in D : x \preleq d}
\end{equation}
\end{definition}

\begin{definition}[Down\-/set]
A subset $D \subset X$ is a \define{down\-/set} of $X$, if $D = \downset{X}{D}$.
\end{definition}

\begin{definition}[Principal down\-/set]
A subset $D \subset X$ is a \define{principal down\-/set} of $X$, if there exists $d \in X$ such that $D = \pdownset{X}{d}$.
\end{definition}

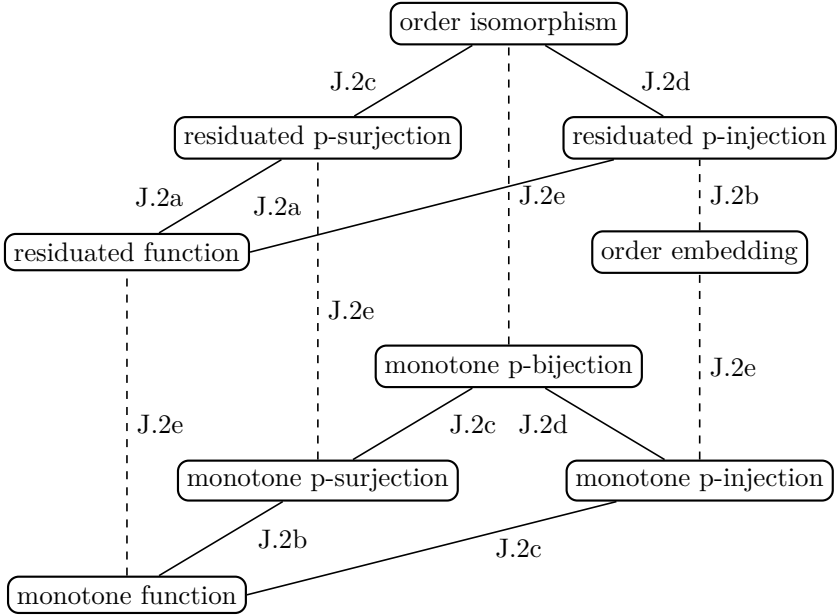
\begin{figure}
\centering
\begin{tikzpicture}[every node/.style={draw,rounded corners,thick}]
        \node (mon) at (0, 0) {monotone function};
        \node (mon_sur) at (2.5, 1.5) {monotone p-surjection};
        \node (mon_bij) at (5, 3) {monotone p-bijection};
        \node (mon_inj) at (7.5, 1.5) {monotone p-injection};
        \node (ord_emb) at (7.5, 4.5) {order embedding};
        \node (res_inj) at (7.5, 6) {residuated p-injection};
        \node (iso) at (5, 7.5) {order isomorphism};
        \node (res_sur) at (2.5, 6) {residuated p-surjection};
        \node (res) at (0, 4.5) {residuated function};
        
        \draw[semithick] (mon) -- node[right=10, draw=none] {\ref{OrderEmbeddingNotResiduatedOrSurjective}} (mon_sur);
        
        \draw[semithick] (mon_sur) -- node[right=10, draw=none] {\ref{ResiduatedAndSurjectiveNotInjective}} (mon_bij);
        
        \draw[semithick] (mon.east) -- node[right=20, draw=none] {\ref{ResiduatedAndSurjectiveNotInjective}} (mon_inj);
        
        \draw[semithick, dashed] (mon_inj) -- node[right, draw=none] {\ref{OrderPreservingAndBijectiveNotOrderReflectingOrResiduated}} (ord_emb);

        \draw[semithick, dashed] (ord_emb) -- node[right, draw=none] {\ref{OrderEmbeddingNotResiduatedOrSurjective}} (res_inj);

        \draw[semithick] (res_inj) -- node[right=10, draw=none] {\ref{ResiduatedAndInjectiveNotSurjective}} (iso);

        \draw[semithick, dashed] (mon) -- node[right, draw=none] {\ref{OrderPreservingAndBijectiveNotOrderReflectingOrResiduated}} (res);

        \draw[semithick] (res) -- node[left=10, draw=none] {\ref{ResiduatedNotSurjectiveOrInjective}} (res_sur);

        \draw[semithick] (res_sur) -- node[left=10, draw=none] {\ref{ResiduatedAndSurjectiveNotInjective}} (iso);

        \draw[semithick] (res.east) -- node[left=45, draw=none] {\ref{ResiduatedNotSurjectiveOrInjective}} (res_inj);

        \draw[semithick] (mon_inj) -- node[left=10, draw=none] {\ref{ResiduatedAndInjectiveNotSurjective}} (mon_bij);

        \draw[semithick, dashed] (mon_sur) -- node[right, draw=none] {\ref{OrderPreservingAndBijectiveNotOrderReflectingOrResiduated}} (res_sur);

        \draw[semithick, dashed] (mon_bij) -- node[right, draw=none] {\ref{OrderPreservingAndBijectiveNotOrderReflectingOrResiduated}} (iso);
\end{tikzpicture}
\caption{A Hasse diagram of order-preserving functions between preordered sets, ordered by the `generalizes' partial order. Original figure was created by David Wilding, and is used here with his permission. The edge labels --- which refer to \fref{OrderPreservingProper} --- have been added to show that the generalizations are proper.
}
\label{OrderPreservingDiagram}
\end{figure}

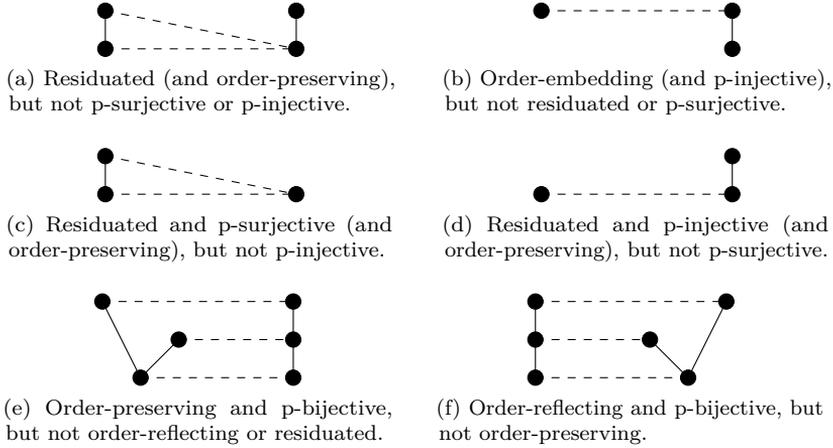
\begin{figure}
\hfill
\subfloat[Residuated (and order\-/preserving), but not p\-/surjective or p\-/injective.]
{
\makebox[.4\textwidth]{
\begin{tikzpicture}[scale = 0.5]
\blackdot{0, 0};
\blackdot{0, 1};
\blackdot{5, 0};
\blackdot{5, 1};
\draw[dashed] (0, 0) -- (5, 0);
\draw[dashed] (0, 1) -- (5, 0);
\draw (0, 0) -- (0, 1);
\draw (5, 0) -- (5, 1);
\end{tikzpicture}
}
\label{ResiduatedNotSurjectiveOrInjective}
}
\hfill
\subfloat[Order\-/embedding (and p\-/injective), but not residuated or p\-/surjective.]
{
\makebox[.4\textwidth]{
\begin{tikzpicture}[scale = 0.5]
\blackdot{0, 1};
\blackdot{5, 0};
\blackdot{5, 1};
\draw[dashed] (0, 1) -- (5, 1);
\draw (5, 0) -- (5, 1);
\end{tikzpicture}
}
\label{OrderEmbeddingNotResiduatedOrSurjective}
}
\hfill\null

\hfill
\subfloat[Residuated and p-surjective (and order-preserving), but not p-injective.]
{
\makebox[.4\textwidth]{
\begin{tikzpicture}[scale = 0.5]
\blackdot{0, 0};
\blackdot{0, 1};
\blackdot{5, 0};
\draw[dashed] (0, 0) -- (5, 0);
\draw[dashed] (0, 1) -- (5, 0);
\draw (0, 0) -- (0, 1);
\end{tikzpicture}
}
\label{ResiduatedAndSurjectiveNotInjective}
}
\hfill
\subfloat[Residuated and p-injective (and order-preserving), but not p-surjective.]
{
\makebox[.4\textwidth]{
\begin{tikzpicture}[scale = 0.5]
\blackdot{0, 0};
\blackdot{5, 0};
\blackdot{5, 1};
\draw[dashed] (0, 0) -- (5, 0);
\draw (5, 0) -- (5, 1);
\end{tikzpicture}
}
\label{ResiduatedAndInjectiveNotSurjective}
}
\hfill\null

\hfill
\subfloat[Order\-/preserving and p\-/bijective, but not order\-/reflecting or residuated.]
{
\makebox[.4\textwidth]{
\begin{tikzpicture}[scale = 0.5]
\blackdot{5, 0};
\blackdot{5, 1};
\blackdot{5, 2};
\blackdot{0, 2};
\blackdot{1, 0};
\blackdot{2, 1};
\draw[dashed] (0, 2) -- (5, 2);
\draw[dashed] (1, 0) -- (5, 0);
\draw[dashed] (2, 1) -- (5, 1);
\draw (5, 0) -- (5, 1);
\draw (5, 1) -- (5, 2);
\draw (1, 0) -- (0, 2);
\draw (1, 0) -- (2, 1);
\end{tikzpicture}
}
\label{OrderPreservingAndBijectiveNotOrderReflectingOrResiduated}
}
\hfill
\subfloat[Order\-/reflecting and p\-/bijective, but not order\-/preserving.]
{
\makebox[.4\textwidth]{
\begin{tikzpicture}[scale = 0.5]
\blackdot{0, 0};
\blackdot{0, 1};
\blackdot{0, 2};
\blackdot{3, 1};
\blackdot{4, 0};
\blackdot{5, 2};
\draw[dashed] (0, 0) -- (4, 0);
\draw[dashed] (0, 1) -- (3, 1);
\draw[dashed] (0, 2) -- (5, 2);
\draw (0, 0) -- (0, 1);
\draw (0, 1) -- (0, 2);
\draw (4, 0) -- (3, 1);
\draw (4, 0) -- (5, 2);
\end{tikzpicture}
}
\label{OrderReflectingAndBijectiveNotOrderPreserving}
}
\hfill\null
\caption{Hasse diagrams to show that the various definitions of order-preserving functions are not equivalent. Since all orders here are partial, the p-prefix is redundant.}
\label{OrderPreservingProper}
\end{figure}

\begin{theorem}[Alternative definition of down\-/sets]
\label{AlternativeDownSet}
Let $D \subset X$. Then $D$ is a down\-/set if and only if
\begin{equation}
\forall x \in X, \forall d \in D: x \preleq d \implies x \in D.
\end{equation}
\end{theorem}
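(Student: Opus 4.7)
The plan is to unfold both sides from the definitions and verify the two containments directly; the only subtlety is to locate where reflexivity of $\preceq_X$ enters.

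First I would handle the forward direction. Assuming $D = \downset{X}{D}$, take any $x \in X$ and $d \in D$ with $x \preceq_X d$. Then $x$ satisfies the defining condition of $\downset{X}{D}$ (namely, there exists an element of $D$ --- take $d$ itself --- above it), so $x \in \downset{X}{D} = D$. This step uses nothing beyond the definition of $\downset{X}{\cdot}$.

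For the backward direction, assume the quantified implication holds. I would prove the two inclusions separately. For $\downset{X}{D} \subset D$: if $x \in \downset{X}{D}$, then by definition there exists $d \in D$ with $x \preceq_X d$, and the assumption immediately yields $x \in D$. For $D \subset \downset{X}{D}$: given $d \in D$, I invoke reflexivity of the preorder $\preceq_X$ to conclude $d \preceq_X d$, whence $d \in \downset{X}{D}$. This is the only place where the preorder axioms are used, and it is the step one could easily overlook.

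No step presents a real obstacle; the proof is essentially a bookkeeping exercise. The main thing to be careful about is to keep the two definitions (down-set as fixed point of $\downset{X}{\cdot}$, versus the closure-under-predecessors formulation) cleanly separated so that each containment is justified by exactly one of them.
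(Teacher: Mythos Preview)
Your proof is correct and follows essentially the same approach as the paper. The paper presents it as a chain of logical equivalences after first noting that $D \subset \downset{X}{D}$ always holds (implicitly by reflexivity), whereas you handle the two directions and two inclusions explicitly and call out reflexivity by name; the mathematical content is identical.
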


\begin{proof}
It always holds that $D \subset \setb{x \in X : \exists d \in D: x \preleq d}$. Then
\begin{eqs}
{} & D = \setb{x \in X : \exists d \in D: x \preleq d} \\
\iffr & D \supset \setb{x \in X : \exists d \in D: x \preleq d} \\
\iffr & \forall x \in X: \bra{\exists d \in D : x \preleq d} \implies x \in D \\
\iffr & \forall x \in X: \bra{\forall d \in D : x \not\preleq d} \textrm{ or } x \in D \\
\iffr & \forall x \in X, \forall d \in D : \bra{x \not\preleq d \textrm { or } x \in D} \\
\iffr & \forall x \in X, \forall d \in D : x \preleq d \implies x \in D.
\end{eqs}
\end{proof}

\begin{theorem}[Order-preservation by preimages]
\label{OrderPreservationByPreimages}
Let $\function{f}{X}{Y}$. Then $f$ is order-preserving if and only if each preimage of a down\-/set of $Y$ is a down\-/set of $X$.
\end{theorem}

\begin{proof}
\proofpart{$\implies$}
Let $S \subset Y$, $D_Y = \downset{Y}{S}$, and $D_X = \preimage{f}{D_Y}$. Let $x_1 \in X$, and $x_2 \in D_X$ be such that $x_1 \preleq x_2$. Since $f$ is order-preserving, $f(x_1) \preleqb f(x_2)$. It holds that $f(x) \in D_Y$, for all $x \in D_X$. Since $f(x_1) \preleqb f(x_2) \in D_Y$, and $D_Y$ is a down\-/set, $f(x_1) \in D_Y$. This implies $x_1 \in D_X$. Therefore $D_X$ is a down\-/set.

\proofpart{$\impliedby$}
Let $x_1, x_2 \in X$. Then there exists $D \subset X$ such that
\begin{equation}
\preimage{f}{\downset{Y}{\image{f}{\setb{x_2}}}}  = \downset{X}{D}.
\end{equation}
Let $x \in X$. Then
\begin{eqs}
{} & x \in \preimage{f}{\pdownset{Y}{f(x_2)}} \\
\iffr & f(x) \in \pdownset{Y}{f(x_2)} \\
\iffr & f(x) \preleqb f(x_2).
\end{eqs}
In particular, $x_2 \in \preimage{f}{\downset{Y}{f(x_2)}} = \downset{X}{D}$. Then
\begin{eqs}
{} & x_1 \preleq x_2 \\
\impliesr & x_1 \in \downset{X}{D} \\
\impliesr & x_1 \in \preimage{f}{\pdownset{Y}{f(x_2)}} \\
\impliesr & f(x_1) \in \pdownset{Y}{f(x_2)} \\
\impliesr & f(x_1) \preleqb f(x_2).
\end{eqs}
Therefore $f$ is order-preserving.
\end{proof}

\begin{theorem}[Order-preservation by images]
\label{OrderPreservationByImages}
Let $\function{f}{X}{Y}$. Then $f$ is order-preserving if and only if
\begin{eqs}
\image{f}{\downset{X}{S}} \subset \downset{Y}{\image{f}{S}},
\end{eqs}
for all $S \subset X$.
\end{theorem}

\begin{proof}
\proofpart{$\implies$}
It holds that
\begin{eqs}
{} & x \in \downset{X}{S} \\
\impliesr & \exists s \in S: x \preleq s \\
\impliesr & \exists s \in S: f(x) \preleqb f(s) && \why{$f$ order-preserving} \\
\impliesr & f(x) \in \downset{Y}{\image{f}{S}},
\end{eqs}
for all $x \in X$.

\proofpart{$\impliedby$}
It holds that
\begin{eqs}
{} & x_1 \preleq x_2 \\
\impliesr & \pdownset{X}{x_1} \subset \pdownset{X}{x_2} \\
\impliesr & \image{f}{\pdownset{X}{x_1}} \subset \image{f}{\pdownset{X}{x_2}} \\
\impliesr & \image{f}{\pdownset{X}{x_1}} \subset \pdownset{Y}{f(x_2)} \\
\impliesr & f(x_1) \in \pdownset{Y}{f(x_2)} \\
\impliesr & f(x_2) \preleqb f(x_2),
\end{eqs}
for all $x_1, x_2 \in X$. Therefore $f$ is order-preserving.
\end{proof}

\section{Residuated functions}

\begin{definition}[Residual]
A function $\function{\hat{f}}{Y}{X}$ is a \define{residual} of $\function{f}{X}{Y}$, if
\begin{eqs}
f(x) \preleqb y \iff x \preleq \hat{f}(y),
\end{eqs}
for all $x \in X$ and $y \in Y$.
\end{definition}

\begin{definition}[Residuated]
A function $\function{f}{X}{Y}$ is \define{residuated}, if it has a residual.
\end{definition}

\begin{theorem}[Preimage of a down\-/set under a residuated function]
\label{PreimageOfDownSetUnderResiduatedFunction}
Let $\function{f}{X}{Y}$ be residuated, with a residual $\function{\hat{f}}{Y}{X}$. Then
\begin{equation}
 \preimage{f}{\downset{Y}{D}} = \downset{X}{\hat{f}(D)},
\end{equation}
for all $D \subset Y$.
\end{theorem}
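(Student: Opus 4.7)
The plan is to prove the set equality by a direct chain of logical equivalences, unfolding the definition of the generated down-set on each side and applying the residuation condition once in the middle. Since both sides are described by existential quantifiers over $D$, it suffices to transform the inner membership condition.

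First I would fix $x \in X$ and $D \subset Y$, and translate $x \in f^{-1}(\downset{Y}{D})$ into the statement $f(x) \in \downset{Y}{D}$, which by the definition of the generated down-set is equivalent to the existence of some $d \in D$ with $f(x) \preceq_Y d$. Next I would apply the defining property of the residual $\hat{f}$, namely $f(x) \preceq_Y y \iff x \preceq_X \hat{f}(y)$, with $y := d$, to rewrite this inner condition as $x \preceq_X \hat{f}(d)$. Finally, I would re-package the result: the existence of $d \in D$ with $x \preceq_X \hat{f}(d)$ is precisely the existence of an element of $\hat{f}(D)$ dominating $x$, which by definition says $x \in \downset{X}{\hat{f}(D)}$.

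Assembling, the computation is
\begin{equation}
\begin{split}
x \in f^{-1}(\downset{Y}{D})
& \iff f(x) \in \downset{Y}{D} \\
& \iff \exists d \in D : f(x) \preceq_Y d \\
& \iff \exists d \in D : x \preceq_X \hat{f}(d) \\
& \iff x \in \downset{X}{\hat{f}(D)}.
\end{split}
\end{equation}
Since every step is a biconditional that holds for arbitrary $x$, the two sets coincide. There is no real obstacle here; the only thing to be careful about is that the residuation property must be applied pointwise at the specific value $y = d$ rather than attempting any global manipulation of $D$, and that the image $\hat{f}(D) = \{\hat{f}(d) : d \in D\}$ is used rather than the preimage. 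The proof uses nothing beyond the stated definition of a residuated function and the definition of $\downset{X}{\cdot}$.
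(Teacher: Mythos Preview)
Your proof is correct and matches the paper's argument essentially line for line; the only cosmetic difference is that the paper writes the chain in set-builder notation while you phrase it as biconditionals for a fixed $x$.
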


\begin{proof}
\begin{eqs}
\preimage{f}{\downset{Y}{D}} & = \setb{x \in X : f(x) \in \downset{Y}{D}} \\
{} & = \setb{x \in X : \exists d \in D : f(x) \preleqb d} \\
{} & = \setb{x \in X : \exists d \in D : x \preleq \hat{f}(d)} \\
{} & = \setb{x \in X : \exists x' \in \hat{f}(D) : x \preleq x'} \\
{} & = \downset{X}{\hat{f}(D)}.
\end{eqs}
\end{proof}

\begin{theorem}[Residuated property by preimages]
\label{ResiduatedByPreimages}
Let $\function{f}{X}{Y}$. Then $f$ is residuated if and only if for each $y \in Y$, there exists $x_y \in X$ such that
\begin{eqs}
\preimage{f}{\pdownset{Y}{y}} = \pdownset{X}{x_y}.
\end{eqs}
\end{theorem}

\begin{proof}
\proofpart{$\implies$}
Let $f$ be residuated, and $y \in Y$. Then 
\begin{equation}
\preimage{f}{\pdownset{Y}{y}} = \pdownset{X}{\hat{f}(y)},
\end{equation}
by \proveby{PreimageOfDownSetUnderResiduatedFunction}. Therefore each preimage of a principal down\-/set of $Y$ is a principal down\-/set of $X$.

\proofpart{$\impliedby$}
For each $y \in Y$, there exists $x_y \in X$ such that
\begin{eqs}
{} & \preimage{f}{\pdownset{Y}{y}} = \pdownset{X}{x_y} \\
\iffr & \setb{x \in X : f(x) \in \pdownset{Y}{y}} = \setb{x \in X : x \preleq x_y} \\
\iffr & \setb{x \in X : f(x) \preleqb y} = \setb{x \in X : x \preleq x_y} \\
\iffr & \forall x \in X: \left[ f(x) \preleqb y \iff x \preleq x_y \right].
\end{eqs}
Let $\function{\hat{f}}{Y}{X}$ be such that $\hat{f}(y) = x_y$. By the above, $\hat{f}$ is a residual of $f$. Therefore $f$ is residuated.
\end{proof}

\begin{theorem}[Properties of a residuated function] 
\label{PropertiesOfResiduatedFunction}
Let $\function{f}{X}{Y}$. Then $f$ is residuated if and only if $f$ is order-preserving, and there exists order-preserving $\function{\residuals{f}}{X}{Y}$, such that
\begin{eqs}
{} & \residual{f}{f(x)} \pregeq x, \\
{} & f(\residual{f}{y}) \preleqb y,
\end{eqs}
for all $x \in X$, $y \in Y$.
\end{theorem}

\begin{proof}
\proofpart{$\implies$}
For the first relation,
\begin{eqs}
{} & f(x) \preleqb f(x) \\
\iffr & x \preleq \residual{f}{f(x)}, && \why{$\residuals{f}$ residual of $f$}
\end{eqs}
for all $x \in X$. 
Then
\begin{eqs}
{} & x_1 \preleq x_2 \\
\impliesr & x_1 \preleq \residual{f}{f(x_2)} && \why{first relation} \\
\impliesr & f(x_1) \preleqb f(x_2), && \why{$\residuals{f}$ residual of $f$}
\end{eqs}
for all $x_1, x_2 \in X$. Therefore $f$ is order-preserving.

For the second relation,
\begin{eqs}
{} & \residual{f}{y} \preleq \residual{f}{y} \\
\iffr & f(\residual{f}{y}) \preleqb y, && \why{$\residuals{f}$ residual of $f$}
\end{eqs}
for all $y \in Y$.
Then
\begin{eqs}
{} & y_1 \preleqb y_2 \\
\impliesr & f(\residual{f}{y_1}) \preleqb y_2 && \why{second relation} \\
\impliesr & \residual{f}{y_1} \preleq \residual{f}{y_2}, && \why{$\residuals{f}$ residual of $f$}
\end{eqs}
for all $y_1, y_2 \in Y$. Therefore $\residuals{f}$ is order-preserving.

\proofpart{$\impliedby$}
It holds that
\begin{eqs}
{} & f(x) \preleqb y \\
\impliesr & \residual{f}{f(x)} \preleq \residual{f}{y} && \why{$\residuals{f}$ order-preserving} \\
\impliesr & x \preleq \residual{f}{y} && \why{first relation} \\
\impliesr & f(x) \preleqb f(\residual{f}{y}) && \why{$f$ order-preserving} \\
\impliesr & f(x) \preleqb y, && \why{second relation}
\end{eqs}
for all $x \in X$, $y \in Y$. Therefore $\residuals{f}$ is a residual of $f$.
\end{proof}

\begin{theorem}[Residual is essentially unique]
\label{ResidualIsEssentiallyUnique}
Let $\function{g, h}{Y}{X}$ both be residuals of $\function{f}{X}{Y}$. Then
\begin{eqs}
g(y) \preeq h(y),
\end{eqs}
for all $y \in Y$.
\end{theorem}

\begin{proof}
By definition,
\begin{eqs}
{} & g(y) \preleq h(y) \\
\iffr & f(g(y)) \preleqb y,
\end{eqs}
for all $y \in Y$. The latter holds by \thref{PropertiesOfResiduatedFunction}. Similarly,
\begin{eqs}
h(y) \preleq g(y),
\end{eqs}
for all $y \in Y$. 
\end{proof}

\section{Residuated functions and supremum}

\begin{definition}[Upper-bound]
An element $s \in X$ is an \define{upper\-/bound} of $S \subset X$, if
\begin{eqs}
x \preleq s,
\end{eqs}
for all $x \in S$.
\end{definition}

\begin{definition}[Least upper\-/bound]
An element $s \in X$ is a \define{least upper\-/bound} of $S \subset X$, if $s$ is an upper\-/bound of $S$, and for any upper\-/bound $t \in X$ of $S$,
\begin{eqs}
s \preleq t.
\end{eqs}
\end{definition}

\begin{definition}[Supremum]
The \define{supremum} of a set in a set $X$ is a function $\function{\supremums}{\power{X}}{\power{X}}$, such that
\begin{eqs}
\supremum{S} = \setb{s \in X : s \text{ is a least upper\-/bound of $S$}}.
\end{eqs}
\end{definition}

\begin{theorem}[Residuated property by suprema]
\label{ResiduatedPropertyBySupremum}
Let $\function{f}{X}{Y}$. Then $f$ is residuated if and only if $f$ is order-preserving,
\begin{eqs}
\supremum[X]{\preimage{f}{\pdownset{Y}{y}}} \neq \emptyset,
\end{eqs}
for all $y \in Y$, and
\begin{eqs}
\image{f}{\supremum[X]{S}} \subset \supremum[Y]{\image{f}{S}},
\end{eqs}
for all $S \subset X$.
\end{theorem}

\begin{proof}
\proofpart{$\implies$ Order-preservation}
The function $f$ is order-preserving by \thref{PropertiesOfResiduatedFunction}.

\proofpart{$\implies$ Non-emptiness}
By \thref{ResiduatedByPreimages}, for each $y \in Y$, there exists $x_y \in X$, such that
\begin{eqs}
\preimage{f}{\pdownset{Y}{y}} = \pdownset{X}{x_y}.
\end{eqs}
It follows that
\begin{eqs}
\supremum[X]{\preimage{f}{\pdownset{Y}{y}}} = \ppclosure{x_y} \neq \emptyset.
\end{eqs}

\proofpart{$\implies$ Upper-bound}
Let $s \in \supremum{S}$. Since $s$ is an upper\-/bound of $S$,
\begin{eqs}
x \preleq s,
\end{eqs}
for all $x \in S$. Since $f$ is order-preserving,
\begin{eqs}
f(x) \preleqb f(s),
\end{eqs}
for all $x \in S$. Therefore $f(s)$ is an upper\-/bound of $\image{f}{S}$.

\proofpart{$\implies$ Least upper\-/bound}
Let $\function{\residuals{f}}{Y}{X}$ be a residual of $f$, and $y \in Y$ be an upper\-/bound of $\image{f}{S}$. Then
\begin{eqs}
f(x) \preleqb y,
\end{eqs}
for all $x \in S$. Since $f$ is residuated,
\begin{eqs}
x \preleq \residual{f}{y},
\end{eqs}
for all $x \in S$. That is, $\residual{f}{y}$ is an upper\-/bound of $S$. Since $s$ is a \emph{least} upper\-/bound of $S$,
\begin{eqs}
s \preleq \residual{f}{y}.
\end{eqs}
Since $f$ is residuated,
\begin{eqs}
f(s) \preleqb y.
\end{eqs}
Therefore $f(s)$ is a least upper\-/bound of $\image{f}{S}$;
\begin{eqs}
\image{f}{\supremum{S}} \subset \supremum{\image{f}{S}}.
\end{eqs}

\proofpart{$\impliedby$ Order-preservation}
Let $\function{\residuals{f}}{Y}{X}$ be such that
\begin{eqs}
\residual{f}{y} \in \supremum{\preimage{f}{\pdownset{Y}{y}}}.
\end{eqs}
The function $\residuals{f}$ is well-defined, since by assumption
\begin{eqs}
\supremum{\preimage{f}{\pdownset{Y}{y}}} \neq \emptyset,
\end{eqs}
for all $y \in Y$. Let $y_1, y_2 \in Y$ be such that $y_1 \preleqb y_2$. Since supremum is an upper\-/bound,
\begin{eqs}
{} & \forall x \in \preimage{f}{\pdownset{Y}{y_2}}: x \preleq \residual{f}{y_2} \\
\impliesr & \forall x \in X: \brac{f(x) \preleqb y_2 \implies x \preleq \residual{f}{y_2}} \\
\impliesr & \forall x \in X: \brac{f(x) \preleqb y_1 \implies x \preleq \residual{f}{y_2}} \\
\impliesr & \forall x \in \preimage{f}{\pdownset{Y}{y_1}}: x \preleq \residual{f}{y_2}.
\end{eqs}
That is, $\residual{f}{y_2}$ is also an upper\-/bound of $\preimage{f}{\pdownset{Y}{y_1}}$. Since $\residual{f}{y_1}$ is a \emph{least} upper\-/bound of $\preimage{f}{\pdownset{Y}{y_1}}$,
\begin{eqs}
\residual{f}{y_1} \preleq \residual{f}{y_2}.
\end{eqs}
Therefore $\residuals{f}$ is order-preserving. 

\proofpart{$\impliedby$ Deflation}
Let $y \in Y$. By assumption,
\begin{eqs}
f(\residual{f}{y}) & \in \image{f}{\supremum{\preimage{f}{\pdownset{Y}{y}}}} \\
{} & \subset \supremum[Y]{\image{f}{\preimage{f}{\pdownset{Y}{y}}}} \\
{} & \subset \supremum[Y]{\pdownset{Y}{y}} \\
{} & = \pclosure{\setb{y}}.
\end{eqs}
Therefore 
\begin{eqs}
f(\residual{f}{y}) \preleqb y.
\end{eqs}

\proofpart{$\impliedby$ Inflation}
Since $x \in \preimage{f}{\pdownset{Y}{f(x)}}$,
\begin{eqs}
x \preleq \residual{f}{f(x)},
\end{eqs}
for all $x \in X$. Therefore $f$ is residuated with residual $\residuals{f}$ by \thref{PropertiesOfResiduatedFunction}.
\end{proof}

\begin{theorem}[Residual by supremum]
\label{SupremumResidual}
Let $\function{f}{X}{Y}$ be residuated. Then $\function{\residuals{f}}{Y}{X}$ is a residual of $f$ if and only if 
\begin{eqs}
\residual{f}{y} \in \supremum{\preimage{f}{\pdownset{Y}{y}}},
\end{eqs}
for all $y \in Y$.
\end{theorem}

\begin{proof}
This is shown by \thref{ResiduatedPropertyBySupremum} and \thref{ResidualIsEssentiallyUnique}.
\end{proof}

\section{Residuated functions and p-inverses}

\begin{theorem}[Residual is a generalized p-inverse]
\label{ResidualIsGeneralizedPInverse}
Let $\function{f}{X}{Y}$ be residuated. Then
\begin{eqs}
f(\residual{f}{f(x)}) & \preeqb f(x), \\
\residual{f}{f(\residual{f}{y})} & \preeq \residual{f}{y},
\end{eqs}
for all $x \in X$, $y \in Y$.
\end{theorem}

\begin{proof}
Let $x \in X$. It holds that
\begin{eqs}
f(\residual{f}{f(x)}) & \preleqb f(x),
\end{eqs}
by \thref{PropertiesOfResiduatedFunction}. By the same theorem,
\begin{eqs}
\residual{f}{f(x)} \pregeq x.
\end{eqs}
Since $f$ is order-preserving by \thref{PropertiesOfResiduatedFunction},
\begin{eqs}
f(\residual{f}{f(x)}) \pregeqb f(x).
\end{eqs}
Similarly for the second equivalence.
\end{proof}

\begin{theorem}[Residual is a right p-inverse for p-surjection]
\label{ResidualIsRightPInverseForPSurjection}
Let $\function{f}{X}{Y}$ be a p-surjection. Then a residual of $f$ is a right p-inverse of $f$.
\end{theorem}

\begin{proof}
The function $f$ is p-preserving by \thref{PropertiesOfResiduatedFunction}. A right p-inverse $\rinvs{f}$ of $f$ exists by \thref{PSurjectivityIsEquivalentToHavingRightInverse}. Let $\function{\residuals{f}}{Y}{X}$ be a residual of $f$. Then
\begin{eqs}
{} \quad & y \preleqb y \\
{} \impliesr & f(\rinv{f}{y}) \preleqb y && \why{right p-inverse of $f$} \\
{} \impliesr & \rinv{f}{y} \preleq \hat{f}(y) && \why{$f$ residuated} \\
{} \impliesr & f(\rinv{f}{y}) \preleqb f(\hat{f}(y)) && \why{$f$ order-preserving} \\
{} \impliesr & y \preleqb f(\hat{f}(y)) && \why{right p-inverse of $f$},
\end{eqs}
for all $y \in Y$. By \thref{PropertiesOfResiduatedFunction},
\begin{eqs}
f(\hat{f}(y)) \preleqb y,
\end{eqs}
for all $y \in Y$. Therefore,
\begin{eqs}
f(\hat{f}(y)) \preeqb y,
\end{eqs}
for all $y \in Y$. That is, $\hat{f}$ is a right p-inverse of $f$.
\end{proof}

\begin{note}[]
A residuated p-surjection may have many non-equivalent right p-inverses; a residual is a specific version of a right p-inverse.
\end{note} 

\begin{theorem}[Residual is a left p-inverse for residuated p-injection]
\label{ResidualIsLeftPInverseForResiduatedPInjection}
Let $\function{f}{X}{Y}$ be a residuated p-injection. Then a residual of $f$ is a left p-inverse of $f$.
\end{theorem}

\begin{proof}
This follows from \thref{ResidualIsGeneralizedPInverse} and \thref{GeneralizedPInverseIsLeftPInverseForPInjective}.
\end{proof}

\begin{theorem}[Residual is a p-inverse for residuated p-bijection]
\label{ResidualIsPInverseForResiduatedPBijection}
Let $\function{f}{X}{Y}$ be a residuated p-bijection. Then a residual of $f$ is a p-inverse of $f$.
\end{theorem}

\begin{proof}
This follows from \thref{ResidualIsLeftPInverseForResiduatedPInjection} and \thref{ResidualIsRightPInverseForPSurjection}.
\end{proof}

\begin{theorem}[Transpose-residuated p-surjective function preserves down-sets]
\label{TransposeResiduatedPSurjectivePreservesDownSets}
Let $\function{f}{X}{Y}$ be $\pregeq$-residuated and p-surjective. Then
\begin{eqs}
\pclosure{\image{f}{\pdownset{X}{S}}} = \downset{Y}{\image{f}{S}},
\end{eqs}
for all $S \subset X$.
\end{theorem}

\begin{proof}
\proofpart{$\subset$}
The function $f$ is $\pregeq$-order-preserving by \thref{PropertiesOfResiduatedFunction}, which is the same as $\preleq$-order-preserving. The result follows from \thref{OrderPreservationByImages}.

\proofpart{$\supset$}
Let $\function{\hat{f}}{Y}{X}$ be a $\pregeq$-residual of $f$, which is a right p-inverse of $f$ by \thref{ResidualIsRightPInverseForPSurjection}. Then
\begin{eqs}
{} & y \in \downset{Y}{\image{f}{S}} \\
\impliesr & \exists x \in S: y \preleqb f(x) && \why{definition of down-set} \\
\impliesr & \exists x \in S: \residual{f}{y} \preleq x && \why{$\hat{f}$ is $\pregeq$-residual of $f$} \\
\impliesr & \residual{f}{y} \in \downset{X}{S}  && \why{definition of down-set} \\
\impliesr & f(\residual{f}{y)} \in \image{f}{\downset{X}{S}} \\
\impliesr & y \in \pclosure{\image{f}{\downset{X}{S}}}, && \why{$\hat{f}$ right p-inverse of $f$}
\end{eqs}
for all $S \subset X$.
\end{proof}

\begin{theorem}[Transpose-residuated surjective function preserves down-sets]
\label{TransposeResiduatedSurjectivePreservesDownSets}
Let $\function{f}{X}{Y}$ be $\pregeq$-residuated and surjective. Then
\begin{eqs}
\image{f}{\pdownset{X}{S}} = \downset{Y}{\image{f}{S}},
\end{eqs}
for all $S \subset X$.
\end{theorem}

\begin{proof}
The proof is the same as \thref{TransposeResiduatedPSurjectivePreservesDownSets}, except that now we can choose the residual to be both a right inverse and right p-inverse, and use the fact that
\begin{eqs}
y = f(\residual{f}{y}).
\end{eqs}
\end{proof}

\section{Residuated functions and order-embeddings}

\begin{theorem}[Residuated p-injection is order-embedding]
\label{ResiduatedPInjectionIsOrderEmbedding}
Let $\function{f}{X}{Y}$ be a residuated p-injection. Then $f$ is order-embedding.
\end{theorem}

\begin{proof}
Let $\function{\hat{f}}{Y}{X}$ be a residual of $f$. Since $f$ is p-injective, $\hat{f}$ is a left p-inverse of $f$ by \thref{ResidualIsLeftPInverseForResiduatedPInjection}. Then
\begin{eqs}
{} & f(x_1) \preleqb f(x_2) \\
\impliesr & x_1 \preleq \residual{f}{f(x_2)} && \why{$\hat{f}$ residual of $f$} \\
\impliesr & x_1 \preleq x_2, && \why{$\hat{f}$ left p-inverse of $f$}
\end{eqs}
for all $x_1, x_2 \in X$. Therefore $f$ is order-reflecting. The function $f$ is order-preserving by \proveby{PropertiesOfResiduatedFunction}. Therefore $f$ is order-embedding. 
\end{proof}

\begin{theorem}[Residuated p-bijection is an order isomorphism]
\label{ResiduatedPBijectionIsOrderIsomorphism}
Let $\function{f}{X}{Y}$. Then $f$ is an order isomorphism if and only if $f$ is a residuated p-bijection. 
\end{theorem}

\begin{proof}
\proofpart{$\implies$}
Since $f$ is order-embedding, $f$ is p-injective by \thref{OrderReflectingImpliesInjective}. Therefore $f$ is p-bijective. Let $\function{\pinvs{f}}{Y}{X}$ be a p-preserving p-inverse of $f$, which exists by \thref{PBijectivityIsEquivalentToHavingPPreservingInverse}. Then
\begin{eqs}
{} & f(x) \preleqb y \\
\iffr & f(x) \preleqb f(\pinv{f}{y}) && \why{$\pinvs{f}$ p-inverse of $f$} \\
\iffr & x \preleqb \pinv{f}{y}, && \why{$f$ p-reflecting}
\end{eqs}
for all $x \in X$ and $y \in Y$. Therefore $f$ is a residuated p-bijection.

\proofpart{$\impliedby$}
This follows from \thref{ResiduatedPInjectionIsOrderEmbedding} and because $f$ is p-surjective.
\end{proof}

\ifindex
\printindex
\fi

\end{document}